\newcommand{\wbar}{\br{w}}
\newcommand{\zbar}{\br{z}}
\newcommand{\PV}{\op{PV}}
\newcommand{\dpa}[1]{\frac{\partial}{\partial #1}}
\newcommand{\eps}{\epsilon}
\newcommand{\g}{\mathfrak{g}}
\newcommand{\what}{\widehat}
\newcommand{\til}{\widetilde}
\newcommand{\mscr}{\mathscr}
\newcommand{\br}{\overline}
\newcommand{\iso}{\cong}
\newcommand{\C}{\mathbb C}
\newcommand{\norm}[1]{\left\| #1 \right\|}
\newcommand{\Oo}{\mscr O}
\newcommand{\Z}{\mathbb Z}
\newcommand{\into}{\hookrightarrow}
\newcommand{\op}{\operatorname}
\newcommand{\mbb}{\mathbb}
\newcommand{\mf}{\mathfrak}
\newcommand{\mc}{\mathcal}
\newcommand{\ip}[1]{\left\langle #1 \right\rangle}
\newcommand{\abs}[1]{\left| #1 \right|}
\newcommand{\R}{\mbb R}
\renewcommand{\d}{\mathrm{d}}
\newcommand{\dbar}{\br{\partial}}
\DeclareMathOperator{\Sym}{Sym} \DeclareMathOperator{\Hom}{Hom}
\newtheoremstyle{thm}
  {7pt}
  {7pt}
  {\itshape}
  {}
  {\bf}
  {.}
  {5pt}
  {\thmnumber{#2 }\thmname{#1}\thmnote{ (#3)}}
\newtheoremstyle{def}
  {7pt}
  {10pt}
  {\itshape}
  {}
  {\bf}
  {.}
  {5pt}
  {\thmnumber{#2} \thmname{#1}\thmnote{ (#3)}}
\newtheoremstyle{rem}
  {4pt}
  {10pt}
  {}
  {}
  {\itshape}
  {:}
  {3pt}
  {}
\newtheoremstyle{texttheorem}
  {8pt}
  {8pt}
  {\itshape}
  {}
  {\bf}
  {. \hspace{5pt}}
  {3pt}
  {}
\newtheorem*{claim}{Claim}
\newtheorem*{theorem*}{Theorem}
\newtheorem*{lemma*}{Lemma}
\newtheorem*{corollary*}{Corollary}
\newtheorem*{proposition*}{Proposition}
\newtheorem*{definition*}{Definition}
\newtheorem*{conjecture}{Conjecture}
\newtheorem{theorem}{Theorem}[subsection]
\newtheorem{thm-def}{Theorem/Definition}[theorem]
\newtheorem{proposition}[theorem]{Proposition}
\newtheorem{lemma}[theorem]{Lemma}
\numberwithin{equation}{subsection}
\theoremstyle{def}
\newtheorem{definition}[theorem]{Definition}
\theoremstyle{rem}
\newtheorem*{remark}{Remark}
\newcommand{\cinfty}{C^{\infty}}
\date{}
\newcommand{\gl}{\mf{gl}}
\begin{document}

\title[$M$-theory in the $\Omega$-background]{$M$-theory in the $\Omega$-background and $5$-dimensional non-commutative gauge theory}

\author{Kevin Costello}

\address{Perimeter Institue for Theoretical Physics}
\email{kcostello@perimeterinstitute.ca}

\begin{abstract}
The  $\Omega$-background is defined for supergravity, and a very general class of such backgrounds is constructed for $11$-dimensional supergravity.  $11$-dimensional supergravity in a certain $\Omega$-background is shown to be equivalent to a $5$-dimensional non-commutative gauge theory of Chern-Simons type. $M2$ and $M5$ branes are expressed as $1$ and $2$-dimensional extended objects in the $5$-dimensional gauge theory.  This $5$-dimensional gauge theory is shown to admit a consistent quantization with two coupling constants, despite being formally non-renormalizable.

A check of a twisted version of AdS/CFT is performed relating this $5$-dimensional non-commutative gauge theory to the theory on $N$ $M5$ branes, wrapping an $A_{k-1}$ singularity and placed in an $\Omega$-background.   The operators on the $M5$ branes, in the $\Omega$-background, are described by a certain chiral algebra which in the large $N$ limit becomes a $W_{k+\infty}$ algebra. This chiral algebra is recovered from the $5$-dimensional gauge theory.
  
This argument also provides a holographic explanation of the result of Maulik-Okounkov and Schiffmann-Vasserot that the $W_{k+\infty}$ algebra acts on the equivariant cohomology of the moduli of instantons on an $A_{k-1}$ singularity.   
\end{abstract}

\maketitle

\section{Introduction}
Nekrasov's $\Omega$-background \cite{Nek03, NekWit10} has played a central role in recent investigations of supersymmetric gauge theory.  The essential features of an $\Omega$-background are the following.   If one has a supersymmetric field theory on a space-time $X$ with an $S^1$ action, one can try to deform the theory so that it acquires a supersymmetry whose square is the generator of rotation.  One then shows that objects (such as local operators) in the theory which are fixed by this supercharge can be described by an effective lower-dimensional field  theory living at the fixed points of the $S^1$ action on $X$ \cite{NekWit10, Yag14, BulDimGai15}.  Thus, the $\Omega$-background provides a field-theoretic version of the phenomenon of localization in equivariant cohomology.

In this paper I investigate the $\Omega$-background in supergravity.  Given a space-time manifold $X$ which has an $S^1$ isometry, we can look for a supergravity background whose underlying manifold is $X$ and which is equipped with a generalized Killing spinor whose square is the generator of rotation.   Putting a field theory in such a supergravity background has the effect of putting the field theory in an $\Omega$-background.

The following theorem gives a broad class of $11$-dimensional supergravity backgrounds equipped with such a generalized Killing spinor. 
\begin{theorem} 
Let $M$ be a manifold with holonomy in $G2$, and let $Z$ be a hyper-K\"ahler $4$-manifold.  Suppose that $M$ has an $S^1$ action.  

Then, there exists a complex $3$-form $\eps C$ on $M \times Z$ such that $(M \times Z,\eps C)$ satisfies the equations of motion of $11$-dimensional supergravity.  There also exists a generalized Killing spinor $\Psi(\eps)$ on $(M \times Z, \eps C)$ whose square is $\eps V$, where $V$ is the vector field generating rotation.   
\end{theorem}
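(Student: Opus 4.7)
The plan is to build the deformed background perturbatively in $\eps$, beginning from the trivial vacuum in which $C=0$ and the metric on $M\times Z$ is the product metric. Because $G_2$-holonomy and hyper-K\"ahler holonomy both imply Ricci-flatness, the undeformed product $(M\times Z,\,C=0)$ already satisfies the $11$-dimensional supergravity equations of motion, and the holonomy reduction $G_2\times \op{Sp}(1)\subset \op{Spin}(11)$ yields a multi-dimensional space of parallel spinors; I would pick a distinguished $\Psi_0=\eta_M\otimes\eta_Z$ as the zeroth-order Killing spinor.

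The ansatz for $\eps C$ couples the Killing-vector data on $M$ to the hyper-K\"ahler structure on $Z$. Let $V^\flat$ be the $1$-form dual to $V$ and let $\Omega=\omega_J+i\,\omega_K$ be the holomorphic symplectic $2$-form on $Z$ with respect to the complex structure $I$. The natural candidate is $\eps C=\eps\,V^\flat\wedge \Omega$, possibly supplemented by a term built from the $G_2$ $3$-form $\phi$ contracted with $V$. Then $G=dC=dV^\flat\wedge\Omega$, and the Maxwell equation $d\!*\!G+\tfrac{1}{2} G\wedge G=0$ follows from two observations: $G\wedge G\propto \Omega\wedge \Omega=0$ on the $4$-manifold $Z$, and $*G=*_M(dV^\flat)\wedge\Omega$ (using self-duality of $\Omega$ on $Z$), while $d*_M(dV^\flat)=0$ because a Killing vector on a Ricci-flat manifold is harmonic. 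The stress tensor $T_{MN}\propto G_{MPQR}G_N{}^{PQR}-\tfrac{1}{12}g_{MN}G^2$ vanishes by the same "null" mechanism ($\Omega\wedge \Omega=0$ together with the parallelism of $\Omega$ and the Killing property of $V$), so the product metric continues to solve Einstein's equation without deformation.

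To handle the Killing spinor, expand $\Psi(\eps)=\Psi_0+\eps\Psi_1+O(\eps^2)$ and solve the gravitino equation $\nabla_\mu\Psi+\tfrac{1}{288}(\Gamma^{\nu\rho\sigma\tau}{}_\mu-8\delta^\nu_\mu\Gamma^{\rho\sigma\tau})G_{\nu\rho\sigma\tau}\Psi=0$ order by order. At order $\eps^0$ this is $\nabla\Psi_0=0$, which holds by hypothesis. At order $\eps^1$ it determines $\nabla_\mu\Psi_1$ explicitly as a Clifford action on $\Psi_0$; its integrability is equivalent to the Einstein/Maxwell equations verified above, so a smooth $\Psi_1$ exists. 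The square condition $\bar\Psi\Gamma^\mu\Psi=\eps V^\mu$ at order $\eps$ reduces to $2\bar\Psi_0\Gamma^\mu\Psi_1=V^\mu$, since $\bar\Psi_0\Gamma^\mu\Psi_0$ vanishes (the $G_2\times\op{Sp}(1)$ holonomy admits no parallel vector field); one then fixes the ambiguity in $\Psi_1$ (shifts by further parallel spinors) to install this normalization.

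The main obstacle is the algebraic step of checking that the particular $\Psi_1$ forced by the Killing spinor equation is compatible with the required square condition. This amounts to a Fierz-type identity in $11$-dimensional Clifford algebra that interleaves the parallel spinors of $G_2$ and $\op{Sp}(1)$ with the tensors $V^\flat$, $\phi$, and $\Omega$. One must then verify that higher-order corrections in $\eps$ either vanish identically (because $\Omega\wedge \Omega=0$ truncates many of the relevant Clifford expressions) or can be absorbed by a consistent redefinition of $\Psi$ and $C$. Modulo this spinor bookkeeping, the argument rests only on the special geometry of a $G_2\times$ hyper-K\"ahler product and standard $11$-dimensional supergravity identities.
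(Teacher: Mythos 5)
Your bosonic check is essentially the paper's: $F=\d V^{\flat}\wedge\omega_Z$ with $\omega_Z$ the parallel $(0,2)$ (or, in your conventions, $(2,0)$) form, self-duality of $\omega_Z$ plus harmonicity of $V^{\flat}$ on a Ricci-flat manifold giving $\d\ast F=0$, and $\omega_Z\wedge\omega_Z=0$ killing $F\wedge F$ and the stress tensor. But the fermionic half of your argument has a genuine gap. You assert that a smooth $\Psi_1$ solving $\nabla_\mu\Psi_1=-(\text{Clifford action of }F)\Psi_0$ exists because "integrability is equivalent to the Einstein/Maxwell equations." That is false as a general principle: the equations of motion are necessary consequences of the existence of a generalized Killing spinor, not sufficient for it — a bosonic solution need not be supersymmetric, and the first-order integrability condition $R_{\mu\nu}\cdot\Psi_1=-\nabla_{[\mu}\bigl(T_{\nu]}\Psi_0\bigr)$ is an additional constraint. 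The paper avoids this entirely by writing the spinor in closed form, $\Psi(\eps)=\psi_M\otimes\psi_Z+\eps\,(V\cdot\psi_M)\otimes(\omega^{0,2}_Z\cdot\psi_Z)$, and verifying the generalized Killing spinor equation exactly (the series terminates at order $\eps$; nothing needs to be "absorbed" at higher orders). The verification hinges on a lemma you never identify: $\omega_V\cdot\psi_M=0$, which holds because $V$ preserves the $G2$ $3$-form, so that $\omega_V=\d V^{\flat}$ is a $\mf{g}_2$-valued $2$-form ($\omega_V\ast A_{G2}=\d(V\vee A_{G2})=\mc{L}_V A_{G2}=0$) and therefore annihilates the $G2$-invariant spinor. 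Your proposal uses only that $V$ is Killing; without the $G2$-invariance of the circle action the Killing spinor equation fails in the directions tangent to $Z$, since $(\alpha\vee F)\cdot(\psi_M\otimes\psi_Z)=(\omega_V\cdot\psi_M)\otimes(\lambda\cdot\psi_Z)$ would not vanish.

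The square condition, which you defer as "Fierz-type bookkeeping," is also where the content lies, and your one concrete claim there is shaky: you argue $\br{\Psi}_0\Gamma^\mu\Psi_0=0$ because the holonomy admits no parallel vector field, but the theorem only assumes holonomy \emph{contained in} $G2\times SU(2)$ (the application is $TN_k\times\R^3\times\C^2$, which has parallel vectors along $\R^3$). The paper's argument is pointwise representation theory: the bracket $\Sym^2(S_7\otimes S_-)\to\C^{11}$ kills the $\Sym^2 S_7\otimes\Sym^2 S_-$ summand (no equivariant map out of $\Sym^2 S_-$), is forced up to scale on $\wedge^2 S_7\otimes\wedge^2 S_-$ by the symplectic form on $S_-$ and the $7$d $\Gamma$-matrices, and then the cross term of the two pieces of $\Psi(\eps)$ yields exactly $\eps V$. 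To repair your proof you would need to supply the explicit $\Psi_1$, the lemma $\omega_V\cdot\psi_M=0$ (and hence the hypothesis that the $S^1$ action preserves the $G2$ structure), and this pointwise bilinear computation.
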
 

When $\eps = 0$, $\Psi(\eps)$ is one of the four covariant constant spinors one has on a manifold of $G2 \times SU(2)$ holonomy.  The $3$-form field $C$ is 
$$
C = V^{\flat} \wedge \omega^{0,2}
$$
where $\omega^{0,2}$ is the covariant constant $(0,2)$ form on $Z$ (in a particular complex structure), and $V^{\flat}$ is the $1$-form on $M$ constructed from the vector field $V$ using the metric. 

\subsection{}
One reason that one might be interested in supergravity backgrounds of this nature is that putting a $D$- or $M$-brane in such a background will automatically give an $\Omega$-background construction on the brane.   But one might also have a more ambitious goal: to understand the AdS dual of field theories in the $\Omega$-background.  This paper is a  first step in a project to investigate AdS/CFT for $M$-theory in the $\Omega$-background. 

To be able to achieve this, we would need to see if there is a localization phenomenon in supergravity in the $\Omega$-background similar to the one present in field theory.  For this to happen, it is essential to incorporate the generalized Killing spinor which squares to rotation.

In \cite{CosLi16}, we gave a general  definition of twisting supergravity when one has a supergravity background with a Killing spinor which squares to zero.   We define a \emph{twisted background} to be a supergravity background where the bosonic ghost is set to equal the square-zero Killing spinor.  This is the analog of twisting in supersymmetric field theory. Putting a supersymmetric theory in a twisted supergravity background has the effect of adding a supersymmetry of square zero to the BRST operator of the supersymmetric field theory, which is the essential part of procedure of twisting the theory. We define twisted supergravity to be supergravity in perturbation theory around a twisted background. 

I will give a similar definition for supergravity in the twisted $\Omega$ background. Given a supergravity background with a generalized Killing spinor which squares to a rotation, the twisted $\Omega$ background is defined to be the supergravity background where the bosonic ghost takes value the chosen generalized Killing spinor\footnote{There are some subtleties with this definition which we will discuss in the  body of the paper.}.  We can then define supergravity in the twisted $\Omega$ background to be supergravity in perturbation theory around such a background.   I will argue that the localization phenomenon familiar from the $\Omega$-background in field theory holds for supergravity in the twisted $\Omega$ background as well.  
 
\subsection{}
The main statements of this paper concern $11$-dimensional supergravity and $M$-theory in a twisted $\Omega$ background.  Let $TN_k$ denote the Taub-NUT manifold with an $A_k$ singularity at the origin.  Consider $M$-theory on $TN_k \times \R^3 \times \C^2$.  We consider $TN_k \times \R^3$ to be a manifold with holonomy inside $G2$, and $\C^2$ to be a manifold with holonomy inside $SU(2)$, so that our general construction of $\Omega$-backgrounds applies. 

There is an $S^1 \times S^1$ action on $TN_k \times \R^2$ which preserves the holomorphic volume form.  One $S^1$ rotates the circle fibre of the Taub-NUT and the other simultaneously rotates the base of the Taub-NUT fibration and the plane $\R^2$.  This action preserves the $G2$ structure on $TN_k \times \R^3$. Our general construction gives a twisted $\Omega$-background for $11$-dimensional supergravity, with two parameters $\eps,\delta$, where $\eps$ corresponds to rotation of the Taub-NUT circle fibre and $\delta$ to the other rotation. 

Our main result is the following.
\begin{proposition}
$M$-theory in this twisted $\Omega$-background is equivalent to a $5$-dimensional non-commutative Chern-Simons gauge theory on $\R \times \C^2$, with complex gauge group $GL(k+1,\C)$.

This gauge theory has fundamental field given by a $3$-component partial connection
$$
A = A_t \d t + A_{\zbar_1}\d \zbar_1  + A_{\zbar_2} \d \zbar_2 
$$
where each component is a smooth map from $\R \times \C^2$ to $\mf{gl}(k+1,\C)$.

The action functional is the non-commutative Chern-Simons action  
\begin{multline*}
S(A) = \tfrac{1}{2\delta } \int \op{Tr} \left( A \d A\right) \d z_1 \d z_2 + \tfrac{1}{3\delta } \int \op{Tr}\left( A^3 \right)\d z_1 \d z_2  \\
+  \int \op{Tr} \left\{ 
\sum_{n \ge 1} \tfrac{\eps^n}{3 \delta 2^n n!} \varepsilon_{i_1j_1} \dots \varepsilon_{i_n j_n} A \left( \dpa{z_{i_1}} \dots \dpa{z_{i_n}} A \right) \wedge \left( \dpa{z_{j_1}} \dots \dpa{z_{j_n}} A \right)\right\} \d z_1 \d z_2.
\end{multline*}
Here $\varepsilon_{ij}$ is the alternating symbol, and in the second line the sum over indices $i_r$, $j_s$ in the set  $\{1,2\}$ is understood.  
\end{proposition}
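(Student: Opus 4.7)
The plan is to establish the equivalence through a sequence of reductions of $11$-dimensional supergravity, organized by the two $S^1$ symmetries and the twist. The starting point is the description of twisted $11$-dimensional supergravity on a background of the form $M^7 \times \C^2$ with $M^7$ a $G_2$ manifold, from earlier work (cf.\ \cite{CosLi16}): after setting the bosonic ghost equal to the covariant-constant spinor, the resulting theory is a partial-holomorphic Chern--Simons type theory whose fields are naturally described in terms of a mixed de Rham / Dolbeault complex on the appropriate $10$-dimensional part of the geometry. I would first instantiate this general twisted description on our specific background $TN_k \times \R^3 \times \C^2$, choosing the complex structure on $TN_k$ and on $\R^2 \subset \R^3$ that is compatible with the $G_2$ structure coming from the chosen Killing spinor, while leaving $\R_t \subset \R^3$ as the topological direction.

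The second step is to implement the $\Omega$-background localization in the parameter $\eps$. The $S^1$ that rotates the Taub--NUT circle fiber has fixed points precisely at the $k+1$ NUT centers. Using the supergravity analog of Nekrasov--Witten localization (developed in the body of the paper for the twisted $\Omega$-background), turning on $\eps$ forces the effective theory to live on these fixed points. Standard M-theory/IIA arguments then identify the resulting worldvolume theory on the $k+1$ NUTs with a $\gl(k+1,\C)$ (complexified) gauge theory, which is the origin of the gauge group. Crucially, the nonzero value of $\eps$ does more than project to fixed points: the generalized Killing spinor implementing the $\Omega$-twist has a component proportional to $\omega^{0,2}$ on $\C^2$, and the corresponding background $C$-field $C = V^\flat \wedge \omega^{0,2}$ induces a Moyal-type deformation of the algebra of functions on $\C^2$ with noncommutativity parameter $\eps$ and Poisson bivector $\tfrac{1}{2}\varepsilon_{ij}\,\partial_{z_i}\wedge\partial_{z_j}$.

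The third step is to implement the $\delta$-deformation, which rotates $\R^2 \subset \R^3$ together with the base of the Taub--NUT fibration. The fixed locus of this $S^1$ action on $\R^3$ is the line $\R_t$, and combined with the untouched $\C^2$ this is precisely the $5$-dimensional worldvolume $\R \times \C^2$ of the target theory. Moreover, the standard $\Omega$-background rescaling of the localized action produces the overall $\tfrac{1}{\delta}$ prefactor, with the two-form $\d z_1\,\d z_2$ arising as the holomorphic symplectic form transverse to the fixed locus that survives after integrating out the rotated directions. Putting the two steps together gives a $5$-dimensional theory with field a partial connection $A = A_t\,\d t + A_{\zbar_1}\,\d\zbar_1 + A_{\zbar_2}\,\d\zbar_2$ valued in $\gl(k+1,\C)$, whose action is precisely a Moyal-deformed Chern--Simons functional with coupling $\delta^{-1}$ and noncommutativity $\eps$.

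The main obstacle is the second step: carefully deriving the combinatorics of the Moyal expansion from the sugra analysis. The cubic interaction $\tfrac{1}{3\delta}\int\op{Tr}(A\star A\star A)\,\d z_1\,\d z_2$, once the $\star$-product is expanded, produces exactly the series $\sum_{n\ge 1} \tfrac{\eps^n}{3\delta\, 2^n\, n!}\varepsilon_{i_1 j_1}\cdots\varepsilon_{i_n j_n}\,A\,(\partial_{z_{i_1}}\cdots\partial_{z_{i_n}}A)\,(\partial_{z_{j_1}}\cdots\partial_{z_{j_n}}A)$ appearing in the statement, so matching the action amounts to identifying the effective cubic coupling of the twisted $C$-field with the gauge field to all orders in $\eps$. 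Given the BV formulation of twisted sugra this is in principle routine, but requires careful bookkeeping of signs, of the trace cyclicity that makes the quadratic $A\,\d A$ and $A^2$ Moyal corrections collapse to the ordinary kinetic term, and of the precise combinatorial factors $\tfrac{1}{2^n n!}$ dictated by the Moyal exponential.
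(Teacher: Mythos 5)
There is a genuine gap, and it sits exactly where the paper puts its main weight. Your first step assumes a description of twisted $11$-dimensional supergravity on $M^7\times \C^2$ taken from \cite{CosLi16}, but no such description is available there: that reference treats twists of type II string theory, and even its closed-string description is conjectural (only the open-string/brane statements are proven). The paper deliberately routes around this. Its key step — flagged as the most important one — is Theorem \ref{theorem_exactness}: the variation of the Taub--NUT radius (combined with a rescaling of the supercharge and the $G2$ $3$-form) is $Q$-exact. This is what justifies shrinking the radius and replacing twisted $M$-theory by type IIA with $k+1$ $D6$ branes, where the twisted worldvolume theory is actually known. Your ``localization to the $k+1$ NUT centers'' plus ``standard M-theory/IIA arguments'' silently uses the small-radius description without any argument that it computes the twisted $M$-theory answer at finite radius. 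Two further ingredients you omit are also doing real work in the paper: the explicit reduction showing that the $C$-field $-\eps V^{\flat}\d\zbar_1\d\zbar_2$ becomes a \emph{constant} $B$-field $-\eps r^2 \d\zbar_1\d\zbar_2$ in IIA (this is a computation in Witten's conventions, and only then does Seiberg--Witten give the Moyal deformation — your claim that the $C$-field ``induces'' the $\ast_\eps$-product is the conclusion of that analysis, not an input), and the argument that after the $B$-field is turned on the twisted closed-string sector becomes perturbatively trivial, so that the answer is the brane gauge theory alone.

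The passage from $7$ to $5$ dimensions is also asserted rather than derived. The paper recasts the $7$d maximally supersymmetric gauge theory on $\R^3\times\C^2$ as a $3$d $N=4$ gauge theory with infinite-dimensional hyper-K\"ahler target $\Omega^1(\C^2)\otimes\g$ acted on by gauge transformations, identifies the relevant supercharge as the Rozansky--Witten twist, and then applies the general statement that such a theory in an $\Omega$-background becomes a one-dimensional gauged quantum mechanics with action $\tfrac{1}{\hbar}\int_\R \omega(\phi,\d_A\phi)$; applied to this target, that is literally where the field content $A_t,A_{\zbar_1},A_{\zbar_2}$, the prefactor $1/\delta$ (as the equivariant parameter $\hbar$), and the measure $\d z_1\d z_2$ come from — the latter is the holomorphic symplectic pairing on the target (i.e.\ on $\C^2$, which lies \emph{inside} the fixed locus), not a form ``transverse to the fixed locus'' as you state. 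Finally, you locate the main difficulty in the Moyal combinatorics; in fact, once the theory is identified as the open-string field theory of the $D6$ branes with $\ast_\eps$-product, the expansion of $\int\op{Tr}(A\ast_\eps A\ast_\eps A)$ is automatic, and the substantive content of the proof lies in the radius $Q$-exactness, the decoupling of the closed strings, and the second $\Omega$-background reduction.
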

If $\ast_\eps$ indicates the Moyal product defined using the Poisson tensor $\dpa{z_1} \dpa{z_2}$, we can write the action more succinctly as
$$
\tfrac{1}{2\delta } \int \op{Tr} A \d A \d z_1 \d z_2 + \tfrac{1}{3 \delta} \int A\ast_\eps A \ast_\eps A \d z_1 \d z_2.  
$$
 The gauge transformations in this theory are modified to incorporate the Moyal product as well.

\subsection{}
This theory is highly non-renormalizable, as the action functional contains infinitely many derivatives.   Nevertheless, we have the following result. 
\begin{theorem*}
This $5$-dimensional gauge theory exists at the quantum level, in perturbation theory in the regime where $\abs{\delta} \ll \abs{\eps}$. Even though the theory is formally non-renormalizable,  all counter-terms are fixed uniquely (up to a redefinition of the coupling constants $\eps$ and $\delta$).   
\end{theorem*}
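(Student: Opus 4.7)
The plan is to apply the BV formalism together with Costello's framework for renormalizing gauge theories in a way that reduces the existence-and-uniqueness claim to a cohomological computation. First I would extend the field content to the full BV package: adjoin a scalar ghost $c : \R \times \C^2 \to \mf{gl}(k+1,\C)$ together with antifields $A^\vee, c^\vee$ of the appropriate degrees, and write out the BV extension of the non-commutative Chern-Simons action. The classical master equation $\{S,S\} = 0$ follows from the fact that $\Omega^\bullet(\R) \otimes \Omega^{0,\bullet}(\C^2) \otimes \mf{gl}(k+1,\C)$ carries an $L_\infty$-algebra whose binary bracket is the Moyal commutator and whose higher brackets come from the tail of the Moyal expansion; this $L_\infty$-structure encodes the Moyal-deformed gauge symmetry. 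I would then gauge-fix using the mixed operator $\GF = d_t^* + \dbar^*$ in a flat metric, obtaining a scale-$L$ heat-kernel propagator and defining the family of scale-$L$ effective actions $S[L]$ via Costello's homotopy RG flow.

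The standard obstruction theory in this setup says that at each loop order $\ell \geq 1$ the obstruction to promoting $S[L]$ to a solution of the scale-$L$ quantum master equation lies in $H^1(\Ool, Q_{cl})$, and the freedom in the choice of counter-term is measured by $H^0(\Ool, Q_{cl})$ modulo the tangent directions along the classical $\eps$- and $\delta$-rescalings. The theorem is therefore equivalent to establishing:
\begin{enumerate}
\item[(i)] $H^1(\Ool, Q_{cl}) = 0$, so no anomaly obstructs the quantization;
\item[(ii)] $H^0(\Ool, Q_{cl})$ modulo BRST-exact terms is two-dimensional, spanned by the infinitesimal $\eps$- and $\delta$-rescalings of the action.
\end{enumerate}

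To compute these groups I would use the descent spectral sequence for a holomorphic-topological Chern-Simons theory on $\R \times \C^2$: translation invariance in $t$ together with holomorphicity in $z_1, z_2$ reduces the local cochain complex to the Chevalley-Eilenberg complex of the Lie algebra $\mf{gl}(k+1,\C) \otimes A_\eps$ with coefficients in the one-dimensional module built from the trace and integration against $\d z_1\, \d z_2$, where $A_\eps = (\C[[z_1,z_2]], \ast_\eps)$ is the Moyal algebra. By a Loday-Quillen-Tsygan type argument this Lie-algebra cohomology is computed by the reduced cyclic cohomology $\overline{HC}^\bullet(A_\eps)$. When $\eps$ is invertible, $A_\eps$ is a deformation quantization of the symplectic affine plane and its cyclic cohomology in the relevant degrees is one-dimensional, generated by the Moyal trace; rescaling of $\eps$ supplies the second direction in (ii), while the higher-degree groups all vanish, yielding (i).

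The main obstacle is this cohomological computation for $\eps$ invertible, in particular the rigorous identification of $H^\bullet(\Ool, Q_{cl})$ with $\overline{HC}^\bullet(\mf{gl}(k+1,\C) \otimes A_\eps)$ and the control of the descent spectral sequences that enter it. The regime $\abs{\delta} \ll \abs{\eps}$ is essential: at $\eps = 0$ the Moyal product degenerates to the commutative product on $\C[[z_1, z_2]]$, whose Hochschild and cyclic cohomology are much larger, and the uniqueness of counter-terms fails. The technical heart of the proof is thus the statement that non-commutativity suppresses all dangerous counter-terms coming from the infinitely many higher-derivative vertices in the action.
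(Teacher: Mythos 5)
Your strategy is essentially the paper's: Costello's BV/RG framework, reduction of existence and uniqueness to the cohomology of the obstruction--deformation complex of local functionals, a Loday--Quillen--Tsygan identification of that complex with cyclic (co)homology of the Moyal algebra, and the key collapse of this cyclic cohomology when $\eps$ is inverted. However, two steps are too quick as written. First, the LQT reduction is a \emph{stable} statement: for the fixed gauge algebra $\mf{gl}(k+1)$, the Chevalley--Eilenberg cohomology of $\mf{gl}(k+1)\otimes A_\eps$ is not computed by cyclic cohomology, because of trace relations at finite rank. The paper avoids this by quantizing ``uniformly in $N$'' in the sense of \cite{CosLi15}, treating all ranks at once (and allowing polynomial $N$-dependence in counterterms); that uniformity is what licenses identifying single-trace local functionals with the dual of cyclic homology. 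Your proposal needs either this uniformity or a separate finite-rank argument.

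Second, your claim (i) that $H^1=0$ does not follow from the smallness of the cyclic cohomology of the Weyl algebra alone. After descent on $\R\times\C^2$ the obstruction--deformation complex is, up to a shift, $\left(\C[\eta_3,\eta_5,\dots]/\C\right)\otimes\C[N]\otimes\C[\d z_1,\d z_2]$ with $\eta_{2k+1}$ in degree $2k+1$ dual to the cyclic classes, and in ghost number one it contains nonzero classes such as $\eta_5\,\d z_i$. The paper kills these by requiring the quantization to be $\C^\times$-equivariant, $\mf{sl}(2)$-invariant and holomorphically translation invariant; the same $\C^\times$-weight counting is what forces the $k$-loop counterterms to carry exactly $\eps^{-k}\delta^k$, which is the precise meaning of the regime $\abs{\delta}\ll\abs{\eps}$. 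Since you never impose these symmetries (nor argue the gauge fixing can be chosen compatibly with them), the anomaly-vanishing step has a genuine gap. Relatedly, the two allowed deformations are the classes $\eta_5$ and $\eta_3\,\d z_1\d z_2$ (with coefficients polynomial in $N$), absorbed into redefinitions of $\delta$ and $\eps$ respectively; this is consistent with, but sharper than, your ``Moyal trace plus $\eps$-rescaling'' count. A minor point: the Moyal product is associative, so the gauge structure is an honest dg Lie algebra given by the Moyal commutator, not an $L_\infty$ algebra with higher brackets.
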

Roughly, the reason that this theorem holds is because the very large gauge symmetry fixes the counter-terms uniquely. The theorem is proved by a cohomological analysis of the possible counter-terms.

\subsection{} 
This paper also gives a description of $M2$ and $M5$ branes.  I argue that an $M2$ brane wrapping $\R^3$ in $TN_k \times \R^3 \times \C^2$ becomes the instanton particle of the non-commutative $5$-dimensional gauge theory.   Further, the theory living on the $M2$ brane becomes holomorphic quantum mechanics on the moduli of non-commutative instantons on $\C^2$.  

An $M5$ brane wrapping $TN_k$ and a holomorphic curve $\Sigma \subset \C^2$ becomes a t'Hooft surface operator in the $5$-dimensional gauge theory. 

\subsection{}
In broad outlines, the argument I present which justifies this description of $M$-theory in the $\Omega$-background goes as follows.  The most important step in the argument is the following.
\begin{proposition*} 
In $M$-theory  in the $\Omega$ background on $TN_k \times \R^2 \times \R \times \C^2$, varying the radius of the Taub-NUT is $Q$-exact for the supersymmetry we are using. 
\end{proposition*}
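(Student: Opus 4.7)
The plan is to realize the $L$-derivative of the supergravity background as $Q$-exact by exhibiting it in the image of $\L_V$. Recall that the supercharge $Q$ of the twisted $\Omega$-background squares to the rotation vector field, $Q^2 = \L_V$, where $V$ combines the $\eps$-rotation of the Taub-NUT circle fiber and the $\delta$-rotation of the Taub-NUT base together with the $\R^2$ factor. As in the Cartan model of equivariant cohomology, any variation of the form $\L_V \Psi_0$ is automatically $Q$-exact, with antecedent $Q \Psi_0$; so it suffices to write the full variation of the background in the image of $\L_V$.

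Writing the Taub-NUT metric as $g_L = H_L^{-1}(\d\theta + A)^2 + H_L\, \d\vec{x}^2$ with harmonic function $H_L = L^{-2} + \tfrac{1}{2}\sum_{i=1}^{k+1}|\vec{x}-\vec{x}_i|^{-1}$, Theorem 1 produces a smooth family $(g_L, C_L, \Psi_L)$. The derivative $\partial_L g_L$ is $V$-invariant, and the companion variations $\partial_L C_L = \partial_L V^\flat_L \wedge \omega^{0,2}$ and $\partial_L \Psi_L$ are determined by $\partial_L g_L$ via the construction of Theorem 1. The problem then reduces to finding a BV field $\Psi_0$, not $V$-invariant, such that $\L_V \Psi_0$ reproduces all three variations simultaneously. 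A natural ansatz is $h_L = \tau\, \partial_L g_L$ for the metric part, where $\tau$ is a function on $TN_k \times \R^2$ satisfying $V(\tau) = 1$ --- existing smoothly away from the zero set of $V$ --- and analogous expressions for the $C$-field and spinor contributions.

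The main obstacle is that such $\tau$ is necessarily multivalued on $TN_k \times \R^2$, with singularities concentrated at the fixed locus of $V$. The failure of $\tau$ to be globally defined is precisely a large gauge transformation, whose ghost contributions must be absorbed into $\Psi_0$. Near the fixed locus, the singular behavior of $\tau$ must be matched against the local supergravity data by a localization argument, replacing the $\L_V$-antecedent locally by an explicit $Q$-antecedent built from the Killing spinor bilinear, where one can exploit the enhanced supersymmetry near each fixed point. Once these technical pieces are assembled, $\Psi_0$ becomes a well-defined element of the twisted supergravity BV complex, and $\Lambda_L = Q \Psi_0$ gives the desired $Q$-exact expression for $\partial_L$ of the background.
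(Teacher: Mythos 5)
There is a genuine gap. Your mechanism is ``$Q^2=\L_V$, so anything of the form $\L_V\Psi_0$ is $Q$-exact,'' and you then try to write $\partial_L$(background) $=\L_V(\tau\,\partial_L g_L+\cdots)$ with $V(\tau)=1$. But the radius variation is $V$-invariant, and an invariant variation is $\L_V$-exact by a \emph{globally defined smooth} potential only if its equivariant class vanishes; in general such classes localize to the fixed locus of $V$, which is exactly where the Taub-NUT radius variation is nontrivial (it changes the geometry at the tip, where the effective $5$- and $7$-dimensional theories live). So the multivaluedness and singularity of $\tau$ at the fixed locus is not a technicality to be patched by ``absorbing a large gauge transformation'' and ``a localization argument with an explicit $Q$-antecedent built from Killing spinor bilinears'' --- that unconstructed antecedent near the fixed points \emph{is} the proof, and nothing in your sketch produces it or shows the fixed-point obstruction vanishes.

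The paper proves the statement by a different and global mechanism that never invokes $Q^2=\eps V$: it exhibits an explicit covariant-constant gravitino background, $\lambda(r,\alpha)=2\alpha^{-1}r^{1/2}\,\mu(g_M\otimes\psi_M)\otimes(\omega^{0,2}_Z\cdot\psi_Z)$, which satisfies the linearized gravitino equation of motion, and shows that the local supersymmetry variation generated by the generalized Killing spinor $\Psi(\eps)$ applied to $\lambda$ reproduces the scaling of the Taub-NUT metric together with a compensating variation of the $C$-field (a term proportional to $C$ itself plus a closed term $c\,r^3 A_{G2}$). Hence exactness is exact for the fermionic gauge symmetry everywhere, with no fixed-point subtleties. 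Note also two points of content your sketch glosses over: the exact direction in parameter space is $r\dpa{r}+\tfrac{1}{2}\alpha\dpa{\alpha}$, so the pure radius variation at fixed spinor normalization is not $Q$-exact --- one must simultaneously rescale the spinor (equivalently set $\alpha=r^{1/2}$) --- and the required $C$-field is not just the one ``determined by Theorem 1'' but includes the extra closed $G2$ three-form piece whose coefficient is fixed by the supersymmetry argument. Your proposal neither constructs the fixed-locus antecedent nor accounts for these compensating terms, so as it stands it does not establish the proposition.
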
 
This means that we will get the same answer if the Taub-NUT radius is small or large.   

If we take the radius of the Taub-NUT to be small, then of course everything simplifies greatly.  In this limit we can approximate $M$-theory on $TN_k \times \R^3 \times \C^2$ by type IIA superstring theory, with $k+1$ $D6$ branes. When $\eps = \delta = 0$, we can give a description of the twist of the theory on the $D6$ branes and a (conjectural) description of the twist of type IIA supergravity using the results of \cite{CosLi16}.  The twist of the theory on the $D6$ branes is a gauge theory on $\R^3 \times \C^2$ which is holomorphic on $\C^2$ and topological on $\R^3$.  

Next, let us take $\eps$ to be non-zero but $\delta = 0$.  We find that the $C$-field we need to introduce in $11$ dimensions becomes a $B$-field in the $\C^2$ direction, which makes the theory on the $D6$ brane non-commutative in those directions \cite{SeiWit99}. We also argue, using the description of type IIA supergravity given in \cite{CosLi16}, that after introducing this $B$-field type IIA supergravity becomes essentially trivial and we don't need to include it. 

Thus, when $\delta = 0$, we have a description in terms of a $7$-dimensional non-commutative gauge theory.   To figure out what happens when we turn on $\delta$, we use a fairly standard analysis of gauge theories in the $\Omega$ background to give a description in terms of a $5$-dimensional theory.

This shows that when the Taub-NUT radius is small, we can give an effective description of a BPS sector of $11$-dimensional supergravity in terms of the $5$-dimensional non-commutative gauge theory.  However, since the  variation of the Taub-NUT radius is $Q$-exact, this description should also holds at large radius. This completes the proof.

Some aspects of the arguments sketched above are a little more indirect than would be ideal. It would be desirable to, for example, explicitly relate the solutions to the equations of motion of the $5$-dimensional gauge theory (which are non-commutative instantons) to the supersymmetric solutions to the equations of motion of $11$-dimensional supergravity. Unfortunately I was unable to do this.

\subsection{}
Let me try to compare this work to existing work \cite{HelOrlRef12} on $M$-theory in $\Omega$-backgrounds.   These authors construct backgrounds (called fluxtrap or fluxbrane backgrounds) for string and $M$-theory by starting with geometries consisting of an $\R^8$ bundle over a torus and applying various dualities.  The relationship between the $M$-theory backgrounds considered here and those considered in \cite{HelOrlRef12} is not at all clear to me; it would be very interesting to clarify this point. 

Unfortunately, I was unable to use the fluxtrap backgrounds of \cite{HelOrlRef12} for the purposes of the present paper. This was my motivation for developing the approach used here. Another advantage of the class of $M$-theory backgrounds discussed here has the advantage that it is very general: given any $G2$ manifold with an $S^1$ action we can construct an $\Omega$-background for $M$-theory.  A disadvantage of the backgrounds considered here is that the $C$-field is necessarily complex.

\subsection{}
As a test of this picture, I investigate holography for $M5$ branes in the $\Omega$-background, using a proposal  for twisted holography developed with Si Li.  The holographic statement  I prove  relates the algebra of operators on $N$ $M5$ branes wrapping $TN_k$, in the $\Omega$ background and the large $N$ limit,  with the $5$-dimensional non-commutative gauge theory in the presence  of the t'Hooft operator given by the $M5$ branes.    

To prove this statement, we first need to understand the algebra of operators on $N$ $M5$ branes in the $\Omega$-background.  For formal supersymmetry reasons this will  be a chiral algebra, that is, it will have the structure of the algebra of operators of a two-dimensional chiral conformal field theory.  By an argument involving reduction to type IIA, I show that this chiral algebra can  be identified with the algebra of operators living on a surface defect in $GL(N)$ Chern-Simons theory,  where on the surface  defect we couple Chern-Simons theory to $k+1$ fundamental and $k+1$ anti-fundamental chiral free fermions.  

This description of the operators on an $M5$ brane is of interest independent of its application to holography. Indeed, whatever chiral algebra lives on the $M5$ branes in the $\Omega$-background will act on the equivariant cohomology of the moduli of instantons on an $A_k$ singularity.  In the case $k = 0$, I expect the chiral algebra to be the $W_N$ algebra, in line with the AGT correspondence \cite{AldGaiTac10}.  In the general case, Davide Gaiotto provided an argument which strongly suggests that the chiral algebra considered here is isomorphic to a parafermionic algebra considered in \cite{NisTac11} in their work on generalizing AGT to include $A_k$ singularities.   
 
We will work with the specialization where $\delta = 0$, so that  the equivariant parameters corresponding to the rotation of the two planes in the $M5$ brane world-volume sum to zero.  In this specialization, the Chern-Simons theory is treated classically, and the only effect that Chern-Simons theory has on the operators living on the surface defect is that it forces us to take the $GL(N)$ invariant operators.  Taking the large $N$ limit, I show that the algebra of operators on the surface defect becomes a $W_{k+1+\infty}$ algebra with central extension set to $N$.  

The currents of the $W_{k+1+\infty}$ algebra form an associative algebra which is the universal enveloping algebra of a certain infinite-dimensional Lie algebra. This Lie algebra is a central extension of the Lie  algebra  $\C[z,z^{-1},\partial_z] \otimes \mf{gl}_{k+1}$ of matrices whose entries are polynomials in $z,z^{-1}$ and $\partial_z$.  Inside this Lie algebra is the usual  Kac-Moody current algebra $\C[z,z^{-1}] \otimes \mf{gl}_{k+1}$; the central extension is determined by its behaviour on this subalgebra.   When we set the central parameter to take value $N$, the $W_{k+1+\infty}$ algebra maps to the chiral algebra living on $N$ $M5$ branes.   

In the case $k  = 0$, this is consistent with a  number of results in the literature.  Yagi \cite{Yag12} has argued directly that the algebra of operators on $N$ $M5$ branes in the $\Omega$ background is the $W_N$-algebra. This statement is of course closely related to the work of Alday, Gaiotto and Tachikawaa \cite{AldGaiTac10}.  Further, a result of Beem, Rastelli and van Rees \cite{BeeRasvan15} shows that a construction similar in spirit (but different in details) to the $\Omega$-background construction produces the $W_N$ algebra from the theory on a stack of $N$ $M5$ branes.  When $N \to \infty$, the $W_N$ algebra becomes the $W_{1+\infty}$ algebra.  

When $\delta \neq 0$, the algebra of operators on the $M5$ branes becomes a deformation of the $W_{k+1+\infty}$ algebra. It is natural to conjecture that this deformation is the same as the one analyzed by Maulik and Okounkov \cite{MauOko12}: they call it the Yangian for  the affine Lie algebra $\mf{gl}_{k+1}$.  Maulik-Okounkov showed that this Yangian acts on the equivariant cohomology of the moduli of instantons on an $A_k$ singularity.  The chiral  algebra living on $N$ $M5$ branes will also act on the equivariant cohomology of the moduli of instantons of rank $N$ on the $A_k$ singularity (one can see this by relating this chiral algebra to the instanton operators in $5$-dimensional maximally supersymmetric gauge theory).  This is why I expect to find the Yangian considered by Maulik-Okounkov as the algebra of operators on $N$ $M5$ branes when $N \to \infty$.   
\subsection{}

The specialization $\delta = 0$ we are considering is such that the gravitational dual theory (which is our $5$-dimensional non-commutative gauge theory) is treated at tree level.   To prove holography, we need to identify the $W_{k+1+\infty}$ algebra in the gravitational theory.

At a rather crude level,  this is almost immediate. As we have seen, the $W_{k+1+\infty}$ algebra is the universal enveloping algebra of a  central extension of the Lie algebra $\mf{gl}_{k+1}[z,z^{-1}, \partial_z]$.  This Lie algebra is nothing but the gauge symmetries preserving the trivial field configuration of the $5d$ gauge theory on $\R \times \C^\times \times \C$.  

Indeed, the equations of motion of the $5$-dimensional gauge theory describe a bundle on $\R \times \C^\times \times \C$, which is flat on $\R$ and holomorphic on $\C^\times \times \C$ in a compatible way, and where $\C^\times \times \C$ is non-commutative.  Without the non-commutativity, the gauge symmetries preserving the trivial such bundle would the group of holomorphic  maps from $\C^\times \times \C$ to $GL(k+1)$. The Lie algebra of the group of gauge symmetries is $\C[z,z^{-1},w] \otimes \mf{gl}_{k+1}$.   Making $\C^\times \times \C$ makes $w$ and $z$ obey canonical  commutation relations $[w,z]  = \eps$, and the parameter $\eps$ can be absorbed into a redefinition  of $w$.   This turns the Lie algebra of gauge symmetries into the Lie algebra $\C[z,z^{-1},\partial_z] \otimes \mf{gl}_{k+1}$.

This shows that this infinite-dimensional Lie algebra is the symmetries of the background for the $5d$ gauge theory which is holographically dual to the $M5$ branes in an $\Omega$-background.

\subsection{}
One has to work quite a bit harder to see the central extension of this Lie algebra.  We will find that this central extension arises from the flux sourced by the $M5$ branes.   To see this central extension, it is convenient to reduce the $5$-dimensional theory to $3$-dimensions in such a way that the $M5$ branes live on the boundary.

Let us give coordinates $t,z,w$ to our $5$-dimensional space-time $\R \times \C\times \C$, and suppose that the $M5$ branes live at $ t = w = 0$.   We can form the holographic dual of the  theory living on the $M5$ branes by removing the locus $t = w = 0$ and introducing the flux sourced by the $M5$ branes.  (Since the theory does not have a metric, passing to the near horizon limit will play no role). 

Once we remove the locus where the $M5$ branes live, our space-time becomes
$$
(\R_t \times \C_w \setminus (0,0) )  \times \C_z. 
$$
We can identify
\begin{align*} 
\R_t \times \C_w \setminus (0,0)  &\iso  \R^+_r \times S^2\\
(t,w) & \mapsto \left( r = (t^2 + w \br{w})^{1/2}, \frac{(t,w)}{(t^2 + w \br{w} )^{1/2} } \right) 
\end{align*}
Using this isomorphism, we can identify the $5$-dimensional space obtained by removing the location of the $M5$ branes with 
$$
\R^+_r \times \C_z \times  S^2
$$
(we should think of this space as  an analog in our context of $AdS_3 \times S^2$). 

Reducing the $5$-dimensional theory along $S^2$ gives a $3$-dimensional theory with an infinite tower of Kaluza-Klein modes.     The lowest-lying KK modes give us a gauge field $A$ for the Lie algebra $\mf{gl}_{k+1}$, with Chern-Simons action at level $N$.

The higher KK modes are bit more complicated, but let me sketch a description.  The theory will not be invariant under the full $3$-dimensional symmetries of $\R^+_r \times \C_z$.  Instead it will be invariant under diffeomorphisms of $\R^+_r$ and isometries of $\C_z$. 

 We have fields $\alpha_i$, for $i > 0$, and $\beta_{-i}$, for $i > 1$.  The field $\alpha_i$ is a sum of two tensors 
$$
\alpha_i = \alpha_i^r \partial_z^i \d r + \alpha_i^{\zbar} \partial_z^{i} \d \zbar
$$ 
(where $\partial_z^{i}$ is included to indicate that each  term transforms as a section of the $i$th power of the holomorphic tangent bundle of $\C_z$).

The field $\beta_{-i}$ is a section
$$
\beta_{-i} \in \cinfty(\R^+_r \times \C_z, ((T^\ast\C_z)^{\otimes i})
$$
so that $\beta_{-i}$ is a scalar times $\d z^{i}$. 

We can write down the action functional in the specialization where $\eps$, the parameter of non-commutatavity, is zero: 
$$
 N \int_{\R^+_r \times \C_z} CS(A) + \frac{1}{\delta}\sum_{i > 0} \int_{\R^+_r \times \C_z}\op{Tr} \beta_{-i-1} \left( \partial_{\zbar}^A  \alpha_i^r   -  \partial_r^A \alpha_i^{\zbar}\right) + \frac{1}{2 \delta} \sum_{i,j > 0}\op{Tr} \beta_{-i-j-1}\alpha_i \alpha_j + O(\eps).  
$$
Here $\partial_{\zbar}^A$ indicates the covariant derivative for the connection $A$. (The action including all terms in the $\eps$-expansion is written down in section \ref{section_dimensional_reduction_3d}).  

In the case $k = 0$, the $3$-dimensional theory with these fields and action should thus be holographically dual  to the chiral Toda theory, whose algebra of operators is the $W_N$ algebra.   (In \cite{GabGop12}, it is argued that the holographic dual of the full (i.e.\ non-chiral) $W_N$ theory is a Vasiliev higher-spin theory in three dimensions.  I could not see  a direct connection between the holographic dual of the chiral $W_N$ theory constructed here and that for the full $W_N$ theory considered in \cite{GabGop12}.)

One expects from holography that the algebra of operators on $N$ $M5$ branes, as $N \to \infty$, can be described as the operators on the boundary of this $3$-dimensional reduction of our $5$-dimensional theory, for a certain natural boundary condition at $r = \infty$. The boundary condition is the one where the fields $\alpha_i$ are zero and the $(0,1)$ component $A^{0,1}$ of the connection is zero.  The boundary operators will then be built from functions of $A^{1,0}$ and  of the fields $\beta_{-i}$. 

In the limit in which our non-commutative gauge theory is  treated at the tree level, this algebra of boundary operators is calculated explicitly (using Feynman diagrams).  We find that it is indeed the $W_{k+1+\infty}$ algebra with the correct value of the central extension, thus verifying holography.  We find, for instance, that the boundary operators built from the connection $A$ form a copy of the Kac-Moody algebra at level $N$ (this is nothing but the well-known fact that the Kac-Moody algebra is the algebra of operators on the boundary of Chern-Simons theory).   The operators built from the fields $\beta_{-i}$ provide the spin $2$ and higher generators of the $W_{k+1+\infty}$ algebra.

\subsection{}

It is much harder to analyze what happens  when we include quantum corrections on the gravitational side.   To do this, we need to work directly in $5$ dimensions, as the $3$-dimensional description is not adequate for performing loop-level computations in the theory.   I describe an abstract construction of a chiral algebra directly from the $5$-dimensional theory, which at tree level reproduces the construction given above in $3$-dimensional language. I conjecture that this chiral algebra that arises is the Yangian for affine $\mf{gl}_{k+1}$ studied by Maulik and Okounkov \cite{MauOko12}, which deforms the $W_{k+1+\infty}$ algebra.  

\subsection{}
In a sequel to this paper, holography for $M2$ branes is analyzed. There, stronger results can be obtained. The algebra of operators on $N$ $M2$ branes is a deformation quantization of the algebra of holomorphic functions on the moduli of instantons on $\C^2$ of rank $k+1$ and charge $N$.   I will show that, in the large $N$ limit, this algebra can be recovered from the $5$-dimensional gauge theory. In contrast to the situation for $M5$ branes, in this example the correspondence can be checked to all orders in the loop expansion of the $5$-dimensional gauge theory.

\subsection{Acknowledgements}
I'm very grateful to Davide Gaiotto, Jaume Gomes and Edward Witten for patiently answering my many questions about string theory.  I'd particularly like to thank Davide Gaiotto for contributing the arguments in section \ref{section_M5} relating the constructions discussed here to the AGT correspondence.  

\tableofcontents

\section{Generalities on the $\Omega$-background}
Before giving an analysis $11$-dimensional supergravity in the $\Omega$-background, I want to make some general comments about gauge theory and supergravity in the $\Omega$-background, expanding in the brief discussion in the introduction. 

We will start by disucssing, in general terms, how to put a field theory in the $\Omega$-background. The goal will be to get enough of an understanding of this construction that we can import it into supergravity.

Suppose we have a  manifold $X$ with an $S^1$ action. Let $M \subset X$ be the set of fixed points, which we assume is a submanifold. Suppose that, away from $M$, the $S^1$ action is free.   Suppose further that the quotient $Y = X/S^1$ is a manifold, and that $M \subset Y$ is a submanifold.  

Next, suppose we have a field theory on $X$, which we call $T_0$, which is invariant under the $S^1$ action.  Further, suppose that the field theory has a fermionic symmetry $Q_0$ with $Q_0^2 = 0$.   This allows us to twist the theory, by adding $Q_0$ to the BRST operator.    

This happens, for instance, if  $X$ is a Riemannian $4$-manifold with an $S^1$ action by isometries, and we put the Donaldson-Witten twist of $N=2$ Yang-Mills on $X$.  The supercharge $Q_0$ is the globalization of the supercharge invariant under the twisted action of $\op{Spin}(4)$ on the space of fields on flat space.

To put the theory in the $\Omega$-background, we need further data.  We need the following.
\begin{enumerate} 
\item A deformation of our theory $T_0$ into a one-parameter family of theories $T_\eps$ on $X$, each invariant under the $S^1$ action.  
\item A family $Q_\eps$ of fermionic symmetries of the theory $T_\eps$.
\item Let $V_\eps$ be the infinitesimal symmetry of the theory $T_\eps$ arising from the $S^1$ action. We require that 
$$
Q_\eps^2 = \eps V_\eps. 
$$  
\end{enumerate} 
In typical examples, to construct this data, one takes a supersymmetric theory, puts it on a manifold $X$ possibly in a twisted way (i.e.\ working with a non-trivial background gauge field for the $R$-symmetry group) so that the supercharge $Q_0$ is preserved. This produces the theory $T_0$. The family of theories $T_\eps$ is produced by working in some non-trivial supergravity background, e.g.\ with some fluxes.  The spinor $Q_\eps$ is then a generalized Killing spinor for this supergravity background.  I will not give any details right now: later we will discuss in detail a class of $11$-dimensional supergravity backgrounds which have the desired features. Our aim here is rather to elucidate some formal aspects of the $\Omega$-background construction, for which specific examples are not needed.

To produce the $\Omega$ background, we need to ``twist''\footnote{Some authors do not emphasize the importance of this twisting in the $\Omega$ background construction, although in essentially all applications, e.g.\ instanton counting, one uses this twisting} the family of theories $T_\eps$ by the family of fermionic symmetries $Q_\eps$.  Normally, twisting would involve adding a square-zero supercharge to the BRST operator.  In this situation, this does not work, because $Q_\eps^2 = \eps V_\eps$.  If we added $Q_\eps$ to the BRST operator, we would be in a situation akin to that of a theory with a gauge anomaly: the new BRST operator would not square to zero.  

The solution to this difficulty is obtained by mimicking the Cartan approach to equivariant cohomology.  Let us view the theories $T_\eps$ as being a family of theories on $Y$ which we call $\pi_\ast T_\eps$  by compactifying along the map $\pi : X \to Y= X/S^1$, and including all KK modes. The theory $\pi_\ast T_\eps$ has a defect on the submanifold $M \subset Y$ which is the image of the $S^1$ fixed points of $X$. Each $\pi_\ast T_\eps$ has an $S^1$ global symmetry, coming from the $S^1$ action on $X$. We can define a new family of theories $\pi_\ast^{S^1} T_\eps$  on $Y$ by including only $S^1$-invariant objects of the theory $\pi_\ast T_\eps$. E.g. the space of local operators of $\pi_\ast^{S^1} T_\eps$ consists of those operators of $T_\eps$ which are local from the point of view of $Y$ and are $S^1$-equivariant.  

Now, on $\pi_\ast^{S^1} T_\eps$, the operator $Q_\eps$ has square zero, since we are including only $S^1$-invariant objects.  We define the theory in the $\Omega$ background to be the theory $\pi_\ast^{S^1} T_\eps$ but where we twist by $Q_\eps$, i.e.\ we add $Q_\eps$ to the BRST operator.

\subsection{}
One aspect of this construction which might appear a little unsatisfactory is that, at first sight, it does not appear that the theory in the $\Omega$-background is a continuous deformation of the twist of the theory $T_0$ (or rather, of its twist when reduced to $Y$).  To twist $\pi_\ast T_\eps$, we needed to take $S^1$-invariants, but to twist $T_0$ we do not need to do this. 

However, the family of twisted theories is continuous.  The point is that the $S^1$-action on the twist of the theory $\pi_\ast T_0$ is trivial.  To see this, let $\d_\eps$ denote the BRST operator of the family of theories $T_\eps$. We can expand
\begin{align*} 
\d_\eps &= \d_0 + \eps \d_1 + \dots\\
Q_\eps &= Q_0 + \eps Q_1 + \dots 
\end{align*}
and view $\d_i$, $Q_i$ as acting on the theory $T_0$. Then, 
$$
[\d_0 + Q_0,\d_1 + Q_1] = V  
$$
is the vector field generating circle rotation.  This vector field is therefore exact for the sum of the BRST operator and the supercharge $Q_0$ in the theory $T_0$. Taking $S^1$ invariants therefore has no effect, so that we have a continuous family of theories.

\section{Supergravity in the $\Omega$-background}
\label{supergravity_omega}
Our main concern is supergravity in the $\Omega$-background.  Before we discuss this, let us remind the reader of the concept of twisting supergravity.

In supergravity, we gauge local supersymmetries, so that there are bosonic spinorial ghost fields corresponding to these local supersymmetries.  We can consider a supergravity background where some of the ordinary bosonic fields (e.g.\ the metric, various fluxes) have a non-zero value, but where also the bosonic ghost field has a non-zero value. To satisfy the equations of motion, these fields must satisfy the following:
\begin{enumerate} 
  \item Ordinary bosonic fields must satisfy the usual supergravity equations of motion.
\item  The bosonic ghost field must correspond to a generalized Killing spinor for this supergravity background.
\item The square of this generalized Killing spinor must be zero.  
\end{enumerate}
In \cite{CosLi16} such a background is called a twisted background. 
 
The importance of this concept is that it is the supergravity analog of the familiar procedure of twisting a supersymmetric field theory by adding some supersymmetry to the BRST operator.    Indeed, if we place a supersymmetric field theory in a twisted supergravity background, then the bosonic ghost field of the supergravity theory has precisely the effect of adding the corresponding supersymmetry to the BRST operator of the supersymmetric field theory.

We would like to generalize this to give a version of the $\Omega$-background for supergravity.  For now, we will work in a general setting without focusing on the details of the supergravity theory.  The example of $11$-dimensional supergravity will be presented shortly. 

Suppose we have a $d$-dimensional manifold $X_0$ with an $S^1$ action.  Suppose that $X_0$, with a metric and possibly some other non-zero fields, is a solution to the equations of motion for some supergravity theory. Suppose further that $X_0$ is equipped with a generalized Killing spinor $\Psi_0$ of square zero. Finally, we require that all of this data is preserved  by the $S^1$ action on $X_0$.  

As in the case of a field theory, we need to have extra data in order to consider an $\Omega$-background for supergravity.  We need:  
\begin{enumerate} 
\item A family $X_\eps$ of supergravity backgrounds deforming $X_0$, each with an $S^1$ action deforming that on $X_0$. 
\item Each $X_\eps$ is equipped with a generalized Killing spinor $\Psi_\eps$.
\item We require that $\Psi_\eps^2  = \eps V$, where $V$ is the vector field generating the $S^1$ action.  
\end{enumerate}
We let $Y_\eps = X_\eps/S^1$, and consider reducing the theory on $X_\eps$ to one on $Y_\eps$.
 
We would like to consider the theory on $Y_\eps$ in the background where the bosonic ghost is set to $\Psi_\eps$. Unfortunately, this does not satisfy the equations of motion, because the equations of motion require that the spinor corresponding to the bosonic ghost squares to zero.

If one just directly reduces the fields of supergravity from $X_\eps$ to $Y_\eps$, one would see a ghost for the $S^1$ gauge symmetry, including a ghost for the gauge transformation which is constant on $Y_\eps$.  This constant ghost is what causes problems for us: if $V$ denotes this constant ghost, then $\Psi(\eps)^2 = \eps V$.  

The solution is simple. When we consider the supergravity theory on $Y_\eps$, we remove the constant ghost for the $S^1$ gauge symmetry, and require that all operators we consider are strictly invariant under the $S^1$ action.  Indeed, this is the natural approach to treating gauge theories with compact gauge group when one wants to take account of ``large'' gauge transformations.   If we do this, we no longer have the constant ghost field $V$, and wehave $\Psi_\eps^2 = 0$.  

Thus, we have the following definition of supergravity in the $\Omega$-background.
\begin{definition} 
Suppose we have a supergravity background $X_\eps$ as above. We define supergravity in the twisted $\Omega$-background to be the supergravity theory on $Y_\eps = X_\eps/S^1$, obtained in the following way:
\begin{enumerate} 
\item We reduce the theory from $X_\eps$ to $Y_\eps$, including all KK modes.
\item We remove the ghost for the constant $S^1$ gauge symmetry from the fields of the theory on $Y$.  Instead, when discussing operators or other quantities, we require them to be invariant under constant gauge transformations. (Invariance under non-constant gauge transformations is enforced by ghosts for non-constant gauge fields). 
\item Then, we consider the theory in the background where the bosonic ghost is set to the spinor $\Psi_\eps$.  
\end{enumerate} 
\end{definition}

\section{$11$-dimensional supergravity in the $\Omega$-background} 
\label{section:11dsugra_omega}
In this section we will see how to place $M$-theory in an $\Omega$-background.  We will give a very general construction of certain $11$-dimensional supergravity backgrounds equipped with a generalized Killing spinor which squares to the generator of a rotation.  We will use  a special case of this general construction (built from the Taub-NUT manifold) to derive the relationship between $M$-theory and $5$- and $7$-dimensional non-commutative gauge theories. 

\subsection{}
Consider $M$-theory on an $11$-manifold of the form $M^7 \times Z^4$ where $M$ is a $G2$ manifold\footnote{By which we mean it has a $3$-form satisfying the properties necessary to define a $G2$ structure. This is the same as having a reduction of structure group of the frame bundle to $G2$ such that the holonomy of the corresponding metric is contained in $G2$. } and $Z$ is hyper-K\"ahler. 

Choosing one of the complex structures on $Z$ picks out a covariant constant spinor $\psi_Z$ on $Z$, and a holomorphic symplectic form $\omega^{2,0}_Z$. Let $\omega^{0,2}_Z$ be the complex conjugate of $\omega^{2,0}_Z$, normalized so that $\omega^{2,0}_Z \wedge \omega^{0,2}_Z$ is the Riemannian volume form on $Z$.  The spinor $\psi_Z$ is related to the complex structure by the fact that if $\lambda$ is a $(1,0)$ form in this complex structure then the Clifford multiplication $\lambda \cdot \psi_Z$ is zero.

The $G2$ structure on $M$ also picks out a covariant constant spinor $\psi_M$ on $M$, associated to the element of the $8$-dimensional spin representation of $\op{Spin}(7)$ which is invariant under $G2$.

Suppose that $M$ has an action of the group $U(1)$ by isometries which also preserve the $3$-form defining the $G2$ structure on $M$. Let $V$ denote the vector field which is the generator of this action, and let $V^{\flat}$ denote the $1$-form dual to $V$ using the Riemannian metric. 

Since $V$ is a Killing vector field, the equiation
$$
g( \nabla_\alpha V, \beta ) + g ( \nabla_\beta V, \alpha) = 0
$$ 
holds for all vector fields $\alpha,\beta$ on $M$. Since $g(\nabla_\alpha V, \beta)$ is anti-symmetric in $\alpha$ and $\beta$, it defines a $2$-form $\omega_V$ on $M$. We have
$$
\omega_V = \d V^{\flat}. 
$$ 
Thus, $\omega_V$ is closed. It is also co-closed, $\d^\ast \omega_V = 0$.  This is equivalent (using a Weitzenbock identity) to the statement that the $1$-form $V^{\flat}$ is harmonic, and the $1$-form associated to any Killing vector field on a Ricci flat manifold is always harmonic.

In the appendix, we will show the following.   
\begin{theorem}
\label{theorem_supergravity_solution}
Consider the $11$-manifold $M \times Z$ equipped with the $4$-form
$$
F =  \omega_V \wedge \omega^{0,2}_Z
$$
where $\omega^{0,2}_Z$ is the complex conjugate of the holomorphic symplectic form on $Z$. We will view $F$ as a flux defining an $M$-theory background. Then, for any pair of constants $\eps,\alpha$, 
\begin{enumerate} 
\item $M \times Z$ with the flux $-\eps\alpha^{-2} F$ satisfies the equations of motion of $11$-dimensional supergravity (where $\eps$ is a constant). 
\item The spinor 
$$\Psi(\eps)= \alpha  \psi_M \otimes \psi_Z + \alpha^{-1} \eps (V \cdot \psi_M) \otimes (\omega^{0,2}_Z \cdot \psi_Z)$$
is a generalized Killing spinor for the supergravity background with flux $-\eps\alpha^{-2} F$.  Here, $\cdot$ indicates Clifford multiplication. 
\item The square of $\Psi(\eps,\alpha)$ is the Killing vector field $\eps V$ on $M \times Z$.  
\end{enumerate}
\end{theorem}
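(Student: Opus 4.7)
The plan is to verify the three claims in order, treating $\eps$ as a formal parameter so that both the bosonic equations of motion and the gravitino Killing spinor equation decouple order by order in $\eps$. Throughout, I use that the background metric on $M \times Z$ is the product of a $G2$ metric and a hyper-K\"ahler metric, both of which are Ricci-flat, and that $\psi_M \otimes \psi_Z$ is parallel for this product metric.

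For the bosonic equations of motion, first $\d F = 0$ is immediate: $\omega_V = \d V^{\flat}$ is exact, and $\omega^{0,2}_Z$ is parallel hence closed. Next, $F \wedge F = 0$ because $\omega^{0,2}_Z \wedge \omega^{0,2}_Z$ is a $(0,4)$-form on a complex surface. For $\d \star F = 0$, I would use that $\omega^{0,2}_Z$ is (anti-)self-dual on the hyper-K\"ahler $4$-manifold $Z$, together with the harmonicity of the Killing $1$-form $V^{\flat}$ on the Ricci-flat $M$ (a standard Weitzenb\"ock argument). The Einstein equation reduces to the vanishing of the stress-energy of $F$, which holds because $\omega^{0,2}_Z$ is a \emph{null} form: all self-contractions of its indices via the K\"ahler metric on $Z$ vanish for type reasons (one would be contracting $(0,2)$ with $(0,2)$ using $g^{i\bar j}$), so both $F_{MPQR}F_N{}^{PQR}$ and $|F|^2$ vanish identically.

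The hardest step is the gravitino Killing spinor equation
\[
\nabla_N \Psi(\eps) + \tfrac{1}{288}\bigl(\Gamma_N{}^{PQRS} - 8\delta_N^P \Gamma^{QRS}\bigr)\bigl(-\eps \alpha^{-2} F_{PQRS}\bigr)\, \Psi(\eps) = 0.
\]
At order $\eps^0$ this reduces to $\nabla_N(\alpha\, \psi_M \otimes \psi_Z) = 0$, which is automatic. At order $\eps^1$, the only nonzero derivative contribution on the left is $\nabla V = \omega_V$ (which is essentially the definition of $\omega_V$), producing $\alpha^{-1}\, \omega_V(\partial_N, {-}) \cdot \psi_M \otimes (\omega^{0,2}_Z \cdot \psi_Z)$. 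On the right, $F = \omega_V \wedge \omega^{0,2}_Z$ acts by Clifford multiplication on $\alpha\, \psi_M \otimes \psi_Z$, which I would simplify using that $\psi_Z$ is annihilated by $(1,0)$-forms and that the Clifford action of a $(0,2)$-form on $\psi_Z$ reduces (up to a factor) to its wedge action. The remaining identity is an algebraic statement in the Clifford algebra of $M$ between $\omega_V \cdot \psi_M$ and contractions of the form $(\Gamma_N{}^P \omega_V{}_{P\bullet}) \cdot \psi_M$, which I would check at a point in an orthonormal frame adapted to the $G2 \times SU(2)$ decomposition. Matching the $-8 \delta_N^P \Gamma^{QRS}$ subtraction against the single-index contraction on the left is what fixes the normalization constant in front of the $\eps$-correction of $\Psi(\eps)$.

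Finally, for the square, the vector bilinear $\overline{\Psi(\eps)}\, \Gamma^M\, \Psi(\eps)$ splits into three pieces. The pure $\alpha^2$ piece vanishes because a single real parallel spinor on a background of $G2 \times SU(2)$ holonomy has vanishing vector bilinear (there is no $G2 \times SU(2)$-invariant vector in the spinor square). The pure $\alpha^{-2} \eps^2$ piece vanishes by the same argument after absorbing $V \cdot$ and $\omega^{0,2}_Z \cdot$ into the spinors. The cross term at order $\eps$ factors: the $Z$-factor $\overline{\psi_Z}\,(\omega^{0,2}_Z \cdot \psi_Z)$ is a nonzero constant in the scalar direction of the spinor bilinear (fixed by the normalization $\omega^{2,0}_Z \wedge \omega^{0,2}_Z = \mathrm{vol}_Z$), while the $M$-factor $\overline{\psi_M}\, \Gamma^M (V \cdot \psi_M)$ equals $V^M$ by the standard identity that the parallel spinor on a $G2$ manifold squares to a given parallel vector. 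Combining these yields $\Psi(\eps)^2 = \eps V$. The main obstacle is the order-$\eps$ gravitino equation, whose verification comes down to careful bookkeeping of signs and numerical constants in the Clifford algebra; the equations of motion and the bilinear computation are more structural.
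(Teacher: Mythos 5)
Your treatment of the bosonic equations of motion matches the paper's (closedness and exactness of $\omega_V\wedge\omega^{0,2}_Z$, self-duality of $\omega^{0,2}_Z$, and the vanishing of all stress-tensor contractions for type reasons), and that part is fine. The genuine gap is in the generalized Killing spinor equation, which you defer to ``an algebraic statement in the Clifford algebra of $M$ \dots checked at a point in an orthonormal frame.'' That identity is not a pointwise Clifford-algebra fact valid for an arbitrary Killing field: at order $\eps$, for a direction $\alpha$ tangent to $Z$, the derivative terms all vanish and the flux term contributes $(\omega_V\cdot\psi_M)\otimes(\lambda\cdot\psi_Z)$ with $\lambda=\alpha\vee\omega^{0,2}_Z$ a $(0,1)$-form, and $\lambda\cdot\psi_Z\neq 0$; similarly, in the $M$-directions the terms in which $\omega_V$ enters uncontracted do not cancel against $(\nabla_\alpha V)\cdot\psi_M$. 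Closing the equation requires the lemma $\omega_V\cdot\psi_M=0$, which is exactly where the hypothesis that $V$ preserves the $G2$ structure enters: since $\omega_V=\nabla V$ and $\omega_V\ast A_{G2}=\d(V\vee A_{G2})=\mathcal{L}_V A_{G2}=0$ by Cartan's formula, $\omega_V$ lies in the $\mf{g}_2$-subbundle of $2$-forms and hence annihilates the $G2$-invariant spinor. Your proposal never invokes $\mathcal{L}_V A_{G2}=0$ anywhere in part (2), so the frame computation you plan would not close for a generic antisymmetric $\omega_V$; this lemma is the heart of the paper's proof and is missing from yours.

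Two smaller points. Since $\eps$ is arbitrary, the equation must also hold at order $\eps^2$, i.e.\ the flux term acting on $(V\cdot\psi_M)\otimes(\omega^{0,2}_Z\cdot\psi_Z)$ must vanish; this is easy (the $(0,q)$-forms annihilate the top spinor $\omega^{0,2}_Z\cdot\psi_Z$, and in the $M$-directions one again uses $\omega_V\cdot\psi_M=0$), but you do not address it. In part (3), your invariance argument correctly kills the $\alpha^2$ bilinear, but it does not suffice for the $\alpha^{-2}\eps^2$ piece: $(V\cdot\psi_M)\otimes(\omega^{0,2}_Z\cdot\psi_Z)$ is only $SU(3)\times SU(2)$-invariant, and $V$ itself is an invariant vector for that subgroup, so ``no invariant vector'' fails there. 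The paper's route — observing that the bracket factors through $\wedge^2 S_{7}\otimes\wedge^2 S_-$ because $\op{Sym}^2 S_-$ admits no equivariant map to the vector representation, so the pure terms die by antisymmetry of the symplectic pairing on $S_-$ while the cross term produces $\eps V$ — is what you need for that piece.
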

Now, we are in the situation discussed in section \ref{supergravity_omega}.   Let us follow the construction there to put $11$-dimensional supergravity in the $\Omega$-background. Let 
$$
X_\eps = (M \times Z,-\eps \alpha^{-2} F)  
$$
be the $11$-dimensional supergravity background, and let $Y_\eps = X_\eps / S^1$ where as before $S^1$ acts on $M$.  Let us assume, as before, that the quotient is reasonably well behaved.    Then, as we explained in section \ref{supergravity_omega}, $11$-dimensional supergravity in the $\Omega$-background is obtained by compactifying the theory on $X_\eps$ down to $Y_\eps$, while including all KK modes.   Then, the theory on $Y_\eps$ includes an $S^1$ gauge field. We discard the ghost for constant gauge transformations.  If we do this, the spinor $\Psi(\eps,\alpha)$ on $Y_\eps$ squares to zero, and we can consider the supergravity theory on $Y_\eps$ in the background where the constant ghost is set to $\Psi(\eps,\alpha)$.  

Because we have included all KK modes in this construction,  we have lost very little information about the full $11$-dimensional theory.  On the other hand, this makes the construction hard to understand.  We will state a theorem which will allow us to gain some understanding of the dependence of the construction on the $M$-theory circle.  To state the theorem, we need to introduce a slightly different supergravity solution than the one we have considered so far.

Let $A_{G2}$ be the $G2$ $3$-form on $M$.  Then, $M \times Z$ equipped with the $3$-form $-\eps C + A_{G2}$ is again a solution of the supergravity equations of motion, simply because $A_{G2}$ is closed.  The spinor $\Psi(\eps,\alpha)$ remains a generalized Killing spinor, as the generalized Killing spinor equations only depend on the $4$-form which is the field-strength of the $C$-field.     

In the appendix, we show that a certain family of supergravity solutions where we scale the metric on $M$ is exact for the supercharge we are using. Let us write down the family explicitly. 

Let $M_r$ be the manifold $M$ but with metric $r^2 g_M$.  We can write down the $C$-field and spinor we described above using $M_r \times Z$.  If we identify spinors on $M_r$ with those on $M_{r = 1}$ in a natural way (see appendix \ref{appendix:sugra} for the precise conventions), we can write down explicitly a three parameter family of supergravity solutions, with parameters $\eps,\alpha,r$.  The metric, spinor and $C$-field are given by
\begin{align*} 
g &= r^2 g_M + g_Z\\
C &= -\eps r^2 \alpha^{-2} V^{\flat} \wedge \omega_Z^{0,2} + c r^3 A_{G2}\\
\Psi &=  \alpha r^{-1/2}  \psi_M \otimes \psi_Z + \alpha^{-1} r^{1/2}\eps (V \cdot \psi_M) \otimes (\omega^{0,2}_Z \cdot \psi_Z)
\end{align*}
The constant $c$ accompanying the $G2$ $3$-form is some non-zero complex number. The normalization of the spinors is such that in this family, the spinor $ r^{-1/2}  \psi_M \otimes \psi_Z $ has norm $1$ in the metric $r^2 g_M + g_Z$.
  
The theorem, proved in the appendix, is the following.
\begin{theorem} 
In this $3$-parameter family, the action of the vector field $r \dpa{r} + \tfrac{1}{2} \alpha \dpa{\alpha}$ is exact for the supersymmetry $\Psi$ we're using.    
\label{theorem_exactness}
\end{theorem}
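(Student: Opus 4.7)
The plan is to exhibit an explicit gravitino fluctuation $\chi$ whose linearized BRST image equals the tangent vector of the family along $\mathcal{V} = r\partial_r + \tfrac{1}{2}\alpha\partial_\alpha$. The key simplification is that $\Psi$ itself is $\mathcal{V}$-invariant: both summands carry the weight $\alpha r^{-1/2}$ or $\alpha^{-1}r^{1/2}$, each annihilated by $\mathcal{V}$. Hence the BRST differential $Q$ does not change along the flow, and it suffices to realize the bosonic variation as $Q$ of a fermionic fluctuation.

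\textbf{The tangent vector.} A direct computation gives
\begin{align*}
\mathcal{V}(g) &= 2 r^2 g_M,\\
\mathcal{V}(C) &= -\eps r^2 \alpha^{-2} V^\flat \wedge \omega^{0,2}_Z + 3 c r^3 A_{G2},\\
\mathcal{V}(\Psi) &= 0.
\end{align*}
Note that the two summands of $C$ rescale with \emph{different} weights (one and three), an asymmetry that the proof must respect. In $11$-dimensional supergravity, a gravitino fluctuation $\chi_\mu$ is mapped by the BRST operator to the bosonic variations
$$Q\chi\big|_{e^a_\mu} \propto \overline{\Psi}\,\Gamma^a\chi_\mu, \qquad Q\chi\big|_{C} \propto \overline{\Psi}\,\Gamma_{[\mu\nu}\chi_{\rho]},$$
together with the linearized gravitino equation $\nabla_\mu\chi + K_\mu(F)\chi = 0$, where $K_\mu(F)$ is the usual $F$-linear piece. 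The task is to find a $\chi$ whose Clifford bilinears with $\Psi$ reproduce the two variations above.

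\textbf{Ansatz and verification.} I would take $\chi_\mu$ to be supported on $M$-directions, built as a Clifford-valued $1$-form combination of $\psi_M, \psi_Z, V\cdot\psi_M$, and $\omega^{0,2}_Z\cdot\psi_Z$, structured as a ``scaling partner'' of $\Psi$ with opposite relative sign between the two summands. Using the Clifford-algebra identities from the appendix, the symmetric bilinear $\overline{\Psi}\,\Gamma^{(a}\chi^{b)}$ reduces via $\overline{\psi_M}\,\Gamma^{(a}\Gamma^{b)}\psi_M = g_M^{ab}$ (and its $Z$-analogue) to a pure-$M$ tensor, which can be normalized to $2 r^2 g_M$. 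The three-form bilinear $\overline{\Psi}\,\Gamma_{[\mu\nu}\chi_{\rho]}$ splits into a cross-term between the two summands of $\Psi$ and $\chi$, yielding $-\eps r^2\alpha^{-2}\,V^\flat\wedge\omega^{0,2}_Z$ with the correct prefactor, and a diagonal term $\overline{\psi_M}\,\Gamma^{[3]}\psi_M$ producing the $G2$ three-form $A_{G2}$; the requirement that its coefficient match $3cr^3$ pins down the constant $c$ uniquely. Finally, $\chi$ satisfies the linearized gravitino equation because by its algebraic structure this reduces to the generalized Killing spinor equation for $\Psi$ supplied by Theorem \ref{theorem_supergravity_solution}.

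\textbf{Main obstacle.} The delicate point is that the same $\chi$ must simultaneously reproduce the metric variation, the weight-one piece $-\eps r^2\alpha^{-2}V^\flat\wedge\omega^{0,2}_Z$ of $\mathcal{V}(C)$, \emph{and} the weight-three piece $3cr^3 A_{G2}$, while satisfying the linearized gravitino equation. These conditions are formally overdetermined, and their simultaneous solvability is equivalent to a rigid choice of the $G2$-form coefficient $c$ in the background. This matching is the real content of the theorem: once $c$ is chosen correctly, exactness follows essentially by construction, and the $r$-scaling of the $11$-dimensional geometry becomes trivial in $Q$-cohomology.
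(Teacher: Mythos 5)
Your strategy coincides with the paper's: exhibit an explicit gravitino fluctuation (in the paper, $\lambda(r,\alpha)=2\alpha^{-1}r^{1/2}\,\mu(g_M\otimes\psi_M)\otimes(\omega^{0,2}_Z\cdot\psi_Z)$, a single covariant-constant term rather than a two-summand ``scaling partner'' of $\Psi$) whose supersymmetry variation produces the metric rescaling $2r^2g_M$ together with a $3$-form of the shape $2c_2\,\alpha^{-2}\eps r^2\,V^{\flat}\wedge\omega^{0,2}_Z+2c_1\,r^3A_{G2}$, after which the $A_{G2}$ coefficient of the background is chosen to absorb the second piece. But your write-up has a genuine gap at exactly the hardest matching. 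The free constant $c$ can only be used to match the weight-three ($A_{G2}$) piece; it cannot adjust the relative normalization between the metric variation $2r^2g_M$ and the weight-one flux piece $-\eps r^2\alpha^{-2}V^{\flat}\wedge\omega^{0,2}_Z$, which is fixed by Clifford/Fierz constants once the gravitino is normalized to reproduce the metric variation. You simply assert that this cross-term comes out ``with the correct prefactor.'' The paper does not compute it either, but it supplies the missing step: since the metric variation is $Q$-exact via $\lambda$, if the flux variation failed to match one could subtract the exact piece and obtain a first-order deformation in which only $F$ varies while $g$ and $\Psi$ are held fixed and $\Psi$ remains a generalized Killing spinor, which is impossible; this consistency argument is what forces the coefficient to agree (and is also why the paper can leave $c$ undetermined beyond its being purely imaginary). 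Without either that argument or an explicit bilinear computation, your ``overdetermined but solvable by choosing $c$'' claim does not close the proof.

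A secondary inaccuracy: your justification that $\chi$ satisfies the linearized gravitino equation ``because it reduces to the generalized Killing spinor equation for $\Psi$'' is not how this works, and is not obviously true for a two-term ansatz. In the paper the gravitino is covariant constant, so the derivative part of the linearized equation vanishes identically, and the flux coupling vanishes for a representation-theoretic reason: in a local frame $F$ has weight $+2$ and $\lambda$ weight $+1$ under the Cartan of $SU(2)_+$, while no weight $+3$ component exists in the relevant tensor product, so the contraction of $F$ with $\lambda$ is forced to vanish. You should replace the appeal to Theorem \ref{theorem_supergravity_solution} with an argument of this kind (or a direct check) for whatever ansatz you settle on.
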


In particular, if we set $\alpha = r^{1/2}$, then the family is independent of $r$ once we twist. This will allow us to exchange the dependence of the twisted theory on the radius of the $M$-theory circle with its dependence on $\eps$ and on the $G2$ $3$-form $A_{G2}$.   

Unfortunately, I was unable to calculate the value of the constant $c$, except to determine that it must  be purely imaginary. However, for our purposes, the $G2$ three-form will play almost no role.

\section{The $5$-dimensional gauge theory in detail}
\label{section_5d_gauge_theory}

The main result of this paper is that $M$-theory compactified in the $\Omega$-background on $TN_k \times \R^2$ is a $5$-dimensional non-commutative gauge theory.  I wrote down the action functional for this gauge theory in the introduction. Let me now discuss the theory in more detail.

Let's describe the theory first in the traditional formulation, with fields, an action functional, and gauge  symmetry. Then we will explain how to rewrite it in the BV formalism.   We will start by describing the theory on $\R \times \C^2$, and then explain how to write it on $\R \times X$ for a class of complex surfaces $X$.

The theory is a non-commutative gauge theory, meaning that the complex plane $\C^2$ is non-commutative.  If $\Oo(\C^2)$ denotes the ring of holomorphic functions on $\C^2$, recall that the Moyal product on $\Oo(\C^2)$ is a non-commutative product of the form
$$
f \ast_\eps g = f g + \eps  \tfrac{1}{2} \varepsilon_{ij} \dpa{z_i} f \dpa{z_j} g + \eps^2 \tfrac{1}{2^2 \cdot 2!} \varepsilon_{i_1 j_1} \eps_{i_2 j_2} \left( \dpa{z_{i_1}} \dpa{z_{i_2}} f\right) \left( \dpa{z_{j_1}} \dpa{z_{j_2}} g \right) + \dots
$$
where $\eps$ is a formal parameter, $\varepsilon_{ij}$ is the alternating symbol, and we have used the summation convention.    The coefficient of $\eps^n$ in the expansion is 
$$
  \frac{1}{2^n n!} \sum \varepsilon_{i_1j_1} \dots \varepsilon_{i_n j_n} \left( \dpa{z_{i_1}} \dots \dpa{z_{i_n}} f \right) \wedge \left( \dpa{z_{j_1}} \dots \dpa{z_{j_n}} g \right)  
$$
where the sum is taken over indices $i_1,\dots,i_n,j_1,\dots,j_n \in\{1,2\}$. 
This product can be extended to a product on the Dolbeault complex $\Omega^{0,\ast}(\C^2)$, by the same formula except that product of holomorphic functions is replaced by wedge product of Dolbeault forms.  The $\dbar$ operator on $\Omega^{0,\ast}(\C^2)$ is a derivation for the Moyal product, so that $\Omega^{0,\ast}(\C^2)[[\eps]]$ becomes a non-commutative differential graded algebra.

The fundamental field of our $5$-dimensional theory is a partial connection
$$
A \in \left(  \Omega^1(\R \times \C^2) / \left(\d z_1 \Omega^0 \oplus \d z_2 \Omega^0 \right) \right) \otimes \gl_N. 
$$
Thus, $A$ has $3$ components,
$$
A = A_0 \d t + A_{1} \d \zbar_1 + A_{2} \d \zbar_2
$$
where $A_0$, $A_1$ and $A_2$ are $\gl_N$-valued smooth functions on $\R \times \C^2$.

The action functional is 
\begin{align*} 
S(A) &= \int_{\R \times \C^2} \d z_1 \d z_2 \op{Tr} \left\{ \tfrac{1}{2} A \d A + \tfrac{1}{3} A (A \ast_\eps A) \right\}  \\
     &=  \int_{\R \times \C^2} \d z_1 \d z_2 \op{Tr} \left\{ \tfrac{1}{2} A \ast_\eps \d A + \tfrac{1}{3} A \ast_\eps A \ast_\eps A) \right\} 
\end{align*}
where $\eps$ is a coupling constant, and is treated as a formal parameter.   The expressions on the first and second lines are equivalent because the difference between them is a total derivative. 

Note that if we try to treat $\eps$ as a non-zero number, we find a non-local action functional.  While each term in the expansion of $\eps$ is local, the coefficient of $\eps^n$ has $2n$ derivatives, so that summing up the coefficients of $\eps^n$ will lead to a non-local expression.  Also, note that this theory is very non-renormalizable, because the classical action functional contains arbitrarily high derivatives.  Nevertheless, we will prove in section \ref{section:quantization} that the theory can be quantized (using techniques developed in \cite{CosLi15}).

The Lie algebra of infinitesimal gauge transformations of the theory is the space $\Omega^0(\R \times \C^2) \otimes \gl_N$, but equipped with the Lie bracket
$$
[f,g] = f \ast_\eps g - g \ast_\eps f
$$
which is the commutator in the tensor product of $\Omega^0(\R \times \C^2)$ with it's $\eps$-dependent Moyal product with the associative algebra $\gl_N$.  An infinitesimal gauge transformation $f$ acts on a field $A$ by 
$$
A \mapsto A + \eps \left( \d f + [f,A] \right) 
$$
where, again, the commutator $[f,A]$ uses the Moyal product.  Here by $\d f$ we mean the image of the de Rham differential in $f$ in the quotient of $\Omega^1(\R \times \C^2)$ by the subspace spanned by $\d z_1$ and $\d z_2$.

Let us now explain how one can put this theory in the BV formalism.  Let us introduce the space
$$
\mc{A} = \Omega^\ast(\R \times \C^2) / \langle \d z_1, \d z_2 \rangle
$$
which is the quotient of the de Rham complex by the differential ideal generated by the $1$-forms $\d z_1$, $\d z_2$.  Thus, 
 $$
\mc{A} = \cinfty(\R \times \C^2) [ \d t, \d \zbar_1, \d \zbar_2] 
 $$
 where the parameters $\d t$, $\d \zbar_1$ and $\d \zbar_2$ are of degree $1$. 

 Evidently, $\mc{A}$ is a commutative algebra and the de Rham operator $\d_{dR}$ descends to a differential $\d_{\mc{A}}$ on $\mc{A}$.  Explicitly, 
 $$
\d_{\mc{A}} = \d t \dpa{t} + \sum  \d \zbar_i \dpa{\zbar_i}. 
 $$
 Further, the Moyal product on $\C^2$ gives a map
 \begin{align*} 
 \mc{A} \otimes \mc{A} &\to \mc{A}[[\eps]]\\
     \alpha \otimes \beta & \mapsto  \alpha \ast_\eps \beta
 \end{align*}
 defined by the same formula we gave earlier, but where product of holomorphic functions on $\C^2$ is replaced by the wedge-product in $\mc{A}$. This Moyal product makes $\mc{A}[[\eps]]$ into a differential-graded commutative algebra over $\C[[\eps]]$.

 Let $\mc{A}_c$ be the quotient of the space $\Omega^\ast_c(\R \times \C^2)$ of compactly supported forms by the ideal generated by $\d z_1$ and $\d z_2$.  Then, $\mc{A}_c[[\eps]]$ is equipped with an integration map
 \begin{align*} 
 \int : \mc{A}_c[[\eps]] &\to \C [[\eps]]\\ 
     \alpha & \mapsto  \int_{\R \times \C^2} \d z_1 \d z_2 \alpha.
 \end{align*}
 Note that $\int \d_{\mc{A}}\alpha = 0$ and 
 $$
\int \alpha \ast_\eps \beta = \pm \int \beta \ast_\eps \alpha
 $$
 for $\alpha,\beta \in \mc{A}$.  

 Thus, the space $\mc{A}_c$ has all the structure needed to define the kind of Chern-Simons action functional that appears in open-string field theory \cite{Wit86}. The space of fields for this action functional is $\mc{A}_c \otimes \gl_N[1]$, and the action functional is
 \begin{align*} 
 S(\alpha) & = \tfrac{1}{2} \int \alpha \ast_\eps \d \alpha + \tfrac{1}{3} \int \alpha \ast_\eps \alpha \ast_\eps \alpha\\
               &= \tfrac{1}{2} \int \alpha  \d \alpha + \tfrac{1}{3} \int \alpha ( \alpha \ast_\eps \alpha).
 \end{align*}
The first and second lines are the same because the difference between them is a total derivative.

Then, $\mc{A}_c \otimes \gl_N[1]$ is the space of fields of our $5$-dimensional gauge theory in the BV formalism.  The BV action functional is the functional $S$ above. The odd symplectic structure is given by the formula
$$
\ip{\alpha,\beta} = \int \alpha \beta
$$
for $\alpha,\beta \in \mc{A}_c \otimes \gl_N[1]$.

\subsection{The gauge theory on more general manifolds}
Let $X$ be a holomorphic symplectic complex surface, and suppose that $X$ is equipped with a $\C^\times$ action which scales the holomorphic symplectic form.  (Such an $X$ is called \emph{conical}).  For example, $X$ could be the cotangent bundle of a Riemann surface, or the resolution of an ADE singularity. In this section, we will explain how to put our theory on $\R \times X$.  
\begin{definition}
A $\ast$-product  on the sheaf $\Oo_X$ of holomorphic functions on $X$ is a map of sheaves $\Oo_X \otimes_{\C} \Oo_X \to \Oo_X[[\eps]]$ which satisfies the following properties.
\begin{enumerate} 
 \item It makes $\Oo_X[[\eps]]$ into a sheaf of associative algebras, quantizing the sheaf of Poisson algebras $\Oo_X$.
 \item The coefficient of $\eps^n$ in the product is given by a holomorphic bi-differential operator of finite order.
\item Suppose the $\C^\times$ action on $X$ scales the holomorphic symplectic form by some non-zero weight $k$. Let us give $\Oo_X[[\eps]]$ an action of $\C^\times$ by combining the given action on $X$ with the action on $\C[[\eps]]$ which scales $\eps$ with weight $k$.  We require that the associative product on $\Oo_X[[\eps]]$ must be compatible with this $\C^\times$-action.   
\end{enumerate}
\end{definition}
The third condition severely restricts the moduli of $\ast$-products on $\Oo_X$. If $X$ is the cotangent bundle of a Riemann surface or a resolution of an ADE singularity, then this condition implies that the space of $\ast$-products is finite-dimensional and is a torsor for $H^2(X)$ (meaning that non-canonically, the space of such $\ast$-products is isomorphic to $H^2(X)$). 

In this situation, we can define our $5$-dimensional gauge theory on $\R \times X$, just as before.  We let $\mc{A}^{X}$ denote the quotient of $\Omega^\ast(\R \times X)$ by the differential ideal generated by $\Omega^{1,0}(X)$, and $\mc{A}_c^{X}$ be the corresponding quotient of $\Omega^\ast_c(R \times X)$. Thus,
   $$
   \mc{A}^{X} = \Omega^\ast(\R) \what{\otimes} \Omega^{0,\ast}(X)
   $$
where $\what{\otimes}$ is the completed projective tensor product.

The fact that the $\ast$-product on $\Oo_X$ is implemented by holomorphic bi-differential operators means that it extends in a natural way to a $\ast$-product on $\mc{A}^{X}$, making $\mc{A}^{X}[[\eps]]$ into a differential graded associative algebra.  As before, there is an integration map on $\mc{A}_c^{X}$ defined by the formula
$$ 
\int \alpha = \int_{\R \times X} \omega \alpha 
$$
where $\omega$ is the holomorphic volume form on $X$. 

Thus, we can define a field theory where the fields are $\mc{A}_c^{X} \otimes \gl_N[1]$, and the action functional as before is
$$
S(\alpha)  = \tfrac{1}{2} \int \alpha \ast_\eps \d \alpha + \tfrac{1}{3} \int \alpha \ast_\eps \alpha \ast_\eps \alpha.
$$
This describes the theory on $X$ in the BV formalism.

\section{$M$-theory on Taub-NUT manifolds}
\label{section:Mtheory_TN}
We will now turn to applying our general construction of $\Omega$-background for $M$-theory to the special case when the $G2$ manifold is the product of a Taub-NUT manifold with flat space.

Let us first recall how the relationship between the Taub-NUT manifold and $D6$ branes in type IIA string theory.  There are several versions of the Taub-NUT manifold. The basic version is a complete Riemannian $4$-manifold with $SU(2)$ holonomy and which is equipped with a map to $\R^3$. Away from the origin in $\R^3$, the fibre of Taub-NUT is a circle, but at the origin the circle shrinks to a point.  There is a $U(1)$ action on Taub-NUT which preserves the map to $\R^3$ and rotates the circle fibres, and is holomorphic in each complex structure (and also preserves the holomorphic symplectic form in each complex structure).  

Multi-centered Taub-NUT is a variant of this construction which depends on a choice of points $x_1,\dots, x_k \in \R^3$.  The multi-centered Taub-NUT maps to $\R^3$ in such a way that the generic fibre is a circle, but that the circle shrinks to a point at each $x_i$.  Multi-centered Taub-NUT again has $SU(2)$ holonomy and a $U(1)$ action rotating the circle fibres. As we bring all the points $x_i \in \R^3$ together at the origin, the Taub-NUT manifold with centers $x_i$ acquires an $A_{k-1}$ singularity. 

Let $TN_{x_1,\dots,x_k}$ denote the Taub-NUT manifold with centers $x_i \in \R^3$, and let $TN_{k-1}$ denote the specialization of this when all $x_i = 0$. This is the Taub-NUT manifold with an $A_{k-1}$ singularity at the origin.  It is a standard statement that $M$-theory on $TN_{x_1,\dots,x_k} \times \R^7$ becomes type IIA string theory on $\R^3 \times \R^7$ with $k$ $D6$ branes located on the submanifolds $x_i \times \R^7$.   As we bring the $x_i$ together at the origin, the Taub-NUT manifold acquires a singularity and we get a stack of $k$ $D6$ branes at the origin in $\R^3$.   The fact that the theory on a stack of $k$ $D6$ branes is maximally supersymmetric gauge theory with group $U(k)$ will provide the connection with the $5$-dimensional gauge theory studied in this paper.  

Now, $TN_{x_1,\dots,x_k} \times \R^3$ has the structure of a $G2$ manifold in a natural way, by choosing an embedding of $SU(2)$ inside $G2$.  Thus, $TN_{x_1,\dots,x_k} \times \R^3 \times \C^2$ is the product of a $G2$ manifold with a hyper-K\"ahler manifold.  The isometry of $TN_{x_1,\dots,x_k}$ which rotates the circle fibre places us in the situation discussed in the theorem. Thus, we can introduce a flux $-\eps F$ and a family $\Psi(\eps)$ of spinors such that $TN_{x_1,\dots,x_k} \times \R^3 \times \C^2$ with the flux $-\eps F$ satisfies the equations of motion of $11$-dimensional supergravity, and $\Psi(\eps)$ is a generalized Killing spinor whose square is $\eps V$ where $V$ is the Killing vector field on $TN_{x_1,\dots,x_k}$ which rotates the circle fibres.   The same statements hold if we use the singular space  $TN_{k-1}$ instead.
 
Thus, we are in the situation we discussed above, and can consider putting $M$-theory in the $\Omega$-background on the Taub-NUT manifold.

\begin{proposition}
Consider $M$-theory on $TN_{k-1} \times \R^3 \times \C^2$, with the $C$-field $-\eps V^{\flat} \d \zbar_1 \d \zbar_2$,  where as above $V$ is the vector field generating the $S^1$ action on $TN_{k-1}$ and $V^{\flat}$ is the $1$-form dual to $V$ using the metric.  Suppose the radius at $\infty$ of the Taub-NUT circle is $r$. 
    
Upon reduction to $10$ dimensions, this theory becomes type IIA string theory on $\R^3 \times \R^3 \times \C^2$ with $k$ $D6$ branes supported on $0 \times \R^3 \times \C^2$ and with a constant $B$-field $-r^2 \eps \d \zbar_1 \d \zbar_2$.  
\end{proposition}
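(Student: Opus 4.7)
The plan is to combine two standard facts: the geometric identification of $M$-theory on multi-Taub-NUT with type IIA in the presence of D6-branes, and the Kaluza-Klein decomposition of the $11$-dimensional $C$-field under reduction along an $S^1$. Writing the $k$-centred Taub-NUT metric in Gibbons-Hawking form
$$
 \d s^2_{TN} \;=\; H^{-1}(\d\theta + \alpha)^2 + H\,\d\vec x\cdot\d\vec x, \qquad \d\alpha = \star_{\R^3}\d H,
$$
with $H$ the harmonic function on $\R^3$ having $k$ coincident poles at the origin and asymptote $H\to r^{-2}$ at infinity, the Killing vector $V=\partial_\theta$ rotates the circle fibre of asymptotic radius $r$. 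This is the $M$-theory circle, so reduction along $V$ produces type IIA on $\R^3\times\R^3\times\C^2$, with the $k$-fold degeneration of the fibre at $\vec x = 0$ re-interpreted as a stack of $k$ coincident D6-branes on $0\times\R^3\times\C^2$ sourcing the RR $1$-form built from $\alpha$. I would invoke this dictionary as a known input rather than re-deriving it.

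Next, I would reduce the $C$-field. Lowering the index with the metric gives $V^{\flat} = H^{-1}(\d\theta + \alpha)$, so
$$
 -\eps\, V^{\flat}\wedge \d\zbar_1\,\d\zbar_2 \;=\; -\eps H^{-1}\,\d\theta\wedge \d\zbar_1\,\d\zbar_2 \;-\; \eps H^{-1}\,\alpha\wedge \d\zbar_1\,\d\zbar_2.
$$
Under the standard KK decomposition $C_{11} = B\wedge \d\theta + C_{3}^{\op{IIA}}$, the coefficient of $\d\theta$ is the NS-NS $B$-field and the remaining piece is the RR $3$-form. At infinity, $H^{-1}\to r^2$ and $\alpha\to 0$ in any gauge in which the monopole connection decays, yielding the asymptotic value $B \to -\eps r^2\,\d\zbar_1\,\d\zbar_2$, which is precisely the constant $B$-field claimed by the proposition.

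The subtlety that requires care is that the literal reduction produces the \emph{position-dependent} $B$-field $-\eps H^{-1}(\vec x)\,\d\zbar_1\,\d\zbar_2$, whereas the proposition asserts a strictly constant value. I would justify the replacement by observing that, in the twisted background of Theorem \ref{theorem_supergravity_solution}, only the asymptotic value of $B$ on the $\C^2$ directions transverse to the D6-brane worldvolume feeds into the subsequent reduction to the $5$-dimensional non-commutative gauge theory, where it sets the Moyal parameter via the Seiberg--Witten mechanism. The difference $\eps(r^2 - H^{-1}(\vec x))\,\d\zbar_1\,\d\zbar_2$, together with the RR piece $-\eps H^{-1}\alpha\wedge\d\zbar_1\,\d\zbar_2$, packages the flux that the D6-branes must source in this background and is invisible in the twist. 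This idealisation, rather than any computation, is the main point where the argument is physical rather than mathematical; the explicit Kaluza--Klein reduction itself is a routine identification of components of $C_{11}$.
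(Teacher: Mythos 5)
Your reduction is essentially the paper's: both write the (multi-centred) Taub--NUT metric in Gibbons--Hawking form, contract with the triholomorphic Killing vector to get $V^{\flat}=H^{-1}(\d\theta+\alpha)$, read off the coefficient of $\d\theta$ as the NS--NS $B$-field and the remainder as the RR $3$-form, and extract the asymptotic constant $-\eps r^2\,\d\zbar_1\d\zbar_2$ from the radius at infinity. Where you diverge is in how the gap between the position-dependent field $-\eps H^{-1}\d\zbar_1\d\zbar_2$ and the claimed \emph{constant} $B$-field is closed. The paper stays entirely within the IIA background: it notes that the metric, dilaton and RR $1$-form are exactly those sourced by $k$ D6-branes in flat space, that the $B$-field and $C_3$ approach the constant configuration at infinity (with $\norm{\d A}^2\to 0$), and then verifies that the non-constant pieces are precisely the back-reaction of the branes on a constant $B$-field by checking the source equations $\d F_2=\delta_{D6}$, $\d F_4=\delta_{D6}\wedge B$, which are solved by the monopole $1$-form together with $C_3=B\wedge A$ --- exactly the fields obtained from the reduction. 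You instead assert that the difference ``packages the flux the D6-branes must source'' without this check, and lean on the claim that only the asymptotic value matters for the later passage to the $5$-dimensional noncommutative theory; that appeal to the twist is extraneous to the proposition as stated (which is a statement about the IIA background itself, not about what survives the twist), so if you keep your route you should at least add the paper's one-line consistency check $C_3=B\wedge A$, $\d F_4=\delta_{D6}\wedge B$ to substantiate the back-reaction interpretation. Also, be careful with ``$\alpha\to 0$ in any gauge in which the monopole connection decays'': the connection cannot be globally gauged to decay (its curvature has nonzero flux $4\pi k$ through spheres at infinity); the correct statement, as in the paper, is that the norm of $\d A$ tends to zero because the flux is fixed while the area of the spheres grows.
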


\begin{proof}
The content of the statement is that our $3$-form flux in $M$-theory on the Taub-NUT reduces to the constant $B$-field.  To do this, we will use the discussion in \cite{Wit95a} for conventions regarding the reduction of $11$-dimensional supergravity to $10$ dimensions.  We will do the calculation when the radius at $\infty$ is $1$.  Witten in this paper explains that we should think of the reduction of $M$-theory on to $10$ dimensions along a circle as follows.  Suppose the $11$-dimensional metric is of the form
$$
G_{mn} \d x^m \d x^n + e^{2 \gamma} ( \d \theta - A_m \d x^m ) ^2 
$$
where $\theta$ is a coordinate on the $M$-theory circle, and $\gamma$ is a $10$-dimensional scalar which will be related to the dilaton.  We define a $10$-dimensional metric by $g_{mn} = e^{\gamma} G_{mn}$. Reducing to $10$ dimensions produces type IIA with metric $g_{mn}$ and with dilaton $\phi$ satisfying
$$
e^{-2 \phi} = e^{-3 \gamma}. 
$$
The dilaton is related to the string coupling constant by $\lambda = e^{\phi}$.  The radius of the $M$-theory circle is $e^{\gamma} = \lambda^{2/3}$. The masses of the Kaluza-Klein modes are, when measred in this metric $g$, of order $\lambda^{-1}$.

Let us now recall the metric on Taub-NUT, and see how to use Witten's discussion to produce the correct scaling on various factors. Let $x_i$ denote coordinates on the $\R^3$ which is the base of the Taub-NUT circle fibration, and let $\theta$ be a fibre coordinate.  Note that since $TN_k \to \R^3$ is a topologically non-trivial $S^1$-bundle away from the origin, the coordinate $\theta$ is only locally defined.   Let $y_i$ be coordinates on the other $\R^3$, and $z_i$ those on $\C^2$.

Define a function of $x \in \R^3$ by 
$$
f = 1 + \tfrac{k} { \abs{x}} .
$$
Then, the metric on $TN_k \times \R^3 \times \C^2$ metric is of the form
$$
\sum \d y_i^2 + \sum \d z_i \d \zbar_i +  f \sum \d x_i^2 + f^{-1} \left( \d \theta - \sum A_i \d x_i \right)^2 
$$
where $A$ is a (locally-defined) one-form on $\R^3$ satisfying the Bogomolny equation
$$
-\ast \d A = \d f. 
$$
The more invariant way to think of $A$ is that, away from $0$, it is a connection on the principal $S^1$ bundle defined by the Taub-NUT over $\R^3 \setminus 0$. Then, $\d A$ is the curvature of this connection.

An explicit calculation tells us that
$$
\d A = k \abs{x}^{-3} \left(x_1 \d x_2 \d x_3 - x_2 \d x_1 \d x_3 + x_3 \d x_1 \d x_2 \right).
$$
Note that $\int_{S^2} \d A =  4k\pi $.
 
The $C$-field for our $\Omega$-background is obtained (up to a factor of $\eps$)  by contracting the metric with the vector field $\dpa{\theta}$, yielding 
$$
 -\eps   f^{-1} \d \theta \d \zbar_1 \d \zbar_2 + \eps f^{-1} A \d\zbar_1 \d \zbar_2.  
$$

Now let's see what we find in $10$ dimensions, using Witten's conventions.  We find that the $10$-dimensional metric is
$$
   f^{-1/2}\sum \d y_i^2 + f^{-1/2}  \sum \d z_i \d \zbar_i +  f^{1/2}  \sum \d x_i^2. 
$$
The $B$-field is
$$
B = - \eps f^{-1} \d \zbar_1 \d \zbar_2.
$$
The RR $3$-form is 
$$
C_{3} = \eps f^{-1} A \d\zbar_1 \d \zbar_2. 
$$
(where recall that $A$ is only locally defined; the $4$-form $\d C_3$ is globally well-defined). Finally, the RR $1$-form is $A$.

The metric, dilaton and RR $1$-form are those induced by $k$ D6 branes in flat space.  I claim that the $B$-field and $C_3$-field are those described by $k$ $D6$ branes in the presence of the constant $B$-field $-\eps \d \zbar_1 \d \zbar_2$.  

We will verify this by showing that, asymptotically at infinity, we find a flat type IIA background with constant $B$-field. Far away from the $D6$ branes, the field $-\eps f^{-1} \d \zbar_1 \d \zbar_2$ approaches $-\eps \d \zbar_1 \d \zbar_2$, and similarly the field $C_3$ approaches $\eps A \d \zbar_1 \d \zbar_2$.  The integral of $\d A$ the $2$-sphere $S^2_r$ at distance $r$ from the $D6$ branes is $4\pi$, whereas the Riemannian volume of this sphere tends to infinity as $r$ goes to infinity. Thus, $\norm{\d A}^2$ tends to zero at infinity.  

This tells us that, as desired, asymptotically at infinity we are in type IIA with constant $B$-field.  Since the fields other than the $B$ and $C_3$ field describe type IIA with $k$ D6 branes, the $C_3$-field and the non-constant $B$-field must be obtained from the back-reaction of the $D6$ brane on a constant $B$-field.

As a consistency check, note that a $D6$ brane in a constant $B$-field is a source not just for the RR $1$-form field, but also for the $3$-form field.  We should have the identities
\begin{align*} 
\d F_2&= \delta_{D6} \\
\d F_4 &= \delta_{D6} \wedge B 
\end{align*}
where $\delta_{D6}$ is the $\delta$-current for the location of the $D6$ brane; it is a $3$-form with distributional coefficients.   This equation is solved if we take the RR $1$-form $A$ to be a monopole in the directions transverse to the $D6$ brane, and $C_3 = B \wedge A$.  This is what we find above (except that the $B$-field is the not the constant one but incorporates the back-reaction from the $D6$ brane).    
 
Finally, suppose the radius at $\infty$ in $TN_k$ is $r$. Then, in a coordinate $\theta$ on this circle at $\infty$, the metric is $r^2 \d \theta^2$ and the vector field $V$ is $\dpa{\theta}$.  Thus, the $1$-form $V^{\flat}$ is $r^2 \d \theta$, whose integral is $r^2$. This is why we find the constant $B$-field $-\eps r^2 \d \zbar_1 \d \zbar_2$ in this context.

\end{proof}

\subsection{}
Let us now analyze what the supercharge we use in $M$-theory  looks like in terms of type IIA string theory.    We will use the conjectural description of twists of type IIA string theory in terms of the topological $A$- and $B$-models developed in \cite{CosLi16}.  Although, at the level of the closed string theory, our description is still conjectural, we proved the open-string version of the conjecture by analyzing twists of theories living on $D$-branes.   

Let us first discuss the case that $\eps = 0$. In this case we are twisting $11$-dimensional supergravity and then reducing to $10$ dimensions along the circle fibres of the Taub-NUT.  We need to describe the supercharge we find that twists type IIA supergravity.   We will work near infinity, where the Taub-NUT manifold is asymptotically flat.

There is a decomposition
$$
S_{11d} = S_{7d} \otimes S_{4d}
$$
of $11$-dimensional spinors into a tensor product of an $8$-dimensional spin representation of $\op{Spin}(7)$ and a $4$-dimensional spin representation of $\op{Spin}(4)$.  Let us write $\op{Spin}(4) = SU(2)_+ \times SU(2)_-$, where $S_+$ and $S_-$, the positive and negative chirality spin representations of $\op{Spin}(4)$ are the fundamental representations of $SU(2)_+$ and $SU(2)_-$ respetive.y

The $11$-dimensional spinor is of the form
$$
\Psi = \Psi_{G2} \otimes \Psi_{SU(2)_+}
$$
where $\Psi_{G2} \in S_{7d}$ is invariant under $G2$, and $\Psi_{SU(2)_+} \in S_{4d}$ is one of the two spinors invariant under $SU(2)_+$. 

To describe what happens in $10$ dimensions, let us decompose the spinors as a representation of $\op{Spin}(6) \times \op{Spin}(4)$. We have
$$
S_{10d} = S_{6d} \otimes S_{4d}
$$
where $S_{6d}$ is an $8$-dimensional spin representation of $\op{Spin}(6)$ which is a sum of the chiral and antichiral $4$-dimensional representations.

We can use the isomorphism $\op{Spin}(6) \iso SU(4)$. Let $V$ denote the fundamental representation of $SU(4)$. Then, 
$$S_{6d} = S_{6d}^+ \oplus S_{6d}^- = V \oplus V^\ast.$$ 

The $G2$ spinor $\Psi_{G2}$ is a sum
$$
\Psi_{G2} =  \Psi_{SU(3)}^+ +  \Psi_{SU(3)}^-
$$
where $\Psi_{SU(3)}^{\pm} \in S_{6d}^{\pm}$ are the two spinors invariant under $SU(3) \subset G2$.  Further, $\ip{\Psi_{SU(3)}^+, \Psi_{SU(3)}^-} = 1$. These properties fix $\Psi_{G2}$ up to a normalization factor.

Thus, we find that the $10$-dimensional spinor obtained by reduction from $11$ dimensions is 
$$
\Psi_{IIA}= \Psi_{SU(3)}^+ \otimes \Psi_{SU(2)_+} + \Psi_{SU(3)}^- \otimes \Psi_{SU(2)_+}.  
$$

\subsection{}

Next, let us analyze what happens when we turn on $\eps$.  Recall that when we turn on $\eps$, we also introduce a $B$-field $-\eps \d \zbar_1 \d \zbar_2$.  

A  constant $B$-field can be viewed as an element of the Lie algebra $\mf{so}(10,\C)$. As such, it gives rise to an endomorphism of the space of spinors, which can be realized concretely by Clifford multiplication.   In string theory in the presence of a constant $B$-field, this action plays an important role.  A spinor $\psi$ preserves an object of the theory (such as a $D$-brane) in the presence of a constant $B$-field if $e^{-B} \psi$ preserves the same object when we don't have a $B$-field.     

Let $\Psi_{IIA}(\eps)$ denote the spinor we get in $10$ dimensions once we turn on $\eps$.  Then, 
$$
e^{\eps \d \zbar_1 \d \zbar_2} \Psi_{IIA}(\eps)
$$
must preserve the $D6$ brane, since $\Psi_{IIA}(\eps)$ preserves the $D6$ brane in the presence of a $B$-field. In fact, we find the following.   
\begin{lemma}
$$
e^{\eps \d \zbar_1 \d \zbar_2} \Psi_{IIA}(\eps) = \Psi_{IIA}(\eps = 0). 
$$
\end{lemma}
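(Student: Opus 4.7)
The plan is to verify the identity by reducing it to a first-order check in $\eps$. Two simplifications are available. First, since the $(0,1)$-vectors $\d\zbar_i$ are null in the Clifford algebra of $\C^2$, we have $(\d\zbar_i)^2 = 0$ and therefore $(\d\zbar_1 \d\zbar_2)^2 = 0$; consequently $e^{\eps \d\zbar_1 \d\zbar_2} = 1 + \eps \d\zbar_1 \d\zbar_2$ exactly. Second, $\Psi_{IIA}(\eps)$ depends linearly on $\eps$, inherited from the 11d formula
$\Psi(\eps) = \alpha\,\psi_M \otimes \psi_Z + \alpha^{-1} \eps(V\cdot \psi_M) \otimes (\omega^{0,2}_Z \cdot \psi_Z).$
Writing $\Psi_{IIA}(\eps) = \Psi_{IIA}(0) + \eps\,\delta\Psi_{IIA}$ and expanding, the claim becomes equivalent to the single first-order identity
\begin{equation*}
\delta\Psi_{IIA} = -\d\zbar_1 \d\zbar_2 \cdot \Psi_{IIA}(0),
\end{equation*}
the $\eps^2$ coefficient vanishing automatically from $(\d\zbar_1 \d\zbar_2)^2 = 0$.

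Both sides act nontrivially only on the 4d $Z$-factor of the spinor, through the common Clifford operation $\d\zbar_1 \d\zbar_2 = \omega^{0,2}_Z$, producing the factor $\omega^{0,2}_Z \cdot \psi_Z$ in each. The identity then reduces to a statement about the 6d factor. On the left, Clifford multiplication by $V$ in 11d — where $V$ generates the eleventh (M-theory) direction — is identified under KK reduction with Clifford multiplication by the 10d chirality operator, up to the scalar $|V| = r$ at infinity; acting on $\psi_M|_{10d} = \Psi_{SU(3)}^+ + \Psi_{SU(3)}^-$ this yields the signed combination $\Psi_{SU(3)}^+ - \Psi_{SU(3)}^-$. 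On the right, the $B$-field action on the non-chiral type IIA spinor (in the convention from the preceding paragraph, under which $e^{-B}\psi$ preserves a D-brane without $B$-field iff $\psi$ preserves it with $B$-field) also carries a 10d-chirality factor, producing the same combination on the 6d side.

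Matching the two sides then fixes the normalization $\alpha$ in terms of the M-theory circle radius $r$ via Witten's rescaling $g_{mn} = e^\gamma G_{mn}$ with $e^\gamma = r$, as used in the proof of the preceding proposition. The main obstacle, and the step demanding most care, is precisely this chirality bookkeeping: both the 11d-to-10d reduction of Clifford multiplication by $V$ and the IIA $B$-field action involve the 10d chirality operator, and these factors must cancel consistently. Once the conventions are pinned down by compatibility with the D6-brane BPS projector, the remainder of the argument is a direct Clifford-algebraic calculation with no further subtleties.
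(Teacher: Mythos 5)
Your reduction of the lemma to the single first-order identity $\delta\Psi_{IIA} = -\d\zbar_1\d\zbar_2 \cdot \Psi_{IIA}(0)$ is fine (nilpotency of $\d\zbar_1\d\zbar_2$ and linearity of $\Psi_{IIA}(\eps)$ in $\eps$ are exactly the facts the paper also uses), but the way you then verify that identity has a genuine gap. The crux of your argument is the assertion that the $B$-field action on the right-hand side ``also carries a 10d-chirality factor,'' so that $\d\zbar_1\d\zbar_2\cdot\Psi_{IIA}(0)$ produces the combination $\Psi_{SU(3)}^+ - \Psi_{SU(3)}^-$ on the six-dimensional factor, matching the left-hand side where $V\cdot$ reduces to the 10d chirality operator. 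This is asserted, not derived, and with the natural conventions it fails: Clifford multiplication by a $2$-form whose legs both lie in the $\C^2$ directions acts on the tensor product $S_{6d}\otimes S_{4d}$ with \emph{no} residual chirality twist on the $6d$ factor (the two implicit chirality insertions square to one), so the right-hand side is proportional to $\Psi_{G2}\otimes(\omega^{0,2}\cdot\psi_Z) = (\Psi_{SU(3)}^+ + \Psi_{SU(3)}^-)\otimes(\omega^{0,2}\cdot\psi_Z)$, which is linearly independent of the combination $(\Psi_{SU(3)}^+ - \Psi_{SU(3)}^-)\otimes(\omega^{0,2}\cdot\psi_Z)$ you obtain on the left. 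Your closing remark that ``matching the two sides fixes the normalization $\alpha$'' is a further warning sign: the lemma is a statement about a given spinor, and if a direct computation only works after adjusting a normalization you have changed the statement rather than proved it.

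The paper sidesteps exactly this convention-dependent bookkeeping by arguing indirectly, and the physical input you mention only in passing is the engine of that argument: since $\Psi_{IIA}(\eps)$ preserves the $D6$ brane in the presence of the $B$-field $-\eps\d\zbar_1\d\zbar_2$, the spinor $e^{\eps\d\zbar_1\d\zbar_2}\Psi_{IIA}(\eps)$ must preserve the $D6$ brane with no $B$-field, order by order in $\eps$. The second ingredient is a representation-theoretic dimension count: the $SU(3)\times SU(2)_+$-invariant spinors form a $4$-dimensional space, the $D6$-preserving ones form the $2$-dimensional subspace $\Psi_{G2}\otimes\{SU(2)_+\text{-invariants}\}$, and since $\dpa{\theta}\cdot\Psi_{G2}$ is linearly independent of $\Psi_{G2}$, the $\eps$-linear term $(\dpa{\theta}\cdot\Psi_{G2})\otimes(\omega^{0,2}\cdot\Psi_{SU(2)_+})$ of $\Psi_{IIA}(\eps)$ does \emph{not} preserve the brane. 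Combining these forces the $\eps$-dependence of $e^{\eps\d\zbar_1\d\zbar_2}\Psi_{IIA}(\eps)$ to cancel. If you want to keep your direct-computation route, you would need to actually pin down the IIA $B$-field/brane-projector conventions and exhibit the claimed chirality factor rather than postulate it; as written, the step that does all the work is unsupported, and the brane-preservation fact that the paper relies on appears in your argument only as a vague convention-fixing remark rather than as a logical premise.
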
 
\begin{proof}
By the definition of the $11$-dimensional spinor $\Psi(\eps)$, we have
\begin{equation}
\Psi^{IIA}(\eps) =  \Psi^{IIA}(0) +  \eps \left(\dpa{\theta} \cdot \Psi_{G2}\right) \otimes (\omega^{0,2}  \cdot \Psi_{SU(2)_+}) \tag{$\dagger$}. 
\end{equation}
Here $\cdot$ indicates Clifford multiplication and $\dpa{\theta}$ is the vector field which rotates the $11$-dimensional circle.  

We have seen that
$$
\Psi_{G2} = \Psi_{SU(3)}^+ + \Psi_{SU(3)}^-
$$
so that $\Psi_{G2}$ is a sum of the two $6$-dimensional spinors invariant under $SU(3)$.  Since $SU(3) \subset G2$ fixes the vector $\dpa{\theta}$, the spinor $\dpa{\theta}\cdot \Psi_{G2}$ must also be a sum of the two $SU(3)$ invariant spinors, and it must be linearly independent from $\Psi_{G2}$.

There is a $4$-dimensional space of $SU(3) \times SU(2)_+$ invariant spinors for type $IIA$, and there is a two-dimensional subspace  of this that preserves the $D6$ brane.  This two-dimensional subspace is invariant under the other copy $SU(2)_-$ of $SU(2)$ inside of $\op{Spin}(4)$.  The two  $SU(2)_+$-invariant spinors form a copy of the fundamental representation of $SU(2)_-$.  Thus, any two-dimensional subspace of the $SU(3) \times SU(2)_+$ invariant spinors which is a representation of $SU(2)_-$ must consist of the tensor product of a single $SU(3)$ invariant spinor with both $SU(2) = SU(2)_+$ invariant spinors.  The single $SU(3)$ invariant spinor must be $\Psi_{G2}$.  It follows that
$$
\left( \dpa{\theta} \cdot \Psi_{G2} \right) \otimes (\d \zbar_1 \d \zbar_2\cdot  \Psi_{SU(2)} )
$$ 
can not preserve the $D6$ brane. 

However, we know that our spinor $\Psi_{IIA}(\eps)$ preserves the $D6$ brane in the presence of the $B$-field $-\eps \d \zbar_1 \d \zbar_2$.  This means that $e^{\eps \d \zbar_1 \d \zbar_2} \Psi_{IIA}(\eps)$ preserves the $D6$ brane. Further, since $\d \zbar_1 \d \zbar_2$ is nilpotent, $e^{\eps \d \zbar_1 \d \zbar_2}$ only contains constant and linear terms in $\eps$.  

Since the coefficient of $\eps$ in $\Psi_{IIA}(\eps)$ does not preserve the $D6$ brane, we deduce that $e^{\eps \d \zbar_1 \d \zbar_2} \Psi_{IIA}(\eps)$ is independent of $\eps$, so that 
$$
e^{\eps \d \zbar_1 \d \zbar_2} \Psi_{IIA}(\eps) = \Psi_{IIA}(\eps = 0) 
$$
as desired. 
\end{proof}

\subsection{The twist of the theory on a $D6$ brane} 
The twist of type IIA superstring theory by the supercharge $\Psi_{IIA}(\eps = 0)$ was analyzed in \cite{CosLi16}.  There, we showed that this supercharge has the feature that translation in the $6$ directions rotated by $SU(3)$ is $Q$-exact, as are the two  translations in the directions $\dpa{\zbar_i}$ in the complex plane $\C^2$.  Writing $\R^{10} = \R^6 \times \C^2$, we find the supergravity theory is topological on $\R^6$ and holomorphic on $\C^2$. A conjectural description of this supergravity theory was presented in \cite{CosLi16}: it was argued that this supergravity theory arises from the topological string theory which is the topological $A$-model on $\R^6$ and the topological $B$-model on $\C^2$.

An analysis of $D$-branes was also performed in \cite{CosLi16}. It was shown that the theory living on a $D$-brane, after twisting, is equivalent to the theory living on a brane in the appropriate topological string theory. For the $D6$ brane that arises from putting $M$-theory on a Taub-NUT, we find a brane on $\R^3 \times \R^3 \times \C^2$ which lives on $0 \times \R^3 \times \C^2$.  It is the product of an $A$-brane on the Lagrangian submanifold $0 \times \R^3 \subset \R^3$, and a space-filling $B$-brane on $\C^2$. 

Following \cite{CosLi16}, we can write down the fields and action functional of this theory in the BV formalism as follows.  It is a theory of Chern-Simons type associated to a differential graded algebra with a trace (as in Witten's work \cite{Wit86} on open string theory). The dg algebra is
$$
\Omega^\ast(\R^3) \what{\otimes} \Omega^{0,\ast}(\C^2) \otimes \mf{gl}_N
$$ 
where $\what{\otimes}$ indicates an appropriate completed tensor product. The differential is the sum of the de Rham operator on $\Omega^\ast(\R^3)$ and the Dolbeault operator on $\Omega^{0,\ast}(\C^2)$.  The trace is given by 
$$
\op{Tr} (\alpha) = \int_{\R^3 \times \C^2} \d z_1 \d z_2 \op{Tr}_{\mf{gl}_N} \alpha.
$$
Of course, as in our previous discussion of the five-dimensional gauge theory, for this to converge we need $\alpha$ to have compact support. 

We are interested in the deformation of this where we turn on the parameter $\eps$.  This has the effect, in twisted type IIA, of introduce a $B$-field $-\eps \d \zbar_1 \d \zbar_2$.  

This deformation of twisted type IIA will correspond to a deformation of the topological string on $\R^6 \times \C^2$.  Let us give an argument for what this must be.  According to Seiberg and Witten \cite{SeiWit99}, the theory living on a $D$-brane in the presence of a constant $B$-field is a non-commutative field theory.  We thus expect that this deformation has the effect of turning $B$-branes living on $\C^2$ into $B$-branes living on a non-commutative $\C^2$.  In the $B$-model this is realized by introducing a constant polyvector field $-\eps \dpa{z_1} \dpa{z_2}$ into the fields of BCOV theory, which is the space-time ``gravity'' theory associated to the topological $B$-model.   
  
For the theory on the $D6$ brane in the twisted theory, introducing the constant polyvector field $-\eps \dpa{z_1} \dpa{z_2}$ has the effect of replacing the differential-graded algebra $\Omega^\ast(\R^3) \what{\otimes} \Omega^{0,\ast}(\C^2) \otimes \mf{gl}_k$ be a non-commutative version.  In the non-commutative version, the product is replaced by the Moyal product
$$
\alpha \ast_{-\eps} \beta = \sum_{n \ge 0} (-\eps)^n \frac{1}{2^n n!} \varepsilon_{i_1j_1} \dots \varepsilon_{i_n j_n} \left( \dpa{z_{i_1}} \dots \dpa{z_{i_n}} \alpha \right) \wedge \left( \dpa{z_{j_1}} \dots \dpa{z_{j_n}} \beta \right) 
$$
where $\varepsilon_{ij}$ is the alternating symbol, and $\alpha,\beta $ indicate elements of $\Omega^\ast(\R^3) \what{\otimes} \Omega^{0,\ast}(\C^2)$.   

This deformation has a rather drastic effect on the closed string sector of twisted type IIA.  Before turning on the $B$-field, this twist of type IIA is conjecturally described by the $A$-model on $\R^6$ and the $B$-model on $\C^2$.  However, as is well-known, the $B$-model on a non-commutative space is the same as the $A$-model equipped with a $B$-field.  If we throw away instanton effects in the $A$-model (as we do in this paper, since we work in the supergravity limit), the $B$-field has no effect.  Thus, after turning on this $B$-field, the twist of type IIA theory becomes the topological $A$-model on $\R^{10}$.  The closed-string field theory associated to this has fields the de Rham complex $\Omega^\ast(\R^{10})$, and is both topological and free.    The fact that this theory is so trivial means that we can essentially disregard it from our analysis.

\section{Dependence on the $M$-theory radius}
So far, we have seen how $M$-theory in the twisted $\Omega$-background can be described, upon reduction to $10$ dimensions, in terms of type IIA  string theory in the presence of a $B$-field, some $D6$ branes, and a certain supercharge. Further, we have argued that when $\eps = 0$, the twist of type IIA can be described by a topological string on $\R^6 \times \C^2$ which is the $A$-model on $\R^6$ and the $B$-model on $\C^2$. The $D6$ branes live on $\R^3 \times \C^2$.  Turning on $\eps$ takes us to a non-commutative $B$-model on $\C^2$, which is equivalent to the $A$-model in the presence of a $B$-field.

We would like to understand in what regime this will give a good picture of what happens in $11$ dimensions. The key thing to understand is to what extend everything depends on the radius of the $M$-theory circle.

Let $TN_k$ denote the Taub-NUT manifold with an $A_k$ singularity at the origin. We have already stated theorem \ref{theorem_exact} which shows that, with an appropriate choice of scale for the supercharge and the $C$-field, we can construct a family of supergravity backgrounds which is independent of the radius. Let us apply this construction when the G2 manifold is $TN_k \times \R^3$ and the manifold of $SU(2)$ holonomy is $\C^2$. 

Explicitly, we have a family of spinors, metric, and $C$-fields given by
\begin{align*} 
g &= r^2 g_{TN_k} + r^2 g_{\R^3} + g_{\C^2} \\
C &= -\eps r V^{\flat}\omega_Z^{0,2} + c r^3 A_{G2}\\
\Psi &=   \psi_M \otimes \psi_Z + \eps (V \cdot \psi_M) \otimes (\omega^{0,2}_Z \cdot \psi_Z)
\end{align*}
The normalization of the spinors is such that in this family, the spinor $ \psi_M \otimes \psi_Z $ has norm $r$.

Theorem \ref{theorem_exactness} shows that $r$-derivative of this family of supergravity backgrounds is exact with respect to the supercharge associated to the family of generalized Killing spinors $\Psi_r(\eps)$.    This tells us that after we pass to the $\Omega$-background, $M$-theory in this family of backgrounds is independent of $r$.

Let's describe what this family of $M$-theory backgrounds looks like when we pass to $IIA$.  As we have seen, the field $-r \eps C$ becomes the $B$-field $-\eps r \d \zbar_1 \d \zbar_2$.  

To describe the reduction of the $G2$ $3$-form, let us identify $\R^6$ with $\C^3$ and we coordinates $w_i$.  Then, the field $r^3A_{G2}$ gives rise to the $B$-field $r^3\sum \d w_i \d \wbar_i$ and the Ramond-Ramond $3$-form $r^3\op{Re} \d w_1 \d w_2 \d w_3$.

The metric on $\R^6 \times \C^2$ acquires an extra factor of $r$ if we work in units where $\alpha' =1$.  The metric is $r^3 g_{\R^6} + r g_{\C^2}$. 

Finally, the string coupling constant is $r^{3/2}$. 

We also, of course, have $k$ $D6$ branes which wrap $\C^2$ and are located on the locus in $\R^6$ where the $w_i$ are real.

 The statement which follows formally from our supersymmetry analysis is that type IIA string theory in this background is, after twisting, independent of $r$.

\subsection{}
Let us summarize our conclusions about this so far in this section:
\begin{enumerate} 
\item For $r$ small, the theory has a description as a twist of type IIA in the presence of $k$ $D6$ branes and a $B$-field. 
\item Once we have turned on the $B$-field, the theory on the $D6$ branes is a non-commutative $GL(k)$ gauge theory on $\R^3 \times \C^2$, where the coordinates $z_1,z_2$ on $\C^2$  are non-commutative. 
\item In the presence of the $B$-field, the contribution of the twisted closed sting theory is irrelevant because almost all closed string fields become cohomologically exact.
\item If we introduce an additional term in the $B$-field and a Ramond-Ramond $3$-form to the type IIA background, we can make everything independent of $r$.
\end{enumerate}
From the point of view of the $D6$ brane, the additional terms in the type IIA  background coming from the $G2$ $3$-form do not play any role\footnote{They will not play any role when the field theory on the $D6$ brane is treated in perturbation theory. In a more general  situation, these additional terms may play a role when we include stringy effects: that is, they will effect the weight of counts of BPS $D2$ branes and fundamental strings.  BPS $D2$ branes and fundamental strings come from special Lagrangians and  holomorphic curves in the Calabi-Yau $\R^6$ with coordinates $w_i$.  Since there are no compact special Lagrangians,  or holomorphic curves bounding our Lagrangian $\op{Im} w_i = 0$, these additional terms do not contribute at all. }.   We will thus ignore them. 

As we have seen, the $7$-dimensional gauge theory can be written as an open-string field theory built from the dg algebra
$$
\mc{A}_N = \Omega^\ast(\R^3) \what{\otimes} \Omega^{0,\ast}(\C^2) \otimes \mf{gl}_N
$$
equipped with the Moyal product discussed in section \ref{section:Mtheory_TN}, with parameter of non-commutativity $c$.    Let us denote the coupling of the $7$-dimensional gauge theory by $\hbar$, so that with this coupling constant the action functional is 
$$
\frac{1}{\hbar} \int_{\R^3 \times \C^2} \d z_1 \d z_2 \op{Tr} \left\{ \tfrac{1}{2} \alpha \ast_{c} \d \alpha + \op{Tr} \tfrac{1}{3}\alpha \ast_{c} \alpha \ast_{c} \alpha \right\}.  
$$ 
Since everything is independent of the $M$-theory radius $r$, the only possibility is that $c = -\eps$ and that $\hbar$ is the $11$-dimensional Planck length. We will ignore this Planck length parameter in what follows.  

To summarize, we have shown that $11$-dimensional supergravity in an $\Omega$-background where the Taub-NUT fibre  is rotated can be described in terms of a $7$-dimensional non-commutative gauge  theory which is very close to the $5$-dimensional gauge theory we are interested in.

\section{From $7$ to $5$ dimensions}

So far, we have argued that a twisted version of $M$-theory on a Taub-NUT $\Omega$-background leads to a non-commutative $7$-dimensional gauge theory which is very similar to the $5$-dimensional gauge theory we introduced in section \ref{section_5d_gauge_theory}. Now we will see how performing a further $\Omega$-background construction will lead to precisely the $5$-dimensional gauge theory we are interested in. 

Let $TN_k$ denote the Taub-NUT manifold with an $A_k$ singularity at the origin. Consider $M$-theory on $TN_k \times \R^2 \times \R \times \C^2$. The factor $TN_k \times \R^2$ has the structure of a Calabi-Yau $3$-fold, using the obvious embedding of $SU(2)$ into $SU(3)$.  We want to perform an $\Omega$-background construction using a rotation which acts on both $TN_k$ and on $\R^2$.  Recall that the isometries of $TN_k$ include the $U(1)$ we considered before, which is tri-holomorphic, and an $SO(3)$ which rotates the $3$ different complex structures present on a hyper-K\"ahler manifold.  We can choose a $U(1)$ inside the $SO(3)$ which fixes one of the three complex structures and which scales the corresponding holomorphic symplectic form.     We can make this $U(1)$ also rotate $\R^2$ so that it scales the one-form $\d (y_1 + i y_2)$ with weight opposite to the way it scales the holomorphic symplectic form on $TN_k$.  In this way, the $U(1)$ action on $TN_k \times \R^2$ is by an isometry which preserves the $SU(3)$ structure.

Together with the $U(1)$ action which rotates $TN_k$ by a tri-holomorphic isometry, we find we have a $U(1) \times U(1)$ action on $TN_k \times \R^2$ which preserves the $SU(3)$ structure.  As before, $V$ let refer to the vector field on $TN_k$ generating the tri-holomorphic $U(1)$ action, and let $\til{V}$ denote the vector field generating the other $U(1)$ action.  Let $\til{V}^{\flat}$ denote the $1$-form dual to $\til{V}$, and define a $3$-form by$$
\til{C} = \til{V}^{\flat} \otimes \omega^{0,2}_{\C^2} 
$$
where $\omega^{0,2}_{\C^2} = \d \zbar_1 \d \zbar_2$.
 
As in theorem \ref{theorem_supergravity_solution}, let $\psi_{TN_k \times \R^2 \times \R}$ denote the spinor associated to the $G2$ structure on $TN_k \times \R^2 \times \R$, and let $\psi_{\C^2}$ denote the spinor associated to the $SU(2)$ structure on $\C^2$.  Define a family of spinors
$$
\Psi(\eps,\delta) = \psi_{TN_k \times \R^3} \otimes \psi_{\C^2} + \eps(V \cdot \psi_{TN_k \times \R^3}) \otimes (\omega^{0,2}_{\C^2}\cdot \psi_{\C^2}) + \delta(\til{V} \cdot \psi_{TN_k \times \R^3}) \otimes (\omega^{0,2}_{\C^2} \cdot \psi_{\C^2}). 
$$ 
Then the family of spinors $\Psi(\eps,\delta)$ are generalized Killing spinors for $11$-dimensional supergravity with the $3$-form $-\eps C - \delta \til{C}$, and 
$$
\Psi(\eps,\delta)^2 = \eps V + \delta \til{V}. 
$$

We can then put $M$-theory in the $\Omega$ background associated to sum $\eps V + \delta\til{V}$ of vector fields, just as before. 

Let us think of $\delta$ as being small compared to $\eps$.  If we do this, then we can imagine first doing the $\Omega$-background construction with respect to the vector field $V$ which rotates $TN_k$, and then doing a further $\Omega$-background construction. The first $\Omega$-background construction yields, as we have seen, a $D6$ brane in type IIA in the presence of a constant $B$-field.  It is natural to expect that the second $\Omega$-background construction will have the effect, on the locus of the $D6$ brane, of putting the $7$-dimensional gauge theory in an $\Omega$-background. We will not argue this point in detail. Instead, we will show that putting this $7$-dimensional gauge theory in an $\Omega$-background will yield the $5$-dimensional non-commutative gauge theory we discussed earlier.

\subsection{$7$-dimensional gauge theory in an $\Omega$-background}
In this section, we will analyze the $7$-dimensional non-commutative gauge theory in an $\Omega$-background, and argue that when we do this we find the $5$-dimensional non-commutative gauge theory we introduced in section \ref{section_5d_gauge_theory}.

Consider $7$-dimensional supersymmetric pure gauge theory with gauge group $G$ and corresponding Lie algebra $\mf{g}$.   Let us consider the theory on $\R^3 \times X$, where $X$ is a manifold with $SU(2)$ holonomy.   If we take the size of $X$ to be small, then the theory has an effective description as a $3$-dimensional $N=4$ $\sigma$-model with target the hyper-K\"ahler manifold of $G$-instantons on $X$.  

 We can give a similar description of the full $7$-dimensional theory on $\R^3 \times X$, as a $3$-dimensional $N=4$ theory with an infinite-dimensional target.  Recall that to describe a $3$-dimensional $N=4$ gauge theory, we need to give a hyper-K\"ahler vector space $V$ and a group $G$ which acts on $V$ by hyper-K\"ahler isometries.  The Higgs branch of the theory is then the hyper-K\"ahler quotient of $V$ by $G$.  

If $X$ is a $4$-manifold with $SU(2)$ holonomy, then the space $\Omega^1(X,\g)$ of connections on the trivial $G$-bundle on $X$ is an infinite-dimensional hyper-K\"ahler vector space.  The group $\cinfty(X,G)$ of smooth maps from $X$ to $G$ acts on $\Omega^1(X,\g)$ by gauge transformations. The three components of the hyper-K\"ahler moment map combine into the map
\begin{align*} 
 \Omega^1(X,\mf{g}) & \mapsto \Omega^2_+(X,\mf{g}) \\
A & \mapsto F(A)_+
\end{align*}
sending a connection to the self-dual part of its curvature.  Thus, the hyper-K\"ahler quotient of $\Omega^1(X,\mf{g})$ by $\cinfty(X,G)$ is the space of (topologically trivial) instantons on $X$. (The construction can be modified in an evident way to describe instantons of a given topological type).

There is a $3$d $N=4$ theory built from the hyper-K\"ahler vector space $\Omega^1(X,\mf{g})$ acted on by $\cinfty(X,G)$.  Since the fields of this $3$d theory are maps from $\R^3$ to the space of sections of a bundle on $X$, the result can be described as a theory on $\R^3 \times X$.  The resulting theory is the supersymmetric pure gauge theory on $\R^3 \times X$.  This description holds in the case that $X = \R^4$ as well (although one needs a modicum of care to deal with the fact that $\Omega^1(\R^4,\mf{g})$ is not strictly  hyper-K\"ahler because the integrals defining the hyper-K\"ahler forms may not converge).   A disadvantange of this presentation of $7$-dimensional gauge theory is that the $\op{Spin}(7)$-action is not evident.  However, this will not be important for us.

\subsection{}
As is well-known,  $3d$ $N=4$ theories admit two classes of topological twists.  We are interested in the topological twist which has the feature that local operators in this twist are holomorphic functions on the Higgs branch, in a particular complex structure.  We will call this the Rozansky-Witten twist, since at low energies it can be described by the Rozansky-Witten $\sigma$-model with target the Higgs branch.  

 Let us discuss this twist in a little more detail for a $3d$ $N=4$ theory built from a vector space $V$ and a group action $G$.  Let us choose one of the complex structures on $V$, and let $V_{\C}$ denote the corresponding complex vector space. Then, $V_{\C}$ is a holomorphic-symplectic vector space. Let $G_{\C}$ denote the complexification of $G$. The hyper-K\"ahler quotient of $V$ by $G$ is isomorphic to the holomorphic symplectic reduction of $V_{\C}$ by $G_{\C}$.

The Rozansky-Witten twist of the $N=4$ theory built from $V$ and $G$ is a gauged Rozansky-Witten theory built from the holomorphic symplectic vector space $V_{\C}$ with the group action $G_{\C}$.  We will not explain in detail how to describe gauged Rozansky-Witten theory, because the story becomes simpler after performing an $\Omega$-background construction.

A $3d$ $N=4$ theory on $\R^3$ can, after the Rozansky-Witten twist, be put in an $\Omega$-background.  This involves working equivariantly with respect to an $S^1$ which rotates a plane in $\R^3$.  We can denote the $\Omega$-background by $\R^2_{\eps} \times \R$.  When we put the theory in this background, it becomes effectively a quantum-mechanical system on the line $\R$ which is fixed by the rotation.  The algebra of local operators in this quantum mechanical system is then a deformation-quantization of the algebra of holomorphic functions on the Higgs branch of the theory \cite{Yag14,BulDimGai15}.  

In the case that our theory is represented in the UV as a gauge theory built from a hyper-K\"ahler vector space $V$ acted on by a group $G$, this algebra of operators has the following representation. As before, we let $V_{\C}$ denote the complex vector space $V$ with one of its complex structures, and let $G_{\C}$ denote the complexification of $G$, which acts on $V_{\C}$ by holomorphic symplectic symmetries. Let $\omega$ denote the holomorphic symplectic form on $V_{\C}$, which we think of as an anti-symmetric pairing on $V_{\C}$.  Then, we can construct a deformation quantization of the holomorphic symplectic quotient $V_{\C} \sslash G_{\C}$ as the quantum Hamiltonian reduction of the Weyl algebra $W(V_{\C})$ generated by the dual $V^\ast_{\C}$ by $G_{\C}$.  

We can also realize this quantum-mechanical system as a one-dimensional gauge theory.  The fields of this gauge theory are elements $\phi \in \cinfty(\R,V_{\C})$ and $A \in \Omega^1(\R, \mf{g}_{\C})$, consisting of a map to $V_{\C}$ and a $G_{\C}$ connection.  The action functional is 
$$
\tfrac{1}{\hbar}S(A,\phi) = \tfrac{1}{\hbar} \int_{\R} \omega( \phi, \d_{A} \phi ). 
$$ 
This theory has gauge symmetry given by maps from $\R$ to the complex group $G_{\C}$, acting on fields in the evident way.  The space of fields of this theory is a complex manifold, and the action functional is holomorphic.  We consider the theory as a ``complex'' theory, meaning that operators are holomorphic functions of the fields and that the path integral is performed over a real slice.  The parameter $\hbar$ is the equivariant parameter that appears in the $\Omega$-background construction. 

One can check that the commutative algebra of classical operators in this gauge theory is the algebra of holomorphic functions on the symplectic quotient $V_{\C} \sslash G_{\C}$, and so the algebra of operators at the quantum level is a deformation quantization of this.  Indeed, the algebra of classical operators is the algebra of holomorphic functions on the space of solutions to the equations of motion.   We will verify that this space is $V_{\C} \sslash G_{\C}$. 

Varying the field $\phi$ in the action functional $S$ tells us that $\d_{A} \phi = 0$.  Varifying the field $A$ tells us that for all $X \in \mf{g}$  and all $t \in \R$ we have $\omega(\phi(t) , X (\phi(t) ) = 0$.  This means that each $\phi(t)$ is in the zero locus of the moment map.    We can apply a gauge transformation which makes $A$ trivial, so that $\d \phi = 0$ and $\phi$ is constant.  Thus, $\phi$ becomes a constant map to $V_{\C}$ which is in the zero locus of the moment map.  Finally, quotienting by constant gauge transformations shows us that the space of solutions is the symplectic quotient $V_{\C} \sslash G_{\C}$.

\subsection{}
Let us apply these observations to the case of interest, when we have a $3d$ $N=4$ theory built from the infinite-dimensional hyper-K\"ahler vector space $\Omega^1(\R^4) \otimes \mf{g}$, acted on by the group of smooth maps from $\R^4$ to $G$.  In this case, as we have seen, the hyper-K\"ahler reduction is the moduli of instantons on $\R^4$.  To write down the quantum mechanical system that arises when we consider the $3d$ $N=4$ theory in the Rozansky-Witten twist and the  $\Omega$-background, we need to choose one of the complex structures on $\Omega^1(\R^4) \otimes \mf{g}$ and form the holomorphic symplectic reduction instead.  Let us choose a complex structure on $\R^4$, so that we identify $\R^4 = \C^2$. Then,
$$
\Omega^1(\R^4) \otimes \mf{g} = \Omega^{0,1}(\C^2) \otimes_{\C} \mf{g}_{\C}
$$ 
and the right hand side has an evident complex structure.  We will view $\Omega^{0,1}(\C^2) \otimes \mf{g}_{\C}$ as the space of $(0,1)$-connections on the trivial holomorphic $G_{\C}$-bundle on $\C^2$.  The group of smooth maps from $\C^2$ to the complex group $G_{\C}$ acts on $\Omega^{0,1}(\C^2) \otimes \mf{g}_{\C}$ by gauge transformations of the form
$$
A \mapsto g^{-1} A g +  g^{-1} \dbar g.
$$
These are holomorphic symplectic symmetries.  The holomorphic symplectic reduction is the moduli space of holomorphic $G_{\C}$-bundles on $\C^2$. 

 Thus, we see that the hyper-K\"ahler reduction of $\Omega^1(\R^4) \otimes \mf{g}$ is the moduli of instantons on $\R^4$, whereas the holomorphic-symplectic reduction one gets by choosing a particular complex structure is the moduli of holomorphic bundles on $\C^2$.    
 
In the case of an ordinary finite-dimensional $3d$ $N=4$ gauge theory, we have seen that after performing the Rozansky-Witten twist and putting the theory in an $\Omega$-background, we are left with a quantum-mechanical gauge theory built from $V_{\C}$ and $G_{\C}$.  Applying this argument to the infinite-dimensional sitaution at hand, we find a quantum-mechanical system with an infinite-dimensional space of fields, where the fields are
\begin{align*} 
\phi & \in \Omega^0(\R, \Omega^{0,1}(\C^2)) \otimes \mf{g}_{\C} \\
\alpha & \in \Omega^1(\R, \Omega^{0,0}(\C^2)) \otimes \mf{g}_{\C}. 
\end{align*}
We can combine these two fields into a $3$-component partial connection
$$
A = A_t \d t + A_{\zbar_1} \d \zbar_1 + A_{\zbar_2} \d \zbar_2
$$
where each component is a smooth function on $\R \times \C^2$. Then, the action functional for the quantum-mechanical system becomes the Chern-Simons action
$$
S(A)  = \int \d z_1 \d z_2 CS(A) = \int \d z_1 \d z_2 \left( \tfrac{1}{2} \ip{A, \d A}_{\mf{g}_\C} + \tfrac{1}{6} \ip{A,[A,A]}_{\mf{g}_\C}\right) 
$$
where $\ip{-,-}_{\mf{g}_{\C}}$ is the invariant pairing on the Lie algebra.   The gauge transformations are by smooth maps from $\R \times \C^2$ to $G_{\C}$ acting on the partial connection $A$.  

The resulting $5$-dimensional theory is (in the case $\mf{g}  = \mf{gl}_n$) the $5d$ gauge theory we introduced earlier in the limit where the plane $\C^2$ becomes commutative. 

\subsection{}
To sum up our discussion of $7$-dimensional gauge theory,  we have shown that a twist of the $7d$ maximally supersymmetric gauge theory, placed in an $\Omega$-background in one plane,  becomes our $5d$ Chern-Simons type theory. Further, if we introduce the loop expansion parameter $\hbar$ into the $5d$ gauge theory,  so that our action becomes $\tfrac{1}{\hbar} S(A)$, then we can identify $\hbar$ with the equivariant parameter in the $\Omega$-background construction in the plane. 

Combined with our arguments about $M$-theory, this discussion tells us that the $5$-dimensional gauge theory we have introduced arises from $M$-theory on $TN_k \times \R^2 \times \R \times \C^2$ in an $\Omega$-background with parameters $\eps$ and $\delta$, where $\delta$ matches the parameter $\hbar$ in our $5$-dimensional theory and $\eps$ matches the non-commutativity parameter $c$.  

We can sum up our results as follows. 
\begin{theorem}
 Let $TN_k^{\eps,r}$ denote the Taub-NUT manifold with radius $r$, an $A_{k-1}$ singularity at the origin, and $\Omega$-background parameter $\eps$.

Then, $M$-theory on $TN_k^{\eps,r} \times \R^2_\delta \times \R \times \C^2$ is equivalent to $GL(k)$ non-commutative gauge theory on $\R \times \C^2$ with action 
     $$S(A) = \frac{1}{\delta} \int \tfrac{1}{2} \op{Tr} A \ast_\eps \d A + \tfrac{1}{3} \op{Tr} A \ast_\eps A \ast_\eps A$$ 

\end{theorem}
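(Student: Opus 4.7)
The plan is to assemble the theorem from the ingredients developed in the preceding sections, using the $r$-independence of Theorem \ref{theorem_exactness} to pass to a regime in which a IIA description is valid, and then layering the two $\Omega$-background constructions one after the other.

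First I would invoke Theorem \ref{theorem_exactness} applied to the $G2$ manifold $TN_k \times \R^2 \times \R$ and the hyper-K\"ahler $\C^2$, with the family of generalized Killing spinors $\Psi(\eps,\delta)$ constructed in section \ref{section:11dsugra_omega}. This shows that the full $M$-theory background, after twisting, is independent of the Taub-NUT radius $r$, so we may reduce to the small-$r$ regime in which the type IIA description applies. By the proposition in section \ref{section:Mtheory_TN}, this yields type IIA on $\R^3 \times \R^3 \times \C^2$ with $k$ $D6$ branes wrapping $0 \times \R^3 \times \C^2$ and a constant $B$-field $-\eps\, \d\zbar_1 \d\zbar_2$, together with the RR $3$-form and additional $B$-field coming from the $G2$ $3$-form. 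By the argument at the end of section \ref{section:Mtheory_TN}, the latter contributions only affect non-perturbative effects associated to special Lagrangians and holomorphic curves, of which there are none, so they may be discarded.

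Second, I would use the analysis of the twist of type IIA from \cite{CosLi16}: in the presence of the $B$-field $-\eps\, \d\zbar_1 \d\zbar_2$, the twisted closed string sector becomes essentially the topological $A$-model on $\R^{10}$, which is free and contributes nothing at the perturbative level. The theory on the $D6$ branes becomes the $7$-dimensional non-commutative gauge theory on $\R^3 \times \C^2$ built from the dga $\mc{A}_k = \Omega^\ast(\R^3)\what\otimes\Omega^{0,\ast}(\C^2)\otimes\mf{gl}_k$ with Moyal product $\ast_{-\eps}$ and Chern-Simons-type action $\tfrac{1}{\hbar}\int\op{Tr}(\tfrac12 \alpha \ast_{-\eps} \d\alpha + \tfrac13 \alpha\ast_{-\eps}\alpha\ast_{-\eps}\alpha)\,\d z_1\d z_2$. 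Because of the independence from $r$, the only dimensionally admissible identification is $\hbar$ equal to the $11$-dimensional Planck length, which we absorb; the non-commutativity parameter is $-\eps$.

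Third, I would turn on $\delta$ by interpreting the additional rotation $\til{V}$ of $\R^2 \subset TN_k \times \R^2$ as an $\Omega$-background for the $7$-dimensional gauge theory, placing it in the plane $\R^2_\delta$. Using the description of $7d$ maximally supersymmetric gauge theory on $\R^3 \times \C^2$ as a $3d$ $N{=}4$ theory built from the hyper-K\"ahler vector space $\Omega^1(\C^2)\otimes\mf{g}$ with gauge group $\cinfty(\C^2,G)$, the Rozansky-Witten twist together with the $\Omega$-background in one plane reduces the theory to the quantum mechanical gauge theory whose fields are a $3$-component partial connection $A = A_t\,\d t + A_{\zbar_1}\,\d\zbar_1 + A_{\zbar_2}\,\d\zbar_2$ with action $\int\d z_1\d z_2\,CS(A)$, as worked out in the section on the $7$-to-$5$ reduction. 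Absorbing $-\eps$ into $\eps$ and tracking the equivariant parameter $\delta$ as the overall prefactor $1/\delta$ gives exactly
$$
S(A) = \tfrac{1}{\delta}\int \tfrac12\op{Tr} A\ast_\eps \d A + \tfrac13 \op{Tr} A\ast_\eps A \ast_\eps A.
$$

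The main obstacle, and the step I expect to be most delicate, is justifying the order in which the two $\Omega$-background constructions are applied, i.e.\ that one may first take $r$ small and twist along $V$, then twist along $\til{V}$, and obtain the same answer as twisting along $\eps V + \delta \til{V}$ directly in $11$ dimensions. This is what licenses the identification of $\delta$ with the equivariant parameter of the $7d$ Rozansky-Witten construction and $\eps$ with the non-commutativity parameter of the Moyal product. I would argue this by noting that both $V$ and $\til{V}$ act as commuting isometries preserving the supercharge data, that Theorem \ref{theorem_exactness} ensures $r$-independence at every stage, and that the $\Omega$-background construction of section \ref{supergravity_omega} is functorial in the choice of Killing vector. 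The subsidiary obstacle is the comparison of normalizations (the sign of $\eps$ in the Moyal product and the prefactor $1/\delta$), which is fixed by comparing the two equivariant parameters to the coefficient of the Chern-Simons action in the quantum-mechanical reduction.
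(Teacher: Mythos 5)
Your proposal follows essentially the same route as the paper: $r$-independence via Theorem \ref{theorem_exactness}, reduction to type IIA with $D6$ branes and a constant $B$-field, the twisted-string identification of the $D6$ theory as the $7$-dimensional non-commutative Chern-Simons-type theory, and then the second $\Omega$-background (via the $3d$ $N{=}4$ Rozansky-Witten analysis) to land on the $5$-dimensional action with prefactor $1/\delta$. The "delicate step" you flag — interchanging/stacking the two $\Omega$-background constructions — is exactly the point the paper itself leaves unargued in detail, so your treatment is at the same level of rigor as the original.
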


\section{Quantization of the $5d$ gauge theory}
\label{section:quantization}
In this section we will prove a theorem about when this $5$-dimensional theory can be quantized.  As we mentioned above, this theory is highly non-renormalizable, as the classical action functional has infinitely many terms with more and more derivatives. Even so, we find that consistency of the quantization -- the BV quantum master equation -- constrains the problem of quantization so tightly that the only free parameters are $\eps$ and $\delta$.  

The parameter $\delta$ plays the role of the loop expansion parameter, and we will quantize order by order in $\delta$.  In order to cancel certain potential higher loop anomalies, we will also need to allow negative powers of $\eps$.  This means that the quantum-corrected action functional is a series in $\eps$ and $\delta$ where each positive power of $\delta$ can be accompanied by some finite number of negative powers of $\eps$. More formally, this means we have a theory over the sub-ring of the base ring $\C((\eps))[[\delta]]$ consisting of series $\sum_{i \ge 0} \delta^i f_i(\eps)$ where $f_i(\eps) \in \C((\eps))$ and $f_0(\eps) \in \C[[\eps]]$.

There are two versions of our quantization theorem, depending on whether we work on a general conical surface $X$ or on $\C^2$. In the case that we work on $\C^2$, it is natural to ask that we quantize the theory in a way compatible with all the symmetries of $\C^2$, whereas on a general conical $X$ we can only ask that the quantization is compatible with the $\C^\times$ action on $X$.  Asking that our quantization is compatible with these extra symmetries on $\C^2$ means that the theorem is slightly different in this case. 

Let us first state the version of our theorem that applies for a general conical symplectic surface $X$.  To state the theorem, we should recall that every class $w \in H^2(X)$ gives rise to a first-order deformation of the $\ast$-product on the sheaf $\Oo_X$ of holomorphic functions on $X$, of the form
$$
\alpha \ast_\eps \beta + \gamma \alpha \ast_{\eps,w} \beta
$$
for a square-zero parameter $\gamma$.  
This first-order deformation is compatible with the $\C^\times$-action on $X$ in the sense we discussed above. Further, every $\C^\times$-equivariant deformation of the $\ast$-product on $X$ is of this form.  
\begin{theorem}
Let $X$ be a conical complex symplectic surface, and consider our gauge theory on $\R \times X$ with gauge group $\mf{gl}_N$. Let us consider quantizing the theory in perturbation theory in the loop expansion parameter $\delta$, and let us work ``uniformly in $N$'' as discussed in detail in \cite{CosLi15}.  Let us also ask that the quantization is compatible with the $\C^\times$-action on $X$. Finally, let us allow negative powers of $\eps$ as long as they are accompanied by positive powers of $\delta$. Heuristically, this means that we should treat $\eps$ and $\delta$ as both being small but with $\delta$ much smaller than any positive power $\eps^n$ of $\eps$.  

Then, there are no obstructions (i.e.\ anomalies) to producing a consistent quantization.  At each order in $\delta$, we are free to add $\op{dim} H^2(X) + 1$ independent terms to the action functional, leading to an ambiguity in the quantization.   Explicitly, the deformations of the action functional we can incorporate at $k$ loops are 
 \begin{align*} 
 S & \mapsto S  + \delta^{k}\eps^{-k}\left( \tfrac{1}{2} \int \alpha \d \alpha + \tfrac{1}{3} \int \alpha \ast_\eps \alpha \ast_\eps \alpha\right)\\
 S & \mapsto S +   \delta^k \eps^{-k}\int \alpha \ast_{\eps} \alpha \ast_{\eps,w} \alpha 
\end{align*}
where $\ast_{\eps,w}$ indicates the first-order deformation of the $\ast$-product on $X$ associated to a class $w \in H^2(X)$. 

\end{theorem}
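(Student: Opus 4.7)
The plan is to recast the quantization problem in the cohomological BV renormalization framework of \cite{CosLi15} and reduce both the vanishing of anomalies and the classification of ambiguities to an explicit cyclic cohomology calculation for the cyclic dg algebra underlying the Chern-Simons-type presentation of the theory written down in section \ref{section_5d_gauge_theory}.

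First I would reformulate the theory as a Chern-Simons theory associated to the cyclic dg algebra $\mc{A}^X[[\eps]] \otimes \mf{gl}_N$, equipped with its Moyal product $\ast_\eps$ along $X$, its total differential $\d_{\mc A} + \dbar$, and its trace given by integration against the holomorphic volume form. In the renormalization framework of \cite{CosLi15}, the obstruction to passing from a consistent quantization at loop order $k-1$ to one at loop order $k$ lives in $H^1$ of a local deformation complex $\op{Def}^\bullet$, and the space of quantizations at that order, modulo equivalence, is a torsor for $H^0$ of the same complex. The uniform-in-$N$ hypothesis then lets me invoke a Loday-Quillen-Tsygan-type argument: in the $N \to \infty$ limit both $H^0(\op{Def}^\bullet)$ and $H^1(\op{Def}^\bullet)$ reduce to the local, single-trace cyclic cohomology of the dg algebra $\mc{A}^X[[\eps]]$ itself.

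Next I would compute that cyclic cohomology. Using a K\"unneth decomposition adapted to the fact that $\mc{A}^X = \Omega^\ast(\R) \what{\otimes}\, \Omega^{0,\ast}(X)$ and that the Moyal product only acts on the $X$-factor, the $\R$-factor contributes only its top de Rham class, so the problem reduces to local cyclic cohomology of the deformation quantization $\Omega^{0,\ast}(X)[[\eps]]$ of $\Oo_X$. A holomorphic HKR/Kontsevich-formality computation identifies its local Hochschild cohomology with holomorphic polyvector fields on $X$, and the Connes $SBI$ sequence for a symplectic deformation quantization shifts this to the holomorphic de Rham complex of $X$. Globalizing on the conical surface $X$ and imposing $\C^\times$-equivariance, with $\eps$ carrying the weight by which $\C^\times$ scales the symplectic form, projects onto the weight-zero part: the result is $\C \oplus H^2(X)$ in cohomological degree zero and, crucially, zero in positive cohomological degree. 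Matching the $\C^\times$-weights of the surviving $H^2(X)$ classes to the weight of the ambiguity at $k$ loops is exactly what forces each extra loop order to be accompanied by an extra factor of $\eps^{-k}$, explaining why negative powers of $\eps$ must be allowed in the base ring.

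This cohomology computation gives both halves of the theorem. Vanishing of $H^1$ shows every potential loop-level anomaly is BRST-exact, hence absorbable into a local counterterm, so a consistent quantization exists to all orders in $\delta$. The identification of $H^0$ with $\C \oplus H^2(X)$ pins down the ambiguity at each order: the $\C$-summand corresponds to rescaling the whole Chern-Simons action, producing the first displayed deformation, while the $H^2(X)$-summand corresponds to the first-order $\C^\times$-equivariant deformations $\ast_{\eps,w}$ of the $\ast$-product, producing the second. The main obstacle will be the local cyclic cohomology computation for the sheafified deformation quantization of $\Oo_X$: one has to work with local, differential-operator-valued cochains (so that the answer sees $H^2(X)$ rather than a much larger space), keep careful track of the grading by $\C^\times$-weight, and verify the vanishing of all positive-degree classes after imposing equivariance. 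Once this computation is in hand, the conclusion follows by a routine application of the BV obstruction-deformation formalism together with the uniform-in-$N$ reduction to single-trace cyclic cohomology.
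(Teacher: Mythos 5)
Your overall strategy is essentially the paper's: the obstruction--deformation complex of \cite{CosLi15}, the uniform-in-$N$ (Loday--Quillen--Tsygan) reduction to single-trace cyclic cohomology, the collapse of the cyclic homology of the deformation quantization of $\Oo_X$ to the de Rham cohomology of $X$ via the Poisson differential (note that this collapse requires inverting $\eps$, i.e.\ working over $\C((\eps))$ rather than $\C[[\eps]]$, which is where the negative powers of $\eps$ genuinely enter), and $\C^\times$-equivariance pinning the power of $\eps$ at $k$ loops to $\eps^{-k}$. Your identification of the degree-zero part with $H^0(X)\oplus H^2(X)$, and of the two corresponding deformations of the action, also matches the paper.

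The gap is your claim that, after imposing $\C^\times$-equivariance, the deformation complex has no cohomology in positive degree. What the computation actually yields in degree one is $H^1(X)\,\eps^{-k}$ (attached to the ghost-number-five generator) together with $H^3(X)\,\eps^{-k}$ (attached to the ghost-number-three generator); equivariance only fixes the power of $\eps$ and does not remove these groups, because the contracting $\C^\times$-action acts trivially on $H^\ast(X)$ (the surface retracts equivariantly onto the fixed locus). So your argument establishes anomaly-freeness only when $H^1(X)=H^3(X)=0$ (e.g.\ $\C^2$ or ADE resolutions) and does not cover $X=T^\ast\Sigma$ with $g(\Sigma)\ge 1$, a case the theorem includes. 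The paper closes exactly this hole with a separate locality argument: the order-$(n+1)$ obstruction class is local, hence detected on opens $T^\ast A$ with $A\subset\Sigma$ an annulus; any order-$n$ quantization on $T^\ast A$ extends to $T^\ast D$ for a disc $D\supset A$, because the order-by-order ambiguities are parametrized by $H^0$ and $H^0(T^\ast D)\to H^0(T^\ast A)$ is surjective; since $H^1(T^\ast D)=0$ there is no obstruction over the disc, and locality then forces the obstruction on $T^\ast\Sigma$ to vanish. You need to supply this (or an equivalent) argument; with it, the rest of your outline reproduces the paper's proof.
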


One way to summarize the result is the following. Recall that to define the action functional we need to give a $\ast$-product on the sheaf $\Oo_X$ of holomorphic functions on $X$, together with a holomorphic volume form $\omega$ on $X$.  The theorem states that to specify a quantization, we need to specify $\ast$-product and holomorphic volume form which both depend on $\delta$. 
 
More formally,the moduli space of quantizations consists of pairs consisting of:
\begin{enumerate} 
\item  A $\delta$-dependent associative $\ast$-product
$$
\alpha \ast_{\eps,\delta} \beta = \sum_{k \ge 0} \delta^k \alpha \ast_{\eps,k} \beta
$$
on $\Oo_X$, which at $\delta = 0$ are the original $\ast$-product on $X$.
\item  A $\delta$-dependent holomorphic volume $\omega_{\delta} = \sum \delta^k \omega_k$ on $X$, where $\omega_0$ is the original holomorphic volume form. 
\end{enumerate}
Further, the product $\ast_{\eps,\delta}$ must be $\C^\times$-invariant, where both $\eps$ and $\delta$ have weight $k$.  The holomorphic volume form $\omega_{\delta}$ must be of weight $k$ under the $\C^\times$-action on $X$. This data is taken up to gauge equivalence.

The theorem is proved in the appendix. The proof consists of calculating the cohomology groups describing obstructions and deformations to quantization.

Let us now describe the version of the theorem that applies to $\C^2$.  We will consider quantizations that are invariant under $\mf{sl}(2)$, compatible with the $\C^\times$ action scaling $\C^2$, and also holomorphically translation invariant. The latter means that they are translation invariant but that the translations $\dpa{\zbar_i}$ act in a homotopically trivial way (that is, in a way exact for the BRST operator). 

\begin{theorem}
Let us consider quantizing the theory on $\R \times \C^2$ in perturbation theory in the loop expansion parameter $\delta$, and let us work ``uniformly in $N$'' as discussed in detail in \cite{CosLi15}.  Let us also ask that the quantization is compatible with the $\C^\times$-action on $\C^2$, is $\mf{sl}(2)$-invariant, and is holomorphically translation invariant.  Also, let us allow negative powers of $\eps$ of the form $\eps^{-r} \delta^n$ where $r \le n$ (heuristically this means that $\eps$ and $\delta$ are both small and $\delta$ is much less than $\eps$).  

Then, there are no obstructions (i.e.\ anomalies) to producing a consistent quantization.  At each order in $\delta$, we are free to add $2$ independent terms to the action functional, leading to an ambiguity in the quantization.   Explicitly, the deformations of the action functional we can incorporate at $k$ loops are 
\begin{align*} 
S &\mapsto S + \delta^{k}\eps^{-k}\left( \tfrac{1}{2} \int \alpha \d \alpha + \tfrac{1}{3} \int \alpha \ast_\eps \alpha \ast_\eps \alpha\right)\\ 
S &\mapsto S + \delta^k \eps^{-k} \eps \dpa{\eps} \int \alpha \ast_\eps \alpha \ast_\eps \alpha.
\end{align*}
These two deformations of the action can be absorbed by a change of coordinates in $\eps$ and $\delta$.
\end{theorem}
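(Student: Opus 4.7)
The plan is to argue cohomologically, following the framework used for the quantization theorems of \cite{CosLi15}. A quantization is constructed order by order in $\delta$: assuming $S_0 + \delta S_1 + \cdots + \delta^{k-1} S_{k-1}$ satisfies the renormalized quantum master equation modulo $\delta^k$, the obstruction to extending is a class of ghost number $+1$ in a certain local cohomology group, while the freedom in the extension is a torsor over the degree-$0$ cohomology. The whole theorem therefore reduces to showing that, in each bi-weight dictated by the constraints on powers of $\eps$ and $\delta$, the relevant $H^1$ vanishes and the relevant $H^0$ is two-dimensional with the two listed representatives.

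First I would identify the complex. Since the classical action is of non-commutative Chern-Simons type with underlying differential graded algebra $(\mc{A}\otimes \mf{gl}_N, \ast_\eps)$, the local BRST complex computes the local Chevalley-Eilenberg cohomology of the corresponding $L_\infty$-algebra. Working uniformly in $N$ retains only single-trace functionals, so a Loday-Quillen-Tsygan type reduction identifies the problem, at the level of the model algebra, with the cyclic cohomology $HC^\bullet$ of the underlying associative algebra. Imposing holomorphic translation invariance further lets us replace $\Omega^{0,\ast}(\C^2)$ by its formal stalk $\C[[z_1,z_2]]$ at the origin, so the algebra in question becomes the formal Moyal-Weyl algebra $\mbb{A}_\eps$ in two variables, tensored with the de Rham algebra of $\R_t$. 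The relevant cohomology is then the $\mf{sl}_2 \times \C^\times$-invariant part of its cyclic cohomology, restricted to the prescribed weight.

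Second I would carry out the explicit computation. The cyclic cohomology of $\mbb{A}_\eps$ is well known, and as an $\mf{sp}(4,\C)$-representation is very small. Taking $\mf{sl}_2$-invariants (for the symplectic $\mf{sl}_2 \subset \mf{sp}(4,\C)$) and restricting to the allowed $\C^\times$-weight, and then combining with the de Rham factor from $\R_t$, the result I expect is $H^1 = 0$ and $H^0$ two-dimensional at each order in $\delta$. The two generators of $H^0$ are representable by the total Chern-Simons action itself and by the class obtained by applying $\eps \dpa{\eps}$ to the cubic interaction. A direct check then shows that adding the first deformation is equivalent to rescaling $\delta^{-1}$, while adding the second is equivalent to reparametrizing $\eps$; so the ambiguity at each order is absorbed into a change of coordinates in $(\eps,\delta)$, which is the content of the theorem.

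The main obstacle is the cohomological calculation, and specifically the vanishing of $H^1$. Without $\mf{sl}_2$-invariance the cyclic cohomology of $\mbb{A}_\eps$ contains extra classes which would give genuine anomalies, and without the single-trace restriction provided by the large-$N$ framework there are further Hochschild-type obstructions; it is the combination of both conditions that rules out any anomaly. The precise allowance of negative powers $\eps^{-r}\delta^n$ with $r \le n$ is forced on us by the bi-weight at which the would-be obstructions sit: they can be removed by counterterms only if we permit the prescribed number of negative $\eps$-powers, and once we do, the absorption into a redefinition of $\eps$ and $\delta$ goes through cleanly.
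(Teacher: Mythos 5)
Your proposal follows essentially the same route as the paper's own argument: the obstruction--deformation complex is identified (via the uniform-in-$N$/single-trace reduction of Loday--Quillen--Tsygan type) with cyclic cohomology of the Moyal algebra, whose collapse for $\eps \neq 0$ together with $\mf{sl}(2)$-, $\C^\times$- and holomorphic-translation-invariance forces $H^1 = 0$ and a two-dimensional $H^0$ with exactly the two stated representatives, absorbed by reparametrizing $\eps$ and $\delta$. The only quibble is the reference to $\mf{sp}(4,\C)$ — the relevant symmetry of the holomorphic symplectic $\C^2$ is $\mf{sp}(2,\C)\cong\mf{sl}(2,\C)$ — but this does not affect the structure of the argument.
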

It is important to note that when we work on $\C^2$, we only need to involve the parameters $\eps$, $\delta$ and $\eps^{-1} \delta$.  We can think of this as saying that the quantum theory is defined in the regime when $\abs{\delta} \ll \abs{\eps} \ll 1$. 

As in any quantum field theory, there is no canonical bijection between the coupling constants one writes down at the classical level and those that parametrize the quantum theory.  We quantize the theory order by order in powers of $\delta$, which is the loop expansion parameter. At each order in $\delta$, we are free to the two terms we mentioned Lagrangian.  In general, there is no canonical way to fix this choice. Different quantization schemes will lead to different choices.

The theorem says that, up to changes of coordinates on the space of fields, the only terms we can add at $n$ loops that do not generate an anomaly are those generated by applying the change of coordinates 
\begin{align*} 
\eps & \mapsto \eps \left(1 + \lambda_1 \delta^n \eps^{-n} \right)\\
\delta & \mapsto \delta \left(1 + \lambda_2 \delta^n \eps^{-n} \right)  
\end{align*} 
to the classical action, where $\lambda_1,\lambda_2$ are non-zero constants. 

What this tells us is that the quantum theory is parameterized by a point in a two-dimensional (formal) complex manifold which we can coordinatize by $\eps$ and $\delta$.  The choice of coordinates is not, however, canonical. Different quantization schemes will give different sets of coordinates, related by changes of coordinates of the form
\begin{align*} 
\eps & \mapsto \eps \left(1 + \lambda_1 \delta \eps^{-1} + \lambda_2 \delta^2 \eps^{-2} + \dots \right)\\
\delta & \mapsto \delta \left(1 + \mu_1 \delta \eps^{-1 } + \mu_2 \delta^2 \eps^{-2} + \dots  \right)  
\end{align*} 
\subsection{}
The proof of these results concerning quantization is rather technical, and is placed in the appendix. However, I will explain a little about the argument here so that the reader can get some feeling for the proof. 

The obstruction-deformation complex controlling quantizations of the theory is built from possible terms one can add to the Lagrangian.  For example, anomalies are governed by terms of ghost number $1$ that we can add.  Every first-order deformation arises from the integral of a local functional. 

Locally, every solution to the equations of motion of the  gauge theory is trivial. However, the trivial solution has an infinite-dimensional symmetry group whose Lie algebra is $\mf{gl}_N[[z_1,z_2]]$, where as usual $[z_1,z_2] = \eps$. It follows that local operators are functions of the ghost field, and form the  exterior algebra $\wedge^\ast (\gl_N[[z_1,z_2]])^\vee$, where the superscript $\vee$ indicates the linear dual. Concretely, if $\psi$ denotes the ghost field, these local operators are obtained by evaluating a product of $\psi$ and its derivatives in $z_1,z_2$ at a point in space-time. 

The BRST operator introduces a non-trivial differential, which can be identified with the Chevalley-Eilenberg differential.  Thus, the cochain complex of local operators is the Chevalley-Eilenberg cochain complex $C^\ast(\mf{gl}_N[[z_1,z_2]])$.    This has the same cohomology as the subcomplex of local operators invariant under the group $GL(N,\C)$. So, we will assume that the operators we will consider are all $GL(N,\C)$ invariant.

Invariant theory tells us that (classically) the algebra of local operators is generated by expressions like
$$
\Phi_{D_1,\dots,D_m}(\psi) = \op{Tr} \left(D_1(\psi) \dots D_m(\psi) \right)(z_i = 0)
$$
where the $D_i$ are elements of $\C[\dpa{z_1}, \dpa{z_2}]$.

For finite $N$, there are relations among these local operators. However, we have stated that we wish to quantize the theory for all $N$ simultaneously, in a compatible way.  This allows us to discard these relations; so that we find the operators $\Phi_{D_1,\dots,D_m}$ freely generate the commutative algebra of classical local operators.  

The BRST differential acts on these generators. The action was computed in classic work by Loday-Quillen \cite{LodQui84} and Tsygan \cite{Tsy83}.  These authors showed (in a more general situation) that we can identify the single-trace local operators with the linear dual of the cyclic homology of the non-commutative algebra $\C[[z_1,z_2]]$.  

When $\eps = 0$, the cyclic homology groups are infinite-dimensional.  However, when we make the algebra non-commutative, the cyclic homology groups collapse drastically.  This is similar to the way the algebra of bulk local operators in the topological $B$-model on a non-commutative space is trivial. As, bulk local operators in the $B$-model are the Hochschild cohomology of boundary local operators, which is a non-commutative algebra of the kind we are considering.

Ultimately, we find that the cohomology of the algebra of local operators is freely generted by a sequence of operators ghost numbers $3,5,7,\dots$. Modulo $\eps$, these operators can be represented by the expressions $\op{Tr} \psi^{2n+1} \dpa{z_1} \psi \dpa{z_2} \psi$, for $n \ge 0$. The precise formula for the lift to $\eps \neq 0$ is not relevant.   

We can convert local operators into Lagrangians by a descent procedure which shifts the ghost number.  For this particular field theory, the action of the translation vector fields $\dpa{z_i}$ can be implemented by a gauge symmetry.  This means that on the algebra of local operators, $\dpa{z_i}$ acts trivially. This is a situation in which the descent procedure is particularly simple.  We find that, if our space-time is $\R \times X$, we can couple a clas in $H^{5-k}(X)$ with a local operator of degree $k$ to get a deformation of the theory, and a class in $H^{6-k}$ with a local operator of degree $k+1$ to get a potential anomaly. 

The only local operators that can potentially contribute are those in degrees $3$ and $5$. We thus find that deformations come from classes in $H^2(X)$ and $H^0(X)$. These are the deformations discussed earlier. Anomalies would come from $H^3(X)$ and $H^1(X)$, but if we assume these groups vanish, there are no anomalies.  A separate argument shows that if $X$ is the cotangent bundle of a Riemann surface the potential anomalies coming from $H^1(\Sigma)$ can not arise.

\section{$M2$ branes}
In this section we will analyze the $M2$ in $M$-theory in the $\Omega$-background, and see what it becomes in $7$ and $5$ dimensions.  

Suppose that $M^7$ is a $G2$ manifold and $Y^4$ is a manifold of $SU(2)$ holonomy. Standard arguments  tell us that there are BPS $M2$ branes on submanifolds of $M^7 \times Y^4$ which are an associative submanifold of $M^7$ times a point in $Y^4$; and that there are BPS $M5$ branes on submanifolds which are the product of a coassociative submanifold of $M^7$ with a holomorphic curve in $Y^4$.  We will always consider $M2$ and $M5$ branes of this type.

Let us start by considering the $\Omega$-background where the parameter $\delta$ is zero. Thus, we rotate the circle fibre of Taub-NUT with speed $\eps$, and at the tip of Taub-NUT we find a $7$-dimensional non-commutative gauge theory as discussed above.  

Further, let us start by considering an $M2$ brane wrapping $\R^3$ in $TN_k \times \R^3 \times \C^2$. A standard argument  tells us that, without the $\Omega$-background construction, the $M2$ brane is described by an instanton in $7$-dimensional gauge theory.  It is natural to think that, after twisting and using the $\Omega$-background construction, it is still described by an instanton of charge $1$ in the gauge theory on $\R^3 \times \C^2$. But now, since the gauge theory is non-commutative, it should be described by a non-commutative instanton on $\C^2$ living in a neighbourhood of the support of the $M2$ brane.  Similarly, a stack of $k$ $M2$ branes will be described by an instanton  of charge $k$. 

Of course, after performing the additional $\Omega$-background construction which brings us to $5$ dimensions, we expect that the $M2$ brane becomes an instanton in the non-commutative gauge theory in $\R \times \C^2$, supported in a neighbourhood of $\R \times x$ for some $x \in \C^2$ which corresponds to the location of the $M2$ brane.

In other words:

\fbox{ \parbox{13cm}{$M2$ branes $\implies$ instanton particles in the $5$-dimensional gauge theory} } 

\subsection{}
Next, let us consider the theory living on an $M2$ brane. After putting $M$-theory on a Taub-NUT, the $M2$ brane becomes a $D2$ brane in type IIA on $\R^3 \times \R^3 \times \C^2$. If the first $\R^3$ is the base of the Taub-NUT circle fibration, then the $D2$ brane is situated on $0 \times \R^3 \times 0$.  Thus, the $D2$ branes lies entirely inside the $D6$ brane living at the singular point of the Taub-NUT. 

The theory on a stack of $N$ $D2$ branes by itself is a $3d$ $N=8$ gauge theory, which is equivalently $3d$ $N=4$ theory with adjoint matter.  This can be viewed as the $3d$ $N=4$ quiver gauge theory coming from the quiver with one node, labelled by $N$, connected to itself by an edge.  The inclusion of the $D6$ brane means that we should also consider fields associated to open strings stretching between the $D2$ and $D6$ branes. This leads to bifundamental matter, so that if we have a stack of $k$ $D6$ branes we find the $3d$ $N=4$ quiver gauge theory for the quiver
\begin{center}
\begin{tikzpicture}
  \node[circle,draw,thick](NL) at (0,0){$K$};
  \node[rectangle,draw,thick, inner sep=5](NR) at (2,0){$N$};
  \draw[thick](NL.east)--(NR.west);
  
  \draw[thick](NL.north west) arc (16:344:1) ;  

\end{tikzpicture}
\end{center}
This, of course, is the ADHM quiver describing instantons on $\C^2$ of rank $k$ and charge $N$.

From this, we deduce that (when the equivariant parameters $\eps,\delta$ are zero) the theory on the $M2$ brane is a twist of the $3d$ $N=4$ quiver gauge theory for the ADHM quiver.  One might worry that we have lost some information in the passage to IIA. However, this not the case, as we have seen that up to terms exact for the supersymmetery we are using everything is independent of the $M$-theory radius.

\subsection{}
So far, we have argued that the theory on a stack of $N$ $M2$ branes on $TN_k \times \R^3 \times \C^2$ can be seen as the Rozansky-Witten twist of the $3d$ $N=4$ quiver gauge theory for the ADHM quiver. Next, let's analyze what happens to the $M2$ brane when $\eps \neq 0$ and $\delta = 0$.  We will find that turning on $\eps$ leads to an FI term which has the effect of setting the moment map to $\eps$ times the identity of $\mf{gl}(k+1)$.      

In the presence of this FI term, the Higgs branch of the $3d$ $N=4$ theory is the variety consisting of matrices 
\begin{align*} 
 I &\in \Hom(\C^N,\C^{k+1}) \\
J &\in  \Hom(\C^{k+1},\C^N) \\ 
X,Y&\in  \gl(N) \\
 A & \in  \gl(N)).  
\end{align*}
satisfying the moment map equation
$$
[X,Y] + IJ = \eps \op{Id}  
$$
and taken modulo the action of $GL(N)$.  This variety is the space of instantons on the non-commutative deformation of $\C^2$.

There are two ways to see why we must have such an FI term when we turn on $\eps$.  One is to reduce to type IIA and use the results of \cite{SeiWit99}. There, the theory on a stack of $N$ $D0$ branes in the presence of $k+1$ $D4$ branes and a constant $B$-field was analyzed.  Without the $B$-field, the theory on the $D0$ branes is described by $N=8$ quiver quantum mechanics for the ADHM quiver.  If we turn on the $B$-field, Seiberg and Witten argued that this induces the FI term discussed above.

The $D0-D4$ system considered by Seiberg and Witten is related to the $D2-D6$ system considered here by applying $T$-duality to two of the directions common to the $D2$ and $D6$ branes.  This $T$-duality does not effect the $B$-field, since the $B$-field only involves the $4$ directions on the $D6$ brane which are orthogonal to the $D2$ branes.   From this we see that the $B$-field gives rise to an FI term on the $N=4$ theory living on the $D2$ branes.  The Higgs branch of the moduli of vacua of this theory is then the moduli of non-commutative instantons. 
\subsection{}
The relationship we derived between $5$-dimensional non-commutative gauge theory and $M$-theory is a little indirect, involving a reduction to type IIA and a supersymmetry argument to show independence of the $M$-theory radius.  The analysis of $M2$ branes suggests a more direct argument.  We have argued that, when $\eps \neq 0$, the vacua of the theory on an $M2$ brane which are preserved by the supersymmetry $\Psi(\eps)$ can be identified with the moduli of non-commutative instantons on $\C^2$.  Every supersymmetric $M2$ brane configuration leads to a supersymmetric $11$-dimensional supergravity solution, where the $M2$ branes have become black holes.    

This leads to the following conjecture.
\begin{conjecture}
For every non-commutative instanton on $\C^2 = \R^4$ of rank $1$ and charge $N$, there exists a solution to the equations of motion of $11$-dimensional supergravity obtained from putting $N$ $M2$ branes wrapping $\R^3$ in $TN \times \R^3 \times \C^2$ in the presence of the $C$-field $C(\eps) =-\eps V^{\flat} \d \zbar_1 \d \zbar_2 $ discussed earlier.  This supergravity background should have an $S^1$ action coming from the $S^1$ action on $TN$, and a  generalized Killing spinor $\Psi(\eps)$ whose square is $\eps V$ where $V$ is the vector field generating rotation.  
\end{conjecture}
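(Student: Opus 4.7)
The plan is to construct the desired family of supergravity solutions by combining the $\Omega$-background of Theorem \ref{theorem_supergravity_solution} with the standard back-reaction of a brane carrying non-trivial worldvolume gauge flux, and then verify the generalized Killing spinor equation perturbatively in both $\eps$ and in the instanton amplitude.

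First I would reduce the problem to type IIA. By the argument of section \ref{section:Mtheory_TN}, when the $M$-theory radius of $TN$ is small the background $(TN\times\R^3\times\C^2,\,C(\eps))$ becomes type IIA on $\R^3\times\R^3\times\C^2$ with a $D6$ brane on $0\times\R^3\times\C^2$ and a constant $B$-field $-\eps\,\d\zbar_1\d\zbar_2$. An $M2$ brane wrapping $\R^3$ becomes a $D2$ brane on $0\times\R^3\times 0$ inside the $D6$ brane, and as explained in the previous subsection the worldvolume theory is a Rozansky--Witten twist of the ADHM quiver gauge theory whose Higgs branch with FI parameter $\eps$ is the moduli space of non-commutative instantons (by \cite{SeiWit99}). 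A rank-$1$ charge-$N$ non-commutative instanton thus corresponds to a BPS configuration of $N$ $D2$ branes diluted inside the $D6$ brane by the gauge flux $F + B = 2\pi\eps\cdot \op{Id}$. The first step, then, is to write down the $D2/D6$ supergravity solution sourced by such a flux configuration: this is a standard combination of the harmonic function ansatz for $D2$ and $D6$ branes coupled through the Chern-Simons-like source $\d F_4 = \delta_{D6}\wedge B + \delta_{D2}$.

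Second, I would lift this type IIA solution back to eleven dimensions by adjoining the $M$-theory circle using Witten's reduction formulas (as we already did in section \ref{section:Mtheory_TN} for the source-free background). The lift produces an $11$-dimensional metric on a resolved Taub-NUT-like geometry fibered over the location of the branes, a $C$-field of the schematic form
\[
C \;=\; -\eps\, V^\flat \wedge \omega^{0,2}_{\C^2} \;+\; C_{M2}(\{x_a\}) ,
\]
where $C_{M2}$ is the standard $M2$-brane contribution built from the harmonic function sourced at the $N$ brane locations $\{x_a\}\subset \C^2$, and an $S^1$ isometry descending from the original Taub-NUT circle. The non-commutative instanton data is encoded in the precise distribution of $D2$ charge inside the $D6$; geometrically this gives a higher-rank vector bundle with connection on the $D6$ worldvolume whose Chern character lifts to the charge of the $M2$-brane fluxes.

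Third, I would verify the generalized Killing spinor equations. The spinor ansatz is
\[
\Psi(\eps) \;=\; \Psi_0 \;+\; \eps\,(V\cdot \Psi_{G2})\otimes(\omega^{0,2}_{\C^2}\cdot \psi_{\C^2}) \;+\; \Psi_{M2}
\]
where $\Psi_{M2}$ is the correction forced by the back-reaction of the brane. Using that $M2$ branes wrapping $\R^3\subset TN\times\R^3$ are $\tfrac12$-BPS with respect to $\Psi_0$, and that the FI deformation $B = -\eps\,\d\zbar_1\d\zbar_2$ is precisely the one compensated by $e^{B}\Psi(\eps)=\Psi(0)$ in the Lemma just preceding section \ref{section:Mtheory_TN}'s closing paragraph, one expects that the $M2$ deformation preserves $\Psi(\eps)$ order by order. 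The square of $\Psi(\eps)$ still equals $\eps V$ because $\Psi_{M2}$ squares to a spacelike vector tangent to the brane that vanishes away from the sources and the cross-term with $\Psi_0$ is controlled by the BPS condition.

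The main obstacle is going beyond the formal / perturbative construction: one must show that the coupled Einstein equations, the $C$-field equations with the $M2$ source, and the Killing spinor equation admit a simultaneous global solution whose parameters are in one-to-one correspondence with non-commutative ADHM data. For $\eps=0$ and commutative instantons this is classical (multi-$M2$-brane solutions), but at $\eps\neq 0$ the $C$-field is complex and the geometry is genuinely non-commutative from the type IIA side, so standard elliptic techniques do not directly apply. A reasonable route is to proceed in formal power series in $\eps$: at $\eps=0$ one has an honest multi-center $M2$-brane solution parametrized by $\op{Sym}^N(\C^2)$, and the deformations in $\eps$ are controlled by the cohomology computed in section \ref{section:quantization}, which shows that the only obstructions are the ones absorbed into the two-parameter family $(\eps,\delta)$ already present. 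Removing the formal character of the construction, i.e.\ proving non-perturbative existence for arbitrary non-commutative instantons (in particular for those that are not deformations of smooth commutative ones), is the step I expect to be genuinely hard and to require input beyond the methods used in this paper.
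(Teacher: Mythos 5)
The statement you are addressing is left as a conjecture in the paper: the author explicitly writes that he ``was not able to see directly how to write down explicitly a supergravity solution associated to a non-commutative instanton.'' So there is no proof in the paper to compare against, and the relevant question is whether your proposal closes the gap. It does not. Your own final paragraph concedes the essential point --- non-perturbative existence of the back-reacted solution for an arbitrary non-commutative ADHM datum --- and everything before it is a construction \emph{ansatz} plus a plausibility argument, not a verification. In particular, the claim that $\Psi(\eps)^2=\eps V$ survives the $M2$ back-reaction is asserted via ``the cross-term is controlled by the BPS condition'' without writing the deformed generalized Killing spinor equation in the presence of the complex flux $-\eps V^{\flat}\wedge\omega^{0,2}$ \emph{and} the $M2$ source, which is exactly the computation the paper says it cannot do.

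Two steps would actually fail as written. First, the appeal to the cohomology of section \ref{section:quantization} is a category error: that computation is the obstruction--deformation complex for quantizing the $5$-dimensional gauge theory (local functionals of the gauge field, cyclic cohomology of the Moyal algebra), and it says nothing about the deformation theory of solutions of the $11$-dimensional supergravity equations; you cannot use it to argue that the $\eps$-expansion of a supergravity background is unobstructed. Second, the proposed base case of the perturbation series is wrong for rank $1$: there are no smooth commutative rank-$1$ instantons of charge $N>0$ on $\R^4$, so at $\eps=0$ the putative solutions degenerate to small-instanton/point-like configurations (multi-center $M2$ solutions with singular harmonic functions), and the non-commutative instantons one wants (Hilbert-scheme-type ADHM data with $[X,Y]+IJ=\eps\,\mathrm{Id}$) are precisely \emph{not} deformations of smooth commutative ones. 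Hence the formal power series in $\eps$ around the commutative solution would have to resolve the small-instanton singularity, which is the genuinely hard content of the conjecture rather than a technicality to be deferred. As it stands, your text is a reasonable research program (IIA reduction, $D2/D6$ back-reaction with $B$-field, $M$-theory lift), consistent with the heuristics the paper already gives, but it is not a proof of the conjecture.
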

Unfortunately, I was not able to see directly how to write down explicitly a supergravity solution associated to a non-commutative instanton.

\subsection{}
After introducing the further $\Omega$-background which brings us to the $5$-dimensional non-commutative gauge theory, this discussion tells us that the theory on the $M2$ brane is a certain topological quiver quantum mechanics obtained by putting $3d$ $N=4$ theory in an $\Omega$-background compatible with the Rozansky-Witten twist.  

We can write out the fields and the action explicitly.  The fundamental fields of the theory on a stack of $m$ $M2$ brane consist of smooth functions
\begin{align*} 
I &\in \cinfty(\R, \Hom(\C^m,\C^{k+1}) \\
J &\in \cinfty(\R, \Hom(\C^{k+1},\C^m) \\ 
X,Y&\in \cinfty(\R, \gl(m) \\
 A & \in \Omega^1(\R, \gl(m)).  
\end{align*}
The action is
$$
S = \frac{1}{\delta} \left\{  \int \op{Tr} ( X \d_\alpha Y) + \int \op{Tr}( I \d_\alpha J)   - \eps \int \op{Tr} \alpha   \right\} .
$$
The gauge symmetry of the theory is the group of maps from $\R$ to $GL(m,\C)$, which acts in the evident way. 

The equations of motion are as follows. If we vary $\alpha$, we find that the fields $X,Y,I,J$ satisfy the moment map equation
$$
[X,Y] + IJ = \eps \op{Id}.  
$$
The other equations of motion are that the fields $X,Y,I,J$ are covariant constant. We can use gauge symmetry to make the connection $\alpha$ trivial.  We are left, then, with constant fields $X,Y,I,J$, satisfying the moment map equation, modulo constant gauge transformations.  Thus, the space of solutions to the equations of motion is the ADHM moduli space of non-commutative instantons on $\C^2$. 

\section{$M5$ branes}
\label{section_M5}
In this section we will analyze how $M5$ branes behave when put in $M$-theory in the $\Omega$-background. The configuration of $M5$ branes we are interested in is as follows.  Consider $M$-theory on $TN_k \times \R^3 \times X$ where $X$ is a manifold of $SU(2)$ holonomy.  Let $\Sigma \subset X$ be a holomorphic curve in a chosen complex structure.  Then, consider an $M5$ brane on 
$$TN_k \times 0 \times \Sigma \subset TN_k \times \R^3 \times X.$$

Let us first consider what happens when $\eps = 0$.  
\begin{proposition}
An $M5$ brane placed on $TN_k \times 0 \times \C$, is a BPS object in $M$-theory equipped with the supersymmetry $\Psi(\eps = 0)$.  Further, the supersymmetry $\Psi(0)$ on the $M5$ brane is invariant under $\op{Spin}(4) \times \op{Spin}(2)$ once we have twisted the action of this part of the Lorentz group in $6$ dimensions using a certain homomorphism
$$ 
\op{Spin}(4) \times \op{Spin}(2) \to \op{Spin}(5)_R
$$ 
where $\op{Spin}(5)_R$ is the $R$-symmetry group of the theory on an $M5$ brane. 

This implies that the theory on the $M5$ brane with this twist can be put on a $6$-manifold of the form $M \times \Sigma$ where $M$ is a $4$-manifold and $\Sigma$ is a Riemann surface.  If we do this, then the twist of the  $4$-dimensional $N=2$ theory obtained from compactifying on $\Sigma$ is the Donaldson-Witten twist.  
\end{proposition}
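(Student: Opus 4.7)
The plan is to verify each clause of the proposition via the standard correspondence between calibrated submanifolds and BPS branes, combined with a careful bookkeeping of how the unbroken supercharge transforms under the $R$-symmetry of the $(2,0)$ theory. First I would check the BPS condition. An $M5$ brane wrapping a submanifold $N^6 \subset X^{11}$ preserves those $11$-dimensional Killing spinors $\epsilon$ for which $\Gamma_N \epsilon = \epsilon$, where $\Gamma_N$ is the unit-normalized product of tangent Clifford generators. For $\Psi(0) = \psi_M \otimes \psi_Z$ this factorizes: the condition on $\psi_M$ becomes the statement that $TN_k \times 0 \subset TN_k \times \R^3$ is a calibrated (coassociative) $4$-fold in the $G2$ manifold, and the condition on $\psi_Z$ becomes the statement that $\Sigma \subset X$ is a calibrated (holomorphic) curve in the Calabi--Yau $2$-fold. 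Both are elementary: the first holds because the $3$-form $V_{G2}$ defining the $G2$ structure on $TN_k \times \R^3$ vanishes on $TN_k$ (it has at least one leg along $\R^3$), and the second is the standard fact that holomorphic cycles in Calabi--Yau manifolds are calibrated by $\operatorname{Re}\omega^{n,0}$.

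Next I would identify the unbroken supersymmetries from the worldvolume point of view. Decompose the $11$-dimensional tangent bundle along the brane as
\[
 T_{\text{brane}} \;=\; T(TN_k) \oplus T\Sigma, \qquad N_{\text{brane}} \;=\; T\R^3 \oplus N(\Sigma \subset X).
\]
The worldvolume Lorentz group $\op{Spin}(4) \times \op{Spin}(2)$ acts on the first summand, while the $R$-symmetry $\op{Spin}(5)_R$ acts on the second. Using the decomposition $\op{Spin}(4) = SU(2)_+ \times SU(2)_-$ together with the natural embedding $\op{Spin}(3) \times \op{Spin}(2) \hookrightarrow \op{Spin}(5)_R$ coming from the splitting $N_{\text{brane}} = \R^3 \oplus N(\Sigma \subset X)$, I would read off the twist homomorphism as the one that (i) identifies $SU(2)_- \subset \op{Spin}(4)$ diagonally with the $\op{Spin}(3) \subset \op{Spin}(5)_R$ rotating $\R^3$, and (ii) identifies $\op{Spin}(2)$ of $T\Sigma$ diagonally with the $\op{Spin}(2) \subset \op{Spin}(5)_R$ rotating $N(\Sigma \subset X)$. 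The first identification is exactly the Donaldson--Witten twist, and the second is the standard Riemann-surface twist used to put a $4d$ $\mathcal N=2$ theory on $M \times \Sigma$ preserving the holomorphic structure on $\Sigma$. Invariance of $\Psi(0)$ under this twisted action then follows because $\psi_M$ is $G2$-invariant (hence invariant under the image $SU(2)_- \subset G2$ acting via this diagonal embedding) and $\psi_Z$ is invariant under the $U(2)$ preserving the Calabi--Yau structure, which contains the diagonal $\op{Spin}(2)$ used above.

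The final clause, that the resulting $4d$ twist is Donaldson--Witten, is then essentially automatic once the twist homomorphism is identified: the Donaldson--Witten twist is by definition the one that replaces $SU(2)_-$ by $\operatorname{diag}(SU(2)_-, SU(2)_R)$ with $SU(2)_R$ the Cartan-covering $SU(2)$ inside the $4d$ $\mathcal N=2$ $R$-symmetry $SU(2)_R \times U(1)_R$, and I would verify that upon Kaluza--Klein reduction on $\Sigma$ the $\op{Spin}(3) \subset \op{Spin}(5)_R$ identified above descends precisely to this $SU(2)_R$, while the auxiliary $\op{Spin}(2)$ descends to $U(1)_R$ and is absorbed into the Riemann-surface twist. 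To finish, I would count: the $G2 \times SU(2)$-invariant subspace of the $11d$ spinors has (complex) dimension one, giving one scalar supercharge after the twist, which matches the Donaldson--Witten twist exactly.

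The main obstacle is bookkeeping: keeping straight which chirality of $\op{Spin}(4)$ is being identified with which $SU(2) \subset \op{Spin}(5)_R$, and confirming that the diagonal embedding coming from the $G2$ structure agrees on the nose with the one defining the Donaldson--Witten twist rather than its ``conjugate.'' This reduces to a concrete computation of branching rules $\op{Spin}(11) \supset \op{Spin}(6) \times \op{Spin}(5)$, restricted to $G2 \times SU(2)$, which I would carry out by picking an explicit vielbein adapted to the decomposition above and writing $\Psi(0)$ in components.
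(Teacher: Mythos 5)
Your proposal follows essentially the same route as the paper: the paper likewise decomposes the $11$-dimensional spinors under $\op{Spin}(4)\times\op{Spin}(3)\times\op{Spin}(2)\times\op{Spin}(2)$, locates $\Psi(0)$ in $S^+_{4d}\otimes S_{3d}\otimes S^+_{2d}\otimes S^+_{2d}$, and exhibits invariance under the embedding that maps one $SU(2)$ factor of $\op{Spin}(4)$ isomorphically onto the $\op{Spin}(3)\subset\op{Spin}(5)_R$ rotating $\R^3$ (the other factor mapping trivially), together with the anti-diagonal $U(1)$ in $\op{Spin}(2)\times\op{Spin}(2)$ — exactly your diagonal-embedding twist, with the same Donaldson--Witten conclusion upon reduction on $\Sigma$. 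The only cosmetic differences are that the paper verifies which supercharges preserve the brane via the $6$-dimensional chirality decomposition $S^+_{6d}=S^+_{4d}\otimes S^+_{2d}\oplus S^-_{4d}\otimes S^-_{2d}$ rather than your calibration projector, and the chirality bookkeeping you flag resolves as: the twisted $SU(2)$ is the factor whose diagonal with the $\R^3$ rotations lies inside $G2$ (labelled $SU(2)_+$ in the paper, not $SU(2)_-$), which is implicit in your own criterion that $\psi_M$ be invariant because the diagonal sits in $G2$.
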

Since $TN_k$ is a coassociative cycle in the G2 manifold $TN_k \times\R^3$, it is a standard result that it is supersymmetric for the supercharge we are using at $\eps = 0$ (we will nevertheless verify this directly).  The content of the proposition is the identification of the supercharge on the $M5$ brane with the Donaldson-Witten twist.  

\begin{proof}
Let us write the spin representation of $\op{Spin}(11)$ as
$$
S_{11d} = S_{4d} \otimes S_{3d} \otimes S_{2d} \otimes S_{2d}
$$
as a representation of $\op{Spin}(4) \times \op{Spin}(3) \times \op{Spin}(2) \times \op{Spin}(2)$.  Here, $S_{2d}$ is the $2$-dimensional spin representation of $\op{Spin}(2)$ which is a sum of the representations of weight $1/2$ and $-1/2$.  As above, $S_{4d}$ is $4$-dimensional and $S_{3d}$ is $2$-dimensional.  Let us further decompose
\begin{align*} 
 S_{4d} &= S_{4d}^{+} \oplus S_{4d}^-\\
S_{2d} &= S_{2d}^+ \oplus S_{2d}^- 
\end{align*}
into irreducible representations of $\op{Spin}(4)$ and $\op{Spin}(2)$ respectively. 

Let us consider, as before, the $11$-manifold $TN_k \times \R^3 \times \R^2 \times \R^2$, and consider an $M5$ brane living on $TN_k \times 0 \times \R^2 \times 0$.  The supersymmetries preserving the $M5$ bundle are covariant constant spinors in the sub-bundle  
$$
S_4^+ \otimes S_3 \otimes S_2^+ \otimes S_2 \oplus S_4^- \otimes S_3 \otimes S_2^- \otimes S_2.  
$$
The point is that, under the identification 
$$
S_{6d} = S_{4d} \otimes S_{2d}
$$
(where $S_{6d}$ is the $8$-dimensional representation of $\op{Spin}(6)$ which is a sum of chiral and anti-chiral representations) we have
$$
S_{6d}^+ = S_{4d}^+ \otimes S_{2d}^+ \oplus S_{4d}^{-} \otimes S_{2d}^-.$$
The $M5$ brane is preserved by those spinors which are chiral from the point of view of the $6$ dimensions on when the $M5$ brane lives.  

The spinor $\Psi(0)$ lives in 
$$
S_{4d}^+ \otimes S_{3d} \otimes S_{2d}^+ \otimes S_{2d}^+.  
$$
There is a homomorphism 
$$\rho : \op{Spin}(4) \to \op{Spin}(3)$$
 which under the decomposition $\op{Spin}(4) = SU(2)_+ \times SU(2)_-$ is the constant map on $SU(2)_-$ and is an isomorphism on $SU(2)_+$.  The spinor $\Psi(0)$ is invariant under the copy of $\op{Spin}(4)$ inside $\op{Spin}(4) \times \op{Spin}(3)$ which is embedded by the map $\op{Id} \times \rho$.  

This spinor is also invariant under the anti-diagonal $U(1)$ inside $\op{Spin}(2) \times \op{Spin}(2)$. 

From the point of view of the $M5$ brane, $\op{Spin}(3)$ and one of the two copies of $\op{Spin}(2)$ are part of the $R$-symmetry group $\op{Spin}(5)$.  This tells us that the supercharge $\Psi(0)$ is, from the part of view of the $M5$ brane, the supercharge that twists the theory using the homomorphism described above from
$$
\op{Spin}(6) \supset \op{Spin}(4) \times \op{Spin}(2) \to \op{Spin}(5)_R.  
$$

This twist of the theory on the $M5$ brane can be placed on any $6$-manifold of the form $M \times \Sigma$ where $M$ is a $4$-manifold.    It becomes the Donaldson-Witten twist of the $4d$ $N=2$ theory obtained from  compactifying the $6$-dimensional theory on $\Sigma$.  
\end{proof}

  We would also like to show that the $M5$ brane is also invariant under the supercharge $\Psi(\eps,\delta)$ whose square is the rotation $V(\eps,\delta)$.   This necessitates considering the $M5$ brane in the presence of the $3$-form $C(\eps,\delta)$.  Unfortunately, a direct verification that the $M5$ brane is supersymmetric in this  background is beyond my limited string theory skills. 

However, we can check this indirectly.  If we reduce to type IIA, the $M5$ brane becomes a $D4$ brane wrapping an $\R^3$ inside $\R^6$ and a holomorphic curve in $\C^2$.  The type IIA background is also equipped with a $B$-field proportional to $\d \zbar_1 \d \zbar_2$, and $k$ $D6$ branes. One can readily verify that the $D4$ brane is preserved by the supercharge we use in type IIA.

\subsection{The theory on the $M5$ brane}
We have argued that the $M5$ brane is supersymmetric in our set up.  We would like to understand the theory living on the $M5$ brane.  Of course, this is an ambitious goal, given the mysterious nature of the $(2,0)$ theory.  However, we will be able to make a precise statement once we work in the $\Omega$ background.  

To start with, let's take our parameter $\delta$ to be zero, and $\eps$ to be non-zero. Let's consider a stack of  $M5$ branes which wrap the Taub-NUT manifold and a holomorphic curve $\Sigma \subset \C^2$.   Then, we expect that the algebra of local operators on the $M5$ brane is localized to the fixed points of rotation, which is the curve $\Sigma$ times the tip of the Taub-NUT.

By theorem \ref{theorem_exactness}, everything is independent of the radius at infinity of the Taub-NUT manifold. Therefore, we lose no information by passing to the description in terms of type IIA string theory.

A stack of $N$ $M5$ branes becomes a stack of $N$ $D4$ branes in type IIA, wrapping $\R^3 \times 0 \times \Sigma$ in $\R^3 \times \R^3 \times \C^2$. We also have a stack of $k$ $D6$ branes wrapping $0 \times \R^3 \times \C^2$. 

We would like to understand the theory living on the $D4$ branes.  To do this, we will invoke the language of twisted string theory developed in \cite{CosLi16}. There, we conjectured that we can describe the twist of type IIA we are considering by a topological string theory which is the $A$-model on $\R^6$ and the $B$-model on $\C^2$.  The $B$-field becomes the Poisson tensor $-\eps \dpa{z_1} \dpa{z_2}$, which make the $B$-model on $\C^2$ into a non-commutative $B$-model.  In \cite{CosLi16}, we proved the open-string version of this conjecture, so that our description of the theory living on the twist of the $D4$ brane does not rely on any conjectural statements.

The theory on the  $D4$ brane on $\R^3 \times \Sigma$ is the open-string field theory associated to a differential graded algebra with a trace, which is
$$
\Omega^\ast(\R^3) \what{\otimes}\Omega^{\ast,\ast}(\Sigma)
$$ 
with the differential which is the sum of the de Rham operator on $\R^3$ and the $\dbar$ operator from $\Sigma$.  (The reason we find $\Omega^{\ast,\ast}(\Sigma)$ is that this complex is the Dolbeault model for the endomorphisms of the structure sheaf of $\Sigma$ in the derived category of sheaves on $\C^2$). 

Turning on the $B$-field adds the operator $-\eps \partial_\Sigma$ to the differential.  This is because when we make $\C^2$ non-commutative, and we make the structure sheaf $\Oo_\Sigma$ into a module for the non-commutative deformation of $\Oo_{\C^2}$, the endomorphisms of $\Oo_\Sigma$ in the derived category get deformed by adding the operator $-\eps \partial_\Sigma$ to the differential.   

Let us write down the action explicitly.  The fields are elements 
\begin{align*} 
 \alpha^0 &\in \Omega^\ast(\R^3) \what{\otimes}\Omega^{0,\ast}(\Sigma) \otimes \mf{gl}_N[1]\\
 \alpha^1 &\in \Omega^\ast(\R^3) \what{\otimes}\Omega^{1,\ast}(\Sigma) \otimes \mf{gl}_N.
\end{align*}
The action is
$$
\int\op{Tr} \left(\alpha^1 (\d_{dR}^{\R^3} + \dbar^{\Sigma}) \alpha^0 +\tfrac{1}{2}\alpha^1\alpha^0 \alpha^0 - \frac{\eps}{2}\alpha^0 \partial^\Sigma \alpha^0 \right). 
$$
After the change of coordinates $\alpha^1 \to -\eps \alpha^1$, and letting $\alpha \in \Omega^\ast(\R^3 \times \Sigma)\otimes\mf{gl}_N[1]$, we find the action becomes
$$
-\eps \int \op{Tr}\left(\tfrac{1}{2}\alpha \d \alpha +\tfrac{1}{3}\alpha^3\right). 
$$
In other words, it is the $5$-dimensional analog of Chern-Simons with coupling constant $-1/\eps$. 

\subsection{}
Now let's introduce the further $\Omega$ background, to reduce to $3$ dimensions. The fields become just $\alpha \in \Omega^\ast(\R\times \Sigma)\otimes \mf{gl}_N[1]$.  These are just the fields of Chern-Simons theory: the field of ghost number $0$ is in $\Omega^1(\R\times \Sigma)\otimes \mf{gl}_N$.  The action functional is also just the Chern-Simons functional to.  Indeed, working in the $\Omega$ background has the effect of replacing the de Rham complex of $\R^2$ by the equivariant de Rham complex, which we can replace by its cohomology which is just $\C$.   The action functional is given by the same formula as we wrote above, except that we integrate over $\R^2_\delta$ in equivariant cohomology. Since the integral of $1$ in equivariant cohomology is $\frac{1}{\delta}$, we find that the $3$-dimensional action functional is
$$
-\frac{\eps}{\delta} \int \op{Tr}\left(\tfrac{1}{2}\alpha \d \alpha +\tfrac{1}{3}\alpha^3\right). 
$$
Thus, we find ordinary Chern-Simons theory with coupling constant $-\eps/\delta$.  

This argument reproduces the result in the literature \cite{LuoTanYag14} that $5$-dimensional maximally supersymmetric gauge theory in an $\Omega$ background is Chern-Simons theory with a complex level.  

\subsection{$D4$-$D6$ strings}
The theory on a $D4$ brane becomes, in the $\Omega$ background, Chern-Simons theory.  This is not sufficient to describe the theory on the $M5$ brane, because we have not included the effect of the $D6$ branes. The presence of the $D6$ branes will give a defect in Chern-Simons theory located at the intersection of the $D6$ and $D4$ branes, which is the surface $\Sigma$.  The algebra of operators on this defect is the algebra of operators on the $M5$ brane at the tip of the Taub-NUT in the $\Omega$ background, which is what we are ultimately interested in.   

We can calculate the field theory that lives at the intersection of the $D4$ and $D6$ strings using the language of twisted string theory \cite{CosLi16}. We will reproduce a result derived in \cite{DijHolSulVaf08}. 

Let's first do the calculation before passing to the $\Omega$ background.  We need to calculate the space of open string states for strings stretching from a $D4$ brane to a $D6$ brane.  The space time is $\R^3 \times \R^3 \times \C^2$, and we are considering a topological string theory which is the $A$ model on $\R^6$ and the $B$ model on $\C^2$.  The $D4$ branes lives on $\R^3 \times 0 \times \Sigma$ and the $D6$ branes live on $0 \times \R^3 \times \C^2$.   Open string states between these branes are a tensor product of open string states between the $A$-model branes in $\R^6$ and those between the $B$-model branes in $\C^2$.  Since the branes in the $A$-model are transverse, the open string states are simply $\C$.  For the $B$-model, we find the Dolbeault complex computing
$$
\op{RHom}(\Oo_\Sigma,\Oo_{\C^2}).
$$
This Dolbeault complex is $\Omega^{1,\ast}(\Sigma) [-1]$, where the shift indicates that the fields in $\Omega^{1,0}$ are in odd degrees.  

Similarly, the $D6-D4$ strings give rise to $\Omega^{0,\ast}(\Sigma)[1]$, again with a shift placing it in an odd degree. Putting these two together, and assuming that there are $N$ $D4$ branes and $k$ $D6$ branes, tells us that the space of fields on the defect is 
\begin{align*} 
 \psi_1 &\in \Omega^{0,\ast}(\Sigma)[1] \otimes \op{Hom}(\C^k,\C^N) \\
\psi_2 &\in  \Omega^{1,\ast}(\Sigma)[-1] \otimes \op{Hom}(\C^N,\C^k). 
\end{align*}
The action functional which generates the BRST operator $\dbar$  is $\int \psi_1 \dbar \psi_2$.  

Thus, we find that the theory on the defect consists of $NK$ free chiral fermions, exactly as in \cite{DijHolSulVaf08}.  The way I have written it, the fermions are twisted so that the fermions are not sections of $K^{1/2}_\Sigma$ but of $\Oo_\Sigma$ or $K_\Sigma$.  However, the twist can be by an arbitrary degree $0$ line bundle, which can be absorbed into the gauge field of the Chern-Simons theory.

We arrive at the following claim.
\begin{claim}
The algebra of operators of the $(2,0)$ theory living on a stack of $N$ $M$ branes wrapping an $A_{k-1}$ singularity,  and placed in an $\Omega$-background, is the same as the algebra of operators living on the surface defect in $\mf{gl}_N$ Chern-Simons theory obtained by coupling $Nk$ free chiral fermions.   
\end{claim}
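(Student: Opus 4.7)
The strategy is to chain together the dualities and localization statements already established in the excerpt, so that the $(2,0)$ algebra of operators in the $\Omega$-background gets translated through IIA into an honest computation of boundary operators for a topological open-string theory. Concretely, I would argue in four steps.

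First, I would invoke Theorem \ref{theorem_exactness}: for the family of supergravity backgrounds on $TN_{k-1} \times \R^3 \times \C^2$, varying the radius of the Taub-NUT circle is $Q$-exact for the supercharge $\Psi(\eps,\delta)$. Hence the algebra of $Q$-cohomology operators on the stack of $N$ $M5$ branes wrapping $TN_{k-1}\times \{0\} \times \Sigma$ in the $\Omega$-background is computed in the small-radius limit, where $11$-dimensional supergravity is replaced by type IIA string theory on $\R^3\times\R^3\times\C^2$ with $N$ $D4$ branes on $\R^3\times\{0\}\times\Sigma$, $k$ $D6$ branes on $\{0\}\times\R^3\times\C^2$, and the constant $B$-field $-\eps\,\d\zbar_1\,\d\zbar_2$ coming from the $C$-field $-\eps V^{\flat}\wedge\omega^{0,2}_{\C^2}$.

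Second, I would apply the twisted-string-theory description of \cite{CosLi16} already used in sections \ref{section:Mtheory_TN} and \ref{section_M5}. With the supercharge $\Psi_{IIA}(\eps)$ and the $B$-field turned on, the twisted closed-string sector becomes essentially trivial (it is the $A$-model on $\R^{10}$), so all content sits in the open-string sector. The theory on the $N$ $D4$ branes is the open-string field theory whose dg algebra, after the $B$-field deformation and absorbing $\eps$, yields the $5$-dimensional Chern-Simons-type action on $\R^3\times\Sigma$ at level $-1/\eps$ that was written down in section \ref{section_M5}. Putting the second $\Omega$-background on the remaining $\R^2_\delta$ collapses this to ordinary three-dimensional $\mf{gl}_N$ Chern-Simons theory on $\R\times\Sigma$ with coupling $-\eps/\delta$, exactly as derived in the excerpt.

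Third, I would compute the $D4$-$D6$ and $D6$-$D4$ open-string spectra using the open-string calculation already sketched: the $A$-model branes in $\R^6$ are transverse, contributing $\C$, and in the $B$-model on $\C^2$ one gets the Dolbeault models of $\op{RHom}(\Oo_\Sigma,\Oo_{\C^2})$ and $\op{RHom}(\Oo_{\C^2},\Oo_\Sigma)$, namely $\Omega^{1,\ast}(\Sigma)[-1]$ and $\Omega^{0,\ast}(\Sigma)[1]$ respectively. Tensoring with the Chan-Paton factors $\op{Hom}(\C^k,\C^N)$ and $\op{Hom}(\C^N,\C^k)$, and reading off the BRST operator from the quadratic pairing $\int \psi_1\,\dbar\psi_2$, one recovers $Nk$ chiral free fermions on $\Sigma$. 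The Moyal deformation of $\Oo_{\C^2}$ does not touch this sector, because $\Oo_\Sigma$ is a module over the non-commutative algebra in a canonical way that only modifies the $D4$-$D4$ differential.

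Finally, assembling everything: the operators on the $M5$ branes in the $\Omega$-background become the operators supported on the defect $\Sigma \subset \R\times\Sigma$ of the three-dimensional $\mf{gl}_N$ Chern-Simons theory, where the defect is the bifundamental $\psi_1, \psi_2$ system of $Nk$ chiral free fermions minimally coupled to the Chern-Simons gauge field. The main obstacle is conceptual rather than computational: making sure that ``operators on the $M5$ branes'' on the M-theory side really corresponds to the open-string operators localized on $\Sigma$ after the two reductions, i.e.\ that no off-diagonal contributions (closed-string operators or $D4$-$D4$ bulk operators on $\Sigma$) are being missed. This is controlled by the vanishing of the twisted closed-string sector once the $B$-field is on, and by the fact that the further $\Omega$-background on $\R^2_\delta$ localizes the $D4$ world-volume theory onto $\R \times \Sigma$, so that all surviving observables are exactly the defect operators asserted in the claim.
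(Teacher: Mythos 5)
Your proposal is correct and follows essentially the same route as the paper: $Q$-exactness of the Taub-NUT radius (Theorem \ref{theorem_exactness}) to pass to type IIA with $N$ $D4$ and $k$ $D6$ branes plus the $B$-field, the twisted open-string description of \cite{CosLi16} turning the $D4$ theory into the $5$-dimensional Chern-Simons-type theory and then, via the $\delta$ $\Omega$-background, into $\mf{gl}_N$ Chern-Simons at level $-\eps/\delta$, and the $D4$-$D6$ open-string computation $\op{RHom}(\Oo_\Sigma,\Oo_{\C^2})\oplus\op{RHom}(\Oo_{\C^2},\Oo_\Sigma)$ yielding the $Nk$ chiral fermions on the defect. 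No substantive deviation from the paper's argument, so nothing further is needed.
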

I have not analyzed at all how to quantize this system, preferring to leave this for future investigation.  Note, however, that to produce a consistent quantum system, the level must jump as we cross the surface defect, to cancel an anomaly.  

Disregarding any difficulties that may arise in the analysis of the quantum system, let us denote by $A_{N,k}(\eps,\delta)$ the vertex algebra of operators on the surface defect. A standard procedure produces from every vertex algebra an associative algebra, which is generated by the contour integrals of local operators in the theory along a circle in a punctured disc.  In the mathematics literature \cite{BeiDri04}, this construction is known as chiral homology along a punctured disc, and the result algebra is written $\int_{D^\times} A_{N,k}(\eps,\delta)$.  We will denote the associative algerbra by $\oint A_{N,k}(\eps,\delta)$.  

As an example, this construction produces from the chiral WZW vertex algebra the universal enveloping algebra of the Kac-Moody affine Lie algebra.

Our discussion so far leads to the following conjecture, which is in the same vein as the AGT conjecture \cite{AldGaiTac10}.
\begin{conjecture}
The associative algebra $\oint A_{N,k}(\eps,\delta)$ acts on the equivariant cohomology of the moduli of instantons of rank $N$ on a resolution of the $A_{k-1}$ singularity.  We take the direct sum over all charges of the cohomology of the moduli of instantons.  
\end{conjecture}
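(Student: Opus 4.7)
The plan is to deduce the action from the dictionary between $M$-theory in the $\Omega$-background and the $5$-dimensional non-commutative Chern-Simons theory established in this paper, together with the state/operator correspondence in the $\Omega$-background. The idea is to replace the flat transverse $\C^2$ by a conical resolution $X$ of the $A_{k-1}$ singularity, place $N$ $M5$ branes at a single point in the time direction, and convert the chiral algebra of operators on these branes into an action on the Hilbert space of the $5d$ theory, which one then identifies with the equivariant cohomology of instanton moduli on $X$.

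First I would set up $M$-theory on $TN_k^{\eps} \times \R^2_\delta \times \R_t \times X$, which by the main theorem of Section \ref{section_5d_gauge_theory} is equivalent to the $5$-dimensional non-commutative gauge theory on $\R_t \times X$ with gauge group $GL(k)$. Then I would insert $N$ $M5$ branes wrapping $TN_k$ and supported at a point $(t_0,p) \in \R_t \times X$; by Section \ref{section_M5} this becomes a surface defect whose chiral algebra of local operators is $A_{N,k}(\eps,\delta)$. The key step is to identify the Hilbert space of the $5d$ theory on a spatial slice $\{t_1\} \times X$ with $t_1 > t_0$ as the equivariant cohomology of the moduli of rank-$N$ instantons on $X$, summed over all charges. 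For $X = \C^2$ the identification follows from the analysis of $M2$ branes in the preceding section, where the moduli of $M2$ branes ending on $M5$ branes is the ADHM moduli of non-commutative instantons and the rank is set by the number of $M5$ branes through the $D4$-$D6$ string sector in the reduction to type IIA; the generalization to a resolution of $A_{k-1}$ follows by placing the analogous setup on the multi-centered Taub-NUT. Given this identification, the action of $\oint A_{N,k}(\eps,\delta)$ is automatic: one integrates chiral algebra operators around a small loop in the slice $\{t_0\} \times X$ linking the surface defect, and transports the resulting operators to the slice $\{t_1\}$ along the topologically trivial $t$-direction.

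The main obstacle is the identification of the $5$-dimensional Hilbert space with equivariant instanton cohomology on $X$ in the presence of the surface defect. Making this precise requires careful treatment of the boundary conditions sourced by the surface defect, of the non-commutative deformation (which is exactly what converts singular $A_{k-1}$ instanton moduli into moduli on the resolved space), and of the sum over instanton charges. A natural strategy is to prove the classical statement first by matching the moduli of solutions to the equations of motion of the $5d$ theory, with the prescribed behaviour at the surface defect, to the relevant instanton moduli space, and then promote the statement to the quantum level by combining the quantization theorem of Section \ref{section:quantization} with a compatible quantization of the surface defect. Carried out in this way, the conjecture would recover the Nakajima, Maulik-Okounkov and Schiffmann-Vasserot type actions of $W$-algebras on instanton cohomology, now extended to the full deformed algebra $\oint A_{N,k}(\eps,\delta)$.
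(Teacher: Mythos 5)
First, note that the paper does not prove this statement: it is stated as a conjecture, supported only by a short physics derivation, namely that $\oint A_{N,k}(\eps,\delta)$ is the algebra of local (instanton) operators of the $5$-dimensional \emph{maximally supersymmetric} gauge theory obtained by reducing the $(2,0)$ theory on the $N$ $M5$ branes on a circle, and that the Hilbert space of that theory on the Taub-NUT spatial slice is the charge-graded equivariant cohomology of rank-$N$ instanton moduli on the resolved $A_{k-1}$ singularity, on which its local operators act. Your proposal takes a genuinely different route — acting on the Hilbert space of the $5$-dimensional non-commutative Chern-Simons theory — but as set up it contains a real gap rather than just a variant argument.

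The gap is a misplacement of the $A_{k-1}$ geometry and of which side of the duality carries the instanton-cohomology Hilbert space. In the conjecture, the rank-$N$ instantons live on (a resolution of) the $A_{k-1}$ singularity that the $M5$ branes \emph{wrap}, i.e.\ the Taub-NUT factor $TN_k$; the rank is $N$ because it is the worldvolume gauge theory of the $N$ branes. You instead replace the \emph{transverse} $\C^2$ (which contains the curve $\Sigma$ and is the holomorphic part of the $5d$ Chern-Simons spacetime) by an $A_{k-1}$ resolution $X$, and then claim the Hilbert space of the $GL(k)$ non-commutative Chern-Simons theory on $\R\times X$ is the equivariant cohomology of rank-$N$ instantons on $X$. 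Neither step is supported by the paper: changing the transverse surface changes the Chern-Simons spacetime (Section \ref{section_5d_gauge_theory}) but has nothing to do with the $A_{k-1}$ singularity in the conjecture, and the states/solutions of the $5d$ Chern-Simons theory are governed by rank-$(k+1)$ non-commutative instantons whose charge counts $M2$ branes (as in the $M2$-brane section you cite), with $N$ entering only as a t'Hooft/level parameter — not by rank-$N$ instantons at all. So the key identification your argument rests on would fail; the natural repair is exactly the paper's derivation, which keeps the action on the $M5$-brane (CFT) side, where the circle reduction to $5d$ MSYM of gauge group $U(N)$ on $TN_k$ makes the Hilbert space the instanton cohomology in question.
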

The physics derivation of this conjecture is that one can view the algebra $\oint A_{N,k}(\eps,\delta)$ as being the algebra of local operators in the $5$-dimensional maximally supersymmetric gauge theory obtained by reducing the theory on the $M5$ brane on a circle. Since we are in the $\Omega$ background, the local operators sit at the fixed points of the rotation action on the Taub-NUT manifold. The OPE in the remaining direction makes the collection of local operators into a non-commutative associative algebra.  The Hilbert space of this $5d$ theory is the sum over all charges of the equivariant cohomology of the moduli space of instantons,  and local operators act on this space.  

\subsection{}
This discussion suggests that, in the case $k = 1$, the vertex algebra $A_{N,1}(\eps,\delta)$ should be the $W_N$ vertex algebra.  Indeed, it is known \cite{MauOko12,BraFinNak14,SchVas13} that the associative algebra  $\oint W_N$ algebra acts on the cohomology of the moduli of instantons on the plane.  Yagi \cite{Yag12} has argued directly that the algebra of operators on in the $(2,0)$ theory in the $\Omega$ background is the $W$-algebra.  Further, a result of Beem, Rastelli and van Rees \cite{BeeRasvan15} shows that a construction similar in spirit (but different in details) to the $\Omega$-background construction produces the $W_N$ algebra from the theory on a stack of $N$ $M5$ branes.

Is it possible to check this directly? Let's start with the simplest case, when  $N = 1$ and when $\delta = 0$, so that Chern-Simons theory is treated classically.  Then, we are considering the system obtained by coupling a chiral fermion with fields $\psi_1,\psi_2$ and action $\int \psi_1 \dbar \psi_2$ to Abelian Chern-Simons theory.  Since Chern-Simons theory is being treated classically, the only affect the gauge theory has on the free fermion systme is that it forces us to consider $\C^\times$-invariant local operators.

A standard result (part of the boson-fermion correspondence) tells us that the vertex algebra of invariant local operators is precisely the vertex algebra describing the chiral free boson, with a single field $\alpha$ and OPE $\alpha \cdot \alpha \simeq 1/z^2$.  (Note that this is a free boson valued in $\C$ not $\C^\times$, so the twist fields do not play a role). 

This is heartening, as the $W_N$ algebra when $N = 1$ is the chiral free boson, and this is known \cite{Gro95,Nak97} to act on the equivariant cohomology of the moduli of torsion fre
e sheaves on $\C^2$.

\subsection{}
Another accessible case is When $k > 1$ and $N = 1$.  In this case, the the algebra we find is the $\C^\times$ invariants in $k$ pairs of chiral free fermions.   The boson-fermion correspondence tells us that $k$ pairs of chiral free fermions are equivalent to $k$ circle-valued chiral bosons. The bosons being circle valued means that we include the twist fields, so that the Hilbert space decomposes into sectors labelled by the homotopy class of the scalar field on a circle, which is an element of  $\Z^k = \pi_1((S^1)^k)$.   Under the boson-fermion correspondence, this decomposition corresponds to the one by weights of the action of $(\C^\times)^k$ on $k$ pairs of chiral free fermions.  

We are taking the invariants for a diagonal $\C^\times$ inside $(\C^\times)^k$. Under the boson-fermion correspondence, this means that instead of a boson valued on $(S^1)^k$, we find one which is a map to $\R^k / \Z^{k-1}$ where $\Z^{k-1} \subset \Z^k$ is the subgroup consisting of $(v_1,\dots,v_k)$ where $\sum v_i = 0$.    This copy of $\Z^{k-1}$ is the root lattice of type $A_{k-1}$, and can be  identified with the second homology group of the resolved $A_{k-1}$ singularity, equipped with its intersection pairing.  The vertex algebra for the chiral boson  built from $\R^k / \Z^{k-1}$ is thus the tensor product of a single non-compact chiral  free  boson, with the lattice vertex algebra for the root lattice of $A_{k-1}$. 

This vertex algebra is precisely the one that acts naturally on the cohomology of the moduli space of rank-one torsion free sheaves on the resolved $A_{k-1}$ singularity $\til{\C^2/\Z_{k-1}}$ \cite{Gro95, Nak97} The moduli of such torsion free sheaves has a map to $H^2( \til{\C^2/\Z_{k-1}})$ given by the first Chern class.  The fibre over an element of $H^2( \til{\C^2/\Z_{k-1}})$ is isomorphic to the Hilbert scheme of points on the surface $\til{\C^2/\Z_{k-1}}$. Therefore the cohomology of this moduli space of torsion free sheaves is a direct sum over the root lattice of $A_{k-1}$ of a copy of the cohomology of the  Hilbert scheme of points.  This is exactly the structure one finds on the Hilbert space of the chiral free boson built from $\R^k / \Z^{k-1}$.

\subsection{}
It turns out that it is possible to go further, and directly connect the operators living on the surface defect with the $W_N$ algebra.  The following argument was communicated to me by Davide Gaiotto.

Let's consider a more general situation, obtained by inserting an arbitrary two-dimensional chiral theory $T$ into Chern-Simons theory with gauge group $G$.   The WZW currents for $G$ act on the  theory $T$ with some level $k$.   If Chern-Simons theory has level $n$ on one side of the domain wall where we insert $T$, then anomaly cancellation forces us to have Chern-Simons of level $k+n$ on the other side.

Gaiotto has formulated the following general conjecture.
\begin{conjecture}
In this situation, the algebra of operators on the surface defect in Chern-Simons is described by the algebra
$$
\frac{T \times G_n}{G_{n+k}}
$$
where $G_k$ indicates the WZW current algebra at level $k$, $T$ indicates the vertex algebra of operators of the theory we insert, and the quotient means the coset construction.  
\end{conjecture}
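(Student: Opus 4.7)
The plan is to reduce the statement to a boundary problem by folding, and then to identify the defect algebra as the commutant of an appropriate current subalgebra. First, I would fold the domain wall so that the three-dimensional configuration becomes a half-space carrying the product theory $CS_G^{(n)} \otimes \overline{CS_G^{(n+k)}}$, with the two-dimensional chiral theory $T$ sitting on the boundary. The standard Witten argument relating Chern--Simons to WZW---mirrored here by the Kac--Moody currents appearing on the boundary of the $3$-dimensional reduction of the $5$d theory discussed in the introduction---identifies, in its current-algebra sector, the boundary operators of $CS_G^{(\ell)}$ with the affine algebra $G_\ell$. Hence before coupling to $T$, the folded boundary carries currents for $G_n \otimes \overline{G_{n+k}}$.

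Next, I would impose anomaly cancellation and continuity at the interface. Consistency of the coupling forces the $G$-currents $J_T$ in $T$ to have level $k$, and smoothness of the bulk gauge field across the wall identifies the right-hand current with $J_n + J_T$. This promotes $J_n + J_T$ to a copy of $G_{n+k}$ and yields a diagonal embedding $G_{n+k} \hookrightarrow T \otimes G_n$. A defect operator is physical precisely when it is invariant under the residual $CS_G^{(n+k)}$ bulk gauge symmetry surviving at the wall, which acts through this diagonal. By definition, the invariants form the commutant of $G_{n+k}$ inside $T \otimes G_n$, i.e.\ the coset vertex algebra $(T \times G_n)/G_{n+k}$.

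Third, I would verify that no additional defect operators appear beyond those in the commutant. Here the cohomological strategy of section~\ref{section:quantization} is the natural tool: the algebra of defect operators should be computed by a Chevalley--Eilenberg cohomology of the defect gauge Lie algebra acting on the free operators built from $T$ and from the boundary restrictions of the two bulk Chern--Simons theories. A Loday--Quillen--Tsygan style reduction to cyclic homology of the relevant current algebras should reproduce precisely the coset, with no extraneous generators.

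The main obstacle is providing a rigorous definition of the defect vertex algebra between two Chern--Simons theories at different levels coupled through $T$. Standard boundary VOA frameworks handle a single half-space, and the jump in level across the wall requires either a factorization-algebra description like the one used elsewhere in this paper, or a careful extension of Witten's boundary argument to this bi-boundary situation. Extending the analysis beyond the classical Chern--Simons limit so as to recover the full quantum coset---rather than just its semiclassical shadow---is where the substantive work lies, and it is presumably why the statement is recorded as a conjecture.
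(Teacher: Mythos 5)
You should be aware that the paper does not prove this statement at all: it is recorded precisely as a conjecture (attributed to Davide Gaiotto), and the only support offered for it in the text is consistency evidence --- that specializing to $N$ pairs of free fermions, i.e.\ $T = U(N)_1$, yields the standard coset realization $\frac{U(N)_1\times U(N)_n}{U(N)_{n+1}}$ of the $W_N$ algebra, together with the explicit low-rank checks ($N=1$, and $k>1$, $N=1$) done by hand. So the relevant comparison is not ``same approach vs.\ different approach'' but whether your sketch closes the gap that made this a conjecture in the first place, and it does not. Your first two steps (folding to $CS^{(n)}_G\otimes\overline{CS^{(n+k)}_G}$ on a half-space, boundary Kac--Moody currents, anomaly-forced level jump, and the identification of gauge-invariant defect operators with the commutant of the diagonal $G_{n+k}$ inside $T\otimes G_n$) reproduce the standard physical reasoning behind such coset statements, but the decisive step --- that the residual bulk gauge symmetry at the quantum level is implemented exactly by the GKO coset, with no correction and no extra defect operators --- is asserted rather than derived. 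Your third step, where the actual content would have to live, is only a plan: the Loday--Quillen--Tsygan / cyclic-homology machinery in the appendix is set up for the local operators of the $5$-dimensional $\mf{gl}_N$ gauge theory (adjoint-valued fields, quantization uniform in $N$), and it is not shown how it would compute the operator content of a defect sandwiched between two Chern--Simons theories at different levels coupled through an arbitrary chiral theory $T$.

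There is also a concrete obstruction the paper itself flags and your argument inherits: in the intended application the Chern--Simons level is $-\eps/\delta$, a ratio of equivariant parameters, and is generically non-integral. Your folding-plus-boundary-WZW argument, like Gaiotto's, is borrowed from the integer-level setting where $G_\ell$ is a well-defined rational current algebra and the coset construction has its usual meaning; the paper explicitly notes that ``for this argument to work as written, we would need Chern--Simons theory to have integral level,'' and treats the conclusion at generic level as strongly suggested rather than established. Unless you either restrict to integral level or give an independent construction of the defect vertex algebra at generic complex level (e.g.\ via a factorization-algebra or Koszul-duality formulation and a cohomological identification of its operators), your proposal is a motivation for the conjecture, not a proof of it.
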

This general conjecture implies what we want. As, a system of $N$ pairs of free fermions is equivalent to $U(N)_1$.   If we insert this system in to Chern-Simons theory at level $n$ as a domain wall, Gaiotto's formula tells us that the operators on the domain wall will be
$$
\frac{U(N)_1 \times U(N)_n}{U(N)_{n+1}}.  
$$
This is a standard realization of the $W_N$ algebra.  

For this argument to work as written, we would need Chern-Simons theory to have integral level. In our situation, the level is a function of the equivariant parameters, and it is not natural to restrict the equivariant parameters so that this function is integral.  However, Gaiotto's argument certainly provides very strong evidence for the conjecture that, for general values of the equivariant parameters, the algebra of operators on the surface defect is the $W_N$ algebra at general central charge.

\subsection{Parafermions and the $A_k$ singularity}
Gaiotto suggested a generalization of this argument which will allow us to understand what happens when we wrap  $M5$ branes on an $A_k$ singularity. In this case, if there are $N$ $M5$ branes, we find $N(k+1)$ pairs of free fermions embedded in Chern-Simons theory as a surface defect.   The vertex algebra we will find matches precisely with the generalization of the AGT relation to the case when the $M5$  brane wraps an $A_k$ singularity.

One can describe $Nk$ pairs of free fermions as $S(U(N)_{k+1} \times U(k+1)_N)$ where the $S$ indicates we look at matrices which live in $SU(N+k+1)$ via the block diagonal embedding from $U(N) x U(k+1)\to U(N+k+1)$. Applying Gaiotto's conjecture, we find  that the algebra of operators on the surface defect in Chern-Simons theory where on one side of the surface we have level $m$, is   
$$
\frac{S(U(N)_{k+1} \times U(k+1)_N) \times U(N)_m}{U(N)_{m+k+1}}
$$
This is the parafermionic vertex algebra conjectured in \cite{NisTac11} to replace the $W_N$ algebra in the AGT relation for $M5$ branes on $A_{k}$ singularities. 

\subsection{The $M5$ brane as a t'Hooft operator in the $5d$ gauge theory}
We have argued for a description of the theory on the $M5$ brane in an $\Omega$-background as the theory living on a surface defect in Chern-Simons theory.   We would like to also understand the effect of this theory on the $5$-dimensional non-commutative gauge theory.   In this subsection we will give a heuristic argument that suggests this surface defect should be interpreted as a t'Hooft operator. Later we will analyze carefully the effect of the insertion of a free fermion system into the $5$-dimensional non-commutative a gauge theory, and verify explicitly this prediction.  

In $M$-theory, the $M5$ brane leads to a singularity in the field $H = \d C$ such that $\d H = \Delta_{M5}$ (where we use the upper-case $\Delta$ to indicate the delta-function on the location of the $M5$ brane, to avoid confusion with our parameter $\delta$).   In other words, the $C$-field is magnetically coupled to the $M5$ brane.It is natural to guess that the same thing happens when we consider the $M5$ brane in our non-commutative gauge theory.

Let us first consider the case $\delta = 0$, so that we are dealing with a $7$-dimensional gauge theory.   As we have seen, the fields of this theory, in the BV formalism, are given by
$$
\Omega^\ast(\R^3) \what{\otimes}\Omega^{0,\ast}(\C^2) \otimes \mf{gl}_N[1].
$$
In particular, there are bosonic fields living in 
$$
\Omega^3(\R^3) \what{\otimes}\Omega^{0,0}(\C^2) \oplus \Omega^2(\R^3) \what{\otimes}\Omega^{0,1}(\C^2) \oplus \Omega^1(\R^3) \what{\otimes}\Omega^{0,2}(\C^2). 
$$ 
These fields live in the subspace of $\Omega^3(\R^7)$ consisting of forms which don't involve $\d z_i$.  

It is natural to guess that this $7$-dimensional $3$-form $C_{7d}$ is related to the components of the $11$-dimensional $3$-form $C_{11d}$ which are orthogonal to the Taub-NUT.  In fact, what we will find is that there is a factor of $\eps$ in the relationship:  
$$
C_{11d}\mid_{\R^4 \times \C^2}  = \eps C_{7d}.  
$$  
I will not justify this factor of $\eps$ in detail right now. Instead, we will see how it appears when we relate the free-fermion surface defect discussed above to t'Hooft operators. 
In $11$ dimensions, the $M5$ brane forces the $4$-form field strength to have a singularity such that $\d H$ is a delta-function on the location of the $M5$ brane.  We will guess that a similar thing happens here, with an additional factor of $\eps^{-1}$. Thus, in the presence of an $M5$ brane wrapping the Taub-NUT manifold and a curve $\Sigma \subset \C^2$,  we expect that, if $H_{7d}$ is the field-strength of the $3$-form in the $7$-dimensional gauge theory, then 
$$\d H_{7d} = \eps^{-1} \Delta_{0 \times \Sigma},$$
where as before $\Delta$ means the delta-function. 

Now let's pass to $5$ dimensions by turning on $\delta$.  To understand what condition we find on the field strength of the gauge theory in $5$ dimensions, we need to relate the $5$-dimensional gauge field to the fields of the $7$-dimensional theory.  

The fields in $7$ dimensions can be written -- incorporating all anti-fields, ghost, etc. -- as the cochain complex $\Omega^\ast(\R^3) \what{\otimes} \Omega^{0,\ast}(\C^2) \otimes \mf{gl}_{k+1}[1]$.  The differential (the linearized BRST operator) is the sum of the de Rham operator on $\R^3$ and the Dolbeault operator on $\C^2$.  Passing to the $\Omega$-background means that we replace the de Rham complex on $\R^2$ by the equivariant de Rham complex $\Omega^\ast_{S^1}(\R^2)$.  The underlying graded vector space of the equivariant de Rham complex consists of the $S^1$-invariant forms $\Omega^\ast(\R^2)$, but with differential $\d_{dR} + \delta \iota_{V}$. Here $\delta$ is the equivariant parameter and $\iota_V$ is the operation of contraction with the vector field generating rotation. 

The $5$-dimensional and $7$-dimensional fields are related as follows.  Let $\mbb{1}$ denote any compactly-supported, equivariantly closed, form on $\R^2$ which is cohomologous to the unit in equivariant cohomology.  Then, the $5d$ gauge field $A$ is represented by 
$$
\mbb{1} \boxtimes A \in \Omega^\ast_{S^1}(\R^2) \what{\otimes} \Omega^\ast(\R) \what{\otimes}\Omega^{0,\ast}(\C^2).    
$$   
Since the  integral of $\mbb{1}$ is $1/\delta$, this explains why the $5d$ Chern-Simons type action has a $1/\delta$ not present in the $7d$ action. 

We can choose an explicit representative of the class $\mbb{1}$ as follows.  Let $f(r)$ be a function of the radius $r$ on $\R^2$ which has compact support, is  $1$ near the origin,  and  whose integral agains the volume form $\d x \d y$ is $1$. Then, the representative i 
$$
\mbb{1} = f(r) + \delta^{-1} f'(r) \d r \d \theta.
$$
It is immediate that this form is equivariantly closed.   Further, the integral of this is (if we work in a convention where $\theta$ has period $1$) $\delta^{-1}$.  

Note that, if $\Delta_{r = 0}$ is an equivariant representative of the delta-function at the origin in $\R^2$, then
$$
\mbb{1} = \frac{1}{\delta} \Delta_{r = 0}
$$
in equivariant cohomology.

If $A$ is the gauge-field of the $5$-dimensional gauge theory, it can be represented in $7$ dimensions by the field
$$
\til{A} = \mbb{1} \boxtimes A =  f(r) A + \delta^{-1} f'(r) \d r \d \theta A. 
$$
This is a sum of a $1$-form and a $3$-form in $7$ dimensions.

Similarly, the $5$-dimensional field strength is represented by
$$
\til{F}(A) = \mbb{1} \boxtimes F(A).  
$$
The equation satisfied by the $7$-dimenisonal field strength in the presence of a t'Hooft operator, when applied to $\til{F}(A)$, becomes the equation
$$
\mbb{1} \boxtimes \d F(A) =\eps^{-1} \Delta_{r = 0} \boxtimes \Delta_{0 \times \Sigma}. 
$$
Here the first delta-function is the one for the origin in $\R^2$, but taken in equivariant cohomology. The second one is for the submanifold $0 \times \Sigma \subset \R \times \C^2$,where the surface $\Sigma \subset \C^2$ is the location of the $M5$ brane. 

Since $\mbb{1} = \delta \Delta_{r = 0}$, we find that
$$
\d F(A)= \frac{\delta}{\eps} \Delta_{ 0 \times \Sigma}.  
$$

This means that the gauge field $A$ will look like a monopole in the normal direction to the location of $\Sigma$, but with charge $\tfrac{\delta}{\eps}$.  

To sum up, we find
\begin{proposition}
 An $M5$ brane wrapping the Taub-NUT manifold $TN_k$ gives rise to a  t'Hooft surface operator in the $5$-dimensional gauge theory of charge $\delta/\eps$. 
\end{proposition}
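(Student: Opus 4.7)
The plan is to transport the classical magnetic coupling between the $M5$ brane and the $3$-form $C$-field of $11$-dimensional supergravity down through the two $\Omega$-background reductions, keeping careful track of the powers of $\eps$ and $\delta$ introduced at each stage. I would first establish that, in the $7$-dimensional non-commutative gauge theory obtained from the Taub-NUT $\Omega$-background, an $M5$ brane wrapping $TN_k \times (0 \times \Sigma)$ creates a singularity of the $3$-form field so that $\d H_{7d} = \eps^{-1}\Delta_{0 \times \Sigma}$. The factor of $\eps^{-1}$ comes from the identification $C_{11d}\mid_{\R^4 \times \C^2} = \eps C_{7d}$ between the $11$-dimensional and $7$-dimensional $C$-fields; the standard $11$-dimensional equation $\d H_{11d} = \Delta_{M5}$ then produces the stated condition after dividing by $\eps$.

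Next I would reduce from the $7$-dimensional theory on $\R^3 \times \C^2$ to the $5$-dimensional theory on $\R \times \C^2$ via the $\Omega$-background in the second $\R^2$. Following Section \ref{section_5d_gauge_theory} and the discussion just above the proposition, the passage to equivariant cohomology is implemented by choosing a compactly supported equivariantly closed form $\mbb{1} = f(r) + \delta^{-1} f'(r)\, \d r\, \d \theta$ on $\R^2$ representing the equivariant unit, with the key identity $\mbb{1} = \delta^{-1}\Delta_{r=0}$ in equivariant cohomology. The lift of the $5$-dimensional gauge field is $\til{A} = \mbb{1} \boxtimes A$, and likewise $\til F(A) = \mbb{1} \boxtimes F(A)$.

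Substituting $\til F(A)$ into the $7$-dimensional equation gives
\[
\mbb{1} \boxtimes \d F(A) \;=\; \eps^{-1}\Delta_{r=0} \boxtimes \Delta_{0 \times \Sigma},
\]
and using $\mbb{1} = \delta^{-1}\Delta_{r=0}$ this collapses to the $5$-dimensional identity $\d F(A) = (\delta/\eps)\,\Delta_{0 \times \Sigma}$. This is precisely the defining equation of a 't Hooft surface operator of magnetic charge $\delta/\eps$ along $0 \times \Sigma \subset \R \times \C^2$, which is the content of the proposition.

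The only really non-formal input is the identification of the $7$-dimensional $C$-field with the $11$-dimensional one up to the factor $\eps$; this is the step I expect to require the most care, and I would justify it \emph{a posteriori} by observing that a different power of $\eps$ in that identification would ruin the matching with the free-fermion surface defect worked out earlier in Section \ref{section_M5}. All remaining steps are transparent consequences of the equivariant cohomology dictionary for the $\Omega$-background and the known magnetic coupling of $M5$ branes.
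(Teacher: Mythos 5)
Your argument is correct and is essentially the paper's own: the magnetic coupling $\d H_{11d} = \Delta_{M5}$ is transported to $7$ dimensions via $C_{11d}\mid_{\R^4\times\C^2} = \eps\, C_{7d}$ (giving $\d H_{7d} = \eps^{-1}\Delta_{0\times\Sigma}$), and then to $5$ dimensions using the equivariant representative $\mbb{1} = f(r) + \delta^{-1} f'(r)\,\d r\,\d\theta$ with $\Delta_{r=0} = \delta\,\mbb{1}$, yielding $\d F(A) = (\delta/\eps)\,\Delta_{0\times\Sigma}$, with the $\eps$-normalization likewise justified a posteriori through the free-fermion surface-defect matching.
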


\section{Fermions and t'Hooft operators}
We now have two descriptions of the relationship between the $M5$ brane and the $5$-dimensioanl gauge theory.  On the one hand, we have argued that the $M5$ brane should be viewed as a t'Hooft surface operator. However, we also have a more explicit description whereby the theory on a stack of $N$ $M5$ branes can be seen (at $\delta = 0$) as the $GL(N)$ invariants in $N(k+1)$ pairs of chiral free fermions. In this subsection we will relate these two descriptions.  

The free fermions live in the sum of the bifundamental representation of $GL(N) \times GL(k+1)$ with its dual.  In this way we can couple the free fermion systme to the non-commutative $GL(k+1)$ gauge theory. 

Let $\Sigma\subset \C^2$ be the surface on which the free fermion system lives, and $A$ be the connection of the $5$-dimensional gauge theory. If we turn off the non-commutativity by setting $\eps = 0$,  then we can incorporate the field $A$ into the action for the free-fermion theory by replacing the $\dbar$ operator by the operator $\dbar_A$, giving us the action
$$
\int_{\Sigma} \psi \dbar_A \psi' + O(\eps)
$$
where 
\begin{align*} 
 \psi &\in \cinfty(\Sigma) \otimes \op{Hom}(\C^N, \C^{k+1} ) \\
\psi' &\in \Omega^{1,0}(\Sigma) \otimes \op{Hom}(\C^{k+1}, \C^{N} ). 
\end{align*}

The $\eps$-dependent terms are a bit more complicated.  Suppose, for simplicity, that the surface $\Sigma$ is simply where $w = 0$, if $z,w$ are Darboux coordinates on $\C^2$. Then, the action for the theory on the surface, to all orders in $\eps$, is 
$$
\int \psi \dbar \psi'+  \sum_{n \ge 0}\eps^n \frac{1}{ n!} \psi \left( \dpa{w}^n A \right) \dpa{z}^n \psi'.  
$$  
For a general $\Sigma$, to write down the action coupling the system, one has to first make the structure sheaf $\Oo_\Sigma$ into a module for the algebra $\Oo_\eps(\C^2)$ of holomorphic functions on the non-commutative $\C^2$.   We have written down the action in the case that $\Sigma$ is the locus $w = 0$, and $\C[z,w]$ acts on $\C[z]$ by making $w^n$ act by $\eps^n \partial_z^n$.

How can the insertion of this free fermion system be interpreted as a t'Hooft operator? It is certainly not true in general that one can realize t'Hooft operators in such a simple way. A surface (or line) operator obtained by coupling to a free fermion system would normally be a Wilson surface (or line). 

However, we will find an unusual phenomemon in non-commutative gauge theory whereby the free-fermion surface operator can be viewed as a source for a field with the singularities of a t'Hooft operator.

\subsection{}
The starting point is the observation that there is an anomaly that appears when one tries to couple the $GL(k+1)$ gauge theory to the free fermion surface operator. This anomaly arises from the diagram in figure \ref{fig:chiral_anomaly}. 
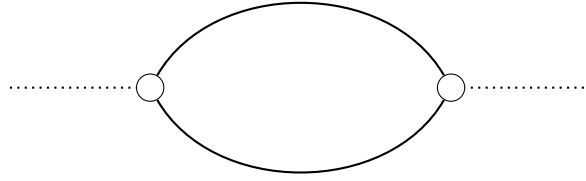
\begin{figure}

\label{fig:chiral_anomaly}
\begin{tikzpicture}
\node (D) at (0,2){};
  \node[circle, draw](A) at (2,2){};
  \node[circle, draw](B) at (6,2){};
\node (E) at (8,2){};
  \draw[thick](A) to [out=60,in=120](B);
  \draw[thick] (A) to [out=-60, in = -120] (B);
  \draw[thick,dotted](D)--(A); 
  \draw[thick,dotted](E)--(B); 
\end{tikzpicture}
\caption{The diagram giving the chiral anomaly that arises when coupling a surface operator. The solid lines are in the $\beta-\gamma$ system on the surface, and a $5$-dimensional gauge field is placed on the dotted lines.} 
\end{figure}

As a function of the fields of the gauge theory, at $\eps = 0$, the anomaly is given by
$$
\frac{1}{2 \pi i}\int_{\Sigma} A \partial \chi
$$
where $\chi$ is the ghost field, living in $\Omega^{0,0}(\R \times \C^2) \otimes \mf{gl}(k+1)$.   For $\eps \neq 0$, the expression is more complicated.   

However, a cohomological analysis detailed in appendix \ref{appendix_quantization} tells us that the $\eps$-dependent terms are completely fixed. The point is that there are very few single-trace Lagrangians we can build from the fields of the gauge theory.  All single-trace Lagrangians we can construct are built by a descent procedure from single-trace local operators. There is a unique single-trace local operator of ghost number $3$, as well as one of ghost number $5$, $7$, $9$, etc. Those of ghost number greater than $3$ will not be relevant for the present discussion.  

The operator of ghost number $3$ is of the form
$$ 
\mscr{O}_3 (\chi) = \op{Tr} (\chi^3) + c \eps \op{Tr} (\chi \dpa{z_1} \chi \dpa{z_2} \chi) + O(\eps^2), 
$$ 
for some combinatorial constant $c$ determined by the requirement that this operator is BRST closed. (As above, $\chi$ is the ghost field). 
 
A descent procedure allows us to view this operator as being a function, again of ghost number $3$, which maps fields of the theory to forms on $\R \times \C^2$.  We will call the form-valued version of this operator $\mscr{O}_3^{\Omega^\ast}$.  

The reason one can apply such a descent procedure is that, since infinitesimal translation in $\R \times \C^2$ is BRST exact, each derivative of the operator $\mscr{O}_3(\chi)$ is BRST exact.  The object making this derivative BRST exact provides the $1$-form component of $\Oo^{\Omega^\ast}_3$. One proceeds in a similar way to get the higher-form components.  

Note that we need $\eps \neq 0$ for the translations $\dpa{z_i}$ to be BRST exact when the act on local operators.  As, when $\eps \neq 0$, these translations act as gauge symmetries, and infinitesimal gauge symmetries always act in a BRST exact way on the space of local operators.  One can explicitly calculate that, for example, the component of $\Oo_3^{\Omega^\ast}(\chi)$ which is a $(1,1)$-form on $\C^2$ is of the form
$$
c \op{Tr} (A \partial \chi) + O(\eps). 
$$ 
If we integrate this over the surface $\Sigma$ we will indeed produce a non-zero multiple of the anomaly, to leading order in $\eps$.

One might worry that the higher-order terms in $\eps$ might differ from those in the anomaly by some terms arisng from the integral of $\eps^k \Oo_3^{\Omega^\ast}$. However, this is not possible for symmetry reasons, at least when we specialize to the case when the surface $\Sigma$ is a linearly embedded copy of $\C$ in $\C^2$.  In this case, there's a $\C^\times$ action on $\C^2$ which scales the normal direction to $\Sigma$. Since this action scales the Poisson tensor on $\C^2$,the parameter $\eps$ has weight $1$ with respect to this action.  Both the anomaly and the local operator $\Oo_3$ are fixed by this $\C^\times$ action. Therefore, if we can identify the anomaly with the integral of $\Oo_3^{\Omega^\ast}$ over $\Sigma$ to leading order in $\eps$, they must be the same to all orders in $\eps$.      

The end result of this discussion is to note that, in the case $\Sigma = \C \subset \C^2$,  there is a one-dimensional space of single-trace Lagrangians preserving the $\C^\times$ symmetries which we can integrate over the surface $\Sigma$ to produce an anomaly.  We will invoke this kind of argument at various points in our analysis.  

\subsection{}
To relate the anomaly to the t'Hooft operator, we will give a different description of the anomaly. 

Given any open subset $U$ in $\C^2$, we can consider the non-commutative deformation $\Oo_\eps(U)$ of holomorphic functions on $U$. There is a map
$$
\Oo_\eps(U) \to \op{Der} (\Oo_\eps(U) )
$$
which sends a function $f$ to the inner derivation given by the commutator with $f$.  The derivation $[f,-]$ is unchanged if we add a constant to $f$.  Thus, it only depends on the holomorphic $1$-form $\partial f$; and since every closed holomorphic $1$-form is locally exact, we have a map
$$
\Omega^1_{cl,hol}(U) \to \op{Der} (\Oo_\eps(U) ).   
$$ 

In the same way, every element
$$
f \in \Omega^\ast(\R) \what{\otimes} \Omega^{0,\ast}(U) 
$$
gives rise to an inner derivation of the non-commutative algebra $\Omega^\ast(R) \what{\otimes} \Omega^{0,\ast}(U)$ equipped with its Moyal product.  The cohomological degree of this derivation is the same as that of $f$.  

This derivation only depends on $\partial f$. Therefore, we can associate a derivation $D(\alpha)$  to any $\alpha \in \Omega^\ast(\R) \what{\otimes} \Omega^{1,\ast}(U)$ satisfying $\partial \alpha = 0$. If $(\d_{dR}^{\R} + \dbar) \alpha = 0$ also then this derivation will commute with the differential on $\Omega^\ast(\R) \what{\otimes} \Omega^{0,\ast}(U)$. 

Recall that that the Lagrangian of our theory is of the open-string field theory type, build from a differential graded algebra with a trace.  The dga is 
$$\mc{A} = \Omega^\ast(R) \what{\otimes} \Omega^{0,\ast}(\C^2) \otimes \mf{gl}(k+1),$$ equipped with the differential $\d_{\mc{A}} = \d_{dR}^{\R} + \dbar^{\C^2}$ and the non-commutative Moyal product.  Trace is given by integration against the holomorphic symplectic form on $\C^2$.   

We can generate first-order deformations of our theory by producing first-order deformations of the data $(\mc{A}, \d_{\mc{A}}, \op{Tr})$.  One way to produce such a deformation is to perturb the differential $\d_{\mc{A}}$ to $\d_{\mc{A}} + \gamma D$, where $\gamma$ is a parameter of square zero and $D$ is a derivation of degree $1$ of $\mc{A}$.  In order to produce a consistent deformation of the open-string algebra, we also need that $[D,\d_{\mc{A}}] = 0 $ and $\op{Tr} D A= 0$ for all $A \in \mc{A}$. This second condition is automatic for the derivations we will consider.  

Suppose we have some element 
$$
\alpha \in \oplus_{i+j = 1} \Omega^i(\R) \what{\otimes} \Omega^{1,j}(\C^2)
$$
satisfying the equations
\begin{align*} 
(\d_{dR}^{\R} + \dbar^{\C^2}) \alpha &= 0 \\
\partial^{\C^2} \alpha &= 0. 
\end{align*}
Then, the derivation $D(\alpha)$ of $\mc{A}$ gives rise to a first-order deformation of the Lagrangian of our $5$-dimensional gauge theory, in which we add on the quadratic term
$$
\frac{1}{\delta} \int \d z \d w \op{Tr}  \left(\tfrac{1}{2} A \d A + \tfrac{1}{3} A^3 \right) +  \frac{\gamma}{2 \delta}  \int\op{Tr}  A D(\alpha)(A)  
$$
where as before $\gamma$ is a parameter of square zero.  

If we can  write $\alpha = \partial f$, then this deformation can be written
$$
\frac{\gamma}{2 \delta} \int \op{Tr} \left( A [f\otimes\op{Id} ,A]\right). 
$$
In this situation, the deformation of the Lagrangian is that obtained by doing perturbation theory around the background $A = \gamma f\otimes \op{Id}$ instead of $A = 0$. 

\subsection{}
It turns out that the anomaly obtained by the insertion of a surface operator can be written in this way.
 
If we take the square-zero parameter $\gamma$ to be of cohomological degree $-1$ instead of $0$, the analysis above tells us that that an element
$$
\alpha \in \oplus_{i+j = 2} \Omega^i(\R) \what{\otimes} \Omega^{1,j}(\C^2)
$$
satisfying 
\begin{align*} 
(\d_{dR}^{\R} + \dbar^{\C^2}) \alpha &= 0 \\
\partial^{\C^2} \alpha &= 0. 
\end{align*}
gives rise to a Lagrangian of ghost number $1$, that is, a potential anomaly.  

The anomaly associated to the the insertion of a surface operator living on $0 \times \C$ in $\R \times \C^2$ is associated, in this way, to the form
$$
\frac{\delta}{ 2 \pi i \eps} \Delta_{t = 0, w = 0} \in \Omega^1(\R) \what{\otimes} \Omega^{1,1}(\C^2).  
$$
Here we are using coordinates $t,z,w$ on $\R \times \C^2$, where $t$ is the coordinate on $\R$ and the surface operator lives on the $z$-plane.  Of course, this form is singular, but the analysis above still applies.

To see why this is the case, note that 
$$
\Delta_{t = 0, w = 0} = \frac{-1}{2 \pi i} \partial \left(  \Delta_{t = 0} \frac{ \d \wbar}{\wbar}  \right).  
$$
This identity allows us to write the derivation associated to $\Delta_{t = 0, w = 0}$ as the commutator with the field on the right hand side of the equation. We can calculate this commutator as follows:  
\begin{align*} 
\frac{-1}{2 \pi i} \left[ \Delta_{t = 0} \wbar^{-1} \d \wbar, A  \right] &= \frac{-1}{2 \pi i} \sum_{n \ge 1}\eps^{n} \frac{1}{2^n n!} \frac{\partial^n}{\partial^n w} \left( \wbar^{-1} \d \wbar \right) \frac{\partial^n}{\partial^n z} A\\
&= \eps \Delta_{t = 0}\dpa{z} A_{w = 0} + O(\eps^2). 
\end{align*}
If, as before, we write $D(\Delta_{t = 0, w = 0})$ for the derivation associated to this singular form, we have 
\begin{align*} 
\frac{1}{\delta \eps} \int_{\R \times \C^2}\op{Tr} A D(\Delta_{t = 0, w = 0} ) A \d z \d w &=\frac{-1}{2 \pi i\delta  \eps}  \int_{\C^2}\op{Tr} A [\wbar^{-1} \d \wbar, A] \d z \d w \\
&= \frac{1}{\delta} \int_{\C} \op{Tr} A \dpa{z} A \d z + O(\eps).  
\end{align*}
It follows, by the uniqueness arguments discussed above, that the anomaly is given by the derivation associated to $\frac{\delta}{ \eps} \Delta_{t = 0, w = 0}$ to all orders in $\eps$.   

\subsection{}
Finally, we can see how to relate the surface operator to a t'Hooft operator. The anomaly arising from the surface operator arises from the derivation coming from $\tfrac{\delta}{ \eps} \Delta_{t=0, w = 0}$.  To cancel this anomaly we need to consider an element
$$
F \in \oplus_{i+j = 1} \Omega^i(\R) \what{\otimes} \Omega^{1,j}(\C^2)
$$
satisfying
\begin{equation}
\label{tHooft}
\begin{aligned}
\partial F &= 0\\
\left(\d_{dR}^{\R} + \dbar^{\C^2} \right) F + \tfrac{1}{2} [D(F), D(F) ]  &= \tfrac{\delta} { \eps} \Delta_{t = 0, w = 0} .
\end{aligned}
\end{equation}
As above, $D(F)$ is the derivation associated to $F$. We will typically consider $F$ such that $\dpa{z} F = 0$, in which case $[D(F), D(F)] = 0$. 

Equations \ref{tHooft} are those satisfied by the field strength when we have a t'Hooft surface operator of non-integer charge $\tfrac{\delta}{\eps}$.  I will explain shortly how to think of t'Hooft operators of non-integer charge.   

Given an $F$ satisfying these equations, then adding the term $\frac{1}{\delta} \int A D(F) A \d z \d w$ to the Lagrangian will cancel the anomaly. (Because the action functional is accompanied by a factor of $\delta^{-1}$, it is convenient to have such a factor next to the term involving $F$, which is why $F$ involves a $\delta$).  

If, locally, we have a field $A_0$ such that $F = \partial A_0$ satisfies the equations above, then working in perturbation theory around $A_0$ has the effect of adding the term $\int A D(F) A$ to the Lagrangian. (The factor of $\delta$ appears to cancel the $\delta^{-1}$ in front of the action).   The equations on $A_0$ are those satisfied  by a field in the presence of a t'Hooft surface operator of charge $\delta/\eps$.  

We thus conclude that, to cancel the anomaly, the free-fermion surface defect forces the fields away from the defect to behave like fields in the presence of a t'Hooft surface defect.  Thus, the free fermion surface defect can be thought of as a t'Hooft defect of charge $\delta/\eps$.  This confirms our previous analysis.  

\subsection{}
There are many ways that one can solve equations \ref{tHooft}.  One solution, up to factors of $2$ and $\pi$, is obtained by setting 
$$
F = \delta \frac{\wbar \d t \d w + 2 t \d w \d \wbar   }{\eps (t^2 + w \wbar)^{3/2}  } . 
$$
(This satisfies the equation $[D(F), D(F)] = 0$ as it is independent of $z$).

Another way to proceed, which produces a more singular field,  is as follows.  

Since 
$$
\Delta_{w = 0} = \frac{1}{2 \pi i} \dbar w^{-1}\d w 
$$
we find that we can take 
$$
F = \frac{\delta }{\eps (2 \pi i^2} \Delta_{t = 0} w^{-1} \d w.  
$$
Choosing a branch cut, we can write
$$
F = \partial \left( \frac{\delta }{\eps (2 \pi i)^2} \Delta_{t = 0} \log w\right)
$$ 
so that a gauge field sourced by the t'Hooft operator is
$$
A_0 = \frac{\delta }{\eps (2 \pi i)^2} \Delta_{t = 0} \log w.
$$
This is only well-defined away from the branch cut of $\log w$.  
 
The field $A_0$ is gauge equivalent to a less singular field Let us introduce a bump function $\phi(t)$ which is supported on a neighbourhood of $t = 0$ and whose integral is $1$.  Then, the $1$-form $\Delta_{t = 0}$ on $\R$ is cohomologous to $\phi(t) \d t$.  Therefore $A_0$ is gauge equivalent to
$$
A_0' = \frac{\delta }{\eps (2 \pi i)^2} \d t \phi(t) \log w.
$$

\subsection{t'Hooft operators with non-integer charge}
An $M5$ brane leads to a t'Hooft operator in the $5$-dimensional gauge theory of charge $N\delta /\eps$.  What does it mean to have a t'Hooft operator of non-integral charge? The answer is that it is not really an operator, but rather a deformation.

Suppose that $F$ is a form on $\R \times \C^2$ satisfying the equations \ref{tHooft} for the field-strength of a t'Hooft operator of charge $N\delta /\eps$. Locally, we can write $F = \partial A_0$ where $A_0$ is a field of the gauge theory.  Then, one can write the action functional $S$ in the background $A_0$,
$$
S_F (A) = S(A_0 + A). 
$$
Then, one can check that $S_F$ only depends on $F = \partial A_0$, and not on $A_0$ itself.

This is a special feature of the t'Hooft operators we consider, which are t'Hooft operators associated to the Abelian group $\mf{gl}_1$ in a non-commutative gauge theory.  The point is that the commutator $[A_0,-]$ that appears in the action in the background $A_0$ only depends on $\dpa{z_i}$-derivatives of $A_0$, and so only on $F$.  

For example, if we choose $F$ to  be a delta-current on a $3$-manifold which bounds the surface where the t'Hooft operator lives, then this deformation $S_F$ of the original action leads to the insertion of a defect on the $3$-manifold supporting $F$. In this case, the non-integral t'Hooft surface operator lives on the boundary of a $3$-dimensional operator.

Now, if the t'Hooft operator is of integral charge, it really can be thought of as an operator by itself and not as a surface operator living on the boundary of a $3$-manifold operator. The point is that in this case, the $3$-manifold operator is trivial. Indeed, away from the location of the t'Hooft operator, we can choose a $\C^\times$-bundle with non-trivial charge on the $S^2$ linking the surface of the t'Hooft operator, and whose curvature is $F$.  This means that away from the t'Hooft operator the deformation $S_F$ really arises considering the action $S$ in a non-trivial background, where the $GL(k+1)$ bundle is topologically non-trivial. 

To sum up, we have now found by two different methods that the $M5$ brane gives rise to a t'Hooft operator of charge $\delta/\eps$ in the $5$-dimensional gauge theory. 

\section{Relating the $5d$ gauge theory to the topological string}
In \cite{CosLi16}, we argued that a twist of type IIA string theory can be described by a topological string theory which is a mix of the topological $A$ and $B$-models.  Here, we have argued that $M$-theory, when placed in an $\Omega$-background, can be modelled by a $5$-dimensional gauge theory.  

In this section we will verify that these two statements are consistent with each other, by relating the compactification of our $5$-dimensional gauge theory to $4$ dimensions with a topological string theory obtained from deforming the topological $B$-model in $2$ complex dimensions.

Let us first recall the string-field theory of the topological $B$-model on a complex symplectic surface $X$.  The fields of the space-time theory (BCOV theory or Kodaira-Spencer theory \cite{BerCecOog94}) consists of polyvector fields $\PV^{\ast,\ast}(X)$ which are in the kernel of the operator 
$$\partial : \PV^{i,j} (X) \to \PV^{i-1,j}(X).$$ 
 The holomorphic volume form gives an isomorphism   
$$
\PV^{2,\ast}(X) \iso \Omega^{0,\ast}(X) \iso \Omega^{2,\ast}(X)
$$
and so defines an integration map
$$
\int : \PV^{2,2}(X) \to \C. 
$$
The action functional of the theory, in this formulation, is non-local in takes the form
$$
S(\alpha) = \tfrac{1}{2} \int \alpha \dbar \partial^{-1} \alpha + \tfrac{1}{3} \int \alpha^3 
$$
where we use the natural wedge product on polyvector fields. 

The fields in $\PV^{2,\ast}(X)$ do not propagate.  The only propagating fields are in $\PV^{1,\ast}(X)$ and $\PV^{0,\ast}(X)$.  The field in $\alpha^{1,\ast} \in \PV^{1,\ast}(X)$ is in the kernel of the operator $\partial$,  so that (at least locally) it can be written as 
$$
\alpha^{1,\ast} = \partial \left( A^{0,\ast}\pi\right)
$$
where we have introduce a new field $A \in \Omega^{0,\ast}(X)$, and $\pi \in \PV^{2,0}$ is the holomorphic Poisson tensor.

If we denote the field in $\PV^{0,\ast}(X)$ by $B$, we find that in this new set of fields the action is now local and reads
$$
S = \int \Omega B \dbar A  + \tfrac{1}{2} \int B \partial A \partial A. 
$$
Here, we have written everything in the language of differential forms, and $\Omega$ is the holomorphic volume form on $X$. 

The grading is such that $A^{0,1}$ and $B^{0,0}$ are bosonic. Thus, we should view $A^{0,1}$ as the $(0,1)$ component of a connection for $\mf{gl}(1)$ and $B^{0,0}$ as a Lagrange multiplier field.   The remaining fields are ghosts and antifields. 

If we only consider the quadratic term in the action $S$, the theory is holomorphic BF theory, which \cite{Cos13} is the holomorphic twist of $N=1$ supersymmetric gauge theory in $4$ dimensions.  

We thus find that BCOV theory on $X$ is a deformation of holomorphic BF theory, and so a deformation of the holomorphic twist of Abelian $N=1$ supersymmetric gauge theory.

\subsection{}
Let us now compare this theory to the reduction of our $5$-dimensional non-commutative gauge theory ormation of the holomorphic twist of Abelian $N=1$ supersymmetric gauge theory. To start with, we will consider compactifying the topological direction $\R$ to a circle.  The fields of the theory then become 
$$
\Omega^\ast(S^1) \otimes \Omega^{0,\ast}(X) [1].
$$ 
If we replace the de Rham complex of the circle by its cohomology, we find 
$$
 \Omega^{0,\ast}(X) [\theta] [1].
$$
where the odd parameter $\theta$ comes from $H^1(S^1)$.  We can identify this with the fields of BCOV theory on $X$, by saying
\begin{align*} 
A &\in \Omega^{0,\ast}(X)[1] \\
B & \in \theta \Omega^{0,\ast}(X).  
\end{align*}
Therefore, the $4$-dimensional partial connection $A \in \Omega^{0,1}(X)$ just comes from the components of the $5$-dimensional connection transverse to the $S^1$, whereas the Lagrange multiplier scalar $B \in \Omega^{0,0}(X)$ is the component of the $5$-dimensional connection along the $S^1$. 

The reduction to $4$ dimensions of the $5$-dimensional action is, when $X = \C^2$, 
$$
S_{5d} = \tfrac{1}{\delta} \int B\dbar A + \tfrac{\eps}{2 \delta} B \partial A \partial A + \tfrac{\eps^2}{2\cdot 2^2 \cdot 2! \delta} \int  \varepsilon_{ij} \varepsilon_{kl} B \left(  \tfrac{\partial}{\partial z_i} \tfrac{\partial}{\partial z_k} A\right)\left( \tfrac{\partial}{\partial z_j} A \tfrac{\partial}{\partial z_l} A \right) + O(\eps^3). 
$$ 
The first two terms in the action match the action for Kodaira-Spencer theory.  The remaining terms give a deformation of this theory.  

This deformation is non-commutative holomorphic BF theory in $4$ dimensions. It is natural to guess that this deformation arises as the holomorphic twist of $4$-dimensional non-commutative $N=1$ gauge theory.

We thus have a hierarchy of three theories which are deformations of each other: holomorphic BF theory, Kodaira-Spencer theory, and non-commutative holomorphic BF theory. We can rewrite the Kodaira-Spencer action as the action
$$
\int B \dbar A + \tfrac{1}{2} \int B \{A,A\}
$$
where $\{-,-\}$ indicates the  holomorphic Poisson bracket. This formulation makes the relationship between the reduction of the $5d$ gauge theory and the Kodaira-Spencer theory clear: the reduction of the $5d$ gauge  theory is obtained by replacing the Poisson bracket $\{A,A\}$ by the Moyal commutator.

 \section{Holography for $M5$ branes: overview} 

Si Li and I have proposed a twisted version of the AdS/CFT correspondence using our notion of twisted supergravity. Here I will check this proposal for the case of $M5$ branes in the $\Omega$-background.  Our proposal in general is phrased in the abstract language of Koszul duality, but here I will avoid that language in the interests of keeping things concrete.

We will work with the specialization where $\delta = 0$. This means that the $5$-dimensional gauge theory can be treated classically, and that the equivariant parameters corresponding to the rotation of the two planes in the $M5$ brane world-volume sum to zero. 



\subsection{}
There are a number of aspects of this story we need to discuss.  First, I need to explain precisely what we mean by the large $N$ limit of the algebra of operators on the $M5$ brane. I also need to explain something about the approach developed by Si Li and myself to interpret this large $N$ limit in terms of the gravity theory, which in this case is our $5$-dimensional non-commutative gauge theory.   I also want to explain the connection of the large-$N$ results derived here with the important results in the math literature due to Maulik-Okounkov \cite{MauOko12} and Schechtman-Vasserot \cite{SchVas13} stating that $W_{k+1+\infty}$ algebras act on the cohomology of the moduli of instantons (of arbitrary rank and charge) on an $A_k$ singularity.

\section{The large $N$ limit and $W_{k+1+\infty}$ algebras}
Let us start  by discussing how to take the large $N$ limit. Consider a gauge theory with gauge group $\mf{gl}(N)$.   At the classical level one can map the algebra of operators with gauge gruop $\mf{gl}(N+M)$ to the operators with gauge group $\mf{gl}(N)$.  A classical local operator is a gauge-invariant function of some $\mf{gl}(N+M)$-valued fields, such as scalar fields or the curvature of a connection.  One can use the block-diagonal embedding $\mf{gl}(N) \into \mf{gl}(N+M)$ to turn a function of the $\mf{gl}(N+M)$-valued fields into one of the $\mf{gl}(N)$-valued fields.

However, this does not work at the quantum level, as this map will not preserve the OPE of local operators (or indeed their anomalous dimensions).  The point is that the loop-level diagrams do not just involve traces of products of matrices in $\mf{gl}(N+M)$, but may also involve a trace of the identity matrix.  In the language of ribbon graphs, traces of the identity matrix arise from a face of a ribbon graph with no external lines.

Because there is no map of the operator product algebras, it is hard to make sense of taking the large-$N$ limit.  

However, one can solve this issue by using super-groups.  If we use the supergroup $\mf{gl}(N+M \mid M)$ as our gauge group, then if $M > M'$, there is a map at the quantum level from the space of operators of the $\mf{gl}(N+M \mid M)$ gauge theory to those of the $\mf{gl}(N+M'\mid M')$ gauge theory.  This map arises again by the block diagonal embeding $\mf{gl}(N+M'\mid M') \into \mf{gl}(N+M\mid M)$.    This map preserves all the structure present on local operators even at the quantum level. The point is that the trace of the identity in $\mf{gl}(N+M\mid M)$ is $N$, which is the same as the trace of the identity in $\mf{gl}(N+M'\mid M')$. Since we now have a map preserving all the structure, we can take the large $M$-limit of the algebra of operators, to yield the algebra of operators for a gauge group one could call $\mf{gl}(N+\infty \mid \infty)$.

The only effect that taking $M$ to infinity in this way has is that it removes all the trace relations from the definition of the space of local operators.   The $N$-dependence of the OPEs is unchanged. 

\subsection{}
In our discussion, we are attempting to take the large $N$ limit of the algebra of operators on a surface defect in Chern-Simons theory. In the case, there are fields living in the fundamental and anti-fundamental representations as well as in the adjoint representation.

Recall that on the surface defect, we have fields $\psi \in \Hom(\C^N, \C^{k+1})$ and $\psi' \in \Hom(C^{k+1}, \C^N)$, with action functional $\int \d z \op{Tr}_{\C^{k+1}} \psi \dbar \psi'$.  This is coupled to $GL(N)$ Chern-Simons theory. 

The supergroup version of this is that we have fields 
\begin{align*} 
\psi_{N+M\mid M} & \in  \Hom(\C^{N+M\mid M}, \C^{k+1})\\
\psi'_{N+M\mid M}& \in \Hom(\C^{k+1}, \C^{N+M\mid M}). 
\end{align*}
We then couple to $GL(N+M \mid M)$ Chern-Simons  theory.  Thus, the fields $\psi,\psi'$ are no longer purely fermionic, but contain $(N+M)k$ pairs of chiral free fermions and $Mk$ copies of a free $\beta-\gamma$ system.

As in the case when all the fields are adjoint valued, there are maps between the algebras of operators on the surface defect from the theory based on $(N+M\mid M)$ to that based on $(N+M' \mid M')$, whenever $M > M'$.  The map is defined using the decomposition 
$$
\C^{N+M\mid M} = \C^{N+M' \mid M'} \oplus \C^{M \mid M'}.
$$ 
This decomposition allows us to restrict a $GL(N+M\mid M)$-invariant function of the fields $\psi_{N+M\mid M},\psi_{N+M\mid M}'$ to a $GL(N+M'\mid M')$-invariant function of the fields $\psi_{N+M'\mid M'}, \psi'_{N+M'\mid M'}$.   These maps respect the operator product expansion between local operators. This allows us to take the large $M$ limit.

\subsection{}
Let us now see a bit more formally how this works.  We will only consider the case when $\delta = 0$, which means that the Chern-Simons theory in which the surface defect lives is treated classically. This means that the operators on the surface defect are just the $GL(N+M\mid M)$-invariant operators of the free theory with fields $\psi_{N+M\mid M}, \psi'_{N+M\mid M}$. We let $\mc{F}_{k+1}(N+M\mid M)$ be the vector space of $GL(N+M\mid M)$ invariant local operators. This space  has the structure of a vertex algebra.  There are maps
$$
\rho^{M}_{M'} : \mc{F}_{k+1}(N+M \mid M) \to \mc{F}_{k+1}(N+M' \mid M') 
$$
if $M > M'$, which preserve the vertex algebra structure.   We can take the large $M$ limit by considering sequences of operators $\alpha_M \in \mc{F}_{k+1}(N+M\mid M)$ satisfying
$$
\rho^{M}_{M'} \alpha_M = \alpha_{M'}.
$$
We let $\mc{F}_{k+1}(N+\infty \mid \infty)$ denote the large $M$ limit defined in this way.

We can write down explicitly a basis of local operators in the large $M$ limit. For every matrix $A \in \mf{gl}_{k+1}$, and every pair of non-negative integer $r, s$,  there is a $GL(N+M\mid M)$-invariant operator 
$$
\Oo(A,r,s)(z)  = \op{Tr}_{\C^{k+1}}  A \partial_z^r \psi(z) \partial_z^s \psi' (z).   
$$
These operators are not all primary, as 
$$
\partial_z \Oo(A,r,s) = \Oo(A,r+1,s) + \Oo(A,r,s+1). 
$$ 
We can then take normally ordered products of these operators and their derivatives to get a collection of operators of the form
$$
: \Oo(A_1,r_1,s_1)(z) \dots \Oo(A_m,r_m,s_m)(z) :  
$$
This normally ordered product is symmetrized so that it does not depend on the order in which the operators are written.  By invariant theory for $GL(N+M\mid M)$, operators of this form exhaust all possible $GL(N+M\mid M)$ invariant operators, and as $M \to \infty$ there are no trace relations among these operators.  

We thus find that there is an isomorphism from the vector space $\mc{F}_{k+1}(N+\infty \mid \infty)$ of local operators to the symmetric algebra $\Sym^\ast \mf{gl}_{k+1}[u,v]$ on the space of matrices whose entries are polynomials in variables $u$ and $v$.  The generators of the symmetric algebra are mapped to operators via
$$
A u^r v^s \mapsto \Oo(A,r,s)
$$ 
and the product of generators in this symmetric algebra get sent to normally ordered products of local operators. 
 
We will see later that the variables $u,v$ will be related by holography to the holomorphic coordinates on the space-time $\R \times \C^2$ of our $5$-dimensional gauge theory. The space $\mf{gl}_{k+1} [u,v]$ will be related to the fields of the gauge theory.

\subsection{}
One can calculate explicitly the OPE satisfied by these operators, and we will do so shortly.  We will find that it is a $W_{k+1+\infty}$ algebra.  Let us describe the $W_{k+1+\infty}$ algebra explicitly. 


For any vertex algebra, one can construct an associative algebra generated from the contour integrals of local operators over a circle in a cylinder.    For a vertex algebra $V$, we will denote this associative algebra by $\oint V$.   For example, the associative algebra constructed in this way from the chiral WZW vertex algebra is the universal enveloping algebra of the Kac-Moody vertex algebra with a fixed value of the central extension.  In our approach to AdS/CFT, we will find it easier to compute the commutators in this associative algebra than to compute the OPEs. Of course, these two objects are very closely related, so there is minimal loss of information.


The associative algebra $\oint W_{k+1+\infty}$ associated to the $W_{k +1+  \infty}$ vertex algebra has a simple description as the universal enveloping algebra of a certain Lie algebra.  Let us equip the algebra 
$$\Oo(\C^\times\times \C) = \C[z,z^{-1},w]$$
 of polynomial functions on $\C^\times \times \C$ with the non-commutative Moyal product $\ast_\eps$ we have used before.   If we want to emphasize the parameter $\eps$ we will write this algebra as $\Oo_\eps(\C^\times \times \C)$.  We will consider the Lie algebra with Lie bracket given by the commutator in the associative algebra $\Oo_{\eps}(\C^\times \times \C)\otimes \mf{gl}_{k+1}$.

This Lie algebra has a central extension, with cocycle given by an expression of the form 
$$
\omega(f,g) = \frac{1}{2 \pi i} \oint_{\abs{z} = 1, w = 0} \op{Tr}_{\C^{k+1}} f \partial g + O(\eps).
$$
The higher-order terms in $\eps$ are uniquely determined up to equivalence by the condition that they are $GL(k+1)$ invariant, involve a single trace in $\mf{gl}(k+1)$, and satisfy the cocycle condition. When restricted to the sub-Lie algebra $\Oo(\C^\times)\otimes \mf{gl}_{k+1}$, this is the usual Kac-Moody central extension. 
 
The associative algebra $\oint W_{k+1+\infty}(\eps,c)$ associated to the $W_{k+1+\infty}$ vertex algebra is 
$$
\oint W_{k+1+\infty}(\eps,c) = U_c ( \Oo_\eps(\C \times \C^\times) \otimes\mf{gl}_{k+1} ) 
$$
where $U_c$ means the quotient of the universal enveloping algebra of the central extension of $\Oo_\eps (\C \times \C^\times) \otimes \mf{gl}_{k+1})$ where the central parameter is set to $c$. Note that the dependence on $\eps$ in $W_{k+\infty}(\eps,c)$ can be removed by scaling the parameter $w$.  It is often convenient to keep the parameter $\eps$, however.  

When $k = 1$, the vertex algebra $W_{1 + \infty}(\eps,c = N)$ admits a quotient which is the $W_N$ algebra.   The vertex algebra $W_{1+\infty}(\eps,c)$ is a scaling limit of the $W_N$ algebras.

\begin{proposition}
As before, let $\mc{F}_{k+1}(N+\infty \mid \infty)$ be the vertex algebra obtained as the large $M$ limit of the $GL(N+M\mid M)$ invariants of a free $\beta-\gamma/b-c$ system built from $\Hom(\C^{k+1}, \C^{N+M \mid M})$.    Let $\oint \mc{F}_{k+1}(N+\infty \mid \infty)$ be the corresponding associative algebra. 

Then, there is an isomorphism of associative algebras
$$
\oint \mc{F}_{k+1}(N+\infty \mid \infty) \simeq \oint W_{k+1+\infty}(\eps = 1, c = N)
$$
to the $W_{k+1+\infty}$ algebra with central charge $N$.
\end{proposition}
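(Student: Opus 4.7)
The strategy is to build an explicit map from the universal enveloping algebra defining $\oint W_{k+1+\infty}(\eps{=}1,c{=}N)$ to the mode algebra of the $GL(N+\infty|\infty)$-invariants, and then show it is an isomorphism. I would first construct the map as a free-field realization: for $A \in \mf{gl}_{k+1}$ and a monomial $z^m w^s \in \Oo_\eps(\C^\times\times\C)$, set
$$\Phi(A\otimes z^m w^s) \;=\; \oint_{|z|=1} \frac{dz}{2\pi i}\, z^m \,\op{Tr}_{\C^{k+1}}\!\bigl(A\cdot (\eps \partial_z)^s \psi(z)\, \psi'(z)\bigr),$$
viewed as an element of $\oint \mc{F}_{k+1}(N+M|M)$. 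Since $\op{Tr}_{\C^{N+M|M}}(I)=N$ for every $M\ge 0$, this formula is compatible with the restriction maps $\rho^M_{M'}$ and so descends to the large-$M$ limit.

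Next I would verify that $\Phi$ intertwines the commutator in the central extension with the commutator of modes. Computing $[\Phi(A_1 p_1), \Phi(A_2 p_2)]$ via Wick's theorem on the propagator $\psi^a(z)\psi'_b(w) \sim \delta^a_b/(z-w)$, the result splits into a single-contraction piece and a double-contraction piece. The single-contraction piece reorganises, using the identification $z\leftrightarrow z$, $w \leftrightarrow \eps \partial_z$ of $\Oo_\eps(\C^\times\times\C)$ with a completion of the Weyl algebra of $\C^\times$, into $\Phi$ applied to the Moyal commutator; this is precisely the Lie bracket on $\Oo_\eps(\C^\times\times\C)\otimes \mf{gl}_{k+1}$. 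The double-contraction piece is a $c$-number proportional to $\op{Tr}_{\C^{N+M|M}}(I)=N$ times a bilinear expression in $p_1,p_2,A_1,A_2$, and therefore defines a 2-cocycle $\omega_N$.

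It remains to identify $\omega_N$ with the cocycle defining $W_{k+1+\infty}$. Rather than computing the double contractions explicitly to all orders in $\eps$, I would invoke a cohomological uniqueness statement: the space of $GL(k+1)$-invariant, $\C^\times$-equivariant, single-trace $2$-cocycles on $\Oo_\eps(\C^\times\times\C)\otimes \mf{gl}_{k+1}$ is one-dimensional modulo coboundaries. Hence any two such cocycles agreeing on the Kac-Moody sub-Lie-algebra $\Oo(\C^\times)\otimes \mf{gl}_{k+1}$ must coincide. A direct computation on this subalgebra (involving only the leading term in $w$ and the standard $\beta\gamma/bc$ OPE) shows that both $\omega_N$ and the $W_{k+1+\infty}$ cocycle restrict to $N$ times the Kac-Moody cocycle, which identifies them. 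This promotes $\Phi$ to a well-defined algebra homomorphism from $\oint W_{k+1+\infty}(\eps,N)$ to $\oint \mc{F}_{k+1}(N+\infty|\infty)$.

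Finally, surjectivity follows from invariant theory for the supergroup $GL(N+M|M)$: every invariant local operator in the free $\beta\gamma/bc$ system is a normally ordered polynomial in the bilinears $\Oo(A,r,s)$, and these bilinears are precisely the images under $\Phi$ of the generators. Injectivity follows from a PBW argument on the central extension combined with the fact that, as $M\to\infty$, ordered monomials in these bilinears are free of trace relations, hence linearly independent. The main technical obstacle is the cohomological uniqueness result in the third paragraph: showing that $\C^\times$-equivariant single-trace 2-cocycles on $\Oo_\eps(\C^\times\times\C)\otimes \mf{gl}_{k+1}$ form a one-parameter family requires a cyclic-homology computation analogous in spirit to the one underlying the quantization theorem of Section \ref{section:quantization}, and is where I would expect the bulk of the work to lie.
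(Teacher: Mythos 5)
Your proposal follows essentially the same route as the paper's proof: realize the generators as the bilinear currents $\oint \op{Tr}(\psi A z^m \partial_z^n \psi')$, use invariant theory/PBW for $GL(N+M\mid M)$ to get a basis free of trace relations in the large-$M$ limit, observe that the commutators close up to a single-trace cocycle which is unique up to scale, and fix the scale to $N$ by restricting to the Kac-Moody subalgebra where the $\beta\gamma/bc$ system has level $N$. The extra details you supply (compatibility with the maps $\rho^M_{M'}$, the Wick-contraction bookkeeping) are consistent with, and fill in, the same argument.
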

In fact, this isomorphism of associative algebras arises from an isomorphism of vertex algebras. A small extension of the proof given here will show we find an isomorphism of vertex algebras. 
\begin{proof} 
As before, let 
\begin{align*} 
\psi &\in \cinfty(\C^\times, \Hom(\C^{k+1}, \C^{N+M\mid M}) \\
\psi'&\in \cinfty(\C^\times, \Hom( \C^{N+M\mid M}, \C^{k+1}).
\end{align*}
A set of generators for the algebra  $\oint \mc{F}_{k+1}(N+\infty \mid \infty)$ is given by the currents 
$$
\mc{C}\left(A z^m \partial_z^n \right) =  \oint_{\abs{z} = 1}\op{Tr}_{\C^{N+M\mid M}} \left( \psi  A  z^m \partial_{z}^n \psi'\right) \d z 
$$
for $A \in \mf{gl}(k+1)$.  

These currents give a map of vector spaces
$$
\mf{gl}_{k+1} \otimes \op{Diff}(\C^\times) = \mf{gl}_{k+1} \otimes \C[z,z^{-1}, w] \to \oint \mc{F}_{k+1}(N+\infty \mid \infty).  
$$
Here we can identify $w$ with $\dpa{z}$ since we are setting the parameter $\eps$ to $1$.

There is a PBW theorem stating that, if we choose an ordering for a basis of $\mf{gl}_{k+1} \otimes \op{Diff}(\C^\times)$, then ordered products of the currents form a basis for the algebra $\oint \mc{F}_{k+1}(N+\infty \mid \infty)$.  This follows from invariant theory for the supergroup $GL(N+M\mid M)$. 

We need to check that these currents satisfy the commutation relations of the central extension of the Lie algebra $\mf{gl}_{k+1} \otimes \op{Diff}(\C^\times)$.  It is immediate that there is some cocycle $\omega$ such that  
$$
\left[ \mc{C}\left(A z^m \partial_z^n \right), \mc{C}\left(B z^i \partial_z^j \right) \right] = \mc{C} \left(\left[ (A z^m \partial_z^n, B z^i \partial_z^j    \right]    \right) +  \omega \left( A z^m \partial_z^n,  B z^i \partial_z^j \right)\op{Id}  
$$
In this expression , we are using the commutator on the Lie algebra $\op{Diff}(\C^\times) \otimes \mf{gl}_{k+1})$, and $\omega$ denotes a cocycle on this Lie algebra.

The cocycle $\omega$ must involve a single trace over $\mf{gl}_{k+1}$, and we have seen that there is a unique up to scale cocycle of this form (modulo exact cocycles).  To determine the scale, we can calculate the cocycle on the currents of the form $\mc{C}(A z^i)$.  These currents form a copy of the affine Kac-Moody algebra, acting in the usual way on a $\beta-\gamma/b-c$ system built from the $GL(k+1)$ representation $\op{Hom}(\C^{k+1}, \C^{N+M\mid M})$.  The central charge of the free fermion system is $N$,  so that we find the $W_{k+2+\infty}$ algebra with central charge $N$.

\end{proof}

\subsection{}
This result implies that there are homomorphisms of associative algebras
$$
\oint W_{k+1+\infty}(\eps = 1, c = N) \to \oint \mc{F}_{k+1}(N) 
$$ 
for all $N$.  This map is always surjective, with kernel given by the trace relations that hold at finite $N$.  As $N \to \infty$, there are no trace relations, so that this map  becomes an isomorphism. 

We can thus view $ W_{k+1+\infty}(\eps,N)$ as the large $N$ limit of the algebra of operators on an $M5$ brane compactified on the Taub-NUT manifold $TN_k$ with an $A_k$ singularity at the origin, and placed in the $\Omega$ background.

There are some results in the mathematics literature \cite{MauOko12,SchVas13} which justify the idea that this large $N$ limit should be the $W_{k+1+\infty}$ algebra.


Recall that a class of local operators (called instanton operators) in a topological twist \footnote{ This twist is topological in the weak sense that all translations are $Q$-exact for the supercharge we use to twist. } of $5d $ maximally supersymmetric gauge theory act on the cohomology of instanton moduli. These instanton operators descend from operators in $6d $ $(2,0)$ theory. 

The algebra $\oint \mc{F}_{k+1}(N)$ is built the algebra of operators on  $N$ $M5$ branes in the $\Omega$ background, and so gives the algebra of local operators in the $5$-dimensional maximally supersymmetric gauge theory, in the $\Omega$-background.  An analysis of the supersymmetry tells us that the operators $\oint \mc{F}_{k+1}(N)$ live in the twist of the $5$-dimensional gauge theory related to the cohomology of instanton moduli space.  We conclude that the algebra $\oint \mc{F}_{k+1}(N)$ should act on the cohomology of the moduli space of instantons of rank $N$ on an $A_k$ singularity, and of arbitrary charge.
  
It turns out that the large-$N$ version of this statement has already been proven in the math literature,  by Maulik-Okounkov \cite{MauOko12} and Schiffmann-Vasserot \cite{SchVas13}. 
\begin{theorem*} 
The algebra $\oint W_{k+1+\infty}(\eps,N)$ acts on the equivariant cohomology of the moduli of instantons of rank $N$ on an $A_{k}$ singuularity.  Here $\eps$ is an equivariant parameter for the $\C^\times$ action on $\C^2 / \Z_{k}$ which preserves the holomorphic symplectic form.
\end{theorem*}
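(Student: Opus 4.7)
The plan is to assemble the representation from the chain of identifications established earlier in the paper. First I would invoke the proposition immediately preceding the theorem, which identifies $\oint W_{k+1+\infty}(\eps,N)$ with the algebra $\oint \mc{F}_{k+1}(N+\infty\mid\infty)$: the large-$M$ limit of the $GL(N+M\mid M)$-invariant local operators in the free $\beta\gamma/bc$ system on $\Hom(\C^{k+1},\C^{N+M\mid M})$ living on the surface defect in $GL(N)$ Chern--Simons theory at $\delta=0$.  By Section~\ref{section_M5}, at any fixed $N$ this algebra is the algebra of operators on a stack of $N$ $M5$ branes wrapping $TN_k$ and placed in the twisted $\Omega$-background, and the passage to the supergroup limit merely eliminates the trace relations.

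Second, I would compactify the $(2,0)$ theory on a circle to obtain $5d$ maximally supersymmetric Yang--Mills on $TN_k \times \R^2_{\eps} \times \R_t$, with $\Omega$-background rotating the $\R^2_{\eps}$-factor.  The $\Omega$-background localises local operators to the tip of the Taub-NUT crossed with $\R_t$, and the OPE along $\R_t$ upgrades them into the associative algebra $\oint W_{k+1+\infty}(\eps,N)$.  Standard instanton-counting identifies the Hilbert space of this $5d$ theory with
\[
\mc{H}_N \;=\; \bigoplus_{c\ge 0} H^\ast_{T}\!\bigl(\mc{M}_{N,c}(\til{\C^2/\Z_k})\bigr),
\]
where $T$ contains a $\C^\times$ scaling $\til{\C^2/\Z_k}$ with equivariant parameter $\eps$ acting by weight one on the holomorphic symplectic form.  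Because local operators act on the Hilbert space of any quantum field theory, this produces the required action of $\oint W_{k+1+\infty}(\eps,N)$ on the equivariant instanton cohomology.  Under the dictionary, the currents $\mc{C}(A z^m \partial_z^n)$ become operators built from tautological classes on $\mc{M}_{N,c}$; the Kac--Moody sub-Lie algebra $\mf{gl}_{k+1}[z,z^{-1}]$ acts by the affine currents of Nakajima and Grojnowski, and the generators involving $\partial_z$ produce the higher spin $W$-descendants.

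The main obstacle is matching the central extension produced by the $M5$-brane realisation with the one appearing in the geometric action.  Uniqueness of $GL(k+1)$-invariant single-trace cocycles on $\mf{gl}_{k+1}\otimes \op{Diff}(\C^\times)$ (which was used in the preceding proposition) reduces this to matching a single coefficient, and for that it suffices to restrict to the Kac--Moody subalgebra $\mf{gl}_{k+1}[z,z^{-1}]$.  On the free-fermion side the level is $N$ because the supertrace of the identity on $\Hom(\C^{k+1},\C^{N+M\mid M})$ equals $(k+1)N$; on the geometric side the level is $N$ by Nakajima's theorem for ALE instanton cohomology.  Agreement then propagates to all generators by the uniqueness statement, completing the plan.
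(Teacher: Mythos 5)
There is a genuine gap here: the statement you are asked about is not proved in the paper at all --- it is quoted as a known result of Maulik--Okounkov \cite{MauOko12} and Schiffmann--Vasserot \cite{SchVas13}, and their proofs are purely geometric (stable envelopes and $R$-matrices on Nakajima quiver varieties in the first case, Hecke correspondences and shuffle/degenerate DAHA techniques in the second), making no use of $M5$ branes, free fermions, or the $\Omega$-background. Your proposal instead tries to derive the theorem from the physical chain of identifications in section \ref{section_M5}, and the crucial steps --- ``standard instanton-counting identifies the Hilbert space of this $5d$ theory with $\bigoplus_c H^\ast_T(\mc{M}_{N,c})$'' and ``local operators act on the Hilbert space, so $\oint W_{k+1+\infty}(\eps,N)$ acts'' --- are exactly the assertions that would need to be proven. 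The paper itself treats this line of reasoning only as a heuristic: it is the ``physics derivation'' offered in support of the \emph{conjecture} (stated just before the theorem) that the action factors through $\oint \mc{F}_{k+1}(N)$, and the paper is explicit that the mathematical content of the boxed statement is supplied by the cited literature, not by this argument. In particular, nothing in the paper (or in your proposal) constructs the action of the currents $\mc{C}(Az^m\partial_z^n)$ as correspondences or tautological operations on the instanton moduli spaces; asserting that they ``become operators built from tautological classes'' is a restatement of what must be shown, not a proof.

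There is also a secondary issue in your central-charge matching: even granting the physical identifications, your argument would produce an action of the finite-$N$ algebra $\oint\mc{F}_{k+1}(N)$ (with its trace relations), which is precisely the paper's open conjecture, and one would still have to know independently that the geometric action of $\oint W_{k+1+\infty}(\eps,N)$ of Maulik--Okounkov and Schiffmann--Vasserot is compatible with it --- so the logic becomes circular if the goal is to prove the quoted theorem. (Minor point: the level of the $\mf{gl}_{k+1}$ Kac--Moody subalgebra on the free-field side is $N$ because the supertrace of the identity on $\C^{N+M\mid M}$ is $N$; the count $(k+1)N$ of fermion pairs is not the relevant quantity.) The correct ``proof'' relative to this paper is simply to cite \cite{MauOko12} and \cite{SchVas13}; a self-contained argument would require reproducing their geometric constructions, which is a different undertaking from the holographic/free-fermion analysis you outline.
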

Since $\oint \mc{F}_{k+1}(N)$ becomes $\oint W_{k+1+\infty}(\eps,N)$ as $N \to \infty$, this fits with our story.  These considerations lead to the following conjecture.
\begin{conjecture}
The action of $\oint W_{k+1+\infty}(\eps,N)$ on the cohomology of the moduli of instantons on $\C^2/\Z_{k}$ factors through an action of $\oint \mc{F}_{k+1}(N)$, using the homomorphism
$$
\oint W_{k+1+\infty}(\eps,N) \to \oint \mc{F}_{k+1}(N)
$$
discussed above.  
\end{conjecture}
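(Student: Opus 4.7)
The plan is to construct the $\oint \mc{F}_{k+1}(N)$ action on $\bigoplus_n H^*_{\mathrm{eq}}(\mc{M}_{N,n}^{A_k})$ directly, by geometric methods, and then identify it with the Maulik--Okounkov/Schiffmann--Vasserot action via the surjection $\pi \colon \oint W_{k+1+\infty}(\eps, N) \twoheadrightarrow \oint \mc{F}_{k+1}(N)$. This circumvents the difficult problem of proving that every trace relation acts as zero, replacing it by a comparison of two actions on a preferred generating set.

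First, I would produce the direct action. The ADHM-type description realizes $\bigsqcup_n \mc{M}_{N,n}^{A_k}$ as a holomorphic symplectic quotient of a linear space carrying bifundamental $\Hom(\C^{k+1}, \C^N)$ data. The bilinears $\Oo(A, r, s)(z) = \op{Tr}_{\C^{k+1}}(A \, \partial_z^r \psi \, \partial_z^s \psi')$ should be realized geometrically as operators built from the universal $\C^{k+1}$-bundle on Hecke-type correspondences relating $\mc{M}_{N,n}$ and $\mc{M}_{N, n \pm 1}$, with the derivative decorations encoding tautological class insertions. Normal ordering and the vertex algebra OPE are encoded by composition of such correspondences after standard regularization. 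This yields the sought homomorphism $\oint \mc{F}_{k+1}(N) \to \op{End}(\bigoplus_n H^*_{\mathrm{eq}}(\mc{M}_{N,n}^{A_k}))$ \emph{by construction}, independently of $W_{k+1+\infty}$.

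Second, I would match this action with the Maulik--Okounkov action using $\pi$. On the distinguished generating set -- currents $\mc{C}(A z^m \partial_z^n)$ on the $W$-side, mapped by $\pi$ to the $\Oo(A, r, s)$ on the $\mc{F}$-side in the proof of the large-$N$ isomorphism -- it suffices to check that the two actions agree. On the Kac--Moody subalgebra $\mf{gl}_{k+1}[z, z^{-1}]$ both coincide with the standard level-$N$ action obtained from the universal $\C^{k+1}$-bundle at framing infinity, and both extend equivariantly to the full Moyal Lie algebra $\mf{gl}_{k+1} \otimes \Oo_\eps(\C^\times \times \C)$. A uniqueness statement for such an equivariant extension, using the residual $\C^\times \times \C^\times$ symmetries that scale $\C^2/\Z_k$ and force the generating currents to have fixed graded weights, would then pin down the agreement.

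The main obstacle is the first step: giving a mathematical construction of the full $\mc{F}_{k+1}(N)$ action, not just its affine-Lie-algebra subquotient produced in Nakajima's classical work. The most promising route is via the cohomological Hall algebra / shifted Yangian of the $A_k$ ADHM quiver, which acts on instanton cohomology and contains a degeneration of the Moyal Lie algebra; the task reduces to an invariant-theoretic statement showing that its action factors through the $\mf{gl}_N$-trace quotient corresponding to $\mc{F}_{k+1}(N)$. A subsidiary complication is the $\Z_k$-equivariant combinatorics for $k \geq 1$, which should yield to the Nakajima--Yoshioka analysis of torsion-free sheaves on resolved $A_k$-singularities. If both pieces go through, comparison of the two actions on generators completes the proof and, in particular, implies the apparently stronger algebraic statement that every trace relation annihilates instanton cohomology.
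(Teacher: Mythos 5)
The statement you are addressing is stated in the paper as a conjecture: the paper offers no proof, only the physics motivation that $\oint \mc{F}_{k+1}(N)$ is the algebra of operators on $N$ $M5$ branes in the $\Omega$-background, hence should act on instanton cohomology via instanton operators of the $5d$ gauge theory, compatibly with the Maulik--Okounkov/Schiffmann--Vasserot action of $\oint W_{k+1+\infty}(\eps,N)$. Your proposal is therefore not being measured against an existing argument, and on its own terms it is a research program rather than a proof: you say so yourself ("if both pieces go through"). The logical reduction you set up is sound --- constructing an $\oint \mc{F}_{k+1}(N)$-action directly and showing its pullback along $\pi$ equals the MO/SV action is equivalent to showing the trace relations annihilate instanton cohomology --- but neither of your two steps is carried out, and each conceals the real difficulty.

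Concretely: (a) the "first step" is exactly the open content of the conjecture. Realizing the bilinears $\Oo(A,r,s)$ as Hecke-type correspondences with tautological insertions, and showing that normal-ordered products and the OPE of the vertex algebra $\mc{F}_{k+1}(N)$ match composition of correspondences "after standard regularization," is not a known construction for $k\ge 1$ (Nakajima/Baranovsky-type constructions give the affine subalgebra and Virasoro-type pieces, not the full tower of currents); and routing through the cohomological Hall algebra or shifted Yangian of the ADHM quiver produces an action of a \emph{different} algebra, so the "invariant-theoretic statement" that it factors through the $\mf{gl}_N$-trace quotient is again precisely the assertion to be proved --- the argument is circular at that point. (b) The comparison step relies on an unjustified rigidity claim: agreement on the Kac--Moody subalgebra $\mf{gl}_{k+1}[z,z^{-1}]$ together with $\C^\times\times\C^\times$-equivariance does not by itself determine an action of the full Lie algebra $\mf{gl}_{k+1}\otimes\Oo_\eps(\C^\times\times\C)$; the MO/SV action is characterized by stable-envelope/Whittaker-type properties, and many graded-equivariant extensions of a given $\widehat{\mf{gl}}_{k+1}$-action could a priori exist. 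You would need a genuine uniqueness theorem (e.g.\ a highest-weight or Whittaker-vector characterization of the module, or a uniqueness statement for the $R$-matrix/coproduct structure) before the check on generators pins anything down. Finally, be careful about the specialization of equivariant parameters: $\mc{F}_{k+1}(N)$ as defined in the paper lives at $\delta=0$ (the two world-volume rotation parameters summing to zero), so the comparison with MO/SV must be made at the corresponding specialization of their two-parameter family, which your sketch does not address.
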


\section{ $W_{k+1+\infty}$ algebras at level $0$ from the $5$-dimensional gauge theory }
So far, we have argued that the $W_{k+1+\infty}$ algebra arises as the large $N$ limit of the algebra of operators on a stack of $M5$ branes in the $\Omega$ background.  We would like to see the same $W_{k+1+\infty}$ algebra from the gravity side, that is, from the $5$-dimensional non-commutative gauge theory.  It is certainly very plausible that one can do this, because both the $W_{k+1+\infty}$ algebra and the $5$-dimensional gauge theory are closely related to the algebra $\C[z,w]\otimes \mf{gl}_{k+1}$ of matrices whose entries are non-commutative functions on $\C^2$.

The approach to holography we will take is an example of a general method developed by Si Li and the author.  Before explaining the general approach, however, we will start by seeing how some very simple physical considerations allow us to derive the $W_{k+1+\infty}$ algebra from the gauge theory in the special  case  when $N = 0$.    This is the case when htere are the same number of $M5$ branes and anti-$M5$ branes.  Thus, the algebra of operators $\mc{F}_{k+1}(M \mid M)$ on the $M5$ branes is the $GL(M \mid M)$ invariants of the $\beta-\gamma/b-c$ system built from the super vector spaces $\Hom(\C^k, \C^{M \mid M})$. In this special case, the central parameter in the $W_{k+1+\infty}$ algebra is set to zero.  

In general, a stack of $M5$ branes gives rise to a t'Hooft operator in the $5$-dimensional gauge theory of charge $N\delta /\eps$.  In the simplified setting when $N=0$, the $(M\mid M)$ $M5$ branes do not introduce a singularity in the fields of the $5$-dimensional gauge theory.

Suppose that our $5$-dimensional gauge theory lives on $\R \times \C \times \C^\times$, with coordinates $t,w,z$, and with holomorphic symplectic form $\d w \d z$.  Let us place our $(M \mid M)$ $M5$ branes at $t = w = 0$.  This involves coupling the $\beta-\gamma/b-c$ system living on $\C^\times$ to the $5$-dimensional gauge field, and then taking the $GL(M\mid M)$ invariants.  

Suppose that we have a gauge field of the $5$-dimensional theory of the form
$$
A = f(\abs{z}) z^m w^n L \d \zbar
$$
where $n \ge 0$, $L \in \mf{gl}(k+1)$, and $f(\abs{z})$ is a bump function which is $0$ outside of a small neighourhood of the circle where $\abs{z} = 1$.  We will normalize $f$ so that $\int f(\abs{z}) \d \zbar \d z/z  = 2 \pi i$.   

 Let us suppose that the field $A$ satisfies the linearized equations of motion of the theory, namely
$$
(\d_{dR}^{\R} + \dbar^{\C \times \C^\times} ) A = 0. 
$$
Then, if $\alpha$ is a parameter of square zero, we can put the theory on the $M5$ branes in the background of the field $\alpha A$.  This introduces a first-order deformation of the theory on the $M5$ branes,  localized near the circle $\abs{z} = 1$.  Explicitly, this first-order deformation is given by the Lagrangian
$$
\mc{C}(A) = \int f(\abs{z})\op{Tr}_{\C^{M \mid M} } \left(  \psi L z^m \partial_z^n \psi' \right) \d z \d \zbar.   
$$
As before
 \begin{align*} 
\psi & \in  \Hom(\C^{k+1}, \C{M\mid M})\\
\psi'& \in \Hom(\C^{M\mid M}, \C^{k+1}). 
\end{align*}
are the fields of the $\beta-\gamma/b-c$ system on the $M5$ branes.

Now, the Lagrangian $\mc{C}(A)$ is changed to a BRST equivalent Lagrangian if we change $A$ by an infinitesimal gauge transformation. We can make such a gauge transformation which changes $A$ into the singular gauge field
$$
A' = \Delta_{\abs{z} = 1}  z^m w^n L \d \zbar.  
$$
The current $\mc{C}(A')$ associated to this singular gauge field is
$$
\mc{C}(A') = \oint_{\abs{z} = 1} \op{Tr}_{\C^{M \mid M} } \left(  \psi L z^m \partial_z^n \psi' \right) \d z \d \zbar.   
$$ 
Thus, $\mc{C}(A)$ is BRST equivalent to this current. This  current is one we wrote down earlier, which we saw generated the associative algebra $\oint \mc{F}(M \mid M)$. 

\subsection{}
Any gauge field $A$ of the $5$-dimensional gauge theory which satisfies the linearized equations of motion, and is supported on a small neighbourhood of the set $\R \times \C \times\{\abs{z} = 1\}$, gives rise to a current in the theory on the $M5$ branes.  Up to gauge equivalence, the only such gauge fields are the one we wrote down.  To see this, note that we can always apply a gauge transformation to make any such gauge field independent of both $t$ and $\d t$ (where $t$ is the coordinate on $\R$).  Once we do this, we find that such gauge fields modulo infinitesimal gauge transformations can be described by a certain relative Dolbeault cohomology group: they are elements of
$$
H^1_{\dbar} (\C^\times \times \C, U)\otimes\mf{gl}_{k+1}
$$ 
where $U \subset \C^\times \times \C$ is the open subset which is the complement of a neighbourhood of $\R \times \C \times \{\abs{z} = 1\}$.  (For example, we can take $U$ to be the region where $\abs{ \log \abs{z} } > 1$).  Relative Dolbeault cohomology is simply the cohomology of the subcomplex of the Dolbeault complex consisting of Dolbeault forms vanishing on the specified open subset. 

A simple exact sequence calculation tells us that a (topological) basis for this relative Dolbeault cohomology group is provided by $f(\abs{z}) z^m w^n \d \zbar$, for $m\in \Z$, $n \ge 0$, and where $f(\abs{z})$ as before is a bump function supported near $\abs{z} = 1$ and normalized so that $\int f(\abs{z}) \d \zbar \d \log z = 2 \pi i$.  

To sum up what we have found so far, we see that the currents generating the algebra $\oint \mc{F}(M \mid M)$ on the $(M\mid M)$ $M5$ branes are all realized by putting the $M5$ branes in a background gauge field of the non-commutative gauge theory which is supported near the set $\R \times \C \times\{\abs{z} = 1\}$.  Further, we can realize \emph{only} these currents in this way: there is a bijection between the currents of the form $\oint \psi L z^m \partial^n_z \psi'$ which generate the algebra $\oint \mc{F}(M\mid M)$, and $5$-dimensional gauge fields satisfying the equations of motion and which are supported near $\abs{z} = 1$, modulo gauge equivalence.   

\subsection{}
A similar analysis gives us a description of local operators in the theory on the $(M \mid M)$ $M5$ branes.  For this, we consider the $5$-dimensional gauge theory on $\R \times \C \times \C$.  We then consider solutions to the linearized equations of motion of the theory which are supported near the origin in the $z$-plane.  

To describe the space of such gauge fields, fix a bump function $f(\abs{z})$ which is $0$ when $\abs{z} > r$, for some small number $r$.  Let us normalize $f$ so that $\int f(\abs{z}) \d z \d \zbar = 1$.  Then, for any matrix $L \in \mf{gl}_{k+1}$, a solution to the linearized equations of motion is given by
$$
A = w^m \partial_{z}^n f(\abs{z}) L \d \zbar.
$$   
$A$ is gauge equivalent to the singular gauge field
$$
\til{A} = w^m \partial_z^n \Delta_{z = 0} L \d \zbar.
$$
As $L$, $m$ and $n$ vary, these gauge fields form a topological  basis for the space of solutions to the linearized equations of motion of the theory which are localized near $z = 0$, taken modulo gauge equivalences.  (One proves this by a calculation in relative Dolbeault cohomology).

The response of the theory on the $M5$ branes to the field $A$ is given by the Lagrangian
$$
\int \op{Tr}_{\C^{M \mid M}} \psi  L\partial_z^m \psi'  \dpa{z}^n f(\abs{z}) \d \zbar. 
$$
This is BRST equivalent to the local operator
$$
(-1)^n \partial_z^n \left(\op{Tr}_{\C^{M \mid M}} \psi L \partial_z^m \psi' \right)(z = 0). 
$$
As we have seen, normally ordered products of such local operators provide a basis for the space $\mc{F}_{k+1}(M \mid M)$ of the space of local operators, in the $M \to \infty$ limit.

Let's interpret the operators which have a single $\psi$ and a single $\psi'$ as being the analog of single-trace operators in a gauge theory where all the fields are adjoint valued. If we do so, we find that we have proved an analog of a basic statement in the AdS/CFT correspondence. We find that single-trace local operators in the theory on $(\infty \mid \infty)$ $M5$ branes are in bijection with solutions to the equations of motion of the gauge theory supported near a fixed value of $z$ (modulo gauge equivalence).

\subsection{}
So far, we have seen how we can realize  both local operators and currents in the theory  on $(M \mid M)$ $M5$ branes as a response to a gauge field.  Can we also see the commutation relations among currents (and the OPE between local operators) from the gauge theory? 

The answer is yes, and we will find precisely the $W_{k+1+\infty}$ algebra we saw before. We will start by analyzing the commutation relations between currents.

In this subsection, I will allow myself to use singular $5$-dimensional gauge fields which involve $\delta$-functions, to make the exposition easier. These fields are all are gauge equivalent to smooth gauge fields in which the $\delta$-function is replaced by a bump function, and the analysis can be done with these smooth gauge fields instead.

Consider the gauge fields
\begin{align*} 
A_{\abs{z} = c} &= \Delta_{\abs{z} = c} z^m w^n X \d \zbar\\
B_{\abs{z} = c} &=  \Delta_{\abs{z} = c} z^r w^s Y \d \zbar
\end{align*}
for matrices $X,Y \in \mf{gl}_{k+1}$, integers $m,r \in \Z$, and non-negative integers $n,s$. Also, $c$ is any positive real number.

There is a linearized gauge transformation relating $A_{\abs{z} = c}$ and $A_{\abs{z} = c'}$ for any $c < c'$, since 
$$
\left(\d_{dR}^{\R} + \dbar^{\C \times \C^\times} \right) \Delta_{c \le \abs{z} \le c'} z^m w^n X = A_{\abs{z} = c'} - A_{\abs{z} = c}.   
$$

Let's introduce parameters $\alpha,\beta$ of square zero.  By putting the theory on the $M5$ brane in the background with gauge fields $\alpha A_{\abs{z} = 1} + \beta B_{\abs{z} = 2}$, we get a two-parameter deformation of the theory on the $M5$ branes.  Since the parameters $\alpha,\beta$ are square zero, the only terms we see in this two-parameter deformation are the coefficients of $\alpha$, $\beta$, and $\alpha \beta$. 

We can view the Hilbert space of the theory on the $M5$ branes as being the local operators at $z = 0$.  Putting the theory in the background given by the gauge field $\alpha A_{\abs{z} = 1}$ has the effect of acting on the Hilbert space by $\exp(\alpha \mc{C}(A))$, where $\mc{C}(A)$ is the current associated to $A$.  Similarly, putting the theory in the background given by $\alpha A_{\abs{z} = 1} + \beta B_{\abs{z} = 2}$ gives rise to the operator $\exp (\beta \mc{C}(B)) \exp (\alpha \mc{C}(A))$.  Putting the theory in the background given by $\alpha A_{\abs{z} = 1} + \beta B_{\abs{z} = 1/2}$ gives rise to  $\exp (\alpha \mc{C}(A))\exp (\beta \mc{C}(B))$. 

Now, since we are working modulo $\alpha^2, \beta^2$, we have
$$
\exp (\alpha \mc{C}(A))\exp (\beta \mc{C}(B)) - \exp (\beta \mc{C}(B)) \exp (\alpha \mc{C}(A)) = \alpha \beta [\mc{C}(A), \mc{C}(B)].  
$$ 
Thus, to understand the commutator of the currents $\mc{C}(A)$ and $\mc{C}(B)$, we need to understand whether the deformation of the theory on the $M5$ branes obtained by putting them in the background  $\alpha A_{\abs{z} = 1} + \beta B_{\abs{z} = 2}$ is equivalent to that obtained by putting it in the background $\alpha A_{\abs{z} = 1} + \beta B_{\abs{z} = 1/2}$.  

Now, two gauge fields in $5$-dimensional gauge theory which are gauge equivalent give rise to equivalent deformations of the theory on the $M5$ brane. Thus, to understand the commutator of the currents, we can analyze whether the gauge fields    $\alpha A_{\abs{z} = 1} + \beta B_{\abs{z} = 1/2}$ and  $\alpha A_{\abs{z} = 1} + \beta B_{\abs{z} = 2}$ are gauge equivalent. 

We can try to write down a gauge transformation relating these two gauge fields by using the infinitesimal gauge transformation
$$
\gamma = \beta \Delta_{1/2 \le \abs{z} \le 2}   z^r w^s Y \d \zbar
$$
(where recall that $Y \in \mf{gl}_{k+1}$).  If we apply this gauge transformation to the gauge field  $ \alpha A_{\abs{z} = 1} + \beta B_{\abs{z} = 1/2}$, we find
\begin{align*} 
 \alpha A_{\abs{z} = 1} + \beta B_{\abs{z} = 1/2}& \mapsto \left(\dbar^{\C^{\times} \times \C} + \d_{dR}^{\R} \right)\beta \gamma + \alpha \beta [\gamma, A_{\abs{z} = 1} ] \\
&=   \alpha A_{\abs{z} = 1} + \beta B_{\abs{z} = 2} +  \alpha \beta [\gamma, A_{\abs{z} = 1} ] \\
&=  \alpha A_{\abs{z} = 1} + \beta B_{\abs{z} = 2} +  \alpha \beta \Delta_{\abs{z} = 1} \left[z^r,w^s Y, z^m w^n X \right]  
\end{align*}
where on the last line we have used the commutator in the associative algebra $\C[z,w] \otimes \mf{gl}_{k+1}$ which involves the Moyal product on $\C[z,w]$. 

This implies that the commutator of the  currents $\mc{C}(A)$ and $\mc{C}(B)$ is given by the commutator in the Lie algebra $\C[z,w] \otimes \mf{gl}_{k+1}$.  Therefore, we have derived, just by thinking about the $5$-dimensional non-commutative gauge theory, the current commutator in the algebra of currents living on $(M \mid M)$ $M5$ branes.

\subsection{}
We can phrase this result as an example of a general conjecture describing current algebras in the theory living on a brane in terms of the dual gravity theory.  This conjecture is a special case of a formulation of AdS/CFT developed by Si Li and the author.  

Suppose we are in a string or $M$-theory set-up involving a brane which does not source any fields in the ambient supergravity theory. Typically this happens if the theory on the brane involves a $GL(M\mid M)$ supergroup, so that we have $M$ branes and $M$ anti-branes \cite{DijHeiJefVaf16}.  Suppose the space-time manifold for the supergravity theory is of the form $\R \times S^k \times \R^l$, and that the brane lives on $\R \times S^k$.  Then, by compactifying on $S^k$ while including all KK modes, we can treat the theory on the brane as a quantum-mechanical system with some algebra of operators $\mc{A}(M \mid M)$.  

We want to express this associative algebra $\mc{A}(M \mid M)$ in terms of the ambient supergravity theory, in the limit when $M \to \infty$ and when we further take the planar limit.  (In the case of the theory on $(M \mid M)$ $M5$ branes, taking the planar limit amounted to setting $\delta = 0$). 

To do this, we observe that the equations of motion for the supergravity theory can be written as the Maurer-Cartan equation for some homotopy Lie (or $L_\infty$) algebra $\mc{L}$.  This is a standard part of the BV formalism: the space $\mc{L}^i$  of degree $i$ elements of the homotopy Lie algebra $\mc{L}$ is the collection of fields of the theory of ghost number $i-1$.  Thus, for instance, $\mc{L}^1$ consists of ordinary fields, $\mc{L}^0$ of ghosts, $\mc{L}^2$ of anti-fields, etc.  The $L_\infty$ operations encode the structure of the theory.  For instance, the $L_\infty$ maps
$$
\mc{L}^0 \times (\mc{L}^1)^{\otimes n} \to \mc{L}^1
$$
describe how the infinitesimal gauge symmetries in $\mc{L}^0$ act, in a non-linear way, on the fields in $\mc{L}^1$.   The $L_\infty$ operations
$$
(\mc{L}^1)^{\otimes n} \to \mc{L}^2
$$
encode the equations of motion of the theory. An element $\alpha \in \mc{L}^1$ satisfies the Maurer-Cartan equation
$$
\sum_{n \ge 1} \frac{1}{n!} l_n(\alpha,\dots,\alpha) = 0 
$$
if and only if it satisfies the equations of motion. 

In the  case of the $5$-dimensional non-commutative gauge theory on $\R \times \C^\times \times \C$, the $L_\infty$ algebra is the differential graded Lie algebra 
$$
\mc{L} = \Omega^\ast(\R) \what{\otimes} \Omega^{0,\ast}(\C^\times \times \C) \otimes \mf{gl}_{k+1}. 
$$
The differential is $\d_{\mc{L}} = \d_{dR}^{\R} + \dbar^{\C^\times \times \C}$, and the Lie bracket arises as the commutator for the associative product obtained by combining wedge product of forms, the Moyal product on $\C \times \C^\times$, and the product of matrices in$\mf{gl}_{k+1}$.  

This dgla is quasi-isomorphic to the Lie algebra
$$
\op{Hol}(\C^\times \times \C) \otimes \mf{gl}_{k+1}
$$
where the algebra of holomorphic functions on $\C^\times \times \C$ is equipped with the Moyal product.  Up to completion, we can identify this with $\C[z,z^{-1},w] \otimes \mf{gl}_{k+1}$ where $[z,w] = \eps$. 

In this case, the branes we are considering consist of $(M \mid M)$ $M5$ branes wrapping $\C^\times$.  We have seen that as $M \to \infty$ the algebra of operators on the quantum-mechanical system obtained by reducing the $M5$ brane theory on a circle is the universal enveloping algebra
$$
U( \C[z,z^{-1},w] \otimes \mf{gl}_{k+1} ). 
$$
(This is the $W_{k+1+\infty}$ algebra at central charge zero.)

Now, let us state the general conjecture of which this is a special case.
\begin{conjecture}
Suppose that we are in string or $M$-theory (after some partially topological twist, and maybe in an $\Omega$-background) on $\R \times S^k \times \R^{l}$.  Suppose we have a stack of branes wrapping $\R \times S^k$, and suppose that we have set things up so that the brane does not source any fields in the supergravity theory (for instance by considering the same number of branes and anti-branes). 

Let $\mc{L}$ be the $L_\infty$ algebra discussed above with the feature that Maurer-Cartan elements in $\mc{L}$ are solutions to the equations of motion of the supergravity theory on $\R \times S^k \times \R^{l}$.

Let $\mc{A}(M \mid M)$ be the associative algebra of operators on the quantum mechanical system obtained by compactifying the theory on the branes on $S^k$, including all KK modes.  Let $\mc{A}_{planar}$ be obtained from $\mc{A}(M \mid M)$  by sending $M \to \infty$ and taking the planar limit.

Then, there is an isomorphism (up to completion) 
$$
H^\ast ( \mc{A}_{planar} ) \iso U (H^\ast(\mc{L}) )  
$$
between the cohomology of $\mc{A}_{planar}$ (with respect to the BRST differential) and the universal enveloping algebra of the cohomology of the $L_\infty$ algebra $\mc{L}$.  
\end{conjecture}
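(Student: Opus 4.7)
The plan is to mimic, in the general setting, the explicit computation already carried out for the $5$-dimensional non-commutative gauge theory in the presence of $(M\mid M)$ $M5$ branes, and to isolate those structural features that make the argument work so that it can be applied to any brane-in-supergravity set-up satisfying the hypotheses of the conjecture. Throughout, we work in the planar limit at large $M$, so that the brane does not source any fields in the bulk and the bulk theory is treated classically; the relevant invariant of the bulk is then the homotopy Lie algebra $\mc{L}$ governing the equations of motion, viewed on the appropriate background.

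First I would set up the dictionary between bulk modes and boundary currents. Compactifying on $S^k$ and keeping all KK modes, one obtains a dg Lie algebra $\mc{L}_{\R\times\R_{>0}}$ on the half-plane $\R_t\times \R^+_r$, where $r$ is the radial coordinate transverse to the brane. For a positive real number $c$, one considers bulk field configurations solving the linearized equations of motion and supported in an arbitrarily small neighbourhood of $\{r=c\}$, modulo gauge equivalence. As in the $M5$ case (where such configurations were classified by a relative Dolbeault cohomology $H^1_{\dbar}(\C^\times\times\C,U)\otimes\mf{gl}_{k+1}$), one expects this space to be canonically isomorphic, up to completion, to $H^\ast(\mc{L})$. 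Each such mode, inserted as a background, produces a BRST-exact deformation of the brane theory supported near $r=c$, and hence a current $\mc{C}(\alpha)\in\mc{A}_{planar}$. This gives the linear map $H^\ast(\mc{L})\to H^\ast(\mc{A}_{planar})$, which by the invariant-theoretic description of $\mc{A}(M\mid M)$ at large $M$ (no trace relations) will extend to an algebra map $U(H^\ast(\mc{L}))\to H^\ast(\mc{A}_{planar})$.

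The central step is to identify the algebra relations. For $\alpha,\beta\in H^\ast(\mc{L})$, represent them by localized configurations at radii $c_1<c_2$. Turning on $\alpha+\beta$ as a background and acting on the Hilbert space gives, modulo squares of the (square-zero) parameters, the ordered product $\mc{C}(\beta)\mc{C}(\alpha)$; swapping to $c_1>c_2$ gives $\mc{C}(\alpha)\mc{C}(\beta)$. The two backgrounds are related by an interpolating infinitesimal gauge transformation, whose failure to intertwine the configurations exactly produces a bracket term $[\alpha,\beta]_{\mc{L}}$ supported at a single radius. This reproduces the $L_\infty$ bracket on $H^\ast(\mc{L})$ as the commutator in $H^\ast(\mc{A}_{planar})$; the higher $L_\infty$ operations do not contribute to commutators because the parameters are square zero. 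The no-source hypothesis is what kills any putative central extension at this stage, since such a central term would be a functional on $\mc{L}\otimes\mc{L}$ arising from flux through a sphere around the brane, and by assumption no such flux is present. Finally, a PBW-type argument, together with the large-$M$ vanishing of trace relations on the brane side and a filtration of $U(H^\ast(\mc{L}))$ by word length (matched to the filtration of $\mc{A}_{planar}$ by number of single-trace factors), should upgrade the algebra map to an isomorphism.

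The main obstacle will be the first step: giving a clean, model-independent identification of configurations localized near $\{r=c\}$ modulo gauge equivalence with $H^\ast(\mc{L})$. In the $M5$-brane example this was a concrete relative Dolbeault computation, but in general one needs a mapping-cone description of ``bulk modes localized at one radius'' inside the sheaf of solutions to the linearized equations of motion, together with an excision statement ensuring that this cohomology is computed by $\mc{L}$ itself rather than by some boundary-corrected version of it. This is essentially the content of the Koszul-duality picture alluded to earlier in the paper, and the technical heart of the conjecture is to show that, when the branes do not source bulk fields, the Koszul-dual of the bulk $L_\infty$ algebra controls the planar brane algebra with no anomalous corrections. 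The remaining steps (commutator computation, absence of central extension, PBW) should then follow by direct, if lengthy, arguments along the lines of the $M5$-brane analysis.
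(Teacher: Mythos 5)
The statement you are addressing is presented in the paper as a conjecture, and the paper gives no proof of it: the only support offered is the explicit computation for the $5$-dimensional non-commutative gauge theory with $(M\mid M)$ $M5$ branes at $\delta=0$ (localized modes near $\abs{z}=1$ giving currents, the interpolating gauge transformation producing the commutator, invariant theory for $GL(N+M\mid M)$ giving PBW), which is precisely the computation your first two paragraphs transcribe into general language. So your text is a strategy, not a proof, and you concede this yourself in the last paragraph. The gap you flag is real and is not a technicality to be filled in later: the identification of bulk solutions localized near a fixed radius, modulo gauge equivalence, with $H^\ast(\mc{L})$ (together with the excision statement that no boundary-corrected version of $\mc{L}$ appears) is exactly the Koszul-duality assertion that constitutes the content of the conjecture. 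In the $M5$ example it reduces to a concrete relative Dolbeault computation for one specific dg Lie algebra; no argument is given, here or in the paper, that works for a general supergravity background, and without it the map $U(H^\ast(\mc{L}))\to H^\ast(\mc{A}_{planar})$ is not even defined.

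Two further steps of your sketch are weaker than you present them. First, the PBW/isomorphism step uses that the planar brane algebra is generated by single-trace currents with no relations; in the paper's special case this follows because at $\delta=0$ the bulk is classical and the brane theory is a \emph{free} $\beta\gamma$/$bc$ system, so classical invariant theory applies. In general, ``planar limit'' does not make the brane theory free, and neither surjectivity of the current map nor the absence of quantum corrections to the commutators at planar order is established; the paper's own conjecture \ref{conjecture:holography}, where turning on further couplings deforms the answer to an affine Yangian rather than an enveloping algebra, shows how sensitive the statement is to exactly the corrections your sketch sets aside. Second, your argument that the no-source hypothesis kills central extensions (``no flux through the linking sphere, hence no central term'') is a heuristic, not a computation: in the sourced case the central term arises from a specific one-loop/backreaction diagram, and ruling out all such contributions in the unsourced case would itself require the quantitative control that is missing from the first step. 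As it stands, then, the proposal reproduces the paper's motivating evidence but does not close any of the gaps that make the statement a conjecture.
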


\section{$W_{k+1+\infty}$ algebras from the $3$-dimensional reduction}
\label{section_dimensional_reduction_3d}
So far, we have explained how to calculate the algebra of operators on $(M \mid M)$ $M5$ branes, in the limit  $M \to \infty$ and when $\delta = 0$, in terms of the $5$-dimensional gauge  theory.  To go further in our analysis of holography, we need to generalize in two ways.  First, we need to understand in terms of the $5$-dimensional gauge theory why we find a central extension of the $W_{k+1+\infty}$ algebra when we have $(N+M \mid M)$ $M5$ branes. Secondly, we need to understand how quantum corrections to the gauge theory when $\delta \neq 0$ relate to the Yangian studied in \cite{MauOko12} which deforms the $W_{k+1+\infty}$ algebra. 

To see something about these richer aspects of the story, we need to take a somewhat different point of view.   The strategy is as follows.  Consider, as before, our theory on $\R \times \C \times \C$ with $M5$ branes living on $0 \times 0 \times \C$. Let us use coordinates $t,z,w$ where the $M5$ branes are at $t = w = 0$. We will choose a certain boundary condition when $t^2 + w\br{w} \to \infty$, which is defined by asking that all fields go to zero as $t^2 + w \br{w} \to \infty$. 

Then, the proposal is the following.  Consider the theory on $\R \times \C  \times \C$, with this boundary condition when $t^2 + w \wbar \to \infty$, and after removing the locus where $t^2 + w \wbar = 0$.  Let us also incorporate the flux associated to $N$ $M5$ branes.   Then, let us reduce this to a theory on $\C$ by projecting to the $w$ coordinate. 

\begin{conjecture}
The resulting theory on $\C$ is equivalent to the large $M$ limit of the theory of the theory on $(N+M\mid M)$ $M5$ branes.
\end{conjecture}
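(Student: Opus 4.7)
The plan is to verify the conjecture in the tree-level regime ($\delta = 0$) by explicit computation, and then to argue that the abstract $5$-dimensional formulation allows the identification to extend to all loop orders. First, I would set up the geometric reduction explicitly: using the diffeomorphism $\R_t \times \C_w \setminus \{0\} \iso \R^+_r \times S^2$ from the excerpt, write the $5$-dimensional space obtained by excising the $M5$ locus as $\R^+_r \times \C_z \times S^2$, and perform Kaluza-Klein reduction along $S^2$. The lowest KK mode gives a $3$-dimensional $\mf{gl}_{k+1}$ connection $A$ on $\R^+_r \times \C_z$; incorporating the flux of $N$ $M5$ branes through $S^2$ produces the Chern-Simons kinetic term $N \int CS(A)$ at level $N$. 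The higher modes produce the fields $\alpha_i, \beta_{-i-1}$ with the action displayed in the introduction, and I would carefully derive the $O(\eps)$ corrections that arise from the Moyal product by expanding the $\eps$-dependent terms of the $5d$ action in harmonics on $S^2$.

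Next, I would pin down the boundary condition at $r = \infty$ inherited from the requirement that all $5d$ fields die at $t^2 + w\wbar \to \infty$. This amounts to setting $\alpha_i = 0$ and $A^{0,1} = 0$ at infinity, leaving $A^{1,0}$ and $\beta_{-i}$ free. Boundary operators are then polynomials built from the boundary values of these fields and their $\partial_z$-derivatives, giving a natural map
\[
\Sym^\ast\bigl( \mf{gl}_{k+1}[z,z^{-1},\partial_z] \bigr) \longrightarrow \{\text{boundary operators}\}.
\]
The first thing to check is that the linear part of this map matches the currents $\Oo(A,r,s) = \op{Tr}(A\,\partial^r_z \psi\, \partial^s_z \psi')$ that generate $\mc{F}_{k+1}(N+\infty \mid \infty)$, by identifying $\beta_{-i-1}$ with spin-$(i+1)$ currents and $A^{1,0}$ with the Kac-Moody current.

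The heart of the proof is the OPE/commutator calculation at tree level. Using the propagators on $\R^+_r \times \C_z$ for each KK mode with the specified boundary conditions, I would compute two-point functions of boundary operators from Witten diagrams. The commutator computation I sketched for the $(M\mid M)$ case in the excerpt generalizes: bringing a boundary operator associated to a gauge field on a circle $\abs{z}=c$ past one on $\abs{z}=c'$ is controlled by whether these $3d$ backgrounds are gauge equivalent, and the obstruction is precisely the commutator in $\mf{gl}_{k+1}[z,z^{-1},\partial_z]$ with Moyal bracket. The new feature is the central extension: the Chern-Simons term at level $N$ contributes a contact term in the propagator for $A$, which produces exactly the Kac-Moody cocycle on the $\mf{gl}_{k+1}[z,z^{-1}]$ subalgebra. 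Together with the uniqueness (up to coboundary) of $GL(k+1)$-invariant single-trace cocycles on $\mf{gl}_{k+1}[z,z^{-1},\partial_z]$ discussed earlier, this fixes the full central extension and identifies the resulting chiral algebra with $W_{k+1+\infty}(\eps, c=N)$.

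The main obstacle is likely the precise identification of the central extension on the full algebra $\mf{gl}_{k+1}\otimes \Oo_\eps(\C^\times \times \C)$ rather than merely its $\partial_z = 0$ subalgebra: one must verify that the Witten-diagram calculation reproduces the (essentially unique) $GL(k+1)$-invariant single-trace cocycle to all orders in $\eps$, and that no additional corrections contaminate the answer. I would handle this by a two-step argument: first show by a symmetry/cohomological analysis (using the $\C^\times$-weights that scale $\eps$) that any central extension arising from the $5d$ theory must lie in the one-parameter family fixed by the $\eps = 0$ Kac-Moody term, then compute the coefficient by restricting to the $\mf{gl}_{k+1}[z,z^{-1}]$ subalgebra where the level-$N$ Chern-Simons contact term gives the answer directly. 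Finally, for the all-loop statement, I would invoke the abstract $5$-dimensional construction of the chiral algebra sketched at the end of the introduction, arguing that the tree-level match together with uniqueness of the quantization (from the quantization theorem of section \ref{section:quantization}) promotes the isomorphism to all orders in $\delta$, which is the conjectural identification with the Maulik-Okounkov Yangian.
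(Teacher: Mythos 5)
You should first note that the statement you are proving is, in the paper, a conjecture that is never proved: the paper offers only a heuristic motivation (exchange the order of the $M\to\infty$ limit and the compactification along $X\to\C$, plus the standard AdS/CFT replacement of the branes by the flux $F$ and the excision of their locus, plus the observation that with the chosen boundary condition the $5d$ fields have no zero modes), and then checks a consequence of it at tree level. Your first three steps --- the $S^2$ Kaluza--Klein reduction, the boundary condition $\alpha_i = A^{0,1} = 0$ at $r=\infty$, and the Witten-diagram computation of boundary OPEs with the central charge pinned by the level-$N$ Chern--Simons sector together with the uniqueness of the $GL(k+1)$-invariant single-trace cocycle --- reproduce essentially the paper's own tree-level check (Proposition \ref{proposition:tree_level_holography} and the reduction in section \ref{section_dimensional_reduction_3d}), so that portion is sound but establishes only the $\delta=0$ consistency check, i.e.\ that the boundary chiral algebra is $W_{k+1+\infty}$ at central charge $N$, matching $\lim_M \mc{F}_{k+1}(N+M\mid M)$ in the planar limit. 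It does not establish an equivalence of the two theories.

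The genuine gap is your final step, where you claim the tree-level match plus ``uniqueness of the quantization'' promotes the identification to all orders in $\delta$. This fails for three concrete reasons. First, the quantization theorem of section \ref{section:quantization} does not apply in the relevant geometry: once the fields are twisted by $\Oo(-1)$ so as to vanish at $w=\infty$ on $\mbb{P}^1$, the cohomological analysis of counterterms in the appendix no longer covers the theory, as the paper itself flags in the remark following Conjecture \ref{conjecture:holography}; so even the existence of the quantum $5d$ theory on $S^1\times\mbb{P}^1\times\C$ is open. Second, even granting a unique bulk quantization, uniqueness of the bulk theory does not by itself identify its compactified algebra of operators $\mc{F}_{grav}(\eps,\delta,N)$ with the all-order $M5$-brane vertex algebra $\mc{F}_{M5}(N+M\mid M)(\eps,\delta)$ in the large-$M$ limit; that identification is precisely the content of Conjecture \ref{conjecture:holography} (and of the general Koszul-duality-type conjecture stated earlier), which requires an argument beyond matching a tree-level boundary algebra --- the expected answer at $\delta\neq 0$ is a nontrivial deformation (conjecturally the Maulik--Okounkov Yangian), and nothing in your argument controls which deformation the bulk loop corrections produce. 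Third, the $M5$-brane side at $\delta\neq 0$ is itself not under control: it is Chern--Simons theory at non-integral level coupled to the fermionic surface defect, whose quantization the paper explicitly declines to analyze, so there is no well-defined all-order object on the CFT side against which your claimed isomorphism could be checked. In short, your proposal recapitulates the paper's evidence for the conjecture but does not close it; the all-loop step would need a new idea, not an appeal to the quantization theorem.
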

(Note that this theory has infinitely many fields). I have stated the conjecture slightly vaguely for now, because a precise statement (including, for example, details of the boundary condition) is a little lengthy.  The precise formulation is given in section \ref{section:quantum_holography}.  

Let me try to motivate this conjecture. Consider the $5$-dimensional gauge theory on $\R \times \C \times \C$, coupled to the theory on $(N+M\mid M)$ $M5$ branes.  Let's compactify this theory to the $z$-plane, using the boundary condition we have chosen at infinity in the $t-w$ directions.   We will now use an important feature of the boundary condition we consider: when we compactify in this way, all the fields of the $5$-dimensional gauge theory become infinitely massive, so that we can remove them. We are  thus left with the theory on $(N+M\mid M)$ $M5$ branes.  

When $M \to \infty$, we expect that the effect of coupling the gravitational theory to $(N+M \mid M)$ $M5$ branes has the effect of removing the location of the $M5$ branes (as well as introducing the field $F$ sourced by the branes).  This is a standard argument in AdS/CFT.   

Now, we can consider first, compactifying to $\C$ and then sending $M \to \infty$; or first, sending $M \to \infty$ and then compactifying to $\C$.  We expect that it doesn't matter in which order we perform these operations.  

This leads to the statement in the conjecture.  

\subsection{}
A first check of this proposal will be obtained by analyzing a compactification to $3$ dimensions.

Let $r^2 = t^2 + w \wbar$.   Let us identify
$$
\R \times \C \times \C \setminus \C \simeq S^2 \times \R_{> 0} \times \C, 
$$
where the two-sphere is those $(t,w)$ with $t^2 + w \wbar$ fixed, and the coordinates on $\R_{> 0} \times \C$ are $r,w$.

We can compactify our $5$-dimensional theory to $\R_{>0} \times \C$ along $S^2$, while retaining all KK modes.

The boundary condition we have chosen in $5$-dimensions at $r \to \infty$ gives rise to a boundary condition in $3$ dimensions.  Then, the conjecture we stated above leads  to the following statement.
\begin{conjecture}
    The algebra of operators on the boundary at $r = \infty$ of the $3$-dimensional gauge theory is the same as the algebra of operators of the theory on $(N+M\mid M)$ $M5$ branes in the $\Omega$-background as $M \to \infty$.  
    \end{conjecture}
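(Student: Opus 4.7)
The strategy is to compute the algebra of operators living at the $r=\infty$ boundary of the $3$-dimensional theory obtained by reducing the $5d$ gauge theory on $S^2$, working perturbatively in the planar ($\delta=0$) limit, and then to identify the resulting vertex algebra with $W_{k+1+\infty}$ at central charge $N$.

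First I would enumerate the boundary operators compatible with the prescribed boundary condition $\alpha_i=0$ and $A^{0,1}=0$ at $r=\infty$. Under this polarization the free boundary fields are $A^{1,0}$ and the scalars $\beta_{-i}\in \cinfty(\R^+_r\times \C_z,(T^*\C_z)^{\otimes i})$, so every boundary operator is a normally ordered polynomial in these fields and their $z$-derivatives, taken with a $\mf{gl}_{k+1}$ trace. Since $\beta_{-i}$ transforms as a section of $(T^*\C_z)^{\otimes i}$ it carries boundary conformal spin $i$, while $A^{1,0}$ contributes spin-$1$ currents. The total spin spectrum therefore matches that of $W_{k+1+\infty}$, whose generators have spins $1,2,3,\dots$ each valued in $\mf{gl}_{k+1}$.

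Next, the spin-$1$ sector can be treated by isolating the Chern-Simons piece $N\int CS(A)$ and applying the usual Chern-Simons/WZW dictionary with the chosen holomorphic polarization at the boundary. This identifies the boundary currents constructed from $A^{1,0}$ with the affine Kac-Moody algebra $\what{\mf{gl}}_{k+1}$ at level $N$, supplying exactly the central extension appearing in $W_{k+1+\infty}$ on the sub-algebra $\C[z,z^{-1}]\otimes \mf{gl}_{k+1}$. The main technical step is then to compute, at tree level, the mixed and higher-spin OPEs among boundary operators built from the $\beta_{-i}$. The relevant vertices come from the cubic couplings $\frac{1}{\delta}\op{Tr}\,\beta_{-i-j-1}\alpha_i\alpha_j$ together with the minimal coupling of $\alpha_i$ to $A$ in the term $\beta_{-i-1}(\dbar^A\alpha_i^r-\partial_r^A\alpha_i^{\zbar})$. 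The $\alpha$--$\beta$ propagator is the Green's function of this first-order operator on $\R^+_r\times \C_z$ with the stated boundary condition and can be written down explicitly. Each Wick contraction between two boundary operators then inserts a sum of trivalent trees whose leaves are the boundary insertions and whose internal legs carry the spin index; I would organize these trees and show that the resulting OPEs are precisely those dictated by the Lie bracket on the Moyal-deformed associative algebra $\C[z,z^{-1},\partial_z]\otimes\mf{gl}_{k+1}$ presented earlier, together with the Kac-Moody cocycle extended $GL(k+1)$-equivariantly.

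The hard part will be the all-orders matching. Two obstacles stand out. First, the boundary operators are defined only up to BRST-exact terms, so one must choose gauge-fixed representatives and show that the computed OPE coefficients are gauge independent; this is where the rigidity of single-trace deformations established in the quantization appendix does the real work, since it reduces the infinite-dimensional matching to pinning down finitely many coefficients at each spin, after which uniqueness of the $GL(k+1)$-invariant, single-trace cocycle on $\C[z,z^{-1},\partial_z]\otimes\mf{gl}_{k+1}$ forces agreement. Second, turning on $\eps$ introduces Moyal vertices with arbitrarily many derivatives, and one must verify that the planar sum of diagrams reproduces the Moyal-deformed bracket rather than some other deformation. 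Once the bulk-to-boundary propagator is shown to implement contour integration in the $z$-variable, the cubic vertices collapse to the structure constants of the commutator in $\C[z,z^{-1},\partial_z]\otimes\mf{gl}_{k+1}$, and combined with the Chern-Simons/WZW boundary calculation this yields the identification with $W_{k+1+\infty}$ at central charge $N$ in the $M\to\infty$, $\delta=0$ limit.
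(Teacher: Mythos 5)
Your plan reproduces, in essentially the same way, the evidence the paper itself offers for this statement: the statement is a \emph{conjecture}, and what the paper actually establishes is its $\delta=0$/tree-level specialization (Proposition \ref{proposition:tree_level_holography} together with the earlier identification of $\lim_{M\to\infty}\mc{F}_{k+1}(N+M\mid M)$ with $W_{k+1+\infty}$ at central charge $N$). Like the paper, you take boundary operators built from the $\beta$-tower (equivalently the modes $\oint_w w^n \op{Tr}(MB)$) and the Chern-Simons connection sitting in the lowest KK modes, get the $\what{\mf{gl}}_{k+1}$ currents at level $N$ from the embedded Chern-Simons/WZW sector, compute the remaining OPEs by tree-level bulk diagrams with the bulk-to-boundary propagator (the paper fixes the gauge $\partial_z A_{\zbar}+\partial_t A_t=0$, writes the sourced field explicitly, and reduces everything to a single scalar integral producing the simple pole), identifies the result with the bracket of the Moyal algebra $\C[z,z^{-1},\partial_z]\otimes\mf{gl}_{k+1}$, and pins the central term by the uniqueness of the $GL(k+1)$-invariant single-trace cocycle restricted to the Kac-Moody subalgebra. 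So for the part of the statement that is actually provable by these methods, your route and the paper's coincide.

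Two cautions. First, the conjecture as stated is not restricted to $\delta=0$, and your closing claims about ``all-orders matching'' overreach: the paper explicitly notes that the $3$-dimensional reduction, having infinitely many fields, does not support loop-level perturbation theory, and that for $\delta\neq 0$ the expected boundary algebra is not $W_{k+1+\infty}$ but its deformation (the affine $\mf{gl}_{k+1}$ Yangian of Maulik--Okounkov); so an argument that ``forces agreement'' with $W_{k+1+\infty}$ beyond tree level would prove something false, and the genuinely conjectural content of the statement is untouched by this route. Second, your appeal to the ``rigidity of single-trace deformations established in the quantization appendix'' is misplaced: that appendix classifies bulk counterterms/anomalies of the $5d$ theory on $\R\times X$ (and its proof does not even extend to the fields trivialized at $w=\infty$ used here); the uniqueness that actually closes the computation is the one you also cite, namely that a $GL(k+1)$-invariant single-trace $2$-cocycle on $\Oo_\eps(\C^\times\times\C)\otimes\mf{gl}_{k+1}$ is determined up to equivalence by its restriction to the Kac-Moody subalgebra, whose level is read off from the level-$N$ Chern-Simons sector (with no critical shift at tree level).
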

    
\subsection{}
To get a feeling for this conjecture, let us write down this $3$-dimensional  theory explicitly.  Recall that the $3$-dimensional theory is  obtained from the $5$-dimensional theory on $(\R_t \times \C_w \setminus 0) \times \C_z$ by projecting along the map
\begin{align*} 
 (\R_t \times \C_w \setminus 0) \times \C_z & \mapsto \R \times \C_z \\
(t,w,z) & \mapsto ( (t^2 + w \wbar)^{1/2}, z). 
\end{align*} 

\begin{proposition}
\label{prop:compactification}
The field theory on $\R^+_r \times \C_z$ obtained  by compactifying the $5$-dimensional field theory on $S^2$ has space of fields consisting of a series of partial $1$-forms valued in $\mf{gl}_{k+1}$,  
$$
A^i = A^i_r \d r + A^i_{\zbar} \d \zbar
$$
for $i \ge 0$. These are conveniently arranged into a series
\begin{align*} 
A &= \sum_{i \ge 0} w^i A^i \\
A^i &= A^i_r \d r + A^i_{\zbar} \d \zbar   
\end{align*}
which  defines a one-parameter family of partial connections, or equivalently, a  partial connection for the Lie algebra $\mf{gl}_{k+1}[w]$.

In addition, we have a sequence of adjoint-valued scalar fields  $B^{i}$ for $i < 0$, which  can be  arranged into a series
$$
B = \sum_{i > 0} B^{-i} w^{-i}. 
$$ 
The action functional is (when we turn off the flux $F$ induced by the $M5$ brane) 
$$
\int_{r,z} \oint_{w} \d z \d w  B F(A) 
$$
where the curvature $F(A)$ is defined in the non-commutative sense
$$
F(A) = \d A + \tfrac{1}{2}[A,A]
$$
and $[A,A]$ is defined in the Moyal algebra where $z,w$ commute to $\eps$.  In this action, the contour integral over $w$ is treated as formal operation: we are not now treating $w$ as a coordinate on the space-time.

There is gauge symmetry generated by
\begin{align*} 
X &= \sum_{i \ge 0} X^i w_i \\
X^i  &\in \mf{gl}_{k+1} 
\end{align*} 
whose infinitesimal action on the space  of fields is given by
$$
A \mapsto A + \eta \left( \d \zbar \dpa{\zbar} X + \d r \dpa{r} X +[X,A] \right)  
$$
where $\eta$ is a parameter of square  zero, and the commutator $[X,A]$ is defined using the Moyal product.
 
Turning on the flux associated to $N$ $M5$ branes, according to the prescription given in section \ref{section:quantum_holography}, leads to the deformation of the action by adding a term 
$$
 \sum_{\substack{n \ge 1 \\ n \text{ odd} } } \frac{N  \eps^{n-1}}{  2^{n+1}n } \int_{z,r}\oint_w w^{-n}   \left( \partial_z^n A \right)  A \d z \d w.  
$$
This deformation also changes the gauge transformations of the field $B$.  The deformed infinitesemal gauge transformation is
$$
B \mapsto B + \eta [X,B] + N\delta  \sum_{\substack{n \ge 1 \\ n \text{ odd}}  } \frac{ \eps^{n-1} } {2^n n } \partial_z^n X \cdot w^{-n}. 
$$
(Here we are using that action of the ring $\C[w]$ on the space $w^{-1} \C[w^{-1}]$. Recall that $X$ is a function of $w$ as well as of the space-time coordinates).  
\end{proposition}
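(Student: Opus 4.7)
The plan is to carry out the compactification by computing the derived pushforward of the 5d BV cochain complex along the projection $\pi : X \times \C_z \to \R^+_r \times \C_z$, where $X = \R_t \times \C_w \setminus \{0\} \simeq S^2 \times \R^+_r$. First I would identify the field content by computing the fiberwise $S^2$-cohomology of the sheaf $\til\Omega^\ast(X) = C^\infty(X)[\d t, \d \wbar]$ with differential $\d_t + \dbar_w$ (this is the piece of the 5d de Rham complex transverse to $\C_z$, taken modulo $\d w$). Using a Mayer--Vietoris cover $X = U_+ \cup U_-$ with $U_\pm = X \cap \{\pm t > -\epsilon\}$, each $U_\pm$ projects surjectively to $\C_w$, giving $H^0(\til\Omega^\ast(U_\pm)) = \Oo(\C_w) = \C[[w]]$, while the overlap $U_+ \cap U_-$ projects only to $\C^\times_w$, giving $H^0 = \C[[w, w^{-1}]]$. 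The resulting long exact sequence splits the Laurent series into polar and polynomial parts: the image $\Oo(\C_w) \hookrightarrow \Oo(\C^\times_w)$ persists in $H^0$ (producing the $A^i$-tower for $i \ge 0$), while the cokernel $w^{-1} \C[[w^{-1}]]$ appears in $H^1$ (producing the scalar $B^{-i}$-tower for $i > 0$, the BV degree shift accounting for the change between 1-forms and 0-forms on $\R^+_r \times \C_z$). The remaining form structure $A^i = A^i_r \d r + A^i_\zbar \d \zbar$ comes from tracking which of $\d t, \d \wbar$ are consumed by the $S^2$-cohomology.

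Next I would push forward the 5d Chern--Simons-type action along $\pi$. The $\d w$ measure in the 5d action, combined with the $S^2$-integration, produces a residue pairing: only pairings between an $A$-mode $w^i A^i$ and a $B$-mode $w^{-j} B^{-j}$ with $i + 1 = j$ survive $\oint_w$. Substituting the mode expansions into $\tfrac{1}{\delta} \int \d w\, \d z \, \op{Tr}(\tfrac{1}{2} A \d A + \tfrac{1}{3} A \ast_\eps A \ast_\eps A)$ and integrating along $S^2$ yields the BF Lagrangian $\int_{r, z} \oint_w \d z\, \d w\, \op{Tr}(B F(A))$, where $F(A) = \d A + \tfrac{1}{2}[A, A]_{\ast_\eps}$ and the Moyal bracket is computed in $\mf{gl}_{k+1}[w] \otimes \Oo_\eps(\C_z)$ with $[w, z] = \eps$. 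Descent of the 5d gauge symmetry works identically, yielding the gauge parameter $X = \sum_{i \ge 0} X^i w^i$ acting on $A$ by $\eta(\d X + [X, A]_{\ast_\eps})$.

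The main obstacle is the flux correction from the $M5$-branes. From the surface-operator analysis of the preceding section, $N$ $M5$-branes at $t = w = 0$ source a field-strength $F_{\textrm{flux}} = \tfrac{N\delta}{\eps} \Delta_{t = 0, w = 0}$, which can be written as $\partial A_0$ with $A_0 = \tfrac{N\delta}{(2\pi i)^2 \eps}\Delta_{t = 0} \log w$. To compute its effect on the 3d reduction I would expand the Moyal commutator $[\log w, \cdot]_{\ast_\eps}$; because the bidifferential operator at order $\eps^n$ has parity $(-1)^n$ under swapping its two arguments, only odd powers of $\eps$ survive, and using $\partial_w^n \log w = (-1)^{n-1}(n-1)!/w^n$ one computes
\begin{equation*}
[\log w, X]_{\ast_\eps} \;=\; \sum_{n \ge 1,\, n \textrm{ odd}} \frac{\eps^n}{2^{n-1} n\, w^n} \,\partial_z^n X.
\end{equation*}
Pushing this through $\pi$ (the $\Delta_{t=0}$ localizes the $r$-integration, and the $w^{-n}$ pair with the $B$-modes under $\oint_w$) reproduces the stated modification of the gauge transformation of $B$ with combinatorial factor $1/(2^n n)$, as well as the quadratic-in-$A$ deformation of the Lagrangian (the extra factor of $\tfrac{1}{2}$ arising in the usual way between an action and its equation of motion). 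The remaining numerical coefficients can be fixed either by matching the leading $n = 1$ term against the classical Poisson bracket $\{\log w, X\} = \partial_z X / w$, or by the cohomological uniqueness argument used in the anomaly analysis: $\C^\times$-invariant, single-trace deformations of the Lagrangian form a one-dimensional space at each order in $\eps$.
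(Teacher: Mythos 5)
Your proposal reaches the paper's conclusions but by a genuinely different route for the Kaluza--Klein reduction. The paper does not compute a derived pushforward: it works on $S^2 \times \R^+_r \times \C_z$ directly, introduces the nilpotent vector field $V = -w\partial_t + 2t\partial_{\wbar}$ spanning the relevant rank-one foliation and the dual $1$-form $\theta$, expands all fields in $\cinfty(S^2)\otimes\gl_{k+1}$, and then uses $\mf{sl}(2)$ representation theory: $\cinfty(S^2)$ is the sum of the integer-spin representations, $\ker V$ is spanned by the highest-weight vectors (restrictions of $w^n$) and $\operatorname{coker} V$ by the lowest-weight vectors (restrictions of $\wbar^n$). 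The components of $B=A_\theta$ outside the lowest-weight line are gauged away, after which the non-highest-weight components of $A$ are either non-dynamical or massive through the $\int A\,\mc{L}_V A$ term, leaving exactly the towers $A^i$ ($i\ge 0$) and $B^{-i-1}$, with the pairing $\d\theta\,\d w$ fixing the normalization and producing the BF action. Your sheaf-cohomological computation of the fiberwise cohomology is a cleaner, more structural way to see the same towers (and it makes the BV degree shift between the two towers automatic), while the paper's harmonic-analysis-plus-gauge-fixing argument is more explicit about which KK modes are eaten, which are massive, and how the residue pairing is normalized. One detail in your version needs fixing: with $U_\pm = X\cap\{\pm t>-\epsilon\}$ the overlap still contains points $(t,0)$ with $0<|t|<\epsilon$, so its degree-zero cohomology is $\Oo(\C_w)$, not Laurent series; you should take conical neighbourhoods of the two poles of $S^2$ (e.g.\ $\pm t > -\epsilon r$), whose overlap avoids the $w=0$ axis, and then the Mayer--Vietoris sequence does produce $\Oo(\C_w)$ in degree $0$ and the polar part $w^{-1}\C[[w^{-1}]]$ in degree $1$ as you intend. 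On the flux deformation your treatment is, if anything, more complete than the paper's: the appendix proof stops at the $N=0$ action and simply invokes the prescription of the holography section, whereas you actually expand $[\log w,\,\cdot\,]_{\ast_\eps}$, observe that only odd powers of $\eps$ survive by the parity of the Moyal bidifferentials, and fix constants by matching the leading Poisson term or by the uniqueness argument; the residual sign/factor-of-two discrepancies with the stated coefficients are at the level of convention choices on which the paper itself is not internally consistent between the main-text and appendix statements.
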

The proof of this result is placed in appendix \ref{appendix:3d_reduction}. It's useful to understand where the fields which accompany $w^n$ and $w^{-m}$ come from in terms of the origina $5d$ gauge theory. Since  we are compacfifying on $S^2$, the Kaluza-Klein modes can  be  described in terms of the eigenfunctions of the Laplacian on $S^2$.  The space  of smooth functions on $S^2$ decomposes into a sum of irreducible representations of $SO(3)$, where each irreducible representation appears exactly once.  Fields which are accompanied by $w^n$ come from the lowest-weight vector in the spin $n$ irreducible representation of $SO(3)$, living inside the space  of  functions on $S^2$; and fields accompanied by $w^{-n-1}$ corresponds to the highest weight vector in the same spin $n$ representations. Fields associated to other functions on $S^2$ do not  contribute, either because they are infinitely massive in an appropriate gauge, or because they can be gauged away.

This theory is related to Chern-Simons theory. To see this , consider the lowest-lying KK modes of the three-dimensional theory. These are $A_0$ and $B_{-1}$.  The action for these fields is
$$
\frac{1}{\delta} \int_{z,r} B_{-1} F(A_0) \d z + \frac{N}{2 }\int_{z,r} A_0 \dpa{z} A_0 \d z.  
$$
If we form a $3$-component connection
$$
\til{A} = A_0 + N^{-1}\delta^{-1} \d z B_{-1} 
$$
we find the action for the field $\til{A}$ is
$$
N \int CS(\til{A}). 
$$
Thus, if we only consider the fields $A_0,B_{-1}$, the theory is Chern-Simons theory for $\mf{gl}_{k+1}$ at level $N$. 

In addition, we have an infinite tower of adjoint-valued fields with an action including arbitrarily high number of derivatives.

\subsection{}
The boundary condition at $r = \infty$ is the one where the field $A$ is trivial.   Our goal is to relate the operators on the boundary to the $W_{k+1+\infty}$ algebra.  What we will find is that if we consider the tree-level contributions to the boundary OPE, we will find precisely the $W_{k+1+\infty}$ algebra which is isomorphic to the large $M$ limit of the theory on $(N+M\mid M)$ $M5$ branes.

As a warm-up, let's see why this is true  at the level of the lowest-lying fields of the $3d$ theory, namely $A_0$ and $B_{-1}$. Earlier we saw  that these combine into a connection
$$
\til{A} = \frac{1}{N\delta } B_{-1} \d z + A_0
$$

The boundary condition we are using is the one where, on the boundary, only the $\d z$ component of the connection $\til{A}$ survives, and we have broken all gauge symmetry on the boundary.  This is the boundary condition where the operators on the boundary are those of the chiral WZW model. This algebra of operators is the Kac-Moody algebra at level $N$. (Because we are considering only the contibution of tree-level diagrams, we don't see the shift of the level by the critical  level).  This is indeed part of the $W_{k+1+\infty}$ algebra.  

The claim is that the familiar argument that the Kac-Moody algebra lives at the boundary of Chern-Simons theory extends to the case we are considering, to show that the algebra of operators at the boundary is the full $W_{k+1+\infty}$ algebra.  We will verify this by explicit computation of boundary OPEs.

On the boundary only the field $B$ survives. For every matrix $M \in \mf{gl}_{k+1}$ and every  integer $n \ge 0$, one defines an operator living at a point $z_0$ on the boundary by 
$$
\mc{O}(M,n)(z_0) = - \oint_w \op{Tr}_{gl_{k+1}} w^n M B(z = z_0, r = \infty). 
$$
Thus, there are the correct number of operators to generate the $W_{k+1+\infty}$ algebra.  

To compute the OPE between these operators we'll use Feynman diagrams. The first step is to fix a gauge for the bulk theory, and then compute the fields sourced by the boundary operators in this gauge.  Let's change coordinates and write $t = 1/r$, so that the boundary is at $t = 0$. Since the theory is topological in the $r$ direction, the action has the same form in the $t$ coordinate as it did in the $r$ coordinate.  The gauge we will use on the bulk is the one where 
$$
\dpa{z} A_{\zbar} + \dpa{t} A_t = 0. 
$$
(Here $A_{\zbar}$, $A_t$ are the coefficients of $\d \zbar$ and $\d t$ in $A$).   

Given a boundary operator $\mc{O}(M,r)(z_0)$, we can solve the linearized equations of motion of the theory where the action has been deformed by the boundary operator. The quadratic and linear  part of the deformed action is
$$
\int_{z,t \ge 0} \oint_w \d z B \dbar A - \oint_w   \op{Tr}_{gl_{k+1}} w^r M B(z = z_0, t =0 ). 
$$
Varying $B$, we find that $A$ must satisfy the equation
$$
\d z \dbar A =  w^n M \Delta_{z = z_0, t = 0} . 
$$
$A$ must also satisfy $A\mid_{t = 0} = 0$, and the gauge-fixing condition we have chosen.  

The unique solution to these equations is 
$$
A(M,r) = \frac{1}{4 \pi} M w^r \left( \d \zbar \dpa{t} - \d t \dpa{z} \right) \left(t^2 + \abs{z - z_0}^2\right)^{-1/2}. 
$$
We will drop the normalization factor of $\frac{1}{4 \pi}$ here and in what follows.  It will not contribute to the result of the calculation, because the $W_{k+1+\infty}$ algebra only has a single parameter, the value of the central parameter. The value of the central parameter is determined by the level of the Kac-Moody algebra for $\mf{gl}_{k+1}$ we find, and we will verify this level by an analysis of the  copy of Chern-Simons theory that is the lowest lying fields of our theory. 

The field $A$ sourced by the boundary operator is the result of evaluating the boundary operator on the bulk-boundary propagator.  We can thus plug this field into Feynman diagram calculations for operator products of boundary operators.

The operator product between two boundary operators can be calculated in terms of scattering of bulk fields. The scattering process one needs to consider is between two fields $A$ sourced by currents, and some number of fields $B$.   The result will be an operator on the boundary which is a polynomial in the field $B$.  

Consider a general diagram that can appear in such a scattering process. It will have some number of external lines, two of which are labelled by the fields sourced by the boundary operator, and the remainder are labelled by the field $B$.  

\begin{figure}


\begin{tikzpicture}
  \node[circle, draw](A) at (0,0) {$\mscr{O}_1$};
  \node[circle, draw](B) at (4,0) {$\mscr{O}_2$};
  \node[circle, draw](C) at (2,2) {$I$};
  \node (D) at (2,4){$A$};  
  \draw[thick](A) to [out=90,in=180](C);
  \draw[thick] (C) to [out=0, in = 90] (B);
  \draw[thick](C)--(D); 
\end{tikzpicture}

\vspace{10pt}

\begin{tikzpicture}
  \node[circle, draw](A) at (0,0) {$\mscr{O}_1$};
  \node[circle, draw](B) at (4,0) {$\mscr{O}_2$};
  \node[circle, draw](C) at (2,2) {$N F$};
  
  \draw[thick](A) to [out=90,in=180](C);
  \draw[thick] (C) to [out=0, in = 90] (B);
 
\end{tikzpicture}
\vspace{10pt}


\begin{tikzpicture}
  \node[circle, draw](A) at (0,0) {$\mscr{O}_1$};
  \node[circle, draw](B) at (6,0) {$\mscr{O}_2$};
  \node[circle, draw](C) at (2,2) {$I$};
  \node[circle, draw](D) at (4,2) {$I$};
  \draw[thick](A) to [out=90,in=180](C);
  \draw[thick] (C) to [out=45, in = 135] (D);
\draw[thick] (C) to [out=-45, in = -135] (D);
  \draw[thick](D) to [out=0, in =90](B); 
\end{tikzpicture}


\caption{ The three diagrams which can contribute to the boundary OPE. In each diagram $\mscr{O}_i$ are the operators we insert on the boundary. The second and third diagram contribute to the central extension, and together show that the value of the central charge is $N$ plus the critical level.} 

\label{fig:diagrams} 
\end{figure}
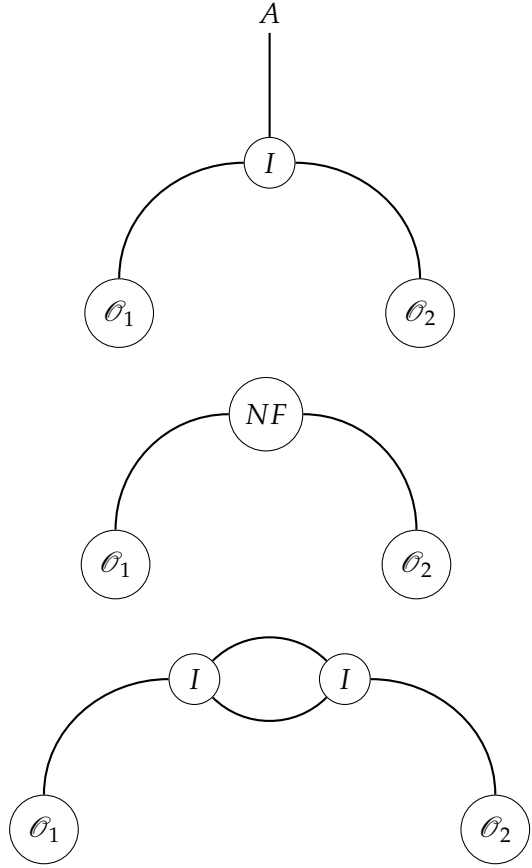

Three such scattering diagrams are illustrated in figure \ref{fig:diagrams}. 

We are interested in a certain limit of the OPE whereby only tree level diagrams contribute.  If we do this, it turns out only two diagrams can appear: these are the two tree-level diagrams illustrated in figure \ref{fig:diagrams}.  The reason for this is simple.  Every vertex in the diagram either takes two copies of the field $A$ and one copy of the field $B$, or is a bivalent vertex (associated to the flux $F$) which takes two copies of the field $A$.  Except for the two special external lines associated to the boundary operators, the external lines must be labelled by the field $B$.  The propagator connects the field $A$ to the field $B$. A simple combinatorial argument now shows that there are no possible tree-level diagrams except the two illustrated. 

In our conventions, the first diagram in figure \ref{fig:diagrams} is accompanied by $\delta$, and  the second diagram is accompanied by a factor of $N\delta^2$.  By rescaling the operators $\Oo(M,r)$ by $\delta$, we change the factors of $\delta$ so that the coefficient of each diagram does not involve $\delta$.

The following proposition explains the result of calculating the OPE in this way. 
\begin{proposition}
The tree-level limit of the algebra of operators on the boundary is the $W_{k+1+\infty}$ algebra with central charge $N$.
\label{proposition:tree_level_holography} 
\end{proposition}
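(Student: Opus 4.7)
The plan is to compute the tree-level operator product of two boundary currents $\mc{O}(M,n)(z_0)$ directly by summing Feynman diagrams in the $3$-dimensional reduced theory of Proposition~\ref{prop:compactification}, and then to match the resulting algebra with the presentation of $W_{k+1+\infty}(\eps, N)$ as the universal enveloping algebra of a central extension of the Moyal commutator Lie algebra on $\mf{gl}_{k+1} \otimes \C[z, z^{-1}, w]$. The two essential inputs are already assembled: the bulk-to-boundary propagator, which says that the field sourced by $\mc{O}(M, n)(z_0)$ is $A(M, n) = M w^n (\d \zbar \dpa{t} - \d t \dpa{z}) (t^2 + \abs{z - z_0}^2)^{-1/2}$ (up to the overall normalization already discussed), and the two kinds of interaction vertex in the action of Proposition~\ref{prop:compactification}: the cubic Moyal-bracket vertex $\tfrac{1}{\delta} \int B [A, A]_{\ast_\eps}$ and the bivalent flux-induced vertex $\sum_{m \text{ odd}} \frac{N \eps^{m-1}}{2^{m+1} m} \int w^{-m} (\partial_z^m A)\, A$, whose coefficient is proportional to $N$.

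The first step is to enumerate the tree-level diagrams that can contribute to the OPE of two such boundary insertions. Each vertex is either cubic of type $BAA$ or bivalent of type $AA$; each internal line pairs an $A$-leg with a $B$-leg via the bulk propagator; and each external leg other than the two sourced $A$'s must be a $B$-line landing on the boundary, where it either becomes a new $\mc{O}$-insertion or contracts with no residual insertion. A short combinatorial argument based on this valence structure shows that at tree level only two topologies survive: a ``Y''-shaped cubic diagram producing a single new boundary insertion, and a diagram containing a single bivalent flux vertex producing a pure $c$-number. These are the two tree diagrams drawn in figure~\ref{fig:diagrams}.

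The cubic diagram is then a direct computation. Contracting $A(M,n)$ with $A(M',n')$ at the Moyal cubic vertex and pairing the outgoing $A$-line to the boundary via the propagator extracts the mode of $B$ coupling to $w^s$. Because the cubic vertex uses the Moyal product on $\C[z,w]$ with $[w,z] = \eps$, the pole structure in the coordinate separation of the two insertion points is controlled precisely by the Moyal commutator of $M w^n$ and $M' w^{n'}$, with the powers of $z-z_0$ supplied by expanding the propagator around the insertion points. Summing over modes, this reproduces the non-central part of the commutator in $\mf{gl}_{k+1} \otimes \C[z, z^{-1}, w]$ with the product $\ast_\eps$, which is exactly the non-central part of the presentation of $\oint W_{k+1+\infty}(\eps, c)$ given earlier.

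Finally, the bivalent flux diagram is a pure $c$-number and thus contributes a central extension of the commutator algebra. Rather than evaluate the full double-propagator integral against the infinite series $\sum_m w^{-m} \partial_z^m A \cdot A$ term by term, I would invoke a uniqueness argument of the type already used elsewhere in the paper: the resulting $2$-cocycle must be translation-invariant, $GL(k+1)$-invariant, single-trace in $\mf{gl}_{k+1}$, and compatible with the rescaling of $(z, w, \eps)$, and up to coboundary the space of such cocycles on the Moyal commutator Lie algebra $\mf{gl}_{k+1} \otimes \C[z, z^{-1}, w]_{\ast_\eps}$ is one-dimensional. It therefore suffices to pin the coefficient down on the Kac--Moody subalgebra generated by the currents $\mc{O}(M, 0)$ that are independent of $w$. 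But those currents couple only to the lowest KK modes $A_0$ and $B_{-1}$, which by the remark immediately after Proposition~\ref{prop:compactification} assemble into ordinary $\mf{gl}_{k+1}$ Chern--Simons theory at level $N$; the standard Chern--Simons/Kac--Moody correspondence then forces the central charge to be $N$, with no critical shift because we work strictly at tree level. Combining the two diagrams and passing to the universal enveloping algebra identifies the tree-level boundary OPE with $\oint W_{k+1+\infty}(\eps, N)$. The main obstacle is the cohomological uniqueness statement for single-trace cocycles on $\mf{gl}_{k+1} \otimes \C[z, z^{-1}, w]_{\ast_\eps}$; granting that, which follows from the same kind of analysis as in the quantization appendix, the remainder is essentially diagrammatic.
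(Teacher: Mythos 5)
Your proposal is correct and follows essentially the same route as the paper's proof: the same two surviving tree diagrams, the same sourced bulk fields, the Moyal cubic vertex reproducing the non-central commutator, and the central term fixed by cocycle uniqueness plus the level-$N$ Chern--Simons theory of the lowest KK modes with no critical shift at tree level. The only difference is one of detail: the paper explicitly evaluates the propagator integral $\int g(z)\,\alpha(z_1)\wedge\alpha(z_2)\,\d z$ by a scaling change of variables to exhibit the simple pole $c\,(z_1-z_2)^{-1}$ with $c$ a nonzero multiple of $g(0)$, a step you defer to ``direct computation.''
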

\begin{proof}

Let us compute the contribution of each diagram to the OPE. Suppose the operators we insert are
\begin{align*} 
\mscr{O}_1 &= \mscr{O}(M_1,r_1)(z_1) \\
\mscr{O}_2 &= \mscr{O}(M_2,r_2)(z_2) 
\end{align*}
using the terminology above. These operators source fields
$$
A(M_i,r_i) = M_i w^{r_i} \left( \d \zbar \dpa{t} - \d t \dpa{z} \right) \left(t^2 + \abs{z - z_i}^2\right)^{-1/2}. 
$$
We find that, in the limit we are taking, the first diagram contributes the expression
\begin{multline*}
\Oo(M_1,r_1)(z_1) \cdot \Oo(M_2,r_2)(z_2) = \int_{z,t \ge 0} \oint_w B[A(M_1,r_1), A(M_2,r_2)] \d z \d w \\ 
+ \text{ contributions from the other diagram} 
\end{multline*}
where the commutator incorporates the Moyal product in the $z-w$ plane, as usual.   

The result of this integral will be a function of the field $B$ and the positions $z_1,z_2$.  Expanding in series in $(z_1 - z_2)^{-1}$ will yield the terms in the boundary OPE as functions of $B$.  We can assume that the field $B$ only depends on $z$ and not on $\zbar$ and $t$, because boundary operators only have derivatives in $z$. (In other words, we are analyzing scattering of on-shell states).  

Let us write this integral out explicitly, by writing $A(M_i,r_i) = M_i w^{r_i} \alpha(z_i)$ where $M_i$ is a matrix and $\alpha(z_i)$ the one-form
$$
\alpha(z_i) = \left( \d \zbar \dpa{t} - \d t \dpa{z} \right) \left(t^2 + \abs{z - z_i}^2\right)^{-1/2}. 
$$  
Note that the $z$-derivative of $\alpha(z_i)$ is the same as the $z_i$ derivative, so that we can write the Moyal commutator involving the $z_i$ derivative.  We find that 
\begin{multline*}
[A(M_1,r_1), A(M_2,r_2)] = [M_1,M_2]w^{r_1 + r_2} \alpha(z_1) \wedge \alpha(z_2) \\ + \eps (M_1 M_2 + M_2 M_1) w^{r_1 + r_2 - 1} \left( r_2 \dpa{z_1} \alpha(z_1) \wedge \alpha(z_2) - r_1 \alpha(z_1) \wedge \dpa{z_2} \alpha(z_2)   \right) + O(\eps^2).  
\end{multline*}
Continuing this pattern, we find that the Moyal commutator can be written as a sum of the form
\begin{multline*} 
 [A(M_1,r_1), A(M_2,r_2)] \\= \sum_{n,m \ge 0} \frac{\eps^{n+m} }{2^{n+m}}(-1)^m w^{r_1 + r_2 - n-m}\binom{r_1}{n} \binom{r_2}{m} \left( M_1 M_2 + (-1)^{n+m+1} M_2 M_1 \right) \partial_{z_1}^{m} \partial_{z_2}^n  \alpha(z_1) \wedge \alpha(z_2).   
\end{multline*}
Therefore, if $B = N f(z,w)$, the integral computing the OPE can be written as 
\begin{multline*} 
 \int_{z,t \ge 0} \oint_w B[A(M_1,r_1), A(M_2,r_2)] \d z \d w \\
=  \sum_{n,m \ge 0} \frac{\eps^{n+m} }{2^{n+m}}\binom{r_1}{n} \binom{r_2}{m} \left(\op{Tr} N M_1 M_2 + (-1)^{n+m+1}\op{Tr} N M_2 M_1 \right) \\
 \partial_{z_1}^m \partial_{z_2}^n \oint_w \int_{z,t \ge 0}  w^{r_1 + r_2 - n-m}  f(z,w)  \alpha(z_1) \wedge \alpha(z_2) \d z \d w .  
\end{multline*}
Thus, to compute the OPE, it suffices to compute the integral
$$
\int g(z) \alpha(z_1) \wedge \alpha(z_2) \d z .  
$$
for a holomorphic function $g$ of $z$.  We will show that result of this integral is
$$
c (z_1 - z_2)^{-1} +  F(z_1 - z_2)  
$$
for some non-zero constant $c$ and holomorphic function $F$ of $z_1 - z_2$.  

Explicitly, the integral is  \begin{multline*} 
     \int_{z,t \ge 0}  g(z) \left( \d \zbar \dpa{t} - \d t \dpa{z} \right) \left(t^2 + \abs{z - z_1}^2\right)^{-1/2}   \left( \d \zbar \dpa{t} - \d t \dpa{z} \right) \left(t^2 + \abs{z - z_2}^2\right)^{-1/2}  \d z \\
     =   \int_{z,t \ge 0} g(z)   \dpa{t}  \left(t^2 + \abs{z - z_1}^2\right)^{-1/2}   \dpa{z} \left(t^2 + \abs{z - z_2}^2\right)^{-1/2} \d t\d \zbar \d z  \\
     -  \int_{z,t \ge 0}  g (z) \dpa{z}  \left(t^2 + \abs{z - z_1}^2\right)^{-1/2}    \dpa{t} \left(t^2 + \abs{z - z_2}^2\right)^{-1/2} \d t\d \zbar \d z  
    \end{multline*}
We can assume, by translation invariance, that $z_1 = 0$. Performing the change of coordinates
\begin{align*} 
z &\mapsto z_2 z \\
t & \mapsto t \abs{z_2}  
\end{align*}
puts the integral in the form
\begin{multline*}
  \frac{1}{z_2} \int_{z,t \ge 0} g(z z_2)   \dpa{t}  \left(t^2 + \abs{z}^2\right)^{-1/2}   \dpa{z} \left(t^2 + \abs{z - 1}^2\right)^{-1/2} \d t\d \zbar \d z  \\
     - \frac{1}{z_2} \int_{z,t \ge 0}  g (z z_2 ) \dpa{z}  \left(t^2 + \abs{z }^2\right)^{-1/2}    \dpa{t} \left(t^2 + \abs{z - 1}^2\right)^{-1/2} \d t\d \zbar \d z  
    \end{multline*}
Expanding $g(z z_2)$ as a series in $z z_2$, we see that singular part in $z_2$ of the integral arises from replacing $g(z z_2)$ by the constant $g(0)$.  The non-singular part is some holomorphic function of $z_2$. Thus, we find the integral is of the form $c z_2^{-1} + F(z_2)$ where $F$ is holomorphic, as desired. The constant $c$ is   
\begin{multline*}
c =   \int_{z,t \ge 0} g(0)   \dpa{t}  \left(t^2 + \abs{z}^2\right)^{-1/2}   \dpa{z} \left(t^2 + \abs{z - 1}^2\right)^{-1/2} \d t\d \zbar \d z  \\
     -  \int_{z,t \ge 0}  g (0) \dpa{z}  \left(t^2 + \abs{z }^2\right)^{-1/2}    \dpa{t} \left(t^2 + \abs{z - 1}^2\right)^{-1/2} \d t\d \zbar \d z  
    \end{multline*}
One can check that this constant is a non-zero multiple of $g(0)$.  

This calculation shows that the contribution to the boundary of OPE of the first diagram in figure \ref{fig:diagrams} gives the OPE of the $W_{k+1+\infty}$ algebra with no central extension. The other diagram gives the  central extension.  The central extension is determined (up to equivalence) by its value on the copy of the Kac-Moody vertex algebra for $\mf{gl}_{k+1}$ living inside the $W_{k+1+\infty}$ algebra.  Therefore we need only calculate the central extension there. However, the answer here is immediate, and we don't even need to calculate. As, the Kac-Moody algebra comes from the operators on the boundary of the copy of Chern-Simons theory living inside our $3$-dimensional field theory, as the lowest-lying KK modes.  The action functional is that of Chern-Simons at level $N$.  Because we are  only considering tree-level diagrams when we compute the OPE, we do not find the shift by the critical level that appears in the relationship between the Kac-Moody algebra and Chern-Simons theory. This shift arises from the $1$-loop diagram in figure \ref{fig:diagrams}.  We therefore find the Kac-Moody algebra at level $N$.  

\end{proof}

\section{Holography beyond the planar limit}

\label{section:quantum_holography}
    To formulate holography at the quantum level, the description in terms of the $3$-dimensional theory obtained from $5$ dimensions by reducing along a sphere is not sufficient. Indeed, this $3$-dimensional theory has an infinite number of fields, and perturbation theory beyond the tree level is ill-defined in such a context.  To understand what happens at the quantum level, we need to work directly in $5$ dimensions. 

    To make the $5$-dimensional statement precise, we need to start by writing down the boundary condition in $5$ dimensions precisely. 

    Consider our $5$-dimensional gauge theory on $S^1 \times \mbb{P}^1\times \C$, where the $w$ plane has been compactified to $\mbb{P}^1$.    Let us modify our theory so that all of our fields are trivial at $\infty$. That is, both the gauge field $A$ and the infinitesimal gauge transformations are now twisted by the bundle $\Oo(-1)$ of functions on $\mbb{P}^1$ which vanish at $\infty$. One can check that, although the $2$-form $\d z \d w$ has a quadratic pole at $w = \infty$, the Lagrangian density 
    $$
    \tfrac{1}{2}\d z \d w \op{Tr} A \left(\dbar^{\mbb{P}^1 \times \C} + \d_{dR}^{\R} \right)  A  + \tfrac{1}{3}\d z \d w \op{Tr} A \ast_\eps A \ast_\eps A
    $$
    is regular at $w = \infty$.

    When we trivialize the fields at $\infty$ in this way, the space of fields with its linearized BRST operator is the cochain complex
    $$
    \Omega^\ast(S^1) \what{\otimes} \Omega^{0,\ast}(\mbb{P}^1 \times \C, \Oo(-1) ) \otimes\mf{gl}_{k+1}  
    $$ 
    with differential $\d_{dR}^{S^1} + \dbar^{\mbb{P}^1 \times \C}$.    This complex has no cohomology. Thus, the trivial solution to the equations of motion is rigid, and also has no gauge symmetries. If we work in perturbation theory, when we compactify the $5$-dimensional theory to $\C$ along $S^1 \times \mbb{P}^1$, we find the trivial two-dimensional theory. 

    Asking that the fields on $\R \times \C$ extend to $S^1 \times \mbb{P}^1$ in this way is the boundary condition I mentioned earlier.   

    Let's now remove the surface
    $$
    0 \times 0 \times \C \subset S^1 \times \mbb{P}^1 \times \C
    $$
    from the $5$-dimensional space-time.  The complement of this surface is a manifold
    $$
    X = S^1 \times \mbb{P}^1 \times \C \setminus 0 \times 0 \times \C. 
    $$ 
    There is a projection map $X \to \C$, and we can ``compactify\footnote{While compactifying along non-compact manifolds is not an operation often considered in the physics literature, it is a perfectly well-behaved operation in the language of factorization algebras \cite{CosGwi11}.  We will use the language of factorization algebras to implement these constructions precisely.}'' the theory on $X$ to produce a $2$-dimensional theory on $\C$ (with an infinite-dimensional space of fields).   

    The statement of holography we will make will relate this two-dimensional theory on $\C$ to the large $M$ limit of the theory on $(N+M\mid M)$ $M5$ branes, in the limit as $M \to\infty$.  To make a precise statement, we need to introduce the field on $X$ sourced by the t'Hooft operator living on the plane we have removed. 

    Let 
    $$
    F \in \Omega^\ast (S^1 ) \what{\otimes} \Omega^{1,\ast}( \mbb{P}^1, \Oo(-1) ) 
    $$  
    be the distributional $2$-form 
    $$
    F = \left(\d_{dR}^{S^1} + \dbar^{\mbb{P}^1}\right)^{-1} \Delta_{0 \times 0}.$$
    There is a unique such $F$ up to the addition of forms exact for $\d_{dR} + \dbar$,  just because the complex has no cohomology.  If we replaced $S^1$ by $\R$, we would find that $F$ has the expression we wrote down earlier  
    $$
    F =  \frac{\wbar \d t \d w + 2 t \d w \d \wbar   }{\eps (t^2 + w \wbar)^{3/2}  }  
    $$
    where $t,w$ are the coordinates on $\R$ and $\mbb{P}^1$.  The $2$-form we find on $S^1 \times \mbb{P}^1$ is a periodized version of this. 

    As we saw earlier, a field $A$ on $S^1 \times \mbb{P}^1 \times \C$ sourced by the t'Hooft operator will satisfy $\partial A = \frac{\delta}{\eps} F$. This is an equation we can solve locally.  However, we can mimic the effect of working in the background of such a field using only $F$. This is because for a field $A$ in the Abelian factor $\mf{gl}_1$ of the gauge Lie algebra $\mf{gl}_{k+1}$, the commutator $[A,-]$ only depends on $\partial A$.  Using the notation we introduced earlier, let $D(F)$ denote the derivation of the non-commutative algebra $\Omega^\ast(S^1) \what{\otimes} \Omega^{0,\ast}(\C \times \C)$  arising from $F$.  Then, we can add a term
    $$
    \int_{S^1 \times \mbb{P}^1 \times \C} \d z \d w \frac{N}{\eps} A D(F)(A) 
    $$  
    to the Lagrangian of our $5$-dimensional gauge theory. This term cancels the anomaly arising from the insertion of $N$ $M5$ branes at $0 \times 0 \times \C$.  

    Now, we are in a position to state the conjecture.  
    \begin{conjecture}\label{conjecture:holography}
    Consider the $5$-dimensional gauge theory on 
    $$
    X \subset S^1 \times \mbb{P}^1 \times \C
    $$
    with action
    $$
    \frac{1}{\delta} \int \d z \d w \op{Tr} \left(\tfrac{1}{2}  A \left( \d_{dR} + \dbar \right) A + \tfrac{N\delta }{2\eps} A D(F)(A) + \tfrac{1}{3} A \ast_\eps A \ast_\eps A   \right). 
    $$

    Let us view this as a theory on $\C$, by compactifying along the map $X \to \C$.   The algebra of operators of this theory on $\C$ are a vertex algebra, which we denote $\mc{F}_{grav}(\eps,\delta,N)$.  

    Let $\mc{F}_{M5}(N+M\mid M)(\eps,\delta)$ denote the vertex algebra of operators of the theory on $(N+M \mid M)$ $M5$ branes, with $\Omega$-background parameters $\eps$ and $\delta$.  
    Then, I conjecture the following statements hold:
    \begin{enumerate} 
    \item For $\delta \neq 0$, there is an isomorphism of vertex algebras
    $$
    \lim_{M \to \infty} \mc{F}_{M5} (N+M\mid M)(\eps,\delta) \iso \mc{F}_{grav} (\eps,\delta,N). 
    $$
    \item The current algebra
    $$
    \oint \mc{F}_{grav}(\eps,\delta, N)  
    $$  
    is isomorphic to the Yangian for affine $\mf{gl}_{k+1}$ studied by Maulik and Okounkov\cite{MauOko12}, which acts on the equivariant cohomology of the moduli of instantons of rank $N$ on an $A_k$ singularity.  
    \end{enumerate} 
    \end{conjecture}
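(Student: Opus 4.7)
The plan is to extend Proposition \ref{proposition:tree_level_holography} to all orders in $\delta$, match the resulting vertex algebra with the large-$M$ limit of $\mc{F}_{M5}$, and then identify the outcome with the Maulik--Okounkov Yangian. The underlying vector space of $\mc{F}_{grav}(\eps,\delta,N)$ is fixed by a relative Dolbeault cohomology calculation: with the boundary condition trivializing fields at $\infty\in\mbb{P}^1$, the linearized BRST complex $\Omega^\ast(S^1)\what{\otimes}\Omega^{0,\ast}(\mbb{P}^1\times\C,\Oo(-1))\otimes\mf{gl}_{k+1}$ has vanishing cohomology on $S^1\times\mbb{P}^1$, so operators on the reduced $2$d theory arise only as the response of the bulk theory on $X=S^1\times\mbb{P}^1\times\C\setminus 0\times 0\times\C$ to fields sourced near a point $z_0\in\C$. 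This identifies the underlying graded vector space with $\Sym^\ast(\mf{gl}_{k+1}[u,v])$ to all orders in $\delta$; only the structure constants of the OPE are corrected.

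The next step is to compute those structure constants order by order. The tree-level diagrams of figure \ref{fig:diagrams} reproduce the $W_{k+1+\infty}$ OPE at central charge $N$. Higher-loop corrections in $\delta$ are constrained by precisely the cohomological input used in Section \ref{section:quantization}: $\C^\times$-equivariance with $\eps$ and $\delta$ carrying matching weights, $\mf{sl}(2)$-invariance, and holomorphic translation invariance leave a two-parameter family of possible deformations of the tree-level OPE, both absorbed into redefinitions of $\eps$ and $\delta$. The existence part of the quantization theorem produces \emph{some} corrected OPE; uniqueness then reduces the comparison with the $M5$-brane side to fixing a single non-vanishing structure constant at each order.

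In parallel, one computes $\lim_{M\to\infty}\mc{F}_{M5}(N+M\mid M)(\eps,\delta)$ perturbatively in $\delta/\eps$. At $\delta=0$ the proposition identifying large-$N$ limits with $W_{k+1+\infty}(\eps,N)$ already handles this. Turning on $\delta$ makes the ambient Chern--Simons theory dynamical with coupling proportional to $\delta/\eps$, and applying the same cohomological uniqueness to the chiral algebra on the surface defect forces the resulting family of vertex algebras to match $\mc{F}_{grav}(\eps,\delta,N)$ once a single Kac--Moody-type structure constant is matched; this proves part (1). For part (2), pass to the current algebra and identify $\oint\mc{F}_{grav}(\eps,\delta,N)$ with $Y(\what{\mf{gl}}_{k+1})$: both are filtered deformations of the universal enveloping algebra of the central extension of $\mf{gl}_{k+1}\otimes\C[z,z^{-1},\partial_z]$, and the $5$d gauge theory in the presence of the t'Hooft operator acts on the equivariant cohomology of the rank-$N$ instanton moduli on the $A_k$ singularity (via the $M2$-brane/non-commutative instanton dictionary). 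Since the MO Yangian is essentially characterized by its action on this space through stable envelopes, a rigidity argument identifying it as the unique filtered deformation respecting the triangular decomposition should deliver the identification.

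The hard part will be controlling the loop-level structure constants in the $5$d non-commutative gauge theory on the singular space $X$. Even granted the existence and uniqueness theorem of Section \ref{section:quantization}, the Feynman diagrams of a mixed topological-holomorphic theory with the singular flux $F$ sourced by a t'Hooft operator are delicate: one must verify that the potential anomalies identified cohomologically in groups of the shape $H^3(X)\oplus H^1(X)$ actually vanish for this specific punctured geometry, and that the relative Dolbeault calculation giving the space of states is stable under the insertion of $D(F)$. The other serious obstacle is the direct match between the geometric $R$-matrix definition of the MO Yangian and a quantity computed from $5$d Feynman diagrams; without a cleaner characterization theorem for $Y(\what{\mf{gl}}_{k+1})$, the identification may have to proceed generator-by-generator beyond the Kac--Moody subalgebra.
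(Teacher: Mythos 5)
You are attempting to prove something the paper itself only states as a conjecture: the text surrounding Conjecture \ref{conjecture:holography} offers only a heuristic motivation (the order of the operations ``compactify along $X\to\C$'' and ``send $M\to\infty$'' should not matter, and the $\Oo(-1)$-twisted fields on $S^1\times\mbb{P}^1$ have no zero modes), and the author explicitly defers the investigation of $\mc{F}_{grav}(\eps,\delta,N)$ to future work. So there is no paper proof to compare against, and your proposal should be judged as a research program. As such it is a sensible outline, but several of its load-bearing steps are not available as stated.

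The most concrete gap is your appeal to ``the existence part of the quantization theorem'' of Section \ref{section:quantization} for the theory on $S^1\times\mbb{P}^1\times\C$ with fields trivialized at $w=\infty$: the paper's own remark immediately after the conjecture states that the proof in the appendix does \emph{not} extend to this modified setting, so existence of the quantum theory defining $\mc{F}_{grav}$ is itself part of what must be established, not an input. Relatedly, the uniqueness/obstruction computation you invoke (the two-parameter family of admissible counterterms absorbed into $\eps$ and $\delta$) is proved in the appendix only for $\R\times X$ with $\C^\times$-, $\mf{sl}(2)$- and holomorphic-translation invariance; on the punctured geometry $X=S^1\times\mbb{P}^1\times\C\setminus(0\times 0\times\C)$ with the singular insertion $D(F)$, translation invariance in $w$ and the $\mf{sl}(2)$ symmetry are broken, so the cohomological rigidity that your ``match one structure constant per order'' strategy relies on has not been shown to survive. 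The same issue afflicts the $M5$ side: for $\delta\neq 0$ the defect chiral algebra involves dynamical Chern--Simons at non-integral level $-\eps/\delta$, whose quantization the paper explicitly leaves unexamined, so the object $\lim_{M\to\infty}\mc{F}_{M5}(N+M\mid M)(\eps,\delta)$ is not yet constructed to the precision your comparison needs. Finally, for part (2) you posit a characterization of the Maulik--Okounkov Yangian as ``the unique filtered deformation respecting the triangular decomposition''; no such uniqueness theorem is cited or proved, and without it the generator-by-generator matching you mention is exactly the open problem, not a fallback. In short, your proposal correctly identifies the tree-level anchor (Proposition \ref{proposition:tree_level_holography}) and the right shape of the argument, but each of the three pillars — quantum existence of $\mc{F}_{grav}$, rigidity of its $\delta$-corrections, and a usable characterization of $Y(\what{\mf{gl}}_{k+1})$ — is currently missing, which is precisely why the statement remains a conjecture.
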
  
    \begin{remark}
    Implicit in this conjecture is the statement that the $5$-dimensional gauge theory on $S^1 \times \mbb{P}^1 \times \C$ exists at the quantum level.   The proof in the appendix that the theory quantizes does not extend to the case we consider here, where the fields are modified so that they are trivial at $\infty$ on $\mbb{P}^1$. 
    \end{remark}
   This conjecture is clearly a quantization of the statement we proved earlier \ref{proposition:tree_level_holography}. Indeed, the boundary condition at $\infty$ we have chosen in $5$-dimensions is a lift of the one we considered before in the $3$-dimensional compactification. Therefore the algebra of operators we considered in proposition \ref{proposition:tree_level_holography}, living at the boundary of the $3$-dimensional compactification, is simply the one obtained from the algebra denoted by by $\mc{F}_{grav}(\eps,\delta,N)$ in conjecture \ref{conjecture:holography} by only considering tree-level contributions to the OPE.    

I hope to return to  an explicit investigation of the algebra $\mc{F}_{grav}(\eps,\delta,N)$ in future.

\subsection{}
    Let me try to motivate this conjecture in terms of more standard holographic ideas.  Consider the $5$-dimensional gauge theory on $S^1 \times \mbb{P}^1 \times \C$, coupled to the theory on $(N+M\mid M)$ $M5$ branes.  Now, when we compactify the theory to $\C$, the $5$-dimensional gauge theory does not contribute at all, because this theory has no zero modes.  We are  thus left with the theory on $(N+M\mid M)$ $M5$ branes.  

    When $M \to \infty$, we expect that the effect of coupling the gravitational theory to $(N+M \mid M)$ $M5$ branes has the effect of removing the location of the $M5$ branes (as well as introducing the field $F$ sourced by the branes).  

    Now, we can consider first, compactifying to $\C$ and then sending $M \to \infty$; or first, sending $M \to \infty$ and then compactifying to $\C$.  We expect that it doesn't matter in which order we perform these operations.  

    This leads to the statement in the conjecture.  

    \appendix

    \section{An $11$-dimensional supergravity background}
\label{appendix:sugra}
    In this appendix we will prove the following theorem, stated in section \ref{section:11dsugra_omega}.  Suppose, as explained there, that we have an $11$-manifold of the form $M \times Y$ where $M$ is a manifold with a $G2$ structure and $Y$ is a $4$-dimensional hyper-K\"ahler manifold.  Let $\psi_M$ denote the covariant constant spinor on $M$ coming from the $G2$ structure. Choose one of the complex structures on $Y$, and let $\psi_Y$ denote a spinor corresponding to this complex structure. This has the feature that for every complex $1$ form $\lambda$ on $Y$ which is $(1,0)$ for our given complex structure, we have $\lambda \cdot \psi_Y = 0$ where $\cdot$ indicates Clifford multiplication. Let $\omega^{2,0}_Y$ and $\omega^{0,2}_Y$ denote the holomorphic and anti-holomorphic volume forms on $Y$ coming from the hyper-K\"ahler structure.

    Let $V$ be a Killing vector field on $M$ which also preserves the $G2$ structure (meaning that Lie derivative with respect to $V$ preserves the $3$-form on $M$ defining the $G2$ structure).   Let $V^{\flat}$ denote the dual $1$-form, and let $\omega_V = \d V^{\flat}$. Note that we can also  view $\omega_V$ as being the covariant derivative of $V$ or of $V^{\flat}$, because the fact that $V$ is a Killing vector field means that its covariant derivative is anti-symmetric. 
    \begin{theorem*}
    Consider the $11$-manifold $M \times Y$ equipped with the $4$-form
    $$
    F = \omega_V \wedge \omega^{0,2}_Y
    $$
    where $\omega^{0,2}_Y$ is the complex conjugate of the holomorphic symplectic form on $Y$. We will view $F$ as a flux defining an $M$-theory background. Then,
    \begin{enumerate} 
    \item $M \times Y$ with the flux $-\eps F$ satisfies the equations of motion of $11$-dimensional supergravity (where $\eps$ is a constant). 
    \item The spinor 
    $$\Psi(\eps)=  \psi_M \otimes \psi_Y + \eps (X \cdot \psi_M) \otimes (\omega^{0,2}_Y \cdot \psi_Y)$$
    is a generalized Killing spinor for the supergravity background with flux $-\eps F$.  Here, $\cdot$ indicates Clifford multiplication. 
    \item The square of $\Psi(\eps)$ is the Killing vector field $\eps V$ on $M \times Y$.  
    \end{enumerate}
    \end{theorem*}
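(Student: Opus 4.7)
The plan is to verify the three claims in turn, exploiting the product structure $M \times Y$, the fact that the underlying metric is unperturbed, and that both $\psi_M$ and $\psi_Y$ together with $\omega^{0,2}_Y$ are covariantly constant on the respective factors.

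For the equations of motion, since the metric is a product of Ricci-flat metrics ($G_2$ holonomy on $M$, hyper-K\"ahler on $Y$), we have $R_{MN} = 0$; I would then show the stress-energy tensor of $-\eps F$ vanishes identically. The key fact is that $(\omega^{0,2}_Y)^{ij}$, obtained by raising indices with the Hermitian metric on $Y$, vanishes because $\omega^{0,2}_Y$ has purely anti-holomorphic indices. This forces both $F_{MPQR}F_N{}^{PQR}$ and $F_{MNPQ}F^{MNPQ}$ to vanish since they always contract two $\omega^{0,2}_Y$ factors against each other. For the flux equation, $dF = 0$ follows from $d\omega_V = d^2 V^{\flat} = 0$ and $d\omega^{0,2}_Y = 0$; $F \wedge F = 0$ since $\omega^{0,2}_Y \wedge \omega^{0,2}_Y$ already vanishes on the $4$-manifold $Y$; and $d\star F = 0$ reduces under the product structure to $d\star_M \omega_V = 0$, which is the Weitzenb\"ock statement that the $1$-form dual to a Killing vector on a Ricci-flat manifold is harmonic (so $\omega_V = dV^{\flat}$ is both closed and coclosed).

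For the generalized Killing spinor equation, schematically
$$\nabla_M \Psi \;=\; \tfrac{1}{288}\bigl(\Gamma_M{}^{NPQR} - 8\,\delta_M{}^N \Gamma^{PQR}\bigr) F_{NPQR}\,\Psi,$$
I would expand in powers of $\eps$. The zeroth-order piece $\Psi(0) = \psi_M \otimes \psi_Y$ is covariantly constant on each factor, so the equation is trivially satisfied modulo $\eps$. At order $\eps$, the left-hand side uses that $\omega^{0,2}_Y \cdot \psi_Y$ is covariantly constant (both factors are), reducing to
$$\nabla_M(V \cdot \psi_M \otimes \omega^{0,2}_Y \cdot \psi_Y) = (\nabla_M V)\cdot \psi_M \otimes \omega^{0,2}_Y \cdot \psi_Y,$$
with $\nabla V = \omega_V$ as an antisymmetric endomorphism. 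The right-hand side is the Clifford action of $\omega_V \wedge \omega^{0,2}_Y$ on $\psi_M \otimes \psi_Y$; decomposing this according to the $M$ vs.\ $Y$ split, the $\Gamma_M{}^{NPQR}$ and $\delta_M{}^N \Gamma^{PQR}$ pieces reassemble into the required form once one uses $G_2$-invariance on $M$ and the annihilation of $(1,0)$-forms by $\psi_Y$ on $Y$. The $\eps^2$ piece of the equation vanishes because $F$ is already linear in $\eps$ and one factor contracts $\omega^{0,2}_Y$ with itself.

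The computation of $\Psi(\eps)^2$ is a direct spinor-bilinear calculation. The $\eps^0$ term vanishes by chirality of $\psi_M \otimes \psi_Y$ under the relevant spin decomposition; the $\eps^2$ term vanishes by the same ``purely anti-holomorphic'' contraction argument used above; and the $\eps$ term combines the bilinear on $M$ (which sends $\psi_M \otimes V\cdot \psi_M$ to $V$, using the $G_2$-invariant pairing) with the scalar bilinear on $Y$ of $\psi_Y$ against $\omega^{0,2}_Y \cdot \psi_Y$, yielding $\eps V$ up to a normalization absorbable into conventions. The main obstacle will be the Clifford bookkeeping at order $\eps$ in the Killing spinor equation: tracking how $\Gamma_M{}^{NPQR}$ distributes across the $M$/$Y$ split and verifying that the piece of $\omega_V \cdot \psi_M$ in the ``bad'' $G_2$-representation of $\Lambda^2 T^*M$ either vanishes by holonomy or cancels against the trace term $\delta_M{}^N \Gamma^{PQR}$. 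The required identities are standard features of $G_2$ and $SU(2)$ spinor algebra, but they must be assembled carefully to see the cancellation.
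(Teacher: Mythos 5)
Your outline for part (1), and the broad skeleton of (2) and (3), do track the paper's proof: the equations of motion reduce to closedness and coclosedness of $\omega_V$ together with the vanishing of all complex-bilinear contractions of $F$ with itself (the precise point being that the complex-bilinear inverse metric pairs two anti-holomorphic indices to zero --- not that the raised-index form $(\omega^{0,2}_Y)^{ij}$ itself vanishes), and the square of $\Psi(\eps)$ is computed from the spinor bilinears essentially as you describe. The genuine gap sits exactly where you flag uncertainty: the order-$\eps$ check of the Killing spinor equation in directions $\alpha$ tangent to $Y$. For such $\alpha$ one has $\alpha\vee F = \omega_V\otimes\lambda$ with $\lambda = \alpha\vee\omega^{0,2}_Y$ a $(0,1)$-form, and its Clifford action on $\psi_M\otimes\psi_Y$ (and likewise the $\Gamma_M{}^{NPQR}$ piece of the formula you quote, which acts through $e^\flat_\alpha\wedge F = \omega_V\wedge(e^\flat_\alpha\wedge\omega^{0,2}_Y)$) carries the factor $\omega_V\cdot\psi_M$ on the $M$ side, while the $Y$-side factor is generically nonzero: it is $(1,0)$-forms, not $(0,1)$-forms, that annihilate $\psi_Y$, so $\lambda\cdot\psi_Y\neq 0$. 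Since $\nabla_\alpha\Psi(\eps) = 0$ for $\alpha$ tangent to $Y$, you need $\omega_V\cdot\psi_M = 0$, and neither mechanism you propose supplies it: a $2$-form does not annihilate $\psi_M$ ``by holonomy'' --- only the $\mf{g}_2 = \mathbf{14}$ component of $\Lambda^2 T^\ast M$ kills the invariant spinor, and $\nabla V$ has no a priori reason to lie there --- and there is no cancellation against the trace term.

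What your sketch never invokes is the hypothesis that makes this work: $V$ is assumed to preserve the $G_2$ $3$-form $A$, not merely the metric. The paper's key lemma is that this forces $\omega_V = \nabla V$ into the $\mf{g}_2$ subbundle: since $A$ is parallel, the contraction $\omega_V\ast A$ is the antisymmetrization of $(\nabla V)\vee A = \nabla(V\vee A)$, hence $\omega_V\ast A = \d(V\vee A) = \mc{L}_V A = 0$, which is precisely the condition for $\omega_V$ to be a $\mf{g}_2$ $2$-form, and then $G_2$-invariance of $\psi_M$ gives $\omega_V\cdot\psi_M = 0$. With this lemma added, your $\eps$-expansion closes up along the lines of the paper's argument (it also disposes of the $e^\flat_\alpha\wedge F$ terms of the full eleven-dimensional gravitino variation, which the paper sidesteps by writing the generalized Killing spinor equation directly as $\nabla_\alpha\Psi(\eps) = \eps\,(\alpha\vee F)\cdot\Psi(\eps)$); the remaining identities you list, for $\alpha$ tangent to $M$ and for the $\eps^2$ terms, are the same ones the paper verifies.
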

    \begin{proof}
    The equations of motion of $11$-dimensional supergravity with closed $4$-form flux $F$ are the following. We require that \begin{align*} 
     \d \ast F + \tfrac{1}{2} F \wedge F & = 0\\ 
    \op{Ric}(\alpha,\beta) &= \tfrac{1}{2}\ip{\alpha\vee F, \beta \vee F  } -\tfrac{1}{6} g(\alpha,\beta) \ip{F,F}  
    \end{align*}
    for all vector fields $\alpha,\beta$ on $M \times Y$. 

    In our situation, most of the terms that appear in these equations vanish. We have
    \begin{align*} 
     \d \ast F &= 0 \\
    F \wedge F &= 0 \\
    F \wedge \ast F &= 0 \\
    \op{Ric}(\alpha,\beta) &= 0.
    \end{align*}
    Note that we are using the complex-linear extension of the Hodge star operator to forms with complex coefficients.  With this Hodge star operator, the form $\omega^{0,2}_Y$ on $Y$ is self-dual, which is why $F \wedge \ast F = 0$ and $\d \ast F = 0$.  This reduces the field equations to the equation
    $$
    (\alpha \vee F) \wedge \ast (\beta \vee F) = 0
    $$ 
    for all vector fields $\alpha,\beta$.  If both $\alpha,\beta$ are parallel to $M$, then this is clear from the fact that $\ast \omega^{0,2}_Y = \omega^{0,2}_Y$ and that $\left( \omega^{0,2}_Y\right)^2 = 0$.  Next suppose that one of the vector fields, say $\alpha$, is parallel to $Y$, and the other is parallel to $M$. Then $\alpha \vee F$ is a $(0,1)$ form on $Y$ tensored with a $2$-form on $M$.  For any $(0,1)$ form $\lambda$ on $Y$, we have $\lambda \wedge \ast \omega^{0,2}_Y = 0$.  Finally, if both $\alpha,\beta$ are parallel to $Y$, then the statement follows from the fact that for any two $(0,1)$ forms $\lambda,\lambda'$ on $Y$ we have $\lambda \wedge \ast \lambda' = 0$.

    Next let us check that $\Psi(\eps)$ is a a generalized Killing spinor for the $M$-theory background with flux $-\eps F$. We need to check that for all vector fields $\alpha$ on $M \times Y$, we have
    $$
    \nabla_\alpha \Psi(\eps) - \eps (\alpha \vee F) \cdot \Psi(\eps) = 0
    $$
    where $\cdot$ indicates Clifford multiplication.  

    If $\alpha$ is parallel to $Y$, then since $\psi_Y$ and $\omega^{0,2}_Y \cdot \psi_Y$ are both covariant constant on $Y$, we have $\nabla_\alpha \Psi(\eps) = 0$.  We also have in this case that 
    $$(\alpha \vee F )\cdot (X \cdot \psi_M) \otimes (\omega^{0,2}_Y \cdot \psi_Y) = 0$$
    simply because for any $(0,1)$ form $\lambda$ on $Y$, $\lambda \cdot \omega^{0,2}_Y \cdot \psi_Y = 0$.     

    It is less obvious that, if $\alpha$ is parallel to $Y$, we have
    $$
    (\alpha \vee F) \cdot (\psi_M \otimes \psi_Y) = 0.
    $$    
    To see this, note that $\alpha \vee F = \omega_V \otimes \lambda$ where $\lambda = \alpha \vee \omega^{0,2}_Y$.  Thus, 
    $$
    (\alpha \vee F) \cdot (\psi_M \otimes \psi_Y) = (\omega_V \cdot \psi_M) \otimes (\lambda \cdot \psi_Y).  
    $$
    Now, we have the following lemma.
    \begin{lemma}
     $\omega_V \cdot \psi_M = 0$.  
    \end{lemma}
    \begin{proof}
    We can see this as follows.  Recall that we require that the isometry $V$ of $M$ preserves the $3$-form defining the $G2$ structure on $M$. Recall also that the associated bundle to the frame bundle of $M$ for the adjoint representation of $\op{Spin}(7)$ is the bundle of $2$-forms on $M$.  Further, the Clifford multiplication action of $2$-forms on spinors arises from the action of the Lie algebra $\op{spin}(7)$ on spinors by passing to the associated bundles to the frame bundle. 

    Since we have a reduction of structure group to $G2$, the adjoint representation of $G2$ on the Lie algebra $\mf{g}_2$ gives us a subbundle of the bundle of $2$-forms on $M$.  We will call a $2$-form which is a section of this bundle a $G2$ $2$-form.    Since, in a local frame, the spinor $\psi_M$ is preserved by the Lie algebra $\mf{g}_2$, it is also preserved by Clifford multiplication by any $G2$ $2$-form. 

    To show that $\omega_V \cdot \psi_M = 0$, it suffices to show that $\omega_V$ is a $G2$ $2$-form.  Now, $G2$ $2$-forms are those that, when viewed as endomorphisms of the tangent bundle, preserve the $3$-form $A$ defining the $G2$ structure on $M$.  Equivalently, this means that the $2$-form $\omega_V$ is $G2$ if the $3$-form defined by 
    $$
    \omega_V \ast A = \sum g_{rs} \omega_V^{ir} A^{sjk}
    $$ 
    (i.e.\ the contraction of $\omega_V$ with $A$) is zero.  

    We will check this explicitly. By the definition of $\omega_V$, the $3$-form $\omega_V \ast A$ is the antisymmetrization of the  multilinear function which sends a triplet $\alpha,\beta,\gamma$ of vector fields to 
    $$
    (\nabla_\alpha V) \vee \beta \vee \gamma \vee  A 
    $$
    where $\vee$ indicates contraction. 

    Equivalently, we can view the $3$-form $\omega_V \ast A$ as the anti-symmetrization of 
    $$
    (\nabla V) \vee A
    $$ 
    where $\nabla V$ is viewed as a $1$-form valued in vector fields, and we contract the vector field part with the $3$-form $A$. 

    Now, since $A$ is parallel, 
    $$ 
     \nabla (V \vee A) = (\nabla V) \vee A.  
    $$
    Since the antisymmetrization of the covariant derivative of a form is the exterior derivative, we find that 
    $$
    \omega_V \ast A = \d (V \vee A).
    $$
    Finally, the Cartan homotopy formula together with the fact that $A$ is closed tells us that
    $$
    \d (V \vee A) = \mc{L}_V A
    $$
    where on the right hand side $\mc{L}_V A$ is the Lie derivative.  By assumption, this is zero, so that $\omega_V \ast A = 0$ and $\omega_V$ is a $G2$ $2$-form as desired.
    \end{proof}

    So far, we have proved that if $\alpha$ is parallel to $Y$, then 
    $$
    \nabla_\alpha \Psi(\eps) - \eps (\alpha\vee F) \cdot \Psi(\eps) = 0.
    $$ 
    Next, let us check that this is the case if $\alpha$ is parallel to $M$.  Recall that
    $$
    \Psi(\eps) = \psi_M \otimes \psi_Y + \eps (V \cdot \psi_M) \otimes (\omega^{0,2}_Y \cdot \psi_Y). 
    $$
    Now, $\nabla_\alpha \psi_M = 0$ if $\alpha$ is a vector field on $M$, but
    $$
    \nabla_\alpha (V \cdot \psi_M) = (\nabla_\alpha V) \cdot \psi_M
    $$
    since $\psi_M$ is covariant constant.  Further, by the definition of $\omega_V$, we have
    $$
    (\alpha \vee \omega_V) \cdot \psi_M = (\nabla_\alpha V) \cdot \psi_M.
    $$
    Putting this together, we find that if $\alpha$ is parallel to $M$, the following identities hold.
    \begin{align*} 
     \nabla_\alpha (\psi_M \otimes \psi_Y)  &= 0 \\
    (\alpha \vee F) \cdot (\psi_M \otimes \psi_Y)&= \nabla_\alpha ( V \cdot \psi_M) \otimes (\omega^{0,2}_Y \cdot \psi_Y)\\
    (\alpha \vee F) \cdot \left\{(V \cdot \psi_M) \otimes (\omega^{0,2}_Y \cdot \psi_Y)\right\} &= 0  
    \end{align*}
    These tell us that 
    $$
    \nabla_\alpha \Psi(\eps) - \eps (\alpha\vee F) \cdot \Psi(\eps) = 0
    $$
    as desired.

    To complete the proof, we need to compute the square of $\Psi(\eps)$.    This is a calculation we can do pointwise in a local frame.  Thus, let us discuss the calculation in terms of representations of the group $G2 \times SU(2)$.  As before, let $S_7$ denote the spin representation of $\op{Spin}(7)$, $S_4$ that of $\op{Spin}(4)$, which decomposes as $S_4 = S_+ \oplus S_-$.  The spinors $\psi_M$ on $M$ and $\psi_Y$ on $Y$ come from elements $\psi_7 \in S_7$ and $\psi_4 \in S_-$ preserved by $G2$ and $SU(2)$ respectively.  Further, the spinor $\omega^{0,2}_Y \cdot \psi_Y$ comes from a linearly independent element $\psi_4' \in S_-$.  The representation $S_-$ of $\op{Spin}(4)$ is equipped with a symplectic form, and we have $\omega(\psi_4,\psi'_4) = 1$. 

    Let us now consider the Lie bracket on the $11$-dimensional supersymmetry algebra. This is a map
    $$
    \Sym^2 ( S_7 \otimes S_4) \to \C^{11}. 
    $$   
    which takes two spinors and produces a complexified translation.  We are interested in the bracket of $\psi_7 \otimes \psi_4$ with $V \cdot \psi_7 \otimes \psi_4'$, for a vector $V \in \C^{11}$. Now, 
    $$
    \Sym^2 (S_7 \otimes S_-) =  (\Sym^2 S_7) \otimes (\Sym^2 S_-) \oplus (\wedge^2 S_7) \otimes (\wedge^2 S_-).
    $$
    Now, there is no $\op{Spin}(4)$-equivariant map from $\Sym^2 S_-$ to the $4$-dimensional vector representation, so the term involving $\Sym^2 S_-$ can not contribute to the  bracket.

    However, $\wedge^2 S_-$ is canonically trivialized by the symplectic form on $S_-$, and the $7$-dimensional $\Gamma$-matrices give a (unique up to scale) $\op{Spin}(7)$-equivariant map
    $$
    \wedge^2 S_7 \to \C^{7}.
    $$
    Thus we find a unique $\op{Spin}(7) \times \op{Spin}(4)$ equivariant map
    $$
    \Sym^2 (S_7 \otimes S_-) \to \C^7 \subset \C^{11}.
    $$
    This map therefore must, up to a non-zero constant, be the bracket on the $11$-dimensional supersymmetry algebra. 

    It follows that, for a vector $V \in \C^{7}$, we have
    $$
    [\psi_7 \otimes \psi_4, (V \cdot \psi_7) \otimes (\psi'_4) ] = \Gamma_7 ( \psi_7 , V \cdot \psi_7 ) \omega(\psi_4, \psi'_4)
    $$
    where $\Gamma_7 : \wedge^2 S_7 \to \C^7$ is the $7$-dimensional $\Gamma$-matrices map, and $\omega$ is the symplectic pairing on $S_-$.  

    The representation $S_7$ of $\op{Spin}(7)$ has a symmetric non-degenerate inner product.  Since the $7$-dimensional $\Gamma$-matrices map is unique up to scale, it follows that
    $$
    \Gamma(\psi_7, V \cdot \psi_7) = V \ip{\psi_7, \psi_7}
    $$
    up to a non-zero constant.  This proves the result, up to a non-zero constant which is convention dependent and can be absorbed into a redefinition of the parameter $\eps$. 
\end{proof}
    \subsection{}
    Next, let us prove another theorem stated in section \ref{section:11dsugra_omega}, showing how in this situation certain variations of the metric and $C$-field are exact for the supercharge we are using.

    Let us consider the family Riemannian manifolds where the underlying manifold is $M \times Z$ and where the metric on $M$ is $r^2 g_M$.  For each value of $r$, we will write down a family of $C$-fields and generalized Killing spinors which square to the vector field generating the $S^1$ action.  To do this, we first need to introduce a little notation which will allow us to compare spinors on $M$ for different values of $r$.  Let $S_M$ be the $8$-dimensional bundle of spinors on $M$.  The definition of $S_M$ does not depend on the scale of the metric.  We will take the convention that the bundle map
    $$
    \Gamma : S_M \otimes S_M \to TM 
    $$  
    given by the $\Gamma$ matrices does not depend on the scale of the metric. Then, if we fix a metric on $M$, we get a metric on $S_M$ characterized by the property that if we take the tensor product metric on $S_M \otimes S_M$, and restrict to the sub-bundle $TM$, we find the Riemannian metric on $TM$.  If we scale the inner product on $TM$ by a factor of $r^2$, we scale that on $S_M$ by a factor of $r$.

    Note that this is not the only reasonable convention for how to identify the spin bundles for different values of $r$. One could also declare, for instance, that the Clifford multiplication map is independent of $r$, in which case the $\Gamma$ matrices will depend on $r$.   

    With the convention we choose, the Clifford multiplication map 
    $$
    TM \otimes S_M \to S_M 
    $$
    is defined so that 
    $$
    \Gamma(X \cdot \psi,\psi') = X \ip{\psi,\psi'}. 
    $$
    The Clifford multiplication map then depends on the scale $r$ of the metric.  If $\cdot_r$ indicates Clifford multiplication with the metric $r^2 g_M$, then we have
    $$
    X \cdot_r \psi = r X \cdot_1 \psi.  
    $$

    Let $\psi_M$ denote the covariant constant spinor present on $M$ by virtue of the $G2$ structure.  We choose $\psi_M$ so that it is of norm one in the metric coming from $g_M$.  Let $V$ be the vector field generating the $S^1$ action on $M$.  Then, $r^{-1/2} \psi_M$ is of norm $1$ in the metric coming from $r^2 g_M$.   The family of spinors
    $$
    \Psi(\eps,r) = r^{-1/2} \psi_M \otimes \psi_Z + \eps r^{1/2} (V \cdot_1 \psi_M) \otimes (\omega^{0,2} \cdot \psi_Z) 
    $$
    is a family of generalized Killing spinors for $11$-dimensional supergravity on $M \times Z$ where the metric on $M$ is $r^2 g_M$ and the $3$-form field is 
    $$
    C(\eps,r) = -\eps r^2 V^{\flat} \omega_Z^{0,2}.
    $$
    Here the dual $1$-form $V^{\flat}$ is defined using the metric $g_M$.

    A variant of this construction which will be important for the theorem is obtained by adding on to $C(\eps,r)$ a closed $3$-form.  Closed $3$-forms do not affect the equations of motion of the  theory, nor do they affect the generalized Killing spinor equation.  The particular closed $3$-form which will be important  for us is $A_{G2}$, the $G2$ $3$-form on the manifold $M$. As we scale the metric on $M$ by $r^2 g_M$, the $G2$ form scales as $r^3 A_{G2}$.   

    Let us now consider the $3$-parameter of supergravity backgrounds equipped with a generalized Killing spinor given by: 
    \begin{align*} 
     \Psi(\eps,r,\alpha) &= r^{-1/2}\alpha  \psi_M \otimes \psi_Z + \eps r^{1/2}\alpha^{-1} (V \cdot_1 \psi_M) \otimes (\omega^{0,2} \cdot \psi_Z)\\ 
    C(\eps,r,\alpha) &= -\eps r^2 \alpha^{-2} V^{\flat} \omega_Z^{0,2} + c r^3 A_{G2}\\ 
    g &= r^2 g_M \boxtimes g_Z.  
    \end{align*}
    The spinors $\Psi(\eps,r,\alpha)$ are generalized Killing spinors which square to $\eps V$. We treat $\eps,r,\alpha$ as parameters and treat $c$ as a constant, whose value will be fixed by supersymmetry considerations.

    We now have a $3$-parameter family of $\Omega$-backgrounds for $11$-dimensional supergravity. The theorem states that these parameters are not independent: instead, there is a cohomological relation among them. 

    \begin{theorem}
    For some non-zero value of the constant $c$, the action of the vector field $r \dpa{r} + \tfrac{1}{2} \alpha \dpa{\alpha}$ on this family of supergravity solutions is $Q$-exact,  where $Q$ indicates the supercharge associated to the family of spinors $\Psi(\eps,r,\alpha)$. 
    \end{theorem}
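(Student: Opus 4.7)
The strategy is to produce an explicit odd field $\eta$---essentially a gravitino profile---such that the BRST variation $Q_\Psi \eta$ reproduces the derivative of $(g, C, \Psi)$ along $r\dpa{r} + \tfrac{1}{2}\alpha\dpa{\alpha}$. In $11d$ supergravity, once the bosonic ghost is set to the Killing spinor $\Psi$, the linearized SUSY variations with parameter $\Psi$ take the schematic form $\delta g_{mn} = \br\Psi\,\Gamma_{(m}\eta_{n)}$ and $\delta C_{mnp} = \br\Psi\,\Gamma_{[mn}\eta_{p]}$, while $\delta\Psi$ is controlled by the generalized Killing spinor equation applied to $\eta$. Thus showing $Q$-exactness amounts to building a suitable gravitino.

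The first step is to compute the variation of the family. Applying $r\dpa{r}+\tfrac{1}{2}\alpha\dpa{\alpha}$ gives $\delta g = 2r^2 g_M$ along $M$ (nothing along $Z$), $\delta C = -\eps r^2\alpha^{-2}V^\flat\wedge\omega_Z^{0,2} + 3cr^3 A_{G2}$, and $\delta\Psi = 0$. The vanishing of $\delta\Psi$ is the key simplification: the two summands of $\Psi$ scale with $r$-weights $-\tfrac{1}{2},+\tfrac{1}{2}$ and with $\alpha$-weights $+1,-1$, and the factor $\tfrac{1}{2}$ in the vector field is exactly what makes them cancel. Consequently the ansatz need not modify $\Psi$ at all, only the bosonic fields.

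Next I would propose a gravitino ansatz dictated by the available $G2\times SU(2)$-invariant objects and the scaling weights. Since $\delta g$ is proportional to $g_M$ itself, and the spinor bracket satisfies $\Gamma(\psi,X\cdot\psi) = X\ip{\psi,\psi}$, a purely $M$-directed gravitino of the form $\eta^{(1)}_m = r^{3/2}\alpha^{-1}\,(e_m\cdot\psi_M)\otimes\psi_Z$ (with $e_m$ the vielbein along $M$) will yield $\br\Psi\,\Gamma_{(m}\eta^{(1)}_{n)} \propto r^2 g_M(m,n)$. The first piece of $\delta C$, proportional to $V^\flat\wedge\omega_Z^{0,2}$, arises from this same $\eta^{(1)}$ coupled through the $\eps$-linear part of $\Psi$ (which supplies a $V$ along $M$ and an $\omega_Z^{0,2}$ along $Z$). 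The second piece, proportional to $A_{G2}$, requires an additional $M$-directed gravitino $\eta^{(2)}_m \propto r^{5/2}\alpha\,(\Gamma_{np}\psi_M)\otimes\psi_Z$, since the bilinear $\ip{\psi_M,\Gamma_{mnp}\psi_M}$ equals (up to a non-zero constant) the $G2$ $3$-form $A_{G2}$. The relative normalization of $\eta^{(1)}$ and $\eta^{(2)}$ is a single linear equation in one unknown, which fixes $c$ uniquely.

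The main obstacle is the bookkeeping of $\Gamma$-matrix identities on $S_7\otimes S_-$: every candidate term appearing in $\delta g$ and $\delta C$ must be matched against a single spinor-bilinear expression, with the only freedom being the overall scale $c$. The representation-theoretic input that makes the matching tractable is twofold. First, in the decomposition of $\Sym^2 T^\ast M$ under $G2$ the invariant subspace is one-dimensional (spanned by $g_M$), and in $\wedge^3 T^\ast M$ the $G2$-invariant subspace is one-dimensional (spanned by $A_{G2}$); these uniqueness statements collapse the matching to a single scalar equation. Second, on the $Z$ factor, the fact that $\omega_Z^{0,2}\cdot\psi_Z$ is orthogonal to $\psi_Z$ and that $\omega_Z^{0,2}\cdot\omega_Z^{0,2}\cdot\psi_Z = 0$ ensures that no unwanted cross-terms appear. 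The resulting constant $c$ is non-zero because $\ip{\psi_M,\Gamma_{mnp}\psi_M}$ is non-degenerate on the $G2$-invariant line in $\wedge^3 T^\ast M$; its being purely imaginary (as stated in the paper) follows from the reality properties of the $\Gamma$-matrices in the relevant signature. Finally, one must verify that the gravitino constructed satisfies $\delta\Psi \big|_\eta = 0$ modulo contributions absorbed by a ghost-for-ghost, which is automatic because $\Psi$ obeys the generalized Killing spinor equation with flux $-\eps\alpha^{-2}F$ along the entire family.
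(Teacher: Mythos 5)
Your overall strategy --- exhibit a gravitino whose supersymmetry variation with parameter $\Psi$ reproduces the bosonic variation, after observing that $r\dpa{r}+\tfrac{1}{2}\alpha\dpa{\alpha}$ annihilates $\Psi$ itself --- is the same as the paper's. But your explicit ansatz fails at its first step. The $\C^7$-valued (metric-producing) component of the $11$-dimensional spinor bilinear on $S_7\otimes S_-$ is $\Gamma_7\otimes\omega$: the antisymmetric $7$-dimensional $\Gamma$-map on $\wedge^2 S_7$ tensored with the symplectic form $\omega$ on $S_-$. There is no component built from a symmetric pairing on $S_-$, since $\Sym^2 S_-$ contains no invariant and the vector representation of $\op{Spin}(4)$ is $S_+\otimes S_-$. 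Consequently, for your $\eta^{(1)}_m = r^{3/2}\alpha^{-1}(e_m\cdot\psi_M)\otimes\psi_Z$ the pairing with the $\eps^0$ part $\psi_M\otimes\psi_Z$ of $\Psi$ carries the factor $\omega(\psi_Z,\psi_Z)=0$, while the pairing with the $\eps$ part is antisymmetric in the two $M$-indices (it is a contraction of $A_{G2}$) and weighted by $\eps$; so $\br{\Psi}\,\Gamma_{(m}\eta^{(1)}_{n)}$ is \emph{not} proportional to $r^2 g_M$ --- it vanishes, by the same mechanism that makes $\psi_M\otimes\psi_Z$ square to zero. The same defect undermines $\eta^{(2)}$: the $A_{G2}$ piece of $\delta C$ has all three legs along $M$, hence also requires the antisymmetric $S_-$ pairing, which vanishes on $\psi_Z\otimes\psi_Z$, so at order $\eps^0$ your second gravitino produces nothing; and at order $\eps$ both of your gravitini produce unwanted purely-$M$ three-form cross terms through $\omega(\omega^{0,2}_Z\cdot\psi_Z,\psi_Z)\neq 0$, contrary to your claim that no cross terms appear.

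The repair is the paper's choice: a single gravitino $\lambda_m\propto \alpha^{-1}r^{1/2}(e_m\cdot\psi_M)\otimes(\omega^{0,2}_Z\cdot\psi_Z)$, i.e.\ with $\omega^{0,2}_Z\cdot\psi_Z$ rather than $\psi_Z$ on the hyper-K\"ahler factor. Then the $\eps^0$ part of $\Psi$ pairs nondegenerately (since $\omega(\omega^{0,2}_Z\cdot\psi_Z,\psi_Z)\neq 0$) and yields both the metric variation $2r^2 g_M$ and the $A_{G2}$ three-form, while the $\eps$ part yields exactly the $V^{\flat}\wedge\omega^{0,2}_Z$ term and no metric variation (because $\omega(\omega^{0,2}_Z\cdot\psi_Z,\omega^{0,2}_Z\cdot\psi_Z)=0$), with no spurious contributions. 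Two further ingredients you omit and the paper needs: first, one must check that $\lambda$ solves the linearized gravitino equation in the flux background, which the paper does by noting $\lambda$ is covariantly constant and the flux-contraction term vanishes by an $SU(2)_+$ weight count; second, the matching of the coefficient of $V^{\flat}\wedge\omega^{0,2}_Z$ relative to the metric term is not something you can arrange by tuning your ansatz --- the paper secures it by the indirect consistency argument that the variation of $F$ must itself be $Q$-exact (otherwise one would obtain a first-order deformation varying only $F$ while $\Psi$ stays a generalized Killing spinor, which is impossible), and only the $A_{G2}$ coefficient is then absorbed into the choice of the constant $c$.
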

    This theorem implies that if we move along a trajectory for this vector field in the parameter space, we do not change the twisted version of $M$-theory we construct.  In particular, if we set $\alpha = r^{1/2}$, so that the size of the supercharge is $r^{1/2}$, we find that after twisting, everything is independent of $r$.  
    \begin{proof}
    As a first step,  we will construct a gravitino field $\delta \lambda(r,\alpha)$ which  satisfies the equations of motion modulo $\delta^2$ and such that applying the supercharge we use  to $\delta \lambda(r,\alpha)$ yields the $r$-derivative of the family of metrics $r^2 g_M$.  Applying supersymmetry to the gravitino field $\lambda$ will also yield some $C$-field, which will show that we can trade scaling of the metric for variation of the $C$-field. 

    The metric $g_M$ is a section of $\Sym^2 T^\ast M$.  There is a map, given by Clifford multiplication, of bundles on $M$
    $$
    \mu_r : \Sym^2 T^\ast M \otimes S_M \to T^\ast M \otimes S_M  
    $$
    Since Clifford multiplication depends on $r$, this map depends on $r$. We will evaluate this map at $r = 1$, and just call it $\mu$ instead of $\mu_1$.

    In coordinates, this map sends
    $$
    \sum f_{ij}\d x_i \d x_j \psi \mapsto \sum f_{ij} \d x_i \gamma^j \psi
    $$
    where $\gamma^j$ is the $\gamma$-matrix generating the action of the Clifford algebra. 

    Recall that $\psi_M$ is a covariant constant spinor on $M$ associated to the $G2$ structure.  Define a spinor-valued $1$-form
    $$
    \lambda_M(r,\alpha)  = 2 \alpha^{-1}r^{1/2}  \mu ( g_M \otimes \psi_M)  \in \Omega^1(M, S_M). 
    $$
    Note that $\lambda_M$ is covariant constant, because $g_M$ and $\psi_M$ are, and the operation of Clifford multiplication respects the Levi-Civita connection.

    Then, define a $1$-form valued spinor on $M \times Z$ by
    $$
    \lambda(r,\alpha) = \lambda_M(r,\alpha) \otimes (\omega^{0,2} \cdot \psi_Z). 
    $$
    This is again covariant constant, because it is a tensor product of covariant constant objects on $M$ and on $Z$.

    In the absence of a $C$-field, the linearized equation of motion for a gravitino field states that a certain contraction of its covariant derivative vanishes. Thus, any covariant constant gravitino field satisfies the linearized equation of motion, and in particular $\lambda(r,\alpha)$ does.   

    If we turn on the $C$-field, there is an extra term in the equations of motion which is a certain contraction of $F = \d C$ and $\lambda$. With the $C$-field we use, this extra term is zero. Recall that $F = \omega_V \otimes \omega^{0,2}_Z$. In a local frame on $Z$, we can analyze how $F$ and $\lambda$ transform under $SU(2)_+$ (where we take the convention that the holonomy of $Z$ is in $SU(2)_-$). We see that $F$ has weight $+2$ under the Cartan of $SU(2)_+$ and $\lambda$ has weight $+1$.  Any objected constructed by contractiosn of $F$ and $\lambda$ will have weight $+3$. However, the highest weight under this Cartan that appears in the tensor product of the vector representation of $\op{Spin}(4)$ with the spin representation is $+2$, so the contraction of $F$ and $\lambda$ must be zero.  Therefore, $\lambda$ satisfies the equations of motion.

    Let 
    $$\Gamma_r : S \otimes S \to T^\ast M \times Z$$
    denote the map obtained from the $11$-dimensional $\gamma$-matrices.  The $r$-dependence issuch that $\Gamma_r = r^2 \Gamma_1$.   The action of local supersymmetry on the fields of supergravity sends a variation of the gravitino to a variation of the metric, by the map 
    $$
    \op{Id}\otimes \Gamma_r :  T^\ast M \otimes S \otimes S  \mapsto T^\ast M \otimes T^\ast M  
    $$
    followed by symmetrization. (In the physics literature, this formula would be written as something like
    $$
    \partial g_{ij} = \sum \lambda_{i}^{\alpha} \gamma^{i}_{\alpha \beta} \eps^{\beta}
    $$
    where Roman letters are space time indices, and Greek letters are spinor indices and $\eps$ is the generator of the supersymmetry).

    It is clear from this that the action of the supercharge 
    $$r^{-1/2} \alpha \psi_M \otimes \psi_Z$$
     on $\lambda(r,\alpha)$ produces the metric $2 r^2 g_M$ on $M$, as desired.  Further, the action of 
    $$\Psi_1 = (V \cdot \psi_M) \otimes (\omega^{0,2} \cdot \psi_Z)$$
    on $\lambda$ is zero.    
      
    We conclude that the operation of applying $r \dpa{r}$ to the family of metrics $r^2 g_M$ is made $Q$-exact by the gravitino field $\lambda$.

    We are not quite done, however, because the action of supersymmetry on a gravitino can also produce a $3$-form field.  In the physicists notation, the formula for this would be something like
    $$
    \delta C_{ijk} = \lambda_{i}^{\alpha}\eps^{\beta} \Gamma^{jk}_{\alpha \beta}
    $$
    where $\Gamma^{jk}$ is a certain product of $\gamma$-matrices. The invariant meaning of $\Gamma^{jk}$ is that, if $S_{11d}$ denotes the the spin representation of $\op{Spin}(11)$ and $V_{11d}$ denotes the vector representation, we have a decomposition into irreducible representations
    $$
    \Sym^2 S_{11d} = V_{11d} \oplus \wedge^2 V_{11d} \oplus \wedge^5 V_{11d}.
    $$
    The matrix $\Gamma^{jk}$ arises from the map 
    $$\Gamma^{(2)}:  S_{11d}^{\otimes  2} \to \wedge^2 V_{11d}$$
    which projects onto the second summand of the decomposition of $\Sym^2 S_{11d}$. 
    This map also gives rise to the central extension corresponding to the $M2$ brane.  

    The $3$-form field associated to a spinor $\psi$ and a gravitino $\lambda$ comes from applying the map
    $$
    \op{Id} \otimes \Gamma^{(2)}: V_{11d} \otimes S_{11d}^{\otimes 2} \to V_{11d} \otimes \wedge^2 V_{11d} 
    $$ 
    followed by the antisymmetrization map. 

    Our task is now to understand this $3$-form field.  Let us local coordinates $x_1,\dots,x_7$ on the $G2$ manifold $M$. The $3$-form field is 
    \begin{multline*}
    \sum g_{ij}  \d x_i \Gamma^{(2)}\left( (\d x_j \cdot \psi_M)\otimes (\omega^{0,2}_Z \cdot \psi_Z), \psi_M \otimes \psi_Z  \right) \\
    +  \eps \sum g_{ij}  \d x_i \Gamma^{(2)}\left( (\d x_j \cdot \psi_M)\otimes (\omega^{0,2}_Z \cdot \psi_Z), (V \cdot \psi_M) \otimes (\omega^{0,2} \cdot \psi_Z)  \right) . 
    \end{multline*} 
    Here, as above, $V$ is the vector field on $M$ generating the $S^1$ action. 

    Let $A_{G2}\in \Omega^3(M)$ denote the $3$-form associated to the $G2$ structure.  The symmetries of the problem tell us that for any vector field $W$ on $M$ we have
    \begin{align*} 
    \Gamma^{(2)} \left(  ( W  \cdot \psi_M) \otimes (\omega^{0,2}_Z \cdot \psi_Z), \psi_M \otimes \psi_Z  \right) &= c_1 W \vee A_{G2}\\
     \Gamma^{(2)}\left( ( W \cdot \psi_M) \otimes (\omega^{0,2}_Z \cdot \psi_Z), (V \cdot \psi_M) \otimes (\omega^{0,2} \cdot \psi_Z)  \right) &= c_2 g(W,V) \omega^{0,2}_Z 
    \end{align*}
    for some non-zero constants $c_1$, $c_2$.   
     
    From this, and calculations of how everything depends on the scale $r$, we see that applying the supercharge
    $$
    \Psi(\eps,r,\alpha) = r^{-1/2}\alpha  \psi_M \otimes \psi_Z + \eps r^{1/2}\alpha^{-1} (V \cdot_1 \psi_M) \otimes (\omega^{0,2} \cdot \psi_Z) 
    $$ 
    to 
    $$
    \lambda(r,\alpha) = 2 \alpha^{-1}r^{1/2}  \mu ( g_M \otimes \psi_M) \otimes (\omega^{0,2}_Z \cdot \psi_Z) 
    $$
    yields, as well as the metric, the $3$-form
    $$
    2 c_2 \alpha^{-2} \eps r^2 V^{\flat}  \omega_Z^{0,2} + 2 c_1 r^3 A_{G2} = 2c_2 C(r,\eps,\alpha) + 2 c_1 r^3 A_{G2}.  
    $$
    Here $A_{G2}$ refers to the $G2$ $3$-form for the metric at $r = 1$. 

    Since the $3$-form $A_{G2}$ is closed, it does not effect the supergravity equations of motion.  Let us consider, momentarily, our supergravity background as being specified by the closed $4$-form $F = \d C$, instead of by the $3$-form field.  If we work in this way, we find that the application of the supersymmetry to $\lambda(r,\alpha)$ produces the $4$-form $2 c_1 \d C(r,\eps,\alpha)$ as well as the variation of the metric we discussed above.

    Let us write down again explicitly our $3$-parameter family of supergravity backgrounds, but in terms of $F$ rather than $C$:
    \begin{align*} 
     \Psi(\eps,r,\alpha) &= r^{-1/2}\alpha  \psi_M \otimes \psi_Z + \eps r^{1/2}\alpha^{-1} (V \cdot_1 \psi_M) \otimes (\omega^{0,2} \cdot \psi_Z)\\ 
    F(\eps,r,\alpha) &= -\eps r^2 \alpha^{-2}\d  V^{\flat} \omega_Z^{0,2} \\
    g &= r^2 g_M \boxtimes g_Z.  
    \end{align*}
    Applying the vector field $r \dpa{r} + \tfrac{1}{2} \alpha \dpa{\alpha}$ will produce a first order variation of  a given supergravity background equipped with a generalized Killing spinor.    This vector field preserves the spinor $\Psi(\eps,r,\alpha)$.  Further, the variation of the metric under this vector field is $Q$-exact: it is made exact by the gravitino $\lambda(r,\alpha)$.    Supersymmetry applied to the gravitino also produces a variation of $F$.    

    The variation of $F$ under this vector field must be $Q$-exact. If not, then by adding the $Q$-exact term generated by $\lambda(r,\alpha)$, we would find the first-order variation of the supergravity solution is equivalent to one where we only vary $F$, and not $g$ or $\Psi$.  This is not possible, because $\Psi$ remains a generalized Killing spinor when we  vary the supergravity background.    

    These considerations fix the value of the constant $c_2$ so that the action of vector field $r \dpa{r} + \tfrac{1}{2} \alpha \dpa{\alpha}$ on our family of supergravity backgrounds is $Q$-exact.  

    Now let's again consider the supergravity background as being defined by a $3$-form instead of by its field-strength.  Consider the family of supergravity backgrounds  
    \begin{align*} 
     \Psi(\eps,r,\alpha) &= r^{-1/2}\alpha  \psi_M \otimes \psi_Z + \eps r^{1/2}\alpha^{-1} (V \cdot_1 \psi_M) \otimes (\omega^{0,2} \cdot \psi_Z)\\ 
    \til{C}(\eps,r,\alpha) &= -\eps r^2 \alpha^{-2}\d  V^{\flat} \omega_Z^{0,2} + \frac{c_1}{3} r^3 A_{G2} \\ 
    g &= r^2 g_M \boxtimes g_Z.  
    \end{align*}
    We find that the action of the vector field $r \dpa{r} + \tfrac{1}{2} \alpha \dpa{\alpha}$ on this family of supergravity backgrounds is $Q$-exact, as desired.  
    \end{proof}

    \section{Quantization of the theory}
    \label{appendix_quantization}
In this appendix we will prove the theorems stated in section \ref{section:quantization} concerning quantization of the $5d$ gauge theory. 
    \begin{theorem*}
    Let $X$ be a conical complex symplectic surface, and consider our gauge theory on $\R \times X$ with gauge group $\mf{gl}_N$. Let us consider quantizing the theory in perturbation theory in the loop expansion parameter $\delta$, and let us work ``uniformly in $N$'' as discussed in detail in \cite{CosLi15}.  Let us also ask that the quantization is compatible with the $\C^\times$-action on $X$. Finally, let us allow negative powers of $\eps$ as long as they are accompanied by positive powers of $\delta$. Heuristically, this means that we should treat $\eps$ and $\delta$ as both being small but with $\delta$ much smaller than any positive power $\eps^n$ of $\eps$.  

    Then, there are no obstructions (i.e.\ anomalies) to producing a consistent quantization.  At each order in $\delta$, we are free to add $\op{dim} H^2(X) + 1$ independent terms to the action functional, leading to an ambiguity in the quantization.   Explicitly, the deformation of the action functional we can incorporate at $k$ loops is
     \begin{align*} 
     S & \mapsto S  + \delta^{k}\eps^{-k}\left( \tfrac{1}{2} \int \alpha \d \alpha + \tfrac{1}{3} \int \alpha \ast_\eps \alpha \ast_\eps \alpha\right)\\
     S & \mapsto S +   \delta^k \eps^{-k}\int \alpha \ast_{\eps} \alpha \ast_{\eps,w} \alpha 
    \end{align*}
    where $\ast_{\eps,w}$ indicates the first-order deformation of the $\ast$-product on $X$ associated to a class $w \in H^2(X)$. 

    \end{theorem*}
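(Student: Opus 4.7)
The plan is to analyze the problem in the BV formalism and identify the relevant obstruction-deformation complex, then compute its cohomology via Loday--Quillen--Tsygan and a non-commutative collapse argument. First, I would set up the classical theory as an open-string-type Chern--Simons theory built from the cyclic dga $\mc{A}_X = \Omega^\ast(\R) \what{\otimes} \Omega^{0,\ast}(X) \otimes \mf{gl}_N$ with Moyal product and trace given by integration against the holomorphic volume form. Anomalies and deformations are governed by the complex of local functionals on the space of fields, modulo the classical BRST differential; the quantization problem is then to find, order by order in $\delta$, an $L_\infty$-solution to the quantum master equation. Standard BV machinery reduces the question to computing two cohomology groups of local functionals: one controlling anomalies (ghost number $+1$) and one controlling ambiguities (ghost number $0$).

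Second, I would reduce the calculation to a Lie-algebra cohomology computation. Since every solution to the classical equations of motion is locally trivial with stabilizer $\mf{gl}_N \otimes \Oo_\eps(X)$ (a matrix algebra over the non-commutative deformation of $\Oo_X$), the cochain complex of local operators at a point is the Chevalley--Eilenberg complex $C^\ast(\mf{gl}_N \otimes \Oo_\eps(X))$; working ``uniformly in $N$'' removes trace relations. Restricting to single-trace Lagrangians and invoking Loday--Quillen--Tsygan identifies these, up to the usual shift, with the cyclic cohomology $HC^\ast(\Oo_\eps(X))$ of the non-commutative algebra. The decisive collapse comes here: for the Weyl-type deformation $\Oo_\eps(X)$ of a symplectic surface (an ``algebra of holomorphic differential operators''), cyclic cohomology concentrates in a small number of degrees, with single-trace local operators existing only in odd ghost numbers $3, 5, 7, \dots$. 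Modulo $\eps$ these classes are represented by $\op{Tr}(\chi^{2n+1} \partial_{z_1} \chi \, \partial_{z_2} \chi)$, and a standard lift extends them to $\eps \neq 0$.

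Third, I would invoke descent to convert local operators into Lagrangian densities. Because the translations $\partial_{z_1}, \partial_{z_2}$ act on operators through gauge symmetries (after turning on $\eps$), and $\partial_t$ is likewise BRST-trivial on local operators, descent is particularly clean: an operator of ghost number $k$ produces Lagrangian densities valued in forms on $\R \times X$, and pairing the form-degree-$(5-k)$ component with $H^{5-k}(\R \times X) = H^{5-k}(X)$ yields deformations, while the form-degree-$(6-k)$ component pairs with $H^{6-k}(X)$ to give anomalies. The only operators contributing are those of ghost numbers $3$ and $5$. Thus deformations at each loop order $k$ are indexed by $H^0(X) \oplus H^2(X)$ --- giving the two families in the theorem (the overall Chern--Simons rescaling, and the $H^2(X)$ worth of first-order deformations of the $\ast$-product) --- and potential anomalies live in $H^1(X) \oplus H^3(X)$. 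The loop-counting accompanied by the precise power $\delta^k \eps^{-k}$ follows from the weight of $\eps$ under the conical $\C^\times$-action together with cohomological ghost-number shifts, which force any counter-term at $k$ loops to carry exactly $-k$ powers of $\eps$.

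The main obstacle will be establishing the cyclic cohomology collapse rigorously and verifying that the assumed vanishing (or absence) of the anomaly groups holds for the conical surfaces of interest. For $X = \C^2$, $H^1$ and $H^3$ vanish outright. For general conical $X$ one needs a separate argument; in particular for $X = T^\ast \Sigma$ the potential $H^1(\Sigma)$-valued anomaly must be ruled out, which I would handle by a symmetry argument using the $\C^\times$-weight of $\eps$ together with the fact that the relevant cyclic class does not admit a $\C^\times$-invariant descendant pairing nontrivially with $H^1(\Sigma)$. The version for $\C^2$ with full $\mf{sl}_2$ and holomorphic translation symmetry is then obtained by further cutting down the deformation complex: $\mf{sl}_2$-invariance and holomorphic-translation-invariance kill all but the two displayed classes, which correspond precisely to rescalings of $\eps$ and $\delta$.
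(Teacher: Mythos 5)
Your overall strategy is the same as the paper's: identify the obstruction--deformation complex with (the dual of) cyclic cohomology of the non-commutative algebra $\Oo_\eps(X)$ via Loday--Quillen--Tsygan in the uniform-in-$N$ setting, use the collapse of cyclic (co)homology under the deformation quantization (in the paper this is the statement that the differential $t\partial + \eps \mc{L}_\pi$ has cohomology concentrated in the line generated by the holomorphic volume form, yielding generators $\eta_3,\eta_5,\dots$), run descent to land in $\Omega^\ast(\R\times X)\otimes \C[\eta_3,\eta_5,\dots]/\C$, and pin the powers of $\eps$ at $k$ loops by $\C^\times$-equivariance. Up to the packaging (the paper's appendix phrases the computation in terms of $D$-modules of jets and $\omega_{\R\times X}\otimes^{\mbb{L}}_D(-)$, while you phrase it in terms of Chevalley--Eilenberg cochains of local operators and descent, which is exactly the paper's own main-text sketch), this part of your argument is the paper's argument.

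The genuine gap is in how you dispose of the anomaly group for general conical $X$, specifically $X=T^\ast\Sigma$ with $H^1(\Sigma)\neq 0$. After imposing $\C^\times$-invariance the potential anomalies still contain $\eta_5\, H^1(X)\,\eps^{-k}\delta^k$: the classes in $H^1(T^\ast\Sigma)=H^1(\Sigma)$ are pulled back from the base and hence have weight zero under the fiberwise $\C^\times$-action, and since you have explicitly allowed negative powers of $\eps$ accompanied by $\delta^k$, the weight of $\eps$ can always be adjusted to make such an anomaly candidate $\C^\times$-invariant. So your proposed ``symmetry argument using the $\C^\times$-weight of $\eps$'' cannot rule these classes out -- the equivariance constraint only fixes the power of $\eps$ at each loop order, exactly as it does for the allowed deformations, and it treats $\eta_5 H^1$ no differently than it treats $\eta_5 H^0$. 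The paper instead kills this anomaly by a locality/gluing argument: the order-$(n+1)$ obstruction class restricts to open subsets, so it suffices to show it vanishes on $T^\ast A$ for annuli $A$ (vanishing of all such restrictions forces vanishing of the class in $H^1$); any order-$n$ quantization on $T^\ast A$ extends to one on $T^\ast D$ for a disc $D\supset A$, because the ambiguity at each order is parametrized by $H^0[N]$ and $H^0(T^\ast D)\to H^0(T^\ast A)$ is surjective; and on $T^\ast D$ there is no obstruction since $H^1(T^\ast D)=0$, so by locality the obstruction vanishes on $T^\ast A$ and hence on $T^\ast\Sigma$. Without an argument of this kind (or some other genuinely non-formal input), your proof only establishes the theorem under the additional hypothesis $H^1(X)=H^3(X)=0$, which falls short of the stated generality. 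The $\C^2$ case with $\mf{sl}(2)$ and holomorphic translation invariance is fine as you describe it.
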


    \begin{proof}
    Throughout the proof we will use the notation introduced in section \ref{section_5d_gauge_theory}. Using the techniques of \cite{CosLi15}, we see that the obstruction-deformation complex describing quantizations of this theory is built from the cyclic cohomology of the algebra $J \mc{A}$ of jets of sections of $\mc{A}$. This is an associative algebra in $D$-modules on $\R  \times X$.   Let $J \Oo_{X}$ denote the $D$-module on $X$ of jets of holomorphic functions.  Then, the $D$-module $J \mc{A}$ is, at $\eps = 0$, quasi-isomorphic to $\pi^\ast J \Oo_{X}$, where $\pi : \R  \times X\to X$ is the projection map.  Let us first analyze the obstruction-deformation complex at $\eps = 0$, and then analyze the $\eps$-dependent terms in the differential. 

    The cyclic chain complex of $J \Oo_{X}$ is quasi-isomorphic to the dg  $D$-module 
    $$J \Omega^{-\ast}_{hol}[t^{-1}][1] = \oplus_{k, i \ge 0 } t^{-k} J \Omega^i_{hol} $$ 
    where $t^{-k} J \Omega^i_{hol}$ is situated in degree $-2k-i-1$. The differential is $t \partial$ where $\partial$ is the de Rham differential. 

    At $\eps = 0$, the obstruction-deformation complex to quantizations at $k$-loops is 
    $$
    \omega_{\R  \times X} \otimes_{D} \Oo ( \pi^\ast J \Omega^{-\ast}_{hol} [t^{-1}][1] )[N] [-6(k-1)]$$ 
    where in this equation $\Oo(M)$ means the $D$-module of functions on a $D$-module $M$ modulo constants, that is, the completed symmetric algebra of the dual with the zeroth symmetric power removed. Also the symbol $[N]$ indicates polynomials in a parameter of degree $0$ called $N$.  The point is that a Lagrangian in a $\gl_N$ gauge theory can involve polynomials in $N$ while still being defined ``uniformly in $N$'' in the sense of \cite{CosLi15}.   

    Next, let us introduce the parameter $\eps$, and as we explained in the statement of the theorem, allow $\eps$ to be negative as long as the negative powers are accompanied by positive powers of $\delta$.  This implies that when analyzing the obstruction-deformation complex at $k$ loops we find instead the cyclic cohomology of the algebra $J \Oo_{X} ((\eps))$, which is viewed as an associative algebra over $\C((\eps))$ in the category of $D$-modules on $X$, and where the product is the $\ast$-product on holomorphic functions on $X$.   We take cyclic cohomology linearly over the base field $\C((\eps))$.  

    The cyclic homology of $J \Oo_{X} ((\eps))$ is quasi-isomorphic to the complex of $D$-modules
    $$
    J \Omega^{-\ast}_{hol} [t^{-1}] ((\eps)) [1]
    $$ 
    which has two terms in the differential, $t \partial$ and $\eps \mc{L}_{\pi}$, where
    $$
    \mc{L}_{\pi} : \Omega^i_{hol} \to \Omega^{i-1}_{hol}
    $$
    is the operator defined by Lie derivative with repect to the holomorphic Poisson tensor $\pi$ on $X$. (More precisely, if $\iota_{\pi}$ indicates contraction with $\pi$, then $\mc{L}_{\pi} = [\partial, \iota_{\pi}]$.) The operator $\mc{L}_{\pi}$ is sometimes called the differential on Poisson cohomology.

    Now, the cohomology of $J \Omega^\ast_{hol}$ with respect to the differential $\mc{L}_{\pi}$ is the trivial $D$-module $\cinfty_{X}$, which is realized as the kernel of $\mc{L}_{\pi}$ inside $J \Omega^2_{hol}$.  It follows that the cyclic homology of $J \Oo_{X}((\eps))$ is quasi-isomorphic to
    $$
    \cinfty_{X} [t^{-1} ] ((\eps)) [3]. 
    $$  
    Therefore, the cyclic cohomology is the $\C((\eps))$-linear dual of this, and the obstruction-deformation complex of the theory, at $k$ loops, is described in terms of the symmetric algebra of dual.  We find that the obstruction-deformation complex at $k$ loops is is 
    $$
    \omega_{\R  \times X} \otimes^{\mbb L}_{D} \left( \cinfty_{\R  \times X} \otimes \left(\C[ \eta_3,\eta_5,\dots]\right)/\C  \right)[N] ((\eps)) 
    $$
    where the parameters $\eta_k$ are in degree $k$, and we quotient the algebra $\C[\eta_3,\eta_5,\dots]$ by the constants $\C$.  

    Note that 
    $$
    \omega_{\R \times X} \otimes^{\mbb L}_{D} \cinfty_{\R \times X} \simeq \Omega^\ast(\R \times X) [5] 
    $$
    (one can see this by using a standard resolution of the right $D$-module $\omega_{R \times X})$).  It follows that the obstruction-deformation complex is quasi-isomorphic to
    $$
    \Omega^\ast(\R \times X) \otimes \left(\C [\eta_3,\eta_5,\dots] / \C\right) [N]((\eps))[5] 
    $$
    Thus, without imposing any further symmetries, the cohomology of the obstruction-deformation is the following:
    $$
    \begin{array}{l l} 
    \text{ degree } 0 & \eta_5 H^0(X)[N]((\eps)) \oplus \eta_3 H^2(X)[N]((\eps)) \\
    \text{ degree} 1 & \eta_5 H^1(X)[N]((\eps)) \oplus \eta_3 H^3(X)[N]((\eps)) \\
    \text{ degree} -1 & \eta_3 H^1(X)[N]((\eps)). 
    \end{array}
    $$
    Order by order in $\delta$, degree $0$ elements correspond to deformations, that is, to terms we can add on to the Lagrangian, degree $1$ elements to obstructions, and degree $-1$ elements to symmetries.

    Now let us require that the quantization be compatible with the $\C^\times$ action on $X$ which scales the symplectic form. If this $\C^\times$ action scales the symplectic form with weight $k$, then to make the action functional $\delta^{-1} S$ scale invariant, we need to scale the parameters $\eps$ and $\delta$ with weight $k$ too.  We also find that the action on the obstruction complex scales $H^\ast(X)$ with weight $k$ as well.  Thus, at $l$ loops, where we are considering the coefficient of $\delta^{l-1}$, we find that the $\C^\times$-invariant terms are accompanied by $\eps^{-l}$.  Thus, imposing $\C^\times$-invariance reduces the obstruction-deformation complex by replacing each occurence of $H^i(X)[N]((\eps))$ by $H^i(X)[N]\eps^{-l}$.

    It follows that, assuming $H^1(X) = H^3(X) = 0$, the obstruction to quantization vanishes, and at each order in $\delta$, we can add on at most $\op{dim} H^{ev}(X)$ $N$-dependent terms to the Lagrangian.  We will explicitly write the possible terms later. 

    Next, let us discuss the case that $X = T^\ast \Sigma$ where $\Sigma$ is a Riemann surface.  In this case there is a potential obstruction, coming from $H^1(\Sigma) \eta_5$.      We can use a general formal argument to show that this obstruction vanishes.  Suppose it vanishes to order $n$ in $\delta$. Let $O_{n+1}(X) \in H^1(X)$ be the obstruction at order $n+1$.  Since obstructions are local, for any open subset $U \subset X$,  the class $O_{n+1}(U) \in H^1(U)$ is the restriction of $O_{n+1}(X)$ to $U$.  To show that $O_{n+1}(X)$ vanishes, it suffices to show that it does on every open set of $X$ of the form $T^\ast A$ where $A \subset \Sigma$ is an annulus.  Thus, it suffices to show that $O_{n+1}(T^\ast A)$ vanishes, for any choice of quantization to order $n$ of the theory on the annulus.  

    Let $D$ be a disc containing the annulus $A$.  We will show that every quantization of the theory on $T^\ast A$ extends to order $n$ extends to one on $T^\ast D$.  Choices in quantization are, at each order in $\delta$, parametrized by $H^0(X)[N]$.   Since the map $H^0(T^\ast D) \to H^0(T^\ast A)$ is surjective any quantization on $T^\ast A$ extends to one on $T^\ast D$.

    Finally, since $H^1(T^\ast D) = 0$, there are no obstructions to quantizing on $T^\ast D$. Again using the fact that obstructions are local, we find that there can never be obstructions to quantizing on $T^\ast A$, and so that there can't be any obstruction to quantizing on $T^\ast \Sigma$ for any $\Sigma$.

    Let us discuss the final case of the theorem, where we consider  holomorphically translation-invariant quantizations on $\R \times \C^2$. In this case, the obstruction-deformation complex becomes
    $$
    \C[\eta_3,\eta_5,\dots]/ \C \otimes \C[N] \otimes \C((\eps)) \otimes \C[\d z_1, \d z_2] [5] 
    $$
    where $\d z_i$ have degree $1$.  There are no $sl_2$ invariant elements in degrees $-1$ or $1$ so there is no possible obstruction. As before, $\C^\times$-invariance fixes the power of $\eps$ that can appear. Therefore, we find that the obstruction-deformation complex in degree $0$ consists of 
    $$\eta_5 \C[N] \oplus \eta_3 \d z_1 \d z_2 \C[N].$$
    This proves the theorem in this case.

    Next, we need to write explicitly  the possible Lagrangians we can write down at each term in the expansion.  To do this, in the case of a general conical $X$ with $H^3(X) = 0$, we simply need to write down deformations of the classical Lagrangian corresponding to $H^2(X)$ and to the one-dimensional space $H^0(X)$.  

    As is well known, a class in $H^2(X)$ gives rise to a first-order deformation of the $\ast$-product on $\Oo_X$. We can write this first-order deformation corresponding to a cohomology class $w \in H^2(X)$ as 
    $$
    \alpha \ast_\eps \beta + \gamma \alpha \ast_{\eps,w} \beta
    $$
    where the parameter $\gamma$ has square zero. Then, we find a corresponding first-order  deformation of the Lagrangian at $k$ loops by adding the term
    $$
    \gamma \delta^k \eps^{-k}\int \alpha \ast_{\eps} \alpha \ast_{\eps,w} \alpha. 
    $$
    to the Lagrangian of our theory.

    The class in $H^0(X)$ corresponds to adding on the term 
    $$
     \delta^{k}\eps^{-k}\left( \tfrac{1}{2} \int \alpha \d \alpha + \tfrac{1}{3} \int \alpha \ast_\eps \alpha \ast_\eps \alpha\right) 
    $$
    at $k$ loops.

    Note that this second term can be reabsorbed by a change of coordinates on the parameter $\delta$. If we perform the change of coordinates 
    $$
    \delta \mapsto \delta - \gamma \delta^{k} \eps^{-k} 
    $$
    (where $\gamma^2 = 0$) then (if $S$ denotes the action functional of our theory) we find
    $$
    \delta^{-1} S \mapsto \delta^{-1} S + \gamma \delta^{k-1} \eps^{-k} S
    $$
    generating the second term we write down at $k$ loops. 

    In fact, the first term can in a certain sense be thought of as being part of a choice of coordinates.  If, instead of our original non-commutative product $\ast_\eps$ on $\Oo_X$, we chose the product
    $$
    \ast_\eps +  \delta^{k} \eps^{-k} \eps \ast_{\eps,w} 
    $$
    we would generate the first term we wrote down.  

    The case of $\R \times \C^2$ is slightly different, as one of the deformations is associated to the constant $(2,0)$ form $\d z_1 \d z_2$.  In this case the two extra terms that can appear in the Lagrangian at $k$ loops are
    \begin{align*} 
    &\delta^{k}\eps^{-k}\tfrac{1}{2} \int \alpha \d \alpha + \tfrac{1}{3} \int \alpha \ast_\eps \alpha \ast_\eps \alpha\\ 
    \delta^k \eps^{-k} \eps \dpa{\eps} \int \alpha \ast_\eps \alpha \ast_\eps \alpha.
    \end{align*}
    As before, the first term can be absorbed into a change of coordinates on the parameter $\delta$. The second term can be absorbed into a change of coordinates of the parameter $\eps$. The required coordinate change is
    $$
    \eps \mapsto \eps + \eps \delta^k \eps^{1-k}
    $$

    \end{proof}

    \begin{lemma}
     Every holomorphically translation invariant quantization of the theory on $\C^2$ can be realized in way only involving the parameters $\eps$, $\delta$ and $\eps^{-1} \delta$. 
    \end{lemma}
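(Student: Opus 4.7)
The plan is to deduce this lemma as an essentially combinatorial corollary of the classification of counter-terms established in the preceding theorem. That theorem tells us two things: at each loop order $k$ there is no anomaly to quantization, and the freedom in choosing the $k$-loop counter-term is spanned by two explicit local functionals, each carrying the overall coefficient $\delta^k \eps^{-k}$. The key observation is simply that
\[
 \delta^k \eps^{-k} = (\eps^{-1}\delta)^k,
\]
so every $k$-loop ambiguity lives in the subring $\C[\eps,\delta,\eps^{-1}\delta]$ of the base ring $\C((\eps))[[\delta]]$ in which quantizations are allowed to take coefficients.

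Concretely I would argue by induction on loop order. The classical action (with its overall $\delta^{-1}$ prefactor stripped off) manifestly has $\eps$-dependence only through non-negative powers of $\eps$ coming from the Moyal product expansion; in particular its coefficients lie in $\C[\eps] \subset \C[\eps,\delta,\eps^{-1}\delta]$. Suppose inductively that a quantization has been constructed through loop order $k-1$ with all counter-terms in $\C[\eps,\delta,\eps^{-1}\delta]$. The obstruction at loop $k$ then lives in the relevant cohomology group computed in the preceding theorem, and that group was shown to vanish; moreover the solution is well-defined up to the two one-parameter families of deformations, each of which is $(\eps^{-1}\delta)^k$ times a local functional whose remaining $\eps$-dependence is polynomial (coming again from the Moyal product). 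Hence any choice of $k$-loop counter-term keeps the quantum action inside $\delta^{-1}\C[[\eps,\delta,\eps^{-1}\delta]]$, completing the induction.

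The only point that requires verification, and which I would make explicit, is the elementary identification of the ring $\C[\eps,\delta,\eps^{-1}\delta]$ with the monoid ring on monomials $\eps^a \delta^b$ satisfying $b \ge 0$ and $a+b \ge 0$ (with $a \ge 0$ when $b=0$); the factorization $\eps^a \delta^b = \eps^{a+b}\cdot(\eps^{-1}\delta)^b$ exhibits the isomorphism and matches exactly the class of coefficients permitted by the hypothesis of the preceding theorem ($\eps^{-r}\delta^n$ with $r \le n$).

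The lemma therefore contains no genuinely new content beyond a repackaging of the quantization theorem; the only potential obstacle — that during the inductive step one might be forced to introduce a negative power $\eps^{-r}\delta^n$ with $r > n$ in order to cancel some unanticipated anomaly — is ruled out precisely because the cohomological analysis in the appendix pins the $\C^\times$-weight of every allowed counter-term to match that of $(\eps^{-1}\delta)^k$ at loop $k$.
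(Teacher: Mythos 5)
Your proposal tries to get the lemma for free from the statement of the quantization theorem, but that theorem's two explicit $\delta^k\eps^{-k}$ terms describe only the \emph{ambiguity} in the quantization -- the degree-zero cohomology of the obstruction--deformation complex, i.e.\ the difference between two quantizations. The lemma, however, is about the counter-terms one must actually choose order by order to cancel the potential anomaly and satisfy the quantum master equation; these are arbitrary local functionals (cochains, not cohomology classes), and nothing in the theorem's statement bounds their $\eps$-dependence. Your inductive step silently assumes the order-$k$ counter-term has the same $\eps$-structure as the two cohomologically nontrivial deformations, which is exactly the point at issue.

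The paper closes this gap by a direct analysis that your argument omits: every holomorphically translation invariant Lagrangian is equivalent to a combination of terms $\int \d z_1 \d z_2\, \Phi(\alpha\wedge D_1\alpha\cdots\wedge D_n\alpha)$ with $D_i\in\C[\dpas{z_1},\dpas{z_2}]$, and since only $z$-derivatives (never positive powers of $z$) can appear, such a Lagrangian has $\C^\times$-weight $l = 2-\sum k_i \le 2$. Scale invariance of $\delta^{-1}S$ then forces the coefficient at $k$ loops to be $\eps^{2-l-k}\delta^k$, and the bound $l\le 2$ gives $\eps$-exponent $\ge -k$, i.e.\ only monomials in $\eps$, $\delta$, $\eps^{-1}\delta$. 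Your closing claim that the cohomological analysis ``pins the $\C^\times$-weight of every allowed counter-term to match that of $(\eps^{-1}\delta)^k$'' is not correct as stated -- counter-terms of every weight $l\le 2$ are allowed, each with its own compensating power of $\eps$ -- and the inequality that actually does the work is never established in your argument. To repair the proof you need precisely this classification of all translation-invariant Lagrangians and the upper bound on their scaling weight.
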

    \begin{proof}
    Every holomorphically translation invariant Lagrangian is equivalent to one that only involves differentiation with respect to $z_1$ and $z_2$, and not with respect to the remaining parameters $t,\br{z}_i$, or to the odd parameters $\d t$, $\d \zbar_i$.  Every such is therefore a linear combination of Lagrangians of the form
    $$
    \alpha \mapsto \int \d z_1 \d z_2 \Phi\left( \alpha \wedge D_1 \alpha \dots \wedge D_n \alpha\right)
    $$
    where $D_i \in \C[\dpa{z_i}]$ and $\Phi$ is an invariant polynomial of degree $n+1$ on $\mf{gl}_N$.  If the differential operators $D_i$ are of degree $k_i$, then this Lagrangian is of weight $2 - \sum k_i$ under the $\C^\times$ action on $\C^2$ which scales the $z_i$.  Now, we are interested in the part of the obstruction-deformation complex at $k$ loops which is scale invariant.  For a Lagrangian $\mc{L}$ of weight $l$ under the $\C^\times$ action on $\C^2$, then $\eps^{2-l-k} \delta^{k} \mc{L}$ can be included in the action functional $S$ at $k$ loops (bearing in mind that it is $\delta^{-1} S$ that we need to be $\C^\times$ invariant, and that both $\delta$ and $\eps$ have weight $2$ under the $\C^\times$ action on $\C^2$).Since $l \le 2$, $2-l-k \ge -k$, so that we only find $\eps^{-r} \delta^k$ when $r \ge k$.  
    \end{proof} 

    \section{Reducing the $5d$ theory to $3$ dimensions}
\label{appendix:3d_reduction}
In this appendix I will give a proof of proposition \ref{prop:compactification}, which describes the theory obtained from the $5$-dimensional theory by reduction to $3$ dimensions.  Let me restate the proposition first.

Recall that the $3$-dimensional theory is  obtained from the $5$-dimensional theory on $(\R_t \times \C_w \setminus 0) \times \C_z$ by projecting along the map
\begin{align*} 
 (\R_t \times \C_w \setminus 0) \times \C_z & \mapsto \R \times \C_z \\
(t,w,z) & \mapsto ( (t^2 + w \wbar)^{1/2}, z). 
\end{align*} 

\begin{proposition}
The field theory on $\R^+_r \times \C_z$ obtained  by compactifying the $5$-dimensional field theory on $S^2$ has space of fields consisting of a series of partial $1$-forms valued in $\mf{gl}_{k+1}$,  
$$
A^i = A^i_r \d r + A^i_{\zbar} \d \zbar
$$
for $i \ge 0$. These are conveniently arranged into a series
\begin{align*} 
A &= \sum_{i \ge 0} w^i A^i \\
A^i &= A^i_r \d r + A^i_{\zbar} \d \zbar   
\end{align*}
which  defines a one-parameter family of partial connections, or equivalently, a  partial connection for the Lie algebra $\mf{gl}_{k+1}[w]$.

In addition, we have a sequence of adjoint-valued scalar fields  $B^{i}$ for $i < 0$, which  can be  arranged into a series
$$
B = \sum_{i > 0} B^{-i} w^{-i}. 
$$ 
The action functional is (when we turn off the flux $F$ induced by the $M5$ brane) 
$$
\int_{r,z} \oint_{w} \d z \d w  B F(A) 
$$
where the curvature $F(A)$ is defined in the non-commutative sense
$$
F(A) = \d A + \tfrac{1}{2}[A,A]
$$
and $[A,A]$ is defined in the Moyal algebra where $z,w$ commute to $\eps$.  In this action, the contour integral over $w$ is treated as formal operation: we are not now treating $w$ as a coordinate on the space-time.

There is gauge symmetry generated by
\begin{align*} 
X &= \sum_{i \ge 0} X^i w_i \\
X^i  &\in \mf{gl}_{k+1} 
\end{align*} 
whose infinitesimal action on the space  of fields is given by
$$
A \mapsto A + \eta \left( \d \zbar \dpa{\zbar} X + \d r \dpa{r} X +[X,A] \right)  
$$
where $\eta$ is a parameter of square  zero, and the commutator $[X,A]$ is defined using the Moyal product.
 
Turning on the flux associated to $N$ $M5$ branes, according to the prescription given in section \ref{section:quantum_holography}, leads to the deformation of the action by adding a term 
$$
- \sum_{\substack{n \ge 1 \\ n \text{ odd} } } \frac{N  \eps^{n}}{  2^{n+1}n } \int_{z,r}\oint_w  w^{-n}  \left( \partial_z^n A \right)  A \d z \d w.  
$$
This deformation also changes the gauge transformations of the field $B$.  The deformed infinitesemal gauge transformation is
$$
B \mapsto B + \eta [X,B] - N\delta  \sum_{\substack{n \ge 1 \\ n \text{ odd}}  } \frac{ \eps^{n-1} } {2^n n } \partial_z^n X \cdot w^{-n}. 
$$
(Here we are using that action of the ring $\C[w]$ on the space $w^{-1} \C[w^{-1}]$. Recall that $X$ is a function of $w$ as well as of the space-time coordinates).  
\end{proposition}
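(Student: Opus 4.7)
The plan is to derive the stated 3d theory by computing the derived pushforward of the 5d BV fields under the projection $\pi: X = (\R_t \times \C_w \setminus 0) \times \C_z \to \R^+_r \times \C_z$, $\pi(t,w,z) = ((t^2 + w\wbar)^{1/2}, z)$, and then translating the 5d action and gauge structure along this pushforward. The 5d BV complex on $X$ is
\[
\mc{A}_X = \Omega^*(\R_t) \what{\otimes} \Omega^{0,*}(\C_w \times \C_z)/\langle \d z, \d w\rangle \otimes \mf{gl}_{k+1}[1],
\]
with differential $\d_t + \dbar_w + \dbar_z$ and the Moyal $\ast_\eps$-product on $(z,w)$. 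Using the diffeomorphism $\R_t \times \C_w \setminus 0 \cong \R^+_r \times S^2$, the pushforward along $\pi$ reduces to computing the fiberwise cohomology along $S^2$-fibers of the complex $\Omega^*(\R_t) \what{\otimes} \Omega^{0,*}(\C_w)$.

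The key technical step, which is the main obstacle, is the claim that this fiberwise cohomology is
\[
H^0 \cong \mc{O}(\C_w) \simeq \C\{w\}, \qquad H^1 \simeq w^{-1}\C\{w^{-1}\},
\]
with vanishing cohomology in other degrees. I would prove this by decomposing the complex into weight spaces for the $U(1)$-action rotating $w \mapsto e^{i\theta}w$ (fixing $t$). Each weight-$n$ summand becomes a finite-rank two-term complex of sheaves on $\R^+_r$ whose cohomology is computable by hand. For $n \geq 0$, the cohomology sits in degree $0$ and is represented by $w^n$, which is the highest-weight vector of the spin-$n$ representation inside $\cinfty(S^2)$. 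For $n < 0$, the cohomology sits in degree $1$ and is represented by $w^n$ via the residue pairing with $\wbar^{-n}$, the lowest-weight vector of the spin-$(-n)$ representation.

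Translating this cohomology gives the 3d field content: the ghost $X = \sum_{i\geq 0} X^i w^i$ arises from the 5d ghost (BV degree $-1$) tensored with $H^0$; the partial connection $A = \sum_{i\geq 0} w^i A^i$ with $A^i = A^i_r \d r + A^i_\zbar \d \zbar$ arises from the $\d r$ and $\d \zbar$ components of the 5d gauge field (degree $0$) tensored with $H^0$; and the adjoint scalar $B = \sum_{i>0} B^{-i} w^{-i}$ arises from the 5d antifield of the gauge field (degree $1$) tensored with $H^1$. The $S^2$-tangent components of 5d fields (i.e., the $\d t, \d \wbar$ components in various BV degrees) do not contribute to the 3d theory because they are removed in passing to cohomology. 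I then derive the 3d action by pushing forward the 5d action: the $S^2$-integration pairs $w^i$-modes with $\wbar^i$-modes via $U(1)$-invariance, and using the identification of the class $\wbar^i \in H^1$ with $w^{-i-1}$ via residue pairing, this integration becomes the formal contour integral $\oint_w \d w$. The 5d action $\tfrac{1}{\delta}\int \d z\,\d w\,\op{Tr}(\tfrac{1}{2} A\,\d A + \tfrac{1}{3} A\ast_\eps A\ast_\eps A)$ then reduces to the stated $\int_{r,z}\oint_w \d z\,\d w\, B\,F(A)$, and the gauge transformation of $A$ is the direct image of the 5d gauge transformation restricted to the $H^0$-type ghosts.

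Finally, the flux correction is obtained by substituting the explicit form of $F$ from Section \ref{section:quantum_holography} into $\tfrac{N\delta}{\eps}\int A\,D(F)(A)$ and carrying out the $S^2$-integration. Expanding the Moyal commutator in powers of $\eps$ produces the stated series over odd $n$, with the factor $\eps^{n-1}/(2^{n+1}n)$ arising from the Moyal combinatorics together with the normalization of the residue pairing; the deformation of the gauge transformation of $B$ is then uniquely fixed by requiring invariance of the deformed action under the descended 5d gauge symmetry. Apart from the cohomology computation of the first step, the remaining steps amount to careful bookkeeping of modes and Moyal combinatorics.
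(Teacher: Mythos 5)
Your route is essentially the paper's argument in homological dress. The paper identifies the fiber directions with the rank-one complexified foliation spanned by $V = -w\partial_t + 2t\partial_{\wbar}$, gauges away the components of the $\theta$-component of the connection that are not lowest-weight vectors, and integrates out the components of $A$ that are not highest-weight vectors (they are auxiliary/massive because of the $\int A \mc{L}_V A$ term); this is precisely your "pass to fiberwise cohomology" step, with $H^0 = \ker \mc{L}_V$ spanned by the restrictions of $w^n$ and $H^1 = \operatorname{coker}\mc{L}_V$ spanned by the restrictions of $\wbar^n$, identified with $w^{-n-1}$ by the pairing against the invariant volume form $\d\theta\,\d w$ on $S^2$ (your residue pairing). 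Your $U(1)$-weight decomposition is the same computation as the paper's $\mf{sl}(2)$ argument, since the rotation of $w$ is the Cartan and $V$ is the nilpotent raising element; both yield one class per spin. The treatment of the action, the gauge transformations, and the flux term is at the same level of detail as the paper's (which also stops at "with appropriate normalization the action is as stated").

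One bookkeeping slip you should fix: the physical scalar $B$ does \emph{not} descend from the $5$-dimensional antifield of the gauge field. It is the fiber ($\theta$-)component of the ghost-number-zero $5d$ partial connection itself, i.e.\ the summand of the gauge field with base form degree $0$ and fiber degree $1$, landing in $H^1$ of the fiber complex (the paper sets $B = A_\theta$ explicitly). As written, your attribution breaks the ghost-number count: antifield (ghost number $1$) tensored with $H^1$ (cohomological degree $1$) cannot produce a ghost-number-zero scalar. The antifield components instead supply the BV partners of the $3d$ fields after reduction. This does not affect the rest of your plan, but since the substance of the proof is exactly this mode bookkeeping, the identification needs to be stated correctly before the descent of the action and of the gauge transformations goes through.
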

\begin{proof}
Let us identify
$$
(\R_t \times \C_w \setminus 0) \times \C_z \iso \R^+_r \times \C_z \times S^2
$$
via the map
$$
(t,w,z) \mapsto \left(r = (w \wbar + t^2)^{1/2}, z, \frac{(w,t)}{r} \right). 
$$
Here we are identifying $S^2$ with unit sphere $\{w \wbar + t^2 = 1\}$ inside $\C_w \times \R_t$. 

Let us give $(\R_t \times \C_w \setminus 0)\times \C_z$ a metric which, under the isomorphism above, becomes  a product of a metric on $\R^+_r \times \C_z$ with the standard metric on $S^2$. The metric is given by
$$
\d z \d\zbar + r^{-2}(\d w \d \wbar + \d t^2). 
$$
Note that the only role the metric plays is that it gives a choice of gauge for the $5$-dimensional gauge theory.

The manifold $\R_t  \times \C_w$ has a codimension $1$ complexified foliation, given by the subbundle of the tangent bundle spanned by $\partial_{t}$ and $\partial_{\wbar}$. The intersection of this with the complexified tangent bundle of $S^2$ is a rank $1$ complexified foliation,  spanned  by the vector field 
$$V =  -  w\partial_{t} + 2 t \partial_{\wbar}.$$ 
This vector field preserves the metric we have chosen, and also the functions $z,\zbar,r$.  We can thus view it as a map from $\R^+_r \times \C_z$ to $\mf{so}(3,\C)$, the Lie algebra of isometries of $S^2$.  One can check that, using the Killing form on $\mf{so}(3)$, the norm $\ip{V,\br{V}}$ is constant, so we should treat $V$ as being a constant map to $\mf{so}(3,\C)$.
  
Let us introduce a one-form $\theta$ defined away from the origin on $\R_t \times \C_w$, which is a linear combination of $\d t$ and $\d \wbar$ and which satisfies $\ip{\theta,V} = 1$. Explicitly,
$$
\theta =r^{-2}  \left(- \wbar \d t + \tfrac{1}{2} t \d \wbar   \right) .  
$$
Then, every field of the $5$-dimensional gauge theory on $(\R_t \times \C_w \setminus 0) \times \C_z$ can be written as a linear combination 
$$
A_\theta \theta + A_r \d r + A_{\zbar} \d \zbar
$$
where
$$
A_\theta, A_{\zbar}, \ A_r   \in \cinfty(\R^+_r \times \C_z) \otimes \cinfty(S^2) \otimes \mf{gl}_{k+1}.  
$$
From the point of view of $3$ dimensions, since $\d \theta$ is dual to a vector field which is tangent to $S^2$, we can view it as a $3$-dimensional scalar.  We let $B = A_\theta$, and $A = A_r \d r + A_{\zbar} \d \zbar$. We thus find that the space of fields in $3$ dimensions consists of a scalar $B$, and a partial connnection $A$, each valued in $\cinfty(S^2) \otimes \mf{gl}_{k+1}$. 

Let us now write down quadratic part of the action functional (when $N = 0$). It is 
$$
\int B \d A \theta \d z \d w - \int A \mc{L}_V A \theta \d z \d w.  
$$ 
In addition, there is a gauge symmetry with infinitesimal  generator $X$ which acts on the fields $B,A$ by
$$
(B,A) \mapsto (B + \eps \mc{L}_V X, A + \eps \d  X + \eps [X,A]). 
$$
Here $\mc{L}_V$ indicates the Lie derivative with respect to the vector field $V$. 

To proceed further, let us analyze the action of the vector field $V$ on the space $\cinfty(S^2)$.  The vector field $V$ is inside the Lie algebra $\mf{so}(3) = \mf{sl}(2)$, and is a nilpotent element.  We can choose a basis of $\mf{sl}(2)$ so that the standard elements $(e,f,h)$ are represented by $(V,-\br{V},-[V,\br{V}])$. We have
$$
h = - [V,\br{V}] =  2 w \dpa{w} - 2 \wbar \dpa{\wbar}. 
$$ 

The space $\cinfty(S^2)$ is, as a representation of $SU(2)$, the sum of all integer spin representations, where each such representation appears once. We can identify it with the space
$$
\oplus_{n \ge 0} \Sym^{2n}(\C^2)
$$
where $\C^2$ is the fundamental representation of $SU(2)$.

It follows that kernel of the vector field $V$ is spanned by the highest-weight vectors in each integer spin representation, and the cokernel is spanned by the highest-weight vectors.  Using the embedding $S^2 \into \R \times \C\setminus 0$, the highest-weight vector in the representation of spin $n$ is the restriction of the function $w^n$, whereas the lowest-weight vector in the representation is the restriction of the function $\wbar^n$.

The form $\d \theta \d w$, when restricted to $S^2$, is a volume form invariant under $SU(2)$, and integration against this form provides a non-degenerate pairing  between the highest and lowest weight vectors. 

Now, the components of the field $B$ which are not lowest weight vectors can be gauged away. Therefore we can choose a gauge where $\mc{L}_{\br V} B = 0$.  This reduces the gauge symmetry to those generators $X$ which satisfy $\mc{L}_V X = 0$. 

Once we choose this gauge, the components of the field $A$ which are not highest weight vectors are non-dynamical. This is because in the term $\int B \d A$ in the action, the components of $A$ which are not highest weight vectors paired with the components of $B$ which are not lowest weight vectors, which we have removed by gauge symmetry. 

Further, the term $\int A \mc{L}_V A$ in the action means that the components of $A$ which are not highest weight vectors, and so not in the kernel of $\mc{L}_V$, acquire a mass. These components of $A$ can be treated as auxiliary fields, and removed. 

We thus left with a basis  of the space of fields consisting of 
$$
A^i = A^i_r \d r + A^i_{\zbar} \d \zbar
$$
which is, as a function on $S^2$, the highest weight vector in the spin $i$ spin representation of $SU(2)$ appearing in $\cinfty(S^2)$; together with $B^{-i-1}$, which is the lowest weight vector in the spin $i$ representation. 

With appropriate normalization of the highest and lowest weight vectors, the action functional on these fields is the one stated in the proposition.

\end{proof}

\bibliography{Mtheoryomega}

\end{document}